\numberwithin{equation}{section}
\newtheorem{theorem}{Theorem}[section]
\newtheorem{lemma}{Lemma}[section]
\newtheorem{corollary}{Corollary}[section]
\theoremstyle{definition}
\newtheorem{definition}{Definition}[section]
\theoremstyle{remark}
\newtheorem{remark}{Remark}[section]
\newcommand{\e}[1]{{\varepsilon_#1}}
\newcommand{\w}[1]{{\omega_#1}}
\newcommand{\g}[1]{{\gamma_#1}}
\newcommand{\Cmainw}{\mathcal{C}_0}
\newcommand{\Cotherw}{\mathcal{C}_1}
\newcommand{\Cmainz}{\mathcal{C}_0'}
\newcommand{\Cotherz}{\mathcal{C}_1'}
\newcommand{\Cmainwres}{\widetilde{\mathcal{C}_0}}
\newcommand{\Cotherwres}{\widetilde{\mathcal{C}_1}}
\newcommand{\Cmainzres}{\widetilde{\mathcal{C}_0}'}
\newcommand{\Cotherzres}{\widetilde{\mathcal{C}_1}'}
\newcommand{\Cmainwm}{\mathcal{C}_{0,m}}
\newcommand{\Cotherwm}{\mathcal{C}_{1,m}}
\newcommand{\Cmainzm}{\mathcal{C}_{0,m}'}
\newcommand{\Cotherzm}{\mathcal{C}_{1,m}'}
\newcommand{\bconst}{B}
\newcommand{\circlecontour}{C}
\title{Local correlation functions of the two-periodic weighted Aztec diamond in mesoscopic limit}
\author{Emily Bain\footnote{University of California, Berkeley and Yau Mathematical Sciences Center, Tsinghua University}}
\date{}
\begin{document}
\maketitle

\begin{abstract}
Here we study the two-periodic weighted dimer model on the Aztec diamond graph. In the thermodynamic limit when the size of the graph goes to infinity while weights are fixed, the model develops a limit shape with frozen regions near corners, a flat ``diamond'' in the center with a noncritical (ordered) phase, and a disordered phase separating this diamond and the frozen phase. We show that in the mesoscopic scaling limit, when weights scale in the thermodynamic limit such that the size of the ``flat diamond'' is of the same order as the correlation length inside the diamond, fluctuations of the height function are described by a new process. We compute asymptotics of the inverse Kasteleyn matrix for vertices in a local neighborhood in this mesoscopic limit.
\end{abstract}

\section{Introduction}

\subsection{Overview}

A \textit{dimer configuration} is a perfect matching on a graph. A \textit{dimer model} on a weighted planar graph is a probability measure on dimer configurations, such that the probability of a dimer configuration is proportional to the product of its edge weights. Early work on dimer models concerned counting the number of dimer configurations on a graph \cite{KASTELEYN1961,Temperley1961DimerPI,Kasteleyn1963}. 

 A \textit{tiling} of a planar graph is a covering of the faces of the graph by tiles consisting of two adjacent faces. Dimer configurations of a planar graph are in bijection with tilings of the dual graph, and so for every dimer model there is an associated tiling model. We will talk about a dimer model or the corresponding tiling model interchangeably as convenient. 
 
The two most commonly studied classes of tiling models are \textit{domino tilings}, which correspond to dimer models on a square grid, and \textit{lozenge tilings}, which correspond to dimer models on a hexagonal grid. Here we focus on domino tiling models. For a thorough discussion of lozenge tiling models and some general theory see \cite{gorin_2021}.

An important tool in the study of dimer models is the Kasteleyn method. The \textit{Kasteleyn matrix} is essentially a weighted oriented adjacency matrix for the dimer graph. The absolute value of its determinant is the partition function of the dimer model \cite{Kasteleyn1963}. For the uniform weighted case, this is just the number of dimer configurations. We can also use the Kasteleyn matrix to find the correlation functions between edges, where the \textit{correlation function} of a set of edges is the probability that a random dimer configuration has dimers at each of these edges. These correlation functions can be written in terms of the Kasteleyn matrix and its inverse \cite{Kenyon_1997}. As a result, finding the inverse Kasteleyn matrix for a dimer model is of considerable interest.

Work on domino tiling models intensified after Thurston's 1990 paper \cite{Thurston1990} on height functions for dimer models on bipartite planar graphs. A \textit{height function} assigns a height to every face of the dimer graph, or equivalently every to vertex of the tiling graph. In this way, a tiling model can be considered as a random surface in three-dimensional space. In \cite{Cohn_2000} the authors proved a variational principle for domino tiling models that can be used to find the limit shape of the associated random surface as the graph size tends to infinity.

The \textit{Aztec diamond graph} is part of a square grid with a boundary at a 45 degree angle to the grid (Figure \ref{fig:aztecdiamondgraph}). Domino tilings of the Aztec diamond graph were first studied in \cite{elkies1992}, and subsequently in \cite{Cohn_1996, Jockush1998, Cohn_2000, Johansson2005}.

Dimer models exhibit up to three different phases: \textit{frozen} (or solid), \textit{disordered} (or liquid or rough) and \textit{ordered} (or gas or smooth) \cite{kenyon_okounkov_sheffield_2006}. These are characterized by the rate of decay of correlation functions between dimers when the distance between them is increasing, or equivalently by the variance of the height function. The Aztec diamond tiling model with uniform weights exhibits frozen and disordered regions. In \cite{Jockush1998}, the authors showed that the frozen-disordered boundary in the continuous limit can be described by an algebraic curve -- the `arctic circle' -- and in \cite{Johansson2005} it was shown that the fluctuations around the arctic circle at the disordered-frozen boundary converge to the Airy process \cite{Prahofer_scaleinvariance} under suitable rescaling. See \cite{johansson2018} for a survey of results on boundary fluctuations of dimer models. Equations for phase boundaries for lozenge tiling models were studied in \cite{kenyon2007limit} using tools from algebraic geometry.

\begin{figure}[ht]
\centering
 \includegraphics[width = 0.3\textwidth]{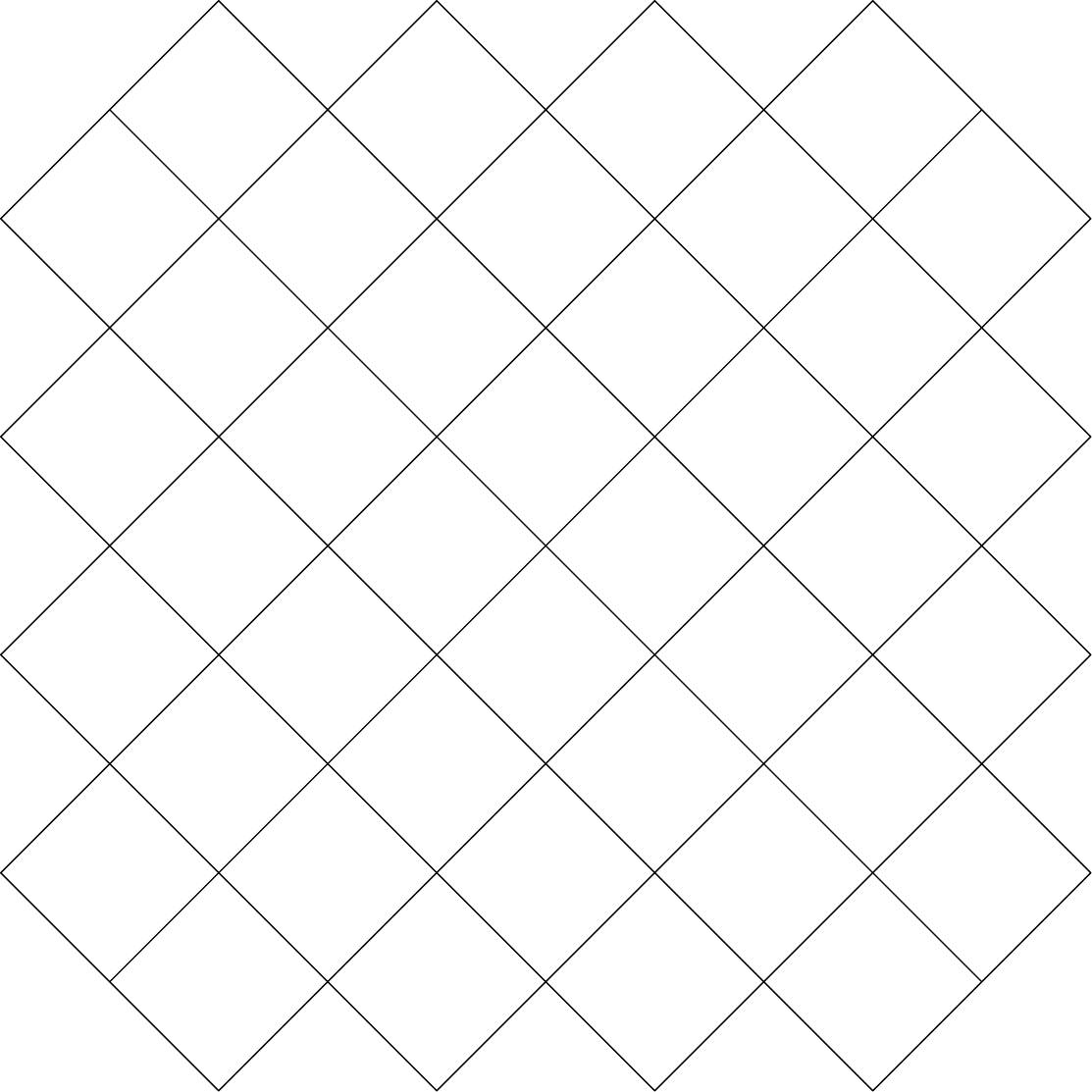}
\caption{The Aztec diamond of size $n=4$.}
\label{fig:aztecdiamondgraph}
\end{figure}

The \textit{two-periodic weighted Aztec diamond} \cite{chhita2014, Francesco_2014} is a probability measure on tilings on the Aztec diamond graph with doubly-periodic weights and a $2\times 2$ fundamental domain (smallest repeating block). An example is shown in Figure \ref{fig:dominotiling}. It is one of the simplest models to exhibit all three phases: frozen, disordered and ordered. There are phase boundaries between the frozen and disordered phases, and between the disordered and ordered phases. A formula for the entries of the inverse Kasteleyn matrix was found in \cite{chhita2014} and simplified in \cite{chhita2016domino}. Recently, other approaches have also been used to find the correlation functions; see \cite{Duits2021, BERGGREN2019}. The behavior at the ordered-disordered boundary, where the Airy kernel point process has been observed \cite{chhita2016domino}, is of particular interest \cite{chhita2016domino, johansson2018, Beffara2018, Beffara2022, JohanssonMason2021}. These Airy kernel point processes are expected to be a universal phenomenon. 

Currently, most results are on the statistical properties of the ordered-disordered boundary, but there has been some progress on finding a microscopic description of this boundary in the form of lattice paths \cite{Beffara2022}. A generalization of the two-periodic weighted Aztec diamond where there is a bias towards horizontal dominos has recently been studied in \cite{Borodin2022}.

In this paper, we go in a different direction and look at the rough-smooth boundary analytically in a particular scaling limit where the macroscopic size of the ordered region tends to zero. We find new behavior not seen before in other scaling limits. A similar phenomenon was observed numerically for the six-vertex model with $\Delta < -1$ in \cite{Belov2022TheTC} where it was called the mesoscopic scaling limit. We expect other models to exhibit similar behavior.

\begin{figure}[htb]
\centering
\includegraphics[width=0.6\textwidth]{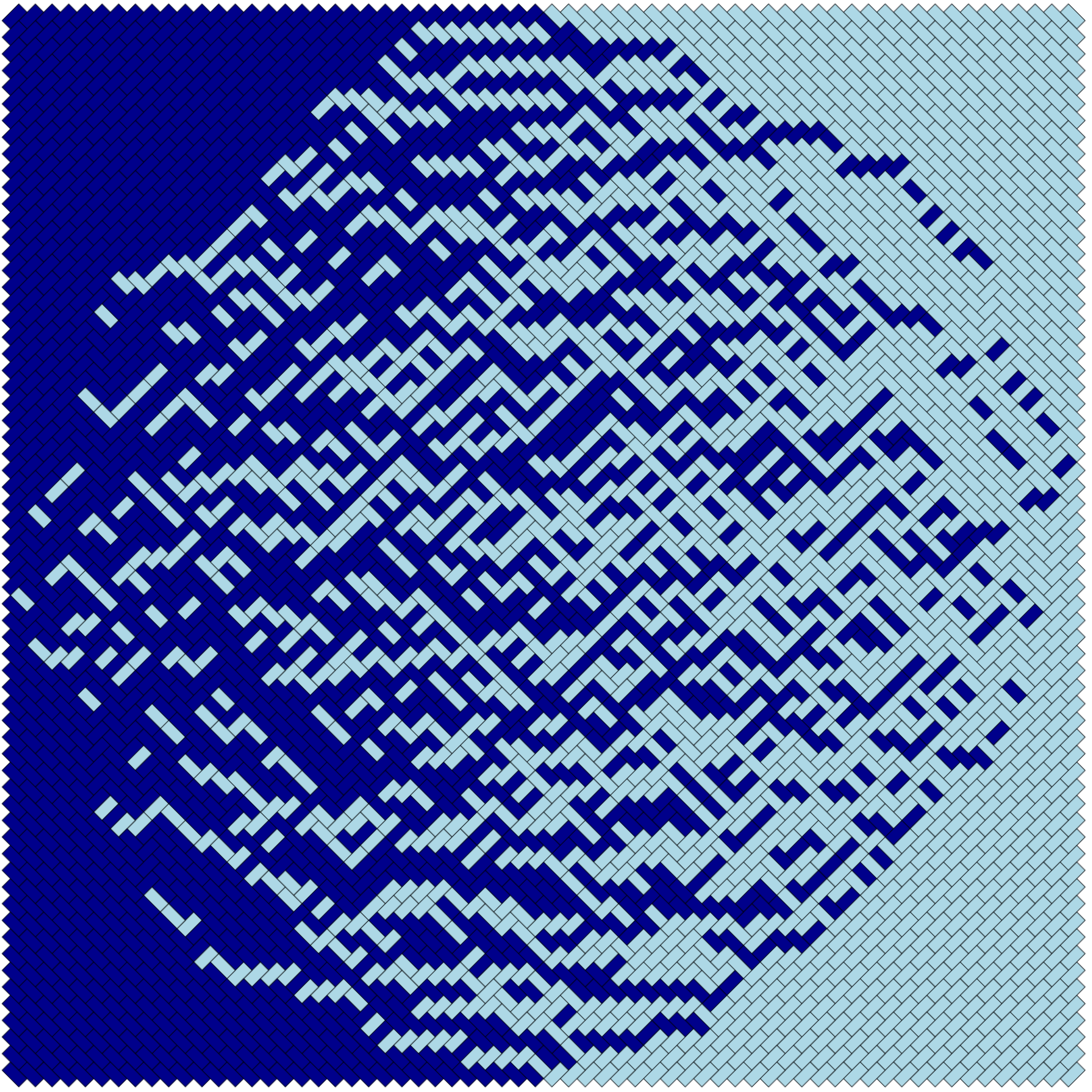}
\caption{Tiling of two-periodic weighted Aztec diamond with $n=64$ and $a = 0.8$}
\label{fig:dominotiling}
\end{figure}

\subsection{Informal description of our results}

We look at the limit when the size of the Aztec diamond graph tends to infinity and the weights tend to the uniform weights in such a way that the \textit{correlation length inside the ordered region is of the order of the linear size of the ordered region}.

More specifically, as in \cite{chhita2016domino} we assume the edge weights for the dimer graph are 1 and $a<1$. These weights repeat as shown in Figure \ref{fig:aztec diamond diagram}. If $n$ is the linear size of the whole diamond we consider the limit when $1-a$ is of order $n^{-1/2}$.  The region of the Airy asymptotic of correlation functions found in \cite{chhita2016domino} in this case is of the same order as the width of the ordered region. We study the correlation functions in this ``mesoscopic'' limit \cite{Belov2022TheTC}.

The ordered region macroscopically shrinks to a point in this limit and there is no longer an ordered region as such. We show that the leading order term of the one-point correlation functions along a diagonal at a distance of order $n^{1/2}$ from the center of the Aztec diamond is $1/4$. This means that at any vertex, the probabilities of having a dimer on the four edges incident to the vertex are equal. The subleading order term is a constant term of order $n^{-1/2}\log n$. The next term has order $n^{-1/2}$ and it depends on the rescaled coordinates. This last term is the focus of this paper. 

We show that terms of order $n^{-1/2}$ in the one-point correlation function are no longer described by an Airy kernel point process; instead they are given by another double integral. We give precise integral formulas for the subleading order terms. We expect to see similar asymptotic behavior away from the diagonal, except possibly on the horizontal and vertical lines through the center, where for finite weights there are cusp points on the disordered-ordered boundary.

In an upcoming paper, we will study two-point correlations for dimers that grow further apart microscopically as $n$ tends to infinity.

\begin{figure}[ht]
\centering
 \includegraphics[width = 0.8\textwidth]{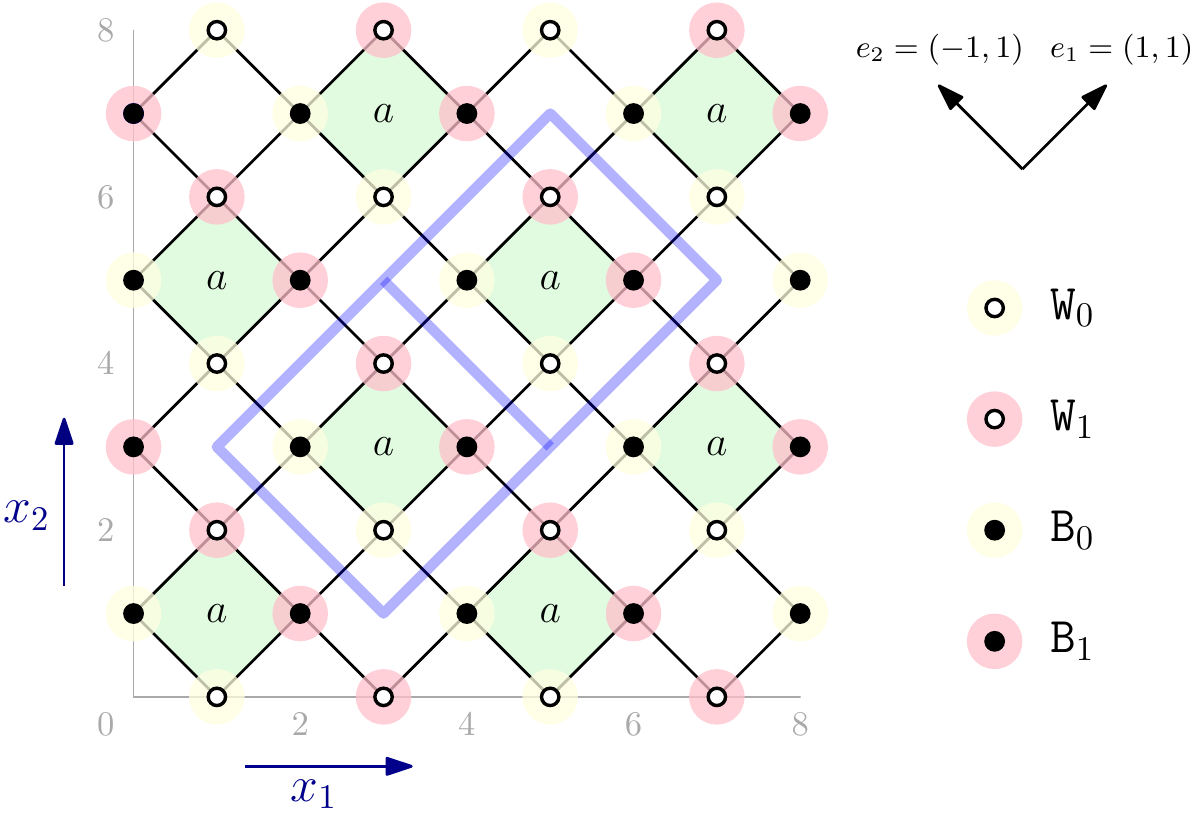}
\caption{The two periodic Aztec diamond graph of size $n=4$ showing the subgraphs $\mathtt{W}_0$, $\mathtt{W}_1$, $\mathtt{B}_0$ and $\mathtt{B}_1$. Edges surrounding a green square have weight $a$; other edges have weight 1. Two fundamental domains are marked in blue.}
\label{fig:aztec diamond diagram}
\end{figure}

\subsection{Acknowledgements}
This research was partly supported by NSF grant DMS-1902226. We thank Yau Mathematical Sciences Center, Tsinghua University, where this work was completed. We are grateful to Nicolai Reshetikhin for guidance and many helpful discussions during the course of this work, and Zitong Cheng for the computations to pass from Equation~\ref{eq:kgasinverserealunsimplified} to Equation~\ref{eq:kgasinversesimplified}. We thank the reviewer who's careful reading and constructive comments were critical to improving the clarity of this manuscript.

We used code based on the source code developed by Keating and Sridhar \cite{keating2018code} to simulate domino tilings. 

This research used the Savio computational cluster resource provided by the Berkeley Research Computing program at the University of California, Berkeley (supported by the UC Berkeley Chancellor, Vice Chancellor for Research, and Chief Information Officer). 

\section{Main result}

Before we state the main result we must cover some preliminaries.

\subsection{Definition of model}\label{sec:coords}

\begin{figure}[ht]
\centering
 \includegraphics[width = 0.7\textwidth]{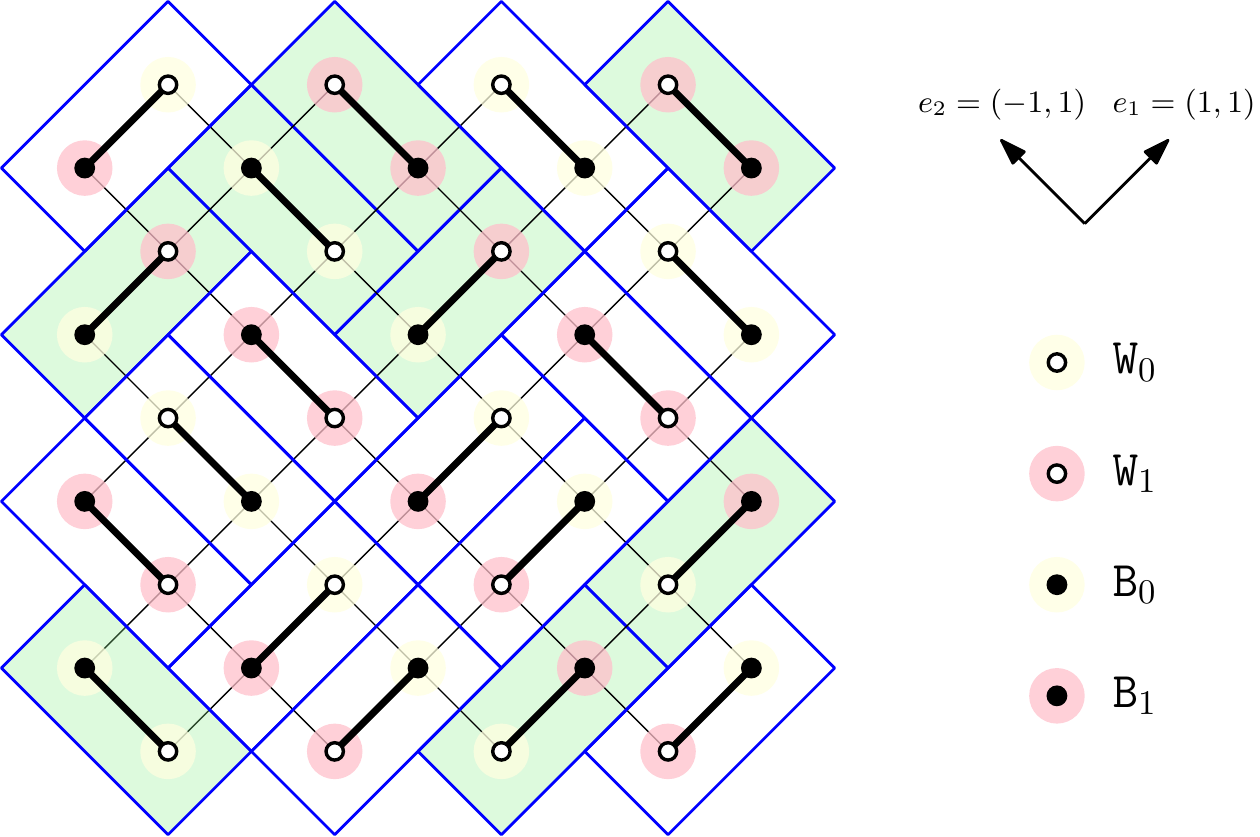}
\caption{The two periodic Aztec diamond graph of size $n=4$ with a dimer matching, and its dual domino tiling in blue. Weight $a$ dominos are shaded in green.}
\label{fig:aztec diamond domino}
\end{figure}

We study the limit shape for random domino tilings of an Aztec diamond with two-periodic weights. Let $\Gamma$ denote the dimer graph, and let $\Gamma^\vee$ denote the dual graph. Each vertex of $\Gamma$ marks the center of a face of $\Gamma^\vee$. The graph with its bipartite structure and the structure of weights is shown in Figure~\ref{fig:aztec diamond diagram}. An example of a tiling of $\Gamma$ and the corresponding dimer configuration is given in Figure~\ref{fig:aztec diamond domino}.

Because each fundamental domain \cite{kenyon_okounkov_sheffield_2006} is a $2\times 2$ square region, the linear size of the diamond should be a multiple of 2. In this paper for simplicity we only consider the case where the linear size is a multiple of 4. Let $n=4m$ denote the linear size of $\Gamma$.\footnote{We will mostly follow the same setup as Chhita and Johansson \cite{chhita2016domino}.} So on each slice containing either all black or all white vertices we will have either $4m$ segments between vertices or $4m-1$ segments between vertices and two half segments at the sides. We will use Euclidean coordinates $(x_1, x_2)$ with $x_i \in \mathbb{Z}$ ranging from 0 to $2n=8m$ as in Figure~\ref{fig:aztec diamond diagram}, with the vertices being at points satisfying $x_1 + x_2 \equiv 1 \mod 2$. Thus the linear Euclidean length of the diamond along the axes $x_1$ and $x_2$ is $2n=8m$.

We fix a bipartite structure on $\Gamma$ by coloring the vertices black and white (see Figure~\ref{fig:aztec diamond diagram}). We denote the set of white vertices by $\mathtt{W}$ and the set of black vertices by $\mathtt{B}$ where we set \begin{align*}
\mathtt{W} &= \{ (x_1,x_2) \in \Gamma : x_1 \equiv 1 \mod 2,
\, x_2 \equiv 0 \mod 2,\, 0\leq x_1,x_2 \leq 2n\}\\
\mathtt{B} &= \{ (y_1,y_2) \in \Gamma : y_1 \equiv 0 \mod 2,
\, y_2 \equiv 1 \mod 2,\, 0\leq y_1,y_2 \leq 2n\}.\end{align*} Then we split the white vertices $\mathtt{W}$ into two subgraphs $\mathtt{W}_0$ and $\mathtt{W}_1$, and the black vertices $\mathtt{B}$ into $\mathtt{B}_0$ and $\mathtt{B}_1$ according to the values of $x_1 + x_2 \mod 4$ and $y_1 + y_2 \mod 4$ respectively as follows (and as is shown in Figure~\ref{fig:aztec diamond diagram}).\begin{align*}
\mathtt{W}_\e1 &= \{ (x_1,x_2) \in \mathtt{W} : x_1 + x_2 \equiv 2\e1 + 1 \mod 4\} \\
\mathtt{B}_\e2 &= \{ (y_1,y_2) \in \mathtt{B} : y_1 + y_2 \equiv 2\e2 + 1 \mod 4\} 
\end{align*} 
Let $e_1 = (1,1)$ and $e_2 = (-1,1)$. There is a $2\times 2$ fundamental domain which when embedded in the graph consists of a vertex $\mathtt{w} \in \mathtt{W}_0$ along with $\mathtt{w} + e_1$, $\mathtt{w} + e_2$ and $\mathtt{w} + e_1 + e_2$. 


We assign a weight to each edge. The weights of edges connecting vertices in one fundamental domain are $a$, and the weights of edges connecting vertices in neighboring fundamental domains are 1. This is shown diagrammatically in Figure~\ref{fig:aztec diamond diagram}, where the edges of weight $a$ are the edges surrounding a face labelled with $a$, and the other edges have weight 1.

Let $\Omega$ denote the set of all dimer configurations on $\Gamma$. Define the partition function \[
Z = \sum_{\mathcal{D} \in \Omega }\prod_{e\in \mathcal{D}}w(e).
\] Then the Boltzmann measure is defined as \[
\mathrm{Prob}(\mathcal{D}) = \frac{\prod_{e\in \mathcal{D}}w(e)}{Z}
\] for $\mathcal{D} \in \Omega$. 

Let $e_i = (\mathbf{w}_i, \mathbf{b}_i)$, $1 \leq i \leq k$ be edges of $\Gamma$. The $k$-point correlation function is \[
\rho(\mathbf{w}_1,\mathbf{b}_1;\,\mathbf{w}_2,\mathbf{b}_2;\,\ldots ;\, \mathbf{w}_k,\mathbf{b}_k) = \sum_{\mathcal{D}\in \Omega} \mathrm{Prob}(\mathcal{D}) \sigma_{\mathbf{w}_1,\mathbf{b}_1}(\mathcal{D})\ldots \sigma_{\mathbf{w}_k,\mathbf{b}_k}(\mathcal{D})
\] where \[\sigma_{\mathbf{w},\mathbf{b}}(\mathcal{D}) = \begin{cases} 1 &\text{if } (\mathbf{w},\mathbf{b})\in \mathcal{D} \\
0 &\text{otherwise}. \end{cases}\]  This is the probability of a random dimer configuration containing the dimers on edges $\{e_i\}_{1\leq i \leq k}$.

\subsection{Kasteleyn solution}
Here we give a brief overview of the Kasteleyn method. The Kasteleyn matrix for the two periodic Aztec diamond is defined as follows. For each edge $e$ of $\Gamma$, let $w(e)$ be the weight of the edge $e$, as shown in Figure~\ref{fig:aztec diamond diagram}. We define a Kasteleyn-Percus orientation $\epsilon$ on the edges of $\Gamma$. For $(x,y)$ an edge of $\Gamma$ with $x\in \mathtt{W}$ and $y\in \mathtt{B}$, let
\[
\epsilon(x,y) = \begin{cases}
1 & y = x \pm e_1\\
i & y = x \pm e_2
\end{cases}
\] This is shown diagrammatically in Figure~\ref{fig:orientation}. Then for $x\in \mathtt{W}$ and $y\in \mathtt{B}$ we define the Kasteleyn matrix $K_a$ \footnote{In \cite{chhita2016domino} $K_{a,1}$ is used instead of $K_a$.} as having entries \[
K_a(y,x) = \begin{cases}
w(x,y)\epsilon(x,y) & \text{if there is an edge between }x\text{ and }y \\
0 & \text{otherwise}
\end{cases}. \] Explicitly,\[
K_a(y,x) = \begin{cases}
a(1-\varepsilon) + \varepsilon & y = x + e_1,\, x\in W_\varepsilon \\
(1-\varepsilon) + a \varepsilon & y = x - e_1,\, x\in W_\varepsilon \\
i(a(1-\varepsilon) + \varepsilon) & y = x + e_2,\, x\in W_\varepsilon \\
i((1-\varepsilon) + a\varepsilon) & y = x - e_2,\, x\in W_\varepsilon \\
0 & \text{if } (x,y) \text{ is not an edge}
\end{cases}
\] This is shown diagrammatically for a fundamental domain in Figure~\ref{fig:kmatrix}. Let $K_a^{-1}$ denote the inverse matrix. 

For edges $e_i = (\mathbf{w}_i, \mathbf{b}_i)$, $1 \leq i \leq k$, the $k$-point correlation function has been shown \cite{Kenyon_1997} to be  \[
\rho(\mathbf{w}_1,\mathbf{b}_1;\,\mathbf{w}_2,\mathbf{b}_2;\,\ldots \mathbf{w}_k,\mathbf{b}_k) = \left(\prod_{i=1}^{k} K_a(\mathbf{b}_i,\mathbf{w}_i)\right) \det (K_a^{-1}(\mathbf{w}_i, \mathbf{b}_j))_{1 \leq i,j \leq k}
\]  

When we are looking at a single edge, we will denote this edge by $(x,y)$, where $x\in\mathtt{W}$ and $y\in\mathtt{B}$. Then the one-point correlation is given by $\rho(x,y) = K_a(y,x) K_a^{-1}(x,y)$ . This is the probability that a random dimer configuration contains the dimer $(x,y)$. 

\begin{figure}[ht]
\centering
 \includegraphics[width = 0.3\textwidth]{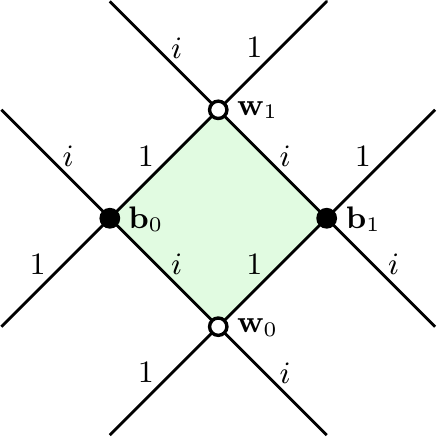}
\caption{The Kasteleyn-Perkus orientation shown on a fundamental domain.}
\label{fig:orientation}
\end{figure}

\begin{figure}[ht]
\centering
 \includegraphics[width = 0.3\textwidth]{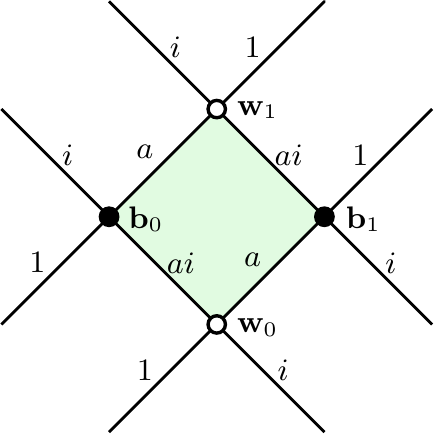}
\caption{The entries $K_a(y,x)$ of the Kasteleyn matrix for vertices $x \in  \mathtt{W}$, $y  \in \mathtt{B}$ connected by an edge, shown for a fundamental domain.}
\label{fig:kmatrix}
\end{figure}

 \subsection{Main theorem}\label{sec:overviewofresults}
We look at the inverse Kasteleyn matrix of vertices in a local neighborhood that is a Euclidean distance of order $m^{1/2}$ from the center $(4m, 4m)$ of the Aztec diamond, near the diagonal in the third quadrant (see Figure \ref{fig:halfdiagregion}), in the limit where the weight $a$ is given by $a=1-Bm^{-1/2}$ for some constant $B>0$. We define the asymptotic coordinate $\alpha < 0$ as follows. For $\e1,\e2\in \{0,1\}$, let $x = (x_1,x_2) \in \mathtt{W}_\e1$ and $y= (y_1,y_2) \in \mathtt{B}_\e2$ be vertices of an edge of the graph $\Gamma$, with \begin{align}\begin{split}\label{eq:asymptoticcoord}
x_1 &= [4m + 2m^{1/2}\alpha \bconst] + \overline{x_1} \\ 
x_2 &= [4m + 2m^{1/2}\alpha \bconst] + \overline{x_2} \\ 
y_1 &= [4m + 2m^{1/2}\alpha \bconst] + \overline{y_1} \\ 
y_2 &= [4m + 2m^{1/2}\alpha \bconst] + \overline{y_2},
\end{split}
\end{align} where the integral parts $\overline{x}_i, \overline{y}_i \in \mathbb{Z}$ do not grow with $m$. 

\begin{figure}[htbp]
\centering
\includegraphics[width = 0.4\textwidth]{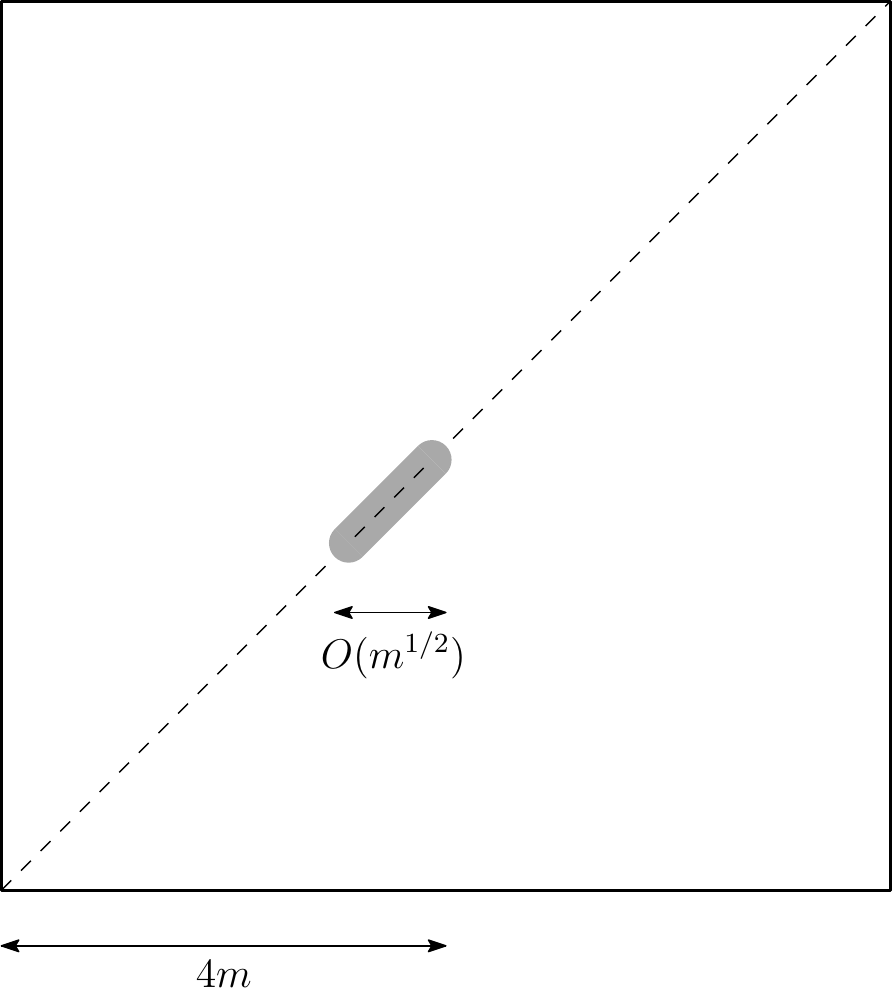}
\caption{In this paper, we study the one-point correlation functions of the dominos in the area shaded in gray.}
\label{fig:halfdiagregion}
\end{figure}

Define the matrix $\zeta$ to have entries \begin{equation}\label{eq:zeta}
\zeta(x,y)=(-1)^{(y_2-x_1)/2}.
\end{equation} We also define the matrix $\Sigma$ by \begin{equation}\label{eq:Sigmaxy}
\Sigma(x,y) = \begin{cases}
1 & y = x + (2k+1)e_1 + 2le_2, \text{ some } k,l\in \mathbb{Z}\\
i & y = x + 2ke_1 + (2l+1)e_2, \text{ some } k,l\in \mathbb{Z}
\end{cases}
\end{equation} Note that when $(x,y)$ is an edge of $\Gamma$, this agrees with the Kasteleyn-Percus orientation $\epsilon(x,y)$.

In what follows, the square roots refer to the principal branch of the square root. Let $\eta = \eta(\alpha)$ be defined to be the unique complex number with non-negative real part and non-negative imaginary part that satisfies \begin{equation}
\frac{1}{\sqrt{1/2 - 2 i \eta}} + \frac{1}{\sqrt{1/2+ 2i \eta}} = -2/\alpha.\label{eq:eta}\end{equation} When $-1/\sqrt{2} < \alpha < 0$, $\eta\in i(0,1/4)$; when $\alpha < -1/\sqrt{2}$, $\eta \in (0, \infty)$; and when $\alpha=-1/\sqrt{2}$, $\eta = 0$.

Let $\eta'$ be the unique complex number that satisfies \begin{equation}
\frac{1}{\sqrt{1/2 - 2 i \eta'}}- \frac{1}{\sqrt{1/2+ 2i \eta'}} = 2/\alpha.\label{eq:eta'}\end{equation} For all $\alpha < 0$ we have $\eta' \in i(0, 1/4)$.

Let \begin{equation}\label{eq:fpm}
f^\pm(w) = \sqrt{1/2-2iw} \pm \sqrt{1/2+2iw}.
\end{equation}and for $j,k \in \{0,1\}$ and $\e1,\e2 \in \{0,1\}$, let
\begin{multline}
A^{j,k}_{\e1,\e2}(w,z) =  \frac{-(-1)^{\e1+\e2}}{\sqrt{1/2-2iw}\sqrt{1/2+2iw}\sqrt{1/2-2iz}\sqrt{1/2+2iz}}\Bigg(2 i(w-z) \\
+ (-1)^{\e1+\e2}\left(\sqrt{1/2-2iw} +  (-1)^{j}\sqrt{1/2-2iz}\right)\left((-1)^{k}\sqrt{1/2+2iw} + \sqrt{1/2+2iz}\right) \\
+\Big((-1)^\e1\sqrt{1/2-2iw} + (-1)^{\e2+k}\sqrt{1/2+2iw} 
+ (-1)^\e2\sqrt{1/2+2iz} + (-1)^{\e1+j}\sqrt{1/2-2iz}\Big)\\
\times \left(\sqrt{1/2-2iw}\sqrt{1/2+2iz} + (-1)^{j+k}\sqrt{1/2+2iw}\sqrt{1/2-2iz}\right)\Bigg).\label{eq:Aall}
\end{multline} 
We define the following double integrals, where $x = (x_1,x_2) \in \mathtt{W}_\e1$ and $y= (y_1,y_2) \in \mathtt{B}_\e2$ and $\alpha$ is as in Equation~\ref{eq:asymptoticcoord}.
\begin{align}\begin{split}
I_1(\alpha, \e1, \e2) &=  \int_{\Cmainw} dw\int_{\Cmainz}dz \,\frac{A_{\e1,\e2}^{0,0}(w,z)}{i(z-w)} \exp(\bconst^2(-2i(w-z)+ \alpha ( f^-(w)  - f^-(z)))),\\
I_2(\alpha, \e1, \e2) &=  \int_{\Cmainw} dw\int_{\Cotherz}dz \,\frac{A_{\e1,\e2}^{1,0}(w,z)}{i(z-w)} \exp(\bconst^2(-2i(w-z)+ \alpha  (f^-(w)  +  f^+(z)))) ,\\
I_3(\alpha, \e1, \e2) &=  \int_{\Cotherw} dw\int_{\Cmainz}dz \,\frac{A_{\e1,\e2}^{0,1}(w,z)}{i(z-w)} \exp(\bconst^2(-2i(w-z)+ \alpha (f^+(w)  -  f^-(z)))) ,\\
I_4(\alpha, \e1, \e2) &=  \int_{\Cotherw} dw\int_{\Cotherz}dz \,\frac{A_{\e1,\e2}^{1,1}(w,z)}{i(z-w)} \exp(\bconst^2(-2i(w-z)+ \alpha  (f^+(w)  +  f^+(z)))),\label{eq:I1to4}\end{split}
\end{align} where the functions $A_{\e1,\e2}^{j,k}(w,z)$ are defined in Equation~\ref{eq:Aall} and the contours are defined below. We also define the single integral
 \begin{equation}
I_0(\alpha, \e1, \e2) =  \int_{-\eta(\alpha)}^{\eta(\alpha)}\frac{1 + (-1)^\e2\sqrt{1/2-2iw} + (-1)^\e1\sqrt{1/2+2iw}}{\sqrt{1/2-2iw}\sqrt{1/2+2iw}}dw
\label{eq:I0}\end{equation} where $\eta(\alpha)$ is defined in Equation~\ref{eq:eta}. These integrals all evaluate to real quantities.

The contours $\Cmainw$, $\Cmainz$, $\Cotherw$ and $\Cotherz$ are defined as follows. They are shown in Figure~\ref{fig:C1C2} for some different values of $\alpha$.

Recall that along a steepest descent contour of a holomorphic function (contour where the real part decreases most rapidly), its imaginary part is constant. A function has a saddle point when its second derivative is 0. The function $-2iw + \alpha f^-(w)$ has saddle points at $w = \pm \eta$ and the function $-2iw + \alpha f^+(w)$ has a saddle point at $w = -\eta'$.

All contours are oriented in the direction of decreasing real part.

For $-1/\sqrt{2}<\alpha < 0$, let $\Cmainw$ be the steepest descent contour for $-2iw + \alpha_x f^-(w)$ that is contained in the negative half plane and passes through the saddle point $w = -\eta$.

For $\alpha = -1/\sqrt{2}$ let $\Cmainw$ be the steepest descent contour for $-2iw + \alpha_x f^-(w)$ that passes through the saddle point $w = 0$ and enters the negative half plane at angles of $-\pi/6$ and $-5\pi/6$.

For $\alpha < -1/\sqrt{2}$, let $\Cmainw$ consist of the steepest descent contour for $-2iw + \alpha_x f^-(w)$ that starts from the branch cut $i(1/4,\infty)$, passes through the saddle point $w = -\eta$ and goes to infinity in the third quadrant; the reflection in the imaginary axis of this contour; and a contour that goes around the branch cut $i(1/4, \infty)$. This contour is shown in detail in Figure~\ref{fig:C1}.

Let $\Cmainz$ be the reflection of $\Cmainw$ in the real axis.

Let $\Cotherw$ be the steepest descent contour for $ -2iw + \alpha_x f^+(w)$. This passes through $w = -\eta'$ and goes to infinity in the negative half plane.

Let $\Cotherz$ be the reflection of $\Cotherw$ in the real axis.

\begin{figure}[htbp]
\centering
\includegraphics[width = 0.6\textwidth]{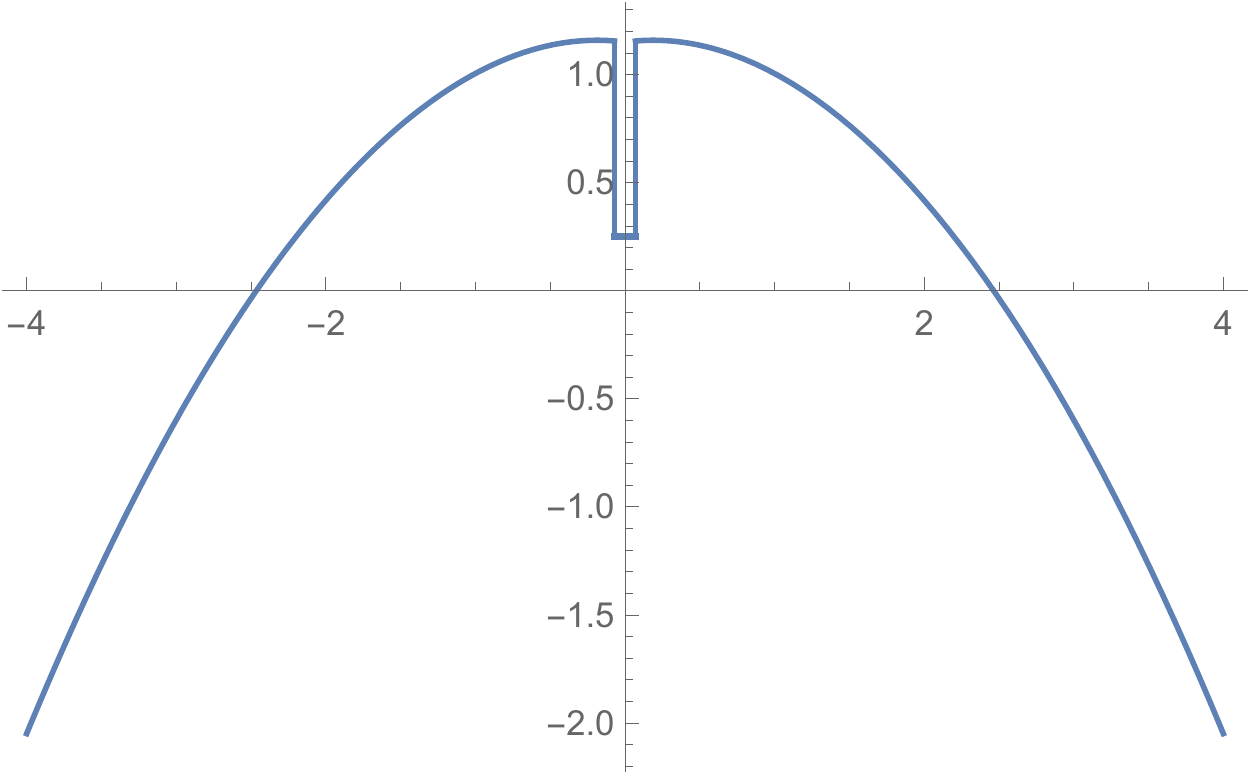}
\caption{The contour $\Cmainw$ for $\alpha = -3$.}
\label{fig:C1}
\end{figure}

In the limit as $m$ tends to infinity with $a = 1-\bconst m^{-1/2}$ we prove the following theorem for asymptotic formulas for the inverse Kasteleyn matrix $K_a^{-1}$. 

\begin{theorem}\label{thm:mainresult}
 For $-1/\sqrt{2} \leq \alpha < 0$, if $x = (x_1,x_2) \in \mathtt{W}_\e1$ and $y= (y_1,y_2) \in \mathtt{B}_\e2$ with $\e1,\e2\in \{0,1\}$ we have
\begin{multline}\label{eq:mainresult1}
K_a^{-1}(x,y) = \frac{1}{\Sigma(x,y)}\Bigg(c_0(y-x)\left(1 + \frac{\bconst m^{-1/2}}{2}\right) \\
+\zeta (x,y)\bconst m^{-1/2}\left(\frac{ \log (\bconst m^{-1/2})}{2\pi} + c_2(y-x)
+ \psi(\alpha,\e1,\e2)\right)\Bigg) + O(m^{-1}\log m )
\end{multline} and for $\alpha < -1/\sqrt{2}$ we have
\begin{multline}\label{eq:mainresult2}
K_a^{-1}(x,y) = \frac{1}{\Sigma(x,y)}\Bigg(c_0(y-x)\left(1 + \frac{\bconst m^{-1/2}}{2}\right) \\
+\zeta (x,y)\bconst m^{-1/2}\bigg(\frac{\log (\bconst m^{-1/2})}{2\pi} + c_2(y-x)  + \frac{I_0(\alpha, \e1, \e2)}{4\pi}
+ \psi(\alpha,\e1,\e2)\bigg)\Bigg)+ O(m^{-1}\log m )
\end{multline} where \begin{equation}\label{eq:psixy} \psi(\alpha, \e1, \e2) = \frac{1}{32 \pi^2}(I_1(\alpha, \e1, \e2) - I_2(\alpha, \e1, \e2) - I_3(\alpha, \e1, \e2) + I_4(\alpha, \e1, \e2)).
\end{equation} and the integrals $I_0(\alpha, \e1, \e2)$, $I_1(\alpha, \e1, \e2)$, $I_2(\alpha, \e1, \e2)$, $I_3(\alpha, \e1, \e2)$ and $I_4(\alpha, \e1, \e2)$ are defined above in Equations \ref{eq:I1to4} and \ref{eq:I0}, and $c_0$ and $c_2$ are functions that depend only the vector $y-x$, defined in Equations~\ref{eq:c0} and \ref{eq:c2} respectively.
\end{theorem}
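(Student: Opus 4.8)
The plan is to start from the exact formula for $K_a^{-1}$ of Chhita and Johansson \cite{chhita2016domino}, which expresses $K_a^{-1}(x,y)$ as $1/\Sigma(x,y)$ times the sum of the translation–invariant full–plane gas inverse Kasteleyn $\mathbb{G}_a(y-x)$ and an Aztec–diamond correction written as a double contour integral over circles, whose integrand is a product of an algebraic prefactor and a power $(\cdot)^{x_1}(\cdot)^{x_2}(\cdot)^{y_1}(\cdot)^{y_2}$ of rational functions of the two integration variables. After substituting $a=1-\bconst m^{-1/2}$ and the coordinates \eqref{eq:asymptoticcoord} and pulling the oscillatory factors $\Sigma$ and $\zeta$ out front, I would analyse the two pieces separately; observe that $1+\tfrac{\bconst}{2}m^{-1/2}=a^{-1/2}+O(m^{-1})$, so the factor multiplying $c_0(y-x)$ is simply the normalisation from the spectral parametrisation.

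For the gas term: with $y-x$ fixed, $\mathbb{G}_a(y-x)$ is a two–dimensional Fourier integral whose denominator is the gas characteristic polynomial $P_a$; as $a\to1$ the amoeba hole of the gas spectral curve closes and $P_a$ acquires near–zeros of width $\propto 1-a$. I would split the integration region into shrinking neighbourhoods of these near–zeros and their complement. On the complement the integrand is analytic in $a$ near $a=1$, and expansion in $\bconst m^{-1/2}$ yields $c_0(y-x)$ at $a=1$, the $a^{-1/2}$ normalisation correction $\tfrac{\bconst}{2}m^{-1/2}c_0(y-x)$, and the leading alternating ($\zeta$–odd) correction $\bconst m^{-1/2}\zeta(x,y)c_2(y-x)$; on the shrinking neighbourhoods the $a$–derivative of $\mathbb{G}_a-c_0$ has a logarithmic divergence (after angular averaging the relevant numerator vanishes to second order at the degenerating point), cut off at the scale $1-a$, which produces $\tfrac{\bconst m^{-1/2}}{2\pi}\log(\bconst m^{-1/2})\,\zeta(x,y)$. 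This is what fixes the definitions \eqref{eq:c0}, \eqref{eq:c2} of $c_0$ and $c_2$, and accounts for every term of \eqref{eq:mainresult1} and \eqref{eq:mainresult2} except $\psi$ and $I_0$.

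For the correction double integral: after the substitution, the $O(m)$ part of the exponent concentrates at the centre $w_\ast$, whose branch points lie at distance $O(1-a)$ from $w_\ast$; rescaling the two integration variables by $1-a\propto m^{-1/2}$ moves these branch points to $\pm i/4$ and, using $m(1-a)^2=\bconst^2$, collapses the exponent to $\bconst^2\big(-2i(w-z)+\alpha(f^{\pm}(w)\mp f^{\pm}(z))\big)+O(m^{-1/2})$ — the leftover $O(m^{1/2})$ pieces being purely oscillatory (we are at the centre of the disordered region) and absorbed into $\zeta(x,y)$ and an overall phase — while the algebraic prefactor tends to $\tfrac{1}{32\pi^2}\,A^{j,k}_{\e1,\e2}(w,z)/\big(i(z-w)\big)$; the two possible rescaled branches $f^{+},f^{-}$ in each of $w,z$ give exactly the four sign combinations of $I_1,\dots,I_4$, the indices $j,k\in\{0,1\}$ recording which local branch is selected near the saddles. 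I would then deform the rescaled circular contours to the steepest descent contours $\Cmainw,\Cotherw$ for the $w$–integrals and their real–axis reflections $\Cmainz,\Cotherz$ for the $z$–integrals, through the saddles $\pm\eta$ and $-\eta'$, keeping the orientation of decreasing real part and routing carefully around the branch cuts $i(1/4,\infty)$ and $-i(1/4,\infty)$. The crucial point is this: for $-1/\sqrt2\le\alpha<0$ the saddle $\eta$ lies on the imaginary axis, so $\Cmainw$ and $\Cmainz$ occupy opposite half planes and no pole of the integrand is crossed; for $\alpha<-1/\sqrt2$ the saddle $\eta$ has moved onto the positive real axis, $\Cmainw$ and $\Cmainz$ necessarily cross along the segment $(-\eta,\eta)$, and the residue of the simple pole at $w=z$ there contributes the extra term $\tfrac{1}{4\pi}I_0(\alpha,\e1,\e2)$ — using that $A^{0,0}_{\e1,\e2}(w,w)=-4\cdot(\text{integrand of }I_0)$ and that the exponent vanishes on the diagonal $z=w$. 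Collecting the four remaining saddle contributions gives $\psi(\alpha,\e1,\e2)$ as in \eqref{eq:psixy}, proving \eqref{eq:mainresult1} for $-1/\sqrt2\le\alpha<0$ and \eqref{eq:mainresult2} for $\alpha<-1/\sqrt2$.

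I expect the main obstacle to be making the steepest descent analysis of the correction integral uniform in $\alpha$, and in particular handling the transition at $\alpha=-1/\sqrt2$, where the saddles $\pm\eta$ coalesce at $0$ and the contour enters at angles $-\pi/6,-5\pi/6$: one must check that this coalescence does not force an Airy–type rescaling, that the accompanying change of contour topology is accounted for by exactly the residue producing $I_0$, and that the error stays $O(m^{-1}\log m)$ through the coalescence — all while tracking every residue crossed during the deformation (from the pole at $w=z$ and from the poles of the prefactor $A^{j,k}_{\e1,\e2}$) for each of the four choices of $\e1,\e2$ and controlling the contour tails near the branch cuts. A secondary difficulty is the error estimate itself: the $\log m$ in $O(m^{-1}\log m)$ must be followed through the next–order terms of both the gas–phase expansion (another Green's–function logarithm) and the saddle–point expansion of the correction integral.
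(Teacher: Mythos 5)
Your overall decomposition is the paper's: Theorem~\ref{theorem:Ka1} splits $K_a^{-1}$ into the full-plane term $\mathbb{K}_{a,0,0}^{-1}$ plus four correction double integrals $\mathcal{I}^{j,k}_{\e1,\e2}$, and your treatment of the correction integrals --- rescale at the saddles near $\omega=\pm i$, identify the limiting prefactor $A^{j,k}_{\e1,\e2}(w,z)/(32\pi^2\, i(z-w))$ and exponent $g_{j,k}$, deform to $\Cmainw,\Cmainz,\Cotherw,\Cotherz$, and collect the residue along $(-\eta,\eta)$ for $\alpha<-1/\sqrt2$ using $A^{0,0}_{\e1,\e2}(w,w)=-4\times(\text{integrand of }I_0)$ and $g_{0,0}(w,w)=0$ --- is essentially Section~\ref{sec:Iasymptotics}. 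Where you genuinely diverge is the gas term: the paper does not attack the two-dimensional Fourier integral directly, but integrates out one variable by residues (Lemma~\ref{lemma:K11singlecontourintegral}), deforms the remaining contour onto the branch cut $[z_1,z_2]$ to obtain a real one-dimensional integral (Theorem~\ref{thm:kgasinversereal}), and extracts the $h\log h$ and $h$ terms from the asymptotics of a complete elliptic integral together with an integration-by-parts recursion (Lemmas~\ref{lemma:intbyparts} and \ref{lemma:asymcoeffsrelations}) that pins down $c_2$ and the parity structure carried by $\zeta(x,y)$. Your splitting of the 2D integral around the degenerating zeros of $P_a$ is a plausible alternative for producing the $\tfrac{1}{2\pi}h\log h$ coefficient, but the theorem requires the \emph{specific} formulas \ref{eq:c0} and \ref{eq:c2} (including the $b(z)$ integral and the sign $(-1)^{q/2}$), so you would still need a computation equivalent to the paper's one-dimensional reduction to identify $c_2$ and to verify that the logarithmic term is exactly $\zeta$-odd.

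Two concrete slips to repair. First, the branch points of $\sqrt{\omega^2+2c}$ sit at distance $O(m^{-1})=O\bigl((1-a)^2\bigr)$ from $\pm i$, not $O(1-a)$, and the correct rescaling of the integration variables is $\omega=i+\bconst^2 m^{-1}w$, i.e.\ by $(1-a)^2$; with your proposed $m^{-1/2}$ rescaling the branch points collapse to the origin rather than to $\pm i/4$ and the limiting exponent $g_{j,k}$ never appears. Relatedly, after the correct rescaling there is no leftover $O(m^{1/2})$ oscillatory piece in the exponent (only $(-1)^m$, which cancels in $h_{j,k}$); the factors $\zeta(x,y)$ and $\Sigma(x,y)$ come from the algebraic prefactor $i^{y_1-x_1}(-1)^{\e1\e2}i^{\e1-\e2}$ via Lemma~\ref{lemma:uglycoefficient}, not from absorbing phases. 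Second, for $\alpha<-1/\sqrt2$ the contour crossings are not confined to the $(0,0)$ pair: $\Cmainw$ also crosses $\Cotherz$, and $\Cotherw$ crosses $\Cmainz$, so the deformations for $(j,k)=(1,0)$ and $(0,1)$ also generate residue integrals along the chords between the crossing points; these contribute only $O(m^{-1})$ because $A^{1,0}_{\e1,\e2}(w,w)=A^{0,1}_{\e1,\e2}(w,w)=0$, a cancellation that must be checked rather than assumed. With these corrections your plan matches the paper's proof.
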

The proof is given in Section~\ref{sec:proofofmainresult}.

\begin{figure}[htbp]
\centering
\begin{subfigure}[t]{0.5\textwidth}
\centering
\includegraphics[width = \textwidth]{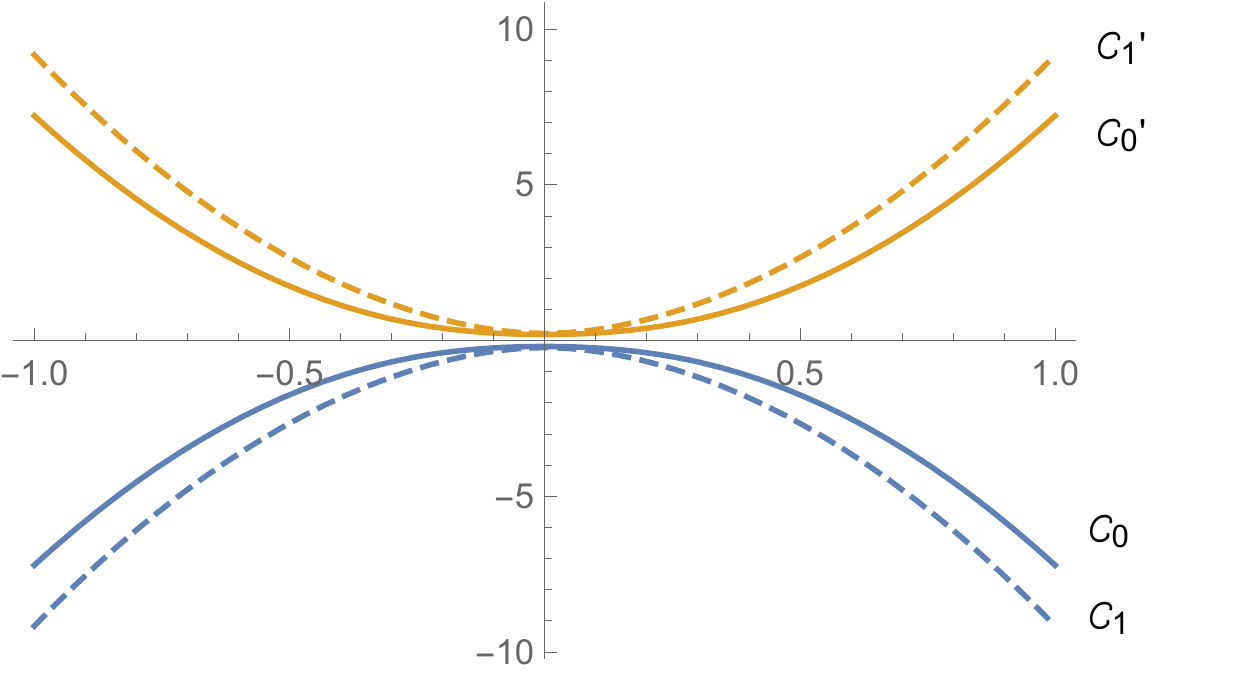}
\caption{$\alpha = -0.5$}
\end{subfigure}%
\begin{subfigure}[t]{0.5\textwidth}
\centering
\includegraphics[width = \textwidth]{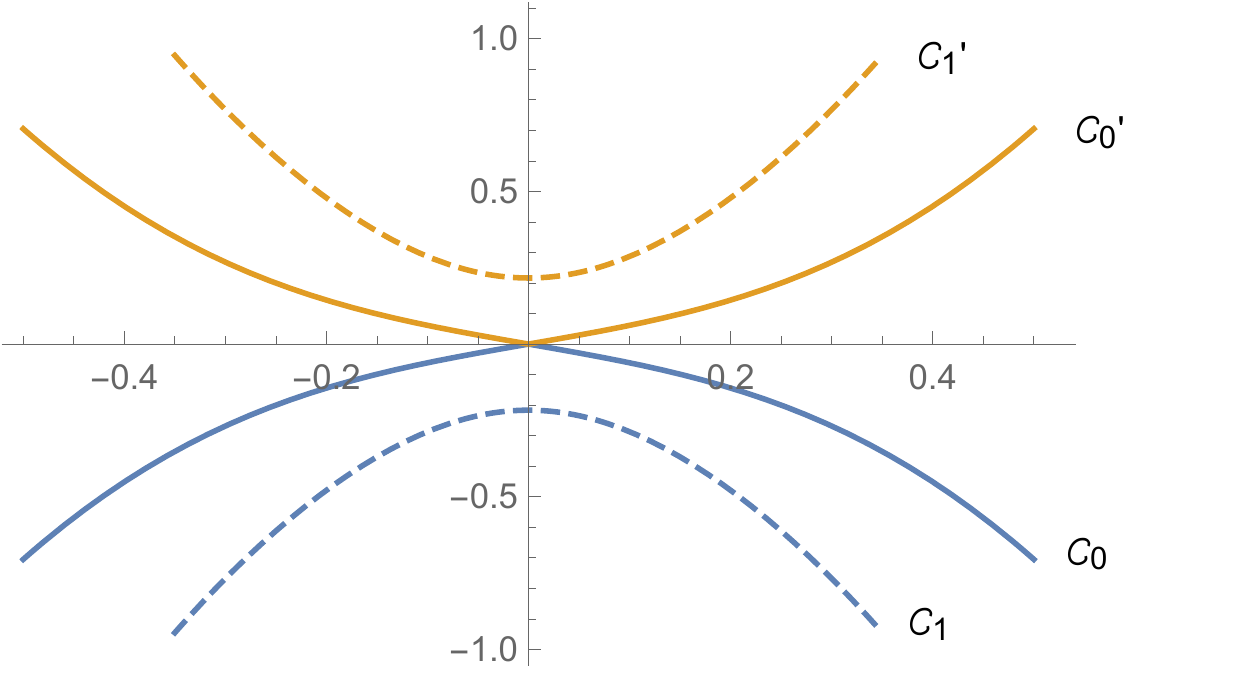}
\caption{$\alpha = -1/\sqrt{2}$}
\end{subfigure}
\vspace{0.3cm}

\begin{subfigure}[t]{0.5\textwidth}
\centering
\includegraphics[width = \textwidth]{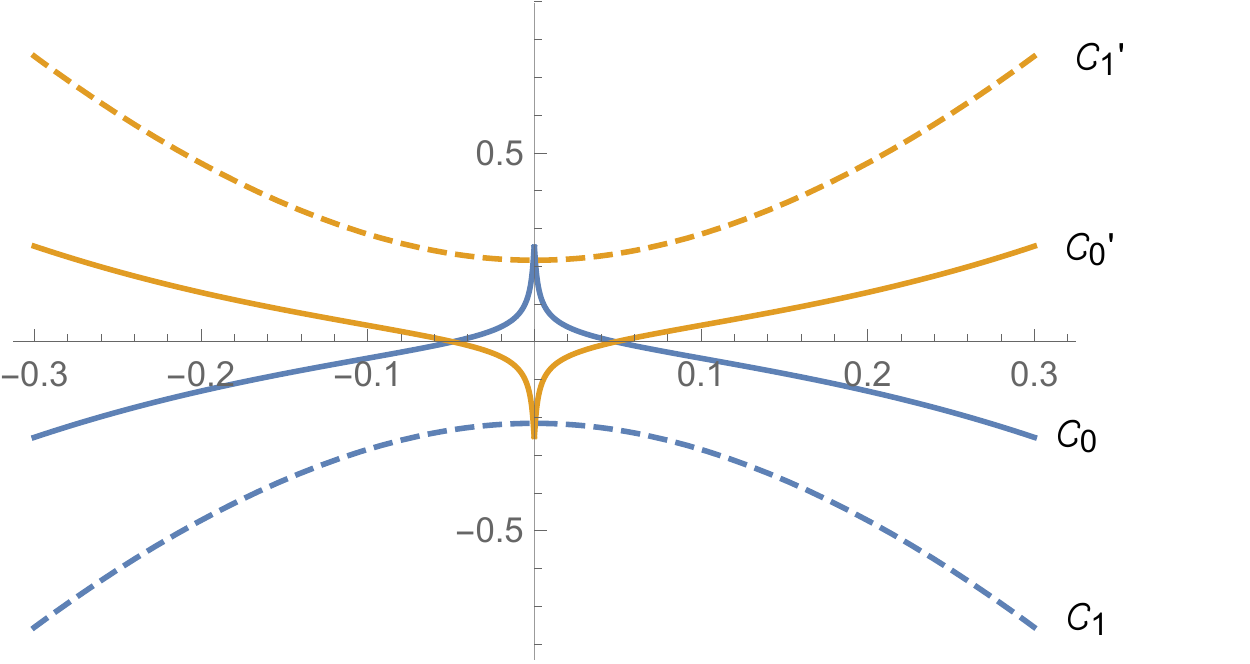}
\caption{$\alpha = -1/\sqrt{2} - 0.01$}
\end{subfigure}%
\begin{subfigure}[t]{0.5\textwidth}
\centering
\includegraphics[width = \textwidth]{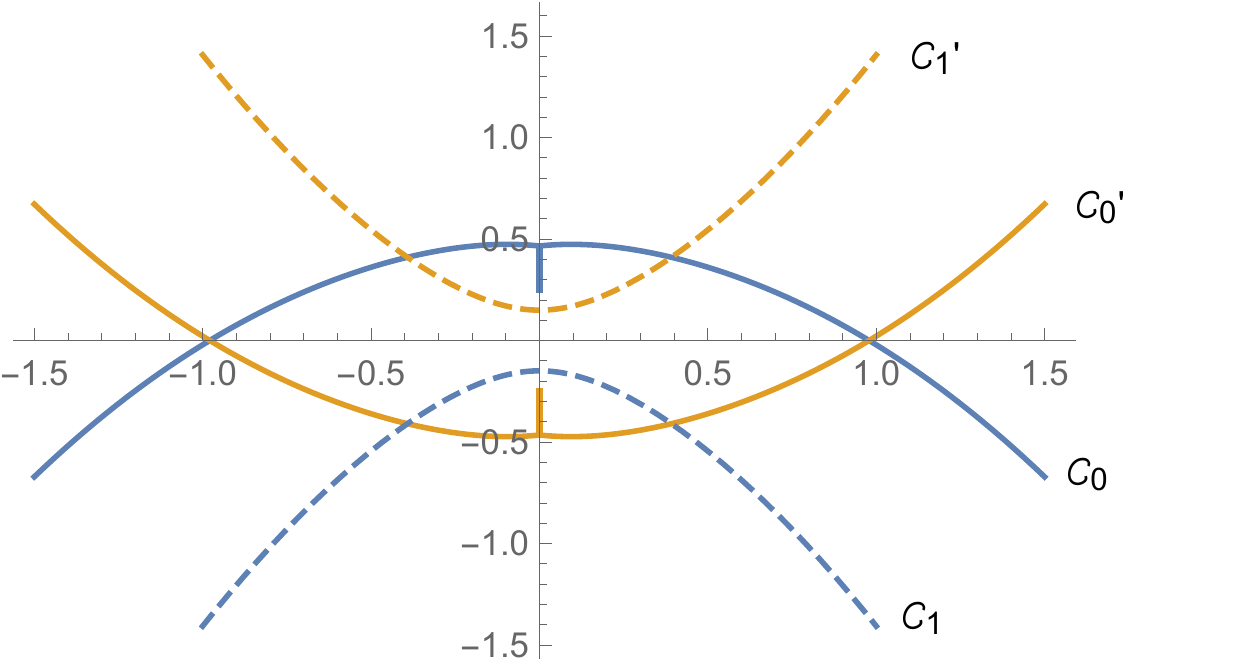}
\caption{$\alpha = -1.8$}
\end{subfigure}
\caption{Contours $\Cmainw$ (blue), $\Cmainz$ (orange), $\Cotherw$ (blue, dashed) and $\Cotherz$ (orange, dashed) for different values of $\alpha$. Different scales are used to show the notable features of each.}
\label{fig:C1C2}
\end{figure}

\begin{remark}
While it may appear from these formulas that there is a clear boundary at $\alpha=-1/\sqrt{2}$ that appears to separate a ordered-like region and a disordered-like region, this is in fact not the case. Once the $\psi(\alpha,\e1,\e2)$ term is taken account of, we see that the first derivative $K_a^{-1}$ at $\alpha = -1/\sqrt{2}$ is in fact continuous. Figure \ref{fig:orderhterms} shows $-I_0(\alpha,\e1,\e2)/(4\pi) - \psi(\alpha, \e1,\e2)$ as a function of $\alpha$ for different values of $(\e1,\e2)$, where $I_0(\alpha,\e1,\e2)$ is taken to be 0 for $-1/\sqrt{2} < \alpha < 1/\sqrt{2}$. 
\end{remark}

\begin{figure}[htbp]
\centering
\includegraphics[width = 0.8\textwidth]{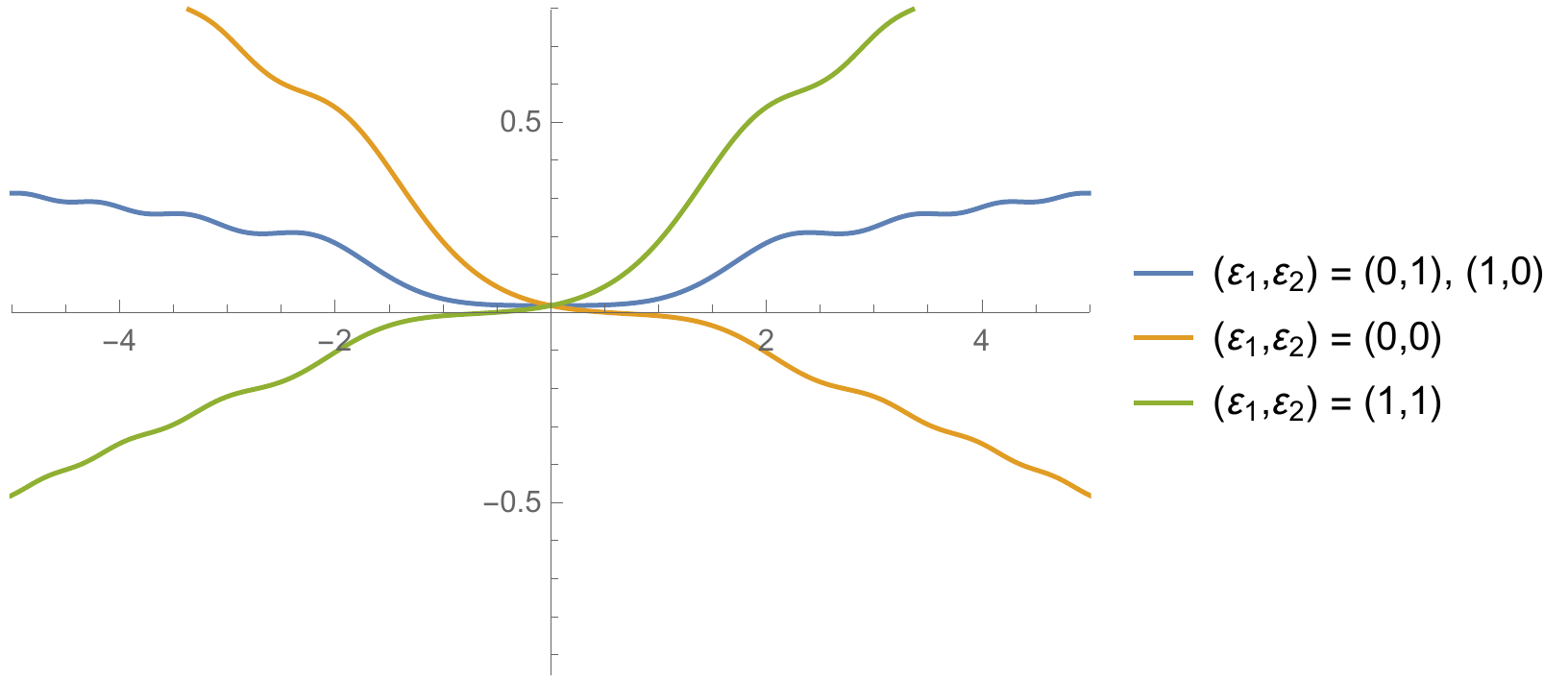}
\caption{$\frac{I_0(\alpha,\e1,\e2)}{4\pi} + \psi(\alpha, \e1,\e2)$ as a function of $\alpha$ for different values of $(\e1,\e2)$, where $I_0(\alpha,\e1,\e2)$ is taken to be 0 for $-1/\sqrt{2} < \alpha < 1/\sqrt{2}$. Up to a sign $\zeta(x,y)$, these are the parts of the coefficients of $\bconst m^{-1/2}$ in Equations~\ref{eq:mainresult1}--\ref{eq:mainresult2} that are not constant as a function of $\alpha$.}
\label{fig:orderhterms}
\end{figure}

From Theorem \ref{thm:mainresult} we can find the asymptotics of the one-point correlation functions $\rho(x,y)$.

\begin{corollary}\label{cor:onepointcor}
Let $(x,y)$ be an edge of $\Gamma$. For $-1/\sqrt{2} \leq \alpha < 0$, if $x = (x_1,x_2) \in \mathtt{W}_\e1$ and $y= (y_1,y_2) \in \mathtt{B}_\e2$ with $\e1,\e2\in \{0,1\}$, we have
\begin{equation}\label{eq:onepointresult1}
\rho(x,y) =  \frac{1}{4} +\zeta (x,y)\bconst m^{-1/2}\bigg( \frac{\log (\bconst m^{-1/2})- 2\log 2}{2\pi}
+ \psi(\alpha,\e1,\e2)\bigg)+ O(m^{-1}\log m )
\end{equation} and for $\alpha < -1/\sqrt{2}$ we have \begin{multline}\label{eq:onepointresult2}
\rho(x,y) =  \frac{1}{4} +\zeta (x,y)\bconst m^{-1/2}\bigg(\frac{\log (\bconst m^{-1/2})- 2\log 2}{2\pi}  \\
+ \frac{I_0(\alpha, \e1, \e2)}{4\pi}+ \psi(\alpha,\e1,\e2)\bigg)+ O(m^{-1}\log m )
\end{multline}
\end{corollary}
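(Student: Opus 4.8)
The plan is to substitute the asymptotics of $K_a^{-1}$ from Theorem~\ref{thm:mainresult} into the identity $\rho(x,y)=K_a(y,x)\,K_a^{-1}(x,y)$, valid for an edge $(x,y)$ with $x\in\mathtt{W}$, $y\in\mathtt{B}$, recalled in Section~\ref{sec:coords}. Because $(x,y)$ is an edge, the matrix $\Sigma(x,y)$ coincides with the Kasteleyn--Percus orientation $\epsilon(x,y)$ (the remark after Equation~\ref{eq:Sigmaxy}), and $K_a(y,x)=w(x,y)\,\epsilon(x,y)$ by definition of $K_a$; hence $K_a(y,x)/\Sigma(x,y)=w(x,y)\in\{a,1\}$. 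Multiplying Equations~\ref{eq:mainresult1} and \ref{eq:mainresult2} through by $K_a(y,x)$ therefore just replaces the prefactor $1/\Sigma(x,y)$ by the edge weight $w(x,y)$, and since $|K_a(y,x)|=w(x,y)\le 1$ the $O(m^{-1}\log m)$ error is preserved; the two cases $-1/\sqrt{2}\le\alpha<0$ and $\alpha<-1/\sqrt{2}$ carry over verbatim from the two cases of the theorem.

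Two elementary inputs then finish the argument. First, a parity computation in the coordinates of Section~\ref{sec:coords} --- using that $x_1$ is odd and $x_2$ even, the description of $\mathtt{W}_{\e1}$ modulo $4$, and the explicit entries of $K_a$ --- shows that for each of the four edge directions $y-x\in\{\pm e_1,\pm e_2\}$ one has $w(x,y)=a$ exactly when $\zeta(x,y)=1$ and $w(x,y)=1$ exactly when $\zeta(x,y)=-1$; equivalently $w(x,y)=1-\tfrac12\bigl(1+\zeta(x,y)\bigr)\bconst m^{-1/2}$. Second, evaluating the defining formulas in Equations~\ref{eq:c0} and \ref{eq:c2} on the four edge vectors gives $c_0(y-x)=\tfrac14$ (consistent with the leading value $\tfrac14$ of $\rho$) and $c_2(y-x)=\tfrac18-\tfrac{\log 2}{\pi}$, the same number for all four directions. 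Consequently $w(x,y)\,c_0(y-x)\bigl(1+\tfrac12\bconst m^{-1/2}\bigr)=\tfrac14-\tfrac18\,\zeta(x,y)\,\bconst m^{-1/2}+O(m^{-1})$, while $w(x,y)$ times the bracket $\zeta(x,y)\bconst m^{-1/2}\bigl(\tfrac{\log(\bconst m^{-1/2})}{2\pi}+c_2(y-x)+\tfrac{I_0(\alpha,\e1,\e2)}{4\pi}+\psi(\alpha,\e1,\e2)\bigr)$ --- the $I_0$ summand being present only for $\alpha<-1/\sqrt{2}$ --- equals that bracket up to $O(m^{-1}\log m)$, since $w(x,y)=1+O(m^{-1/2})$ and each summand in it is $O(\log m)$.

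Collecting terms, the constant $\tfrac14$ is the leading order; the weight-correction contribution $-\tfrac18\,\zeta(x,y)\,\bconst m^{-1/2}$ merges with $c_2(y-x)=\tfrac18-\tfrac{\log 2}{\pi}$ inside the bracket to turn $\tfrac{\log(\bconst m^{-1/2})}{2\pi}+c_2(y-x)$ into $\tfrac{\log(\bconst m^{-1/2})-2\log 2}{2\pi}$; and $\tfrac{I_0(\alpha,\e1,\e2)}{4\pi}$ and $\psi(\alpha,\e1,\e2)$ survive unchanged. This gives Equations~\ref{eq:onepointresult1} and \ref{eq:onepointresult2}. I expect no serious obstacle here: the only genuine computations are the evaluation of $c_0$ and $c_2$ on the edge vectors and the weight/sign parity correspondence, and the one place requiring care is that the $O(m^{-1/2})$ discrepancy hidden in $w(x,y)=a$ must be tracked through the $c_0\bigl(1+\tfrac12\bconst m^{-1/2}\bigr)$ term --- it is exactly this contribution that produces the $-2\log 2$ appearing in the final formula.
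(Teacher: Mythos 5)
Your proposal is correct and follows essentially the same route as the paper's proof: evaluate $c_0(\pm e_1)=c_0(\pm e_2)=\tfrac14$ and $c_2=\tfrac18-\tfrac{\log 2}{\pi}$, write $K_a(y,x)=\Sigma(x,y)\bigl(1-\tfrac12(1+\zeta(x,y))\bconst m^{-1/2}\bigr)+O(m^{-1})$ via the weight/sign correspondence, and multiply through Theorem~\ref{thm:mainresult}, with the $\tfrac18$ cancellation producing the $-2\log 2$ term. No issues.
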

The proof is given in Section~\ref{sec:proofofcor}.

\begin{remark}
For $(x,y)$ an edge, we can show that the definition of $\zeta(x,y)$ in Equation \ref{eq:zeta} is equivalent to
\[
\zeta(x,y) = \begin{cases}
1 & \text{if the edge } (x,y) \text{ has weight }a \\
-1 & \text{if the edge } (x,y) \text{ has weight }1 \end{cases}\]
The types of dominos with faces centered at $x$ and $y$ where $\zeta(x,y) = 1$ are shown in Figure~\ref{fig:dominosa} and the dominos where $\zeta(x,y) = -1$ are shown in Figure~\ref{fig:dominos1}.
\end{remark}

\begin{figure}[htbp]
\centering
\vspace{0.1cm}
\begin{subfigure}[t]{0.9\textwidth}
\centering
\includegraphics[width = \textwidth]{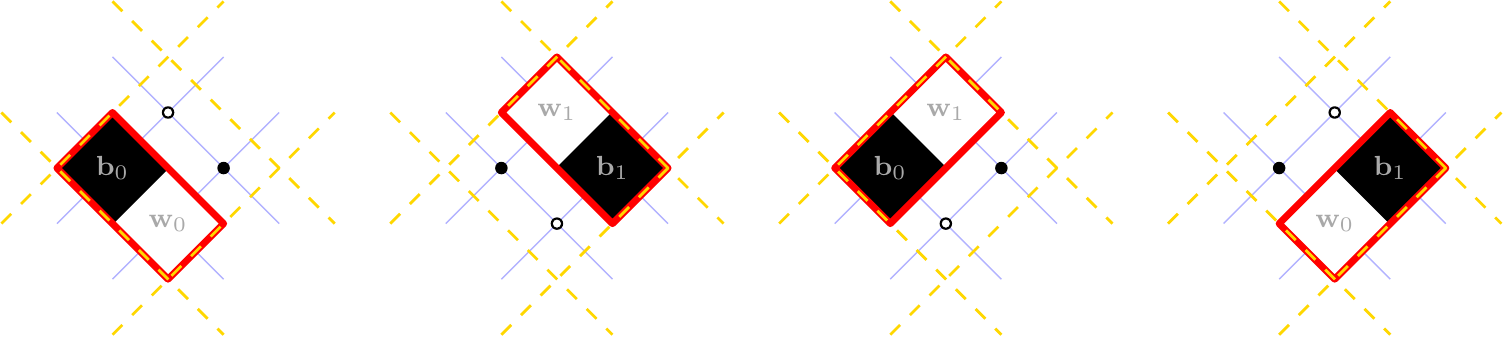}
\caption{Weight $a$ dominos, with $(\e1,\e2) = (0,0)$, $(1,1)$, $(1,0)$ and $(0,1)$ respectively.}\label{fig:dominosa}
\end{subfigure}

\vspace{0.5cm}

\begin{subfigure}[t]{0.9\textwidth}
\centering
\includegraphics[width = \textwidth]{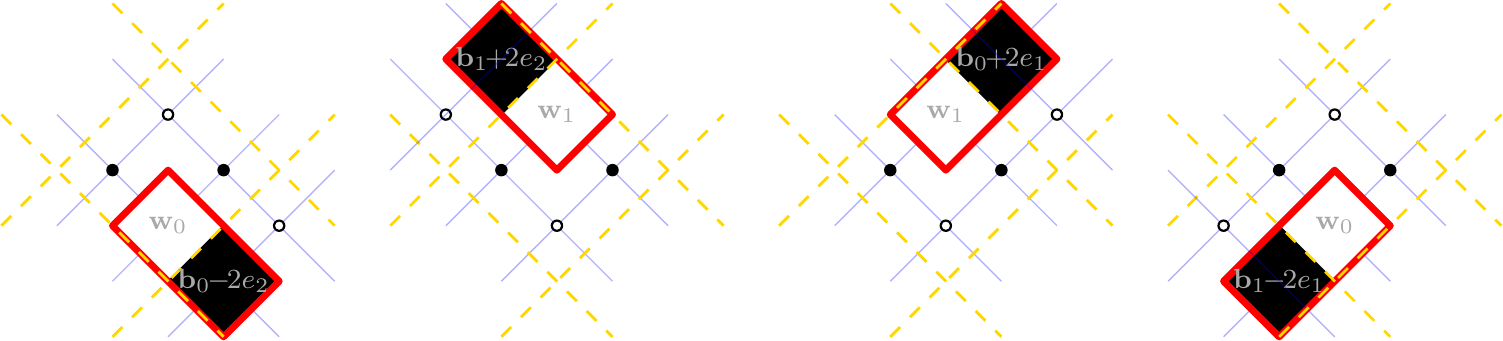}
\caption{Weight 1 dominos, with $(\e1,\e2) = (0,0)$, $(1,1)$, $(1,0)$ and $(0,1)$ respectively.}\label{fig:dominos1}
\end{subfigure}
\caption{Illustration of the eight different types of dominos. The boundaries of the fundamental domains are marked by dashed yellow lines.}
\label{fig:dominotypes}
\end{figure}

The proof of Theorem \ref{thm:mainresult} starts from the double integral formula for the inverse Kasteleyn matrix in \cite{chhita2016domino}. We then find the asymptotics of these integrals, using saddle point analysis when required.

\subsection{Overview of paper}
In Section~\ref{sec:asymptoticresultsandmainthmproof} we first state the formula for the inverse Kastelyn matrix that will form the basis of our asymptotic analysis. We then state the main asymptotic results that we prove in this paper, and use these to prove Theorem~\ref{thm:mainresult} and Corollary~\ref{cor:onepointcor}. In Sections~\ref{sec:Iasymptotics} and \ref{sec:Jasymptotics} we give full technical proofs of these asymptotic results. In Appendix~\ref{sec:moreproofs} we give some proofs that were omitted from the previous sections.

\section{Main asymptotic results and proof of main theorem}\label{sec:asymptoticresultsandmainthmproof}
\subsection{Definitions and inverse Kasteleyn matrix formula from \cite{chhita2016domino}}\label{sec:defs}
We will start our asymptotic analysis with a slight modification of the formula given in \cite{chhita2016domino}. First we make a few definitions. Let
\[c = \frac{1}{a + a^{-1}}.\] Since we are assuming $a \in (0,1)$, we have $c \in (0, 1/2)$. For $\omega \in \mathbb{C}\setminus i[-\sqrt{2c},\sqrt{2c}]$ we define \begin{equation}
\sqrt{\omega^2 + 2c} = i\sqrt{-i(\omega + i\sqrt{2c})}\sqrt{-i(\omega - i \sqrt{2c})}\label{eq:sqrtbranchcut}\end{equation} where the square roots on the right hand side are the principal branch of the square root. These have branch cuts for $\omega = it$ with $t < -\sqrt{2c}$ and $t < \sqrt{2c}$ respectively, but for $t < -\sqrt{2c}$ the branch cuts cancel. Note that when we are using asymptotic variables as in Section~\ref{sec:overviewofresults}, we use the principal branch of the square root, not this branch cut. Define \begin{equation}
G(\omega) = \frac{1}{\sqrt{2c}}(\omega - \sqrt{\omega^2 + 2c}).\label{eq:G}\end{equation} For even $x_1, x_2$ with $0 < x_1,x_2 < 2n$ define \begin{equation}
\widetilde{H}_{x_1,x_2}(\omega) = \frac{\omega^{2m}(-iG(\omega))^{2m - x_1/2}}{(iG(\omega^{-1}))^{2m-x_2/2}}.\label{eq:Htilde}\end{equation} This is slightly different to the $H_{x_1,x_2}$ defined in \cite{chhita2016domino} but will be more convenient for our asymptotic analysis, since $-iG(\omega)$ and $iG(\omega^{-1})$ will be close to 1 near the saddle point. Note that $\widetilde{H}_{x_1,x_2}(\omega)$ depends on $m$ both directly and through the dependence of $G$ on $c$. For $j,k,\e1,\e2 \in \{0,1\}$, define
\begin{equation}
V^{j,k}_{\e1, \e2} (\w1, \w2) = \frac{1}{2}\sum_{\gamma_1,\gamma_2=0}^1 (-1)^{\gamma_2 j + \gamma_1 k}(Q_{\gamma_1,\gamma_2}^{\e1,\e2}(\omega_1,\omega_2) + (-1)^{\e2+1}Q_{\gamma_1,\gamma_2}^{\e1,\e2}(\omega_1,-\omega_2))\label{eq:Voriginal}
\end{equation} where the functions $Q_{\gamma_1,\gamma_2}^{\e1,\e2}(\omega_1,\omega_2)$ are defined as follows. Let \begin{equation}
f_{a,b}(u, v) = (2a^2 uv + 2b^2 uv -ab(-1+u^2)(-1+v^2))
 (2a^2 uv + 2b^2 uv +ab(-1+u^2)(-1+v^2)). \label{eq:fa}
\end{equation} Now we define the following rational functions. We temporarily consider weights $a$ and $b$ where $b$ is not necessarily 1. Let
\begin{align}
\begin{split}
\mathrm{y}_{0,0}^{0,0}(a,b,u,v) =& \frac{1}{4(a^2+b^2)^2 f_{a,b}(u, v)}(2a^7 u^2 v^2 - a^5 b^2(1 + u^4 + u^2v^2 - u^4v^2 + v^4 - u^2v^4) \\
&- a^3 b^4(1+3u^2 + 3v^2 + 2u^2v^2 + u^4v^2 + u^2v^4 - u^4v^4) \\
&- a b^6(1+v^2+u^2+3u^2v^2)) \\
\mathrm{y}_{0,1}^{0,0}(a,b,u,v) =& \frac{a}{4(a^2 + b^2)f_{a,b}(u, v)}(b^2 + a^2u^2)(2a^2v^2 + b^2(1 + v^2 - u^2 + u^2v^2)) \\
\mathrm{y}_{1,0}^{0,0}(a,b,u,v) =& \frac{a}{4(a^2+b^2)f_{a,b}(u, v)}(b^2 + a^2v^2)(2a^2u^2 + b^2(1 - u^2 + v^2 + u^2v^2)) \\
\mathrm{y}_{1,1}^{0,0}(a,b,u,v) =& \frac{a}{4f_{a,b}(u, v)}(2a^2u^2v^2 +b^2(- 1 + v^2 + u^2 + u^2v^2)).\end{split}\label{eq:y00}
\end{align} For $\gamma_1,\gamma_2 \in \{0,1\}$ we define  \begin{align}
\begin{split}
\mathrm{y}_{\gamma_1,\gamma_2}^{0,1}(a,b,u,v) &= \frac{\mathrm{y}_{\gamma_1,\gamma_2}^{0,0}(b,a, u,v^{-1})}{v^2} \\
\mathrm{y}_{\gamma_1,\gamma_2}^{1,0}(a,b,u,v) &= \frac{\mathrm{y}_{\gamma_1,\gamma_2}^{0,0}(b,a,u^{-1},v)}{u^2} \\
\mathrm{y}_{\gamma_1,\gamma_2}^{1,1}(a,b,u,v) &= \frac{\mathrm{y}_{\gamma_1,\gamma_2}^{0,0}(a,b,u^{-1},v^{-1})}{v^2}.
\end{split}\label{eq:otherys}
\end{align} When $b=1$, we write $\mathrm{y}^{\e1,\e2}_{\gamma_1,\gamma_2}(u,v)  = \mathrm{y}^{\e1,\e2}_{\gamma_1,\gamma_2}(a,1,u,v) $. Then define \begin{equation}\mathrm{x}_{\g1, \g2}^{\e1, \e2}(\w1, \w2) = \frac{G(\w1) G(\w2)}{\prod_{i=1}^2 \sqrt{\omega_i^2 +2c} \sqrt{\omega_i^{-2} + 2c}} \textrm{y}_{\g1, \g2}^{\e1, \e2} (G(\w1), G(\w2))(1 - \w1^2\w2^2).\label{eq:xg1g2}\end{equation} and \begin{multline}
Q_{\g1,\g2}^{\e1,\e2}(\w1,\w2) = (-1)^{\e1+\e2+\e1\e2 + \g1(1+\e2) + \g2(1+\e1)}\\
\times t(\w1)^{\g1}t(\w2^{-1})^{\g2} G(\w1)^\e1 G(\w2^{-1})^\e2 \mathrm{x}_{\g1, \g2}^{\e1, \e2}(\w1, \w2^{-1})\label{eq:Q}\end{multline} where $t(\omega)$ is defined by \begin{equation}
t(\omega) = \omega\sqrt{\omega^{-2} + 2c}.\label{eq:t}\end{equation}

Now we define the following functions, which will be the exponential parts of our integrands. For  $x = (x_1,x_2) \in \mathtt{W}_\e1,$ and $y = (y_1,y_2) \in \mathtt{B}_\e2$ with $\e1,\e2\in \{0,1\}$, define \begin{align}
\begin{split}
h_{0,0}(\w1,\w2) &=  \frac{\widetilde{H}_{x_1 + 1, x_2}(\w1)}{\widetilde{H}_{y_1, y_2 + 1}(\w2)}\\
h_{1,0}(\w1,\w2) &= \frac{\widetilde{H}_{x_1 + 1, x_2}(\w1)}{\widetilde{H}_{2n-y_1, y_2 + 1}(\w2)} \\
h_{0,1}(\w1,\w2) &= \frac{\widetilde{H}_{x_1 + 1, 2n - x_2}(\w1)}{\widetilde{H}_{y_1, y_2 + 1}(\w2)} \\
h_{1,1}(\w1,\w2) &=  \frac{\widetilde{H}_{x_1 + 1, 2n - x_2}(\w1)}{\widetilde{H}_{2n-y_1, y_2 + 1}(\w2)}
\end{split}\label{eq:hij}
\end{align} where the terms $\widetilde{H}_{x_1 + 1, x_2}(\w1)$, $\widetilde{H}_{y_1, y_2 + 1}(\w2)$, $\widetilde{H}_{2n-y_1, y_2 + 1}(\w2)$ and $\widetilde{H}_{x_1 + 1, 2n - x_2}(\w1)$ are as defined in Equation~\ref{eq:Htilde}.

Now we can define the main integrals that will appear in the formula from which we start our asymptotic analysis. 

Let $\circlecontour_r$ denote a positively-oriented contour of radius $r$ centered at the origin. For $a < 1$, $\sqrt{2c} < r < 1$ and $x = (x_1,x_2) \in \mathtt{W}_\e1,\, y = (y_1,y_2) \in \mathtt{B}_\e2$ with $\e1,\e2\in \{0,1\}$ define \begin{align}\label{eq:mathcalI}
\mathcal{I}^{j,k}_{\e1,\e2} (a, x_1, x_2, y_1, y_2) &= \frac{i^{y_1-x_1}}{(2\pi i)^2} \int_{\circlecontour_r} \frac{d\w1}{\w1} \int_{\circlecontour_{1/r}}d\w2 \frac{V_{\e1, \e2}^{j,k} (\w1, \w2)}{\w2 - \w1} h_{j,k}(\w1,\w2)
\end{align} where $V_{\e1, \e2}^{j,k} (\w1, \w2)$ is defined in Equation~\ref{eq:Voriginal} and $h_{j,k}(\w1,\w2)$ is defined in Equation~\ref{eq:hij}. 

Finally, we state a formula for the whole plane inverse Kasteleyn matrix in the presence of a magnetic field \cite{kenyon_okounkov_sheffield_2006}. First, let \[
\mathcal{K}_a(z,w) = \begin{pmatrix}
i(a + w^{-1}) & a+z \\
a + z^{-1} & i(a+w)
\end{pmatrix}.
\] This is the ``magnetically altered'' Kasteleyn matrix \cite{kenyon_okounkov_sheffield_2006} for the fundamental domain with weights $a$ and 1 where $1/z$ is the multiplicative factor when crossing to a fundamental domain in the direction $e_1$ and $1/w$ is the multiplicative factor when crossing to a fundamental domain in the direction $e_2$ \cite{chhita2016domino}. See Figure~\ref{fig:aztec diamond diagram} for a diagram showing the fundamental domains. The inverse $(\mathcal{K}_a(z,w))^{-1}$ appears in the formula for the whole plane inverse Kasteleyn matrix $\mathbb{K}_{a,H,V}^{-1}(x,y)$ (Theorem~\ref{theorem:planeinverseoriginal} below), which in turn appears in the formula for the Kasteleyn matrix for the two-periodic weighted Aztec diamond $K_a^{-1}(x,y)$. The characteristic polynomial \cite{kenyon_okounkov_sheffield_2006} $P_a(z,w)$ is the determinant of $\mathcal{K}_a(z,w)$, explicitly given by
\begin{equation}\label{eq:Pazw}
P_a(z,w) = -2 -2a^2 - aw^{-1} - aw - az^{-1} - az.
\end{equation}

Let $\mathbb{K}_{a,H,V}^{-1}(x,y)$ denote the whole plane inverse Kasteleyn matrix for the entries $x$ and $y$ with magnetic field $(H,V)$ and weights $a<1$ and 1. Then, following from \cite{kenyon_okounkov_sheffield_2006}, we have the following theorem.
\begin{theorem}[\cite{kenyon_okounkov_sheffield_2006}]\label{thm:translationinvariantoriginal} Let $\circlecontour_r$ denote a contour of radius $r$ centered at the origin, traversed in a counter-clockwise direction. Then for $\mathbf{w} = (w_1,w_2) \in \mathtt{W}_\e1$ and $\mathbf{b} = (b_1,b_2) \in \mathtt{B}_\e2$ in the same fundamental domain, where $\e1,\e2\in \{0,1\}$, and $u, v \in \mathbb{Z}$ we have \begin{equation}
\label{eq:translationinvariantoriginal}
\mathbb{K}_{a,H,V}^{-1}(\mathbf{w}, \mathbf{b} + 2ue_1 + 2ve_2) = \frac{1}{(2\pi i)^2} \int_{\circlecontour_{e^H}}\frac{dz}{z}\int_{\circlecontour_{e^V}}\frac{dw}{w} (\mathcal{K}_a(z, w)^{-1})_{\e1\e2} z^u w^v 
\end{equation}
where for convenience rows and columns of the $2\times 2$ matrix $\mathcal{K}_a(z, w)^{-1}$ are indexed by 0 and 1.
\label{theorem:planeinverseoriginal}\end{theorem}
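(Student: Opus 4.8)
The plan is to derive Equation~\ref{eq:translationinvariantoriginal} by Bloch--Floquet (discrete Fourier) analysis of the doubly-periodic Kasteleyn operator on the whole plane, which is the argument of \cite{kenyon_okounkov_sheffield_2006}. First I would fix coordinates in which the infinite bipartite graph is tiled $\mathbb{Z}^2$-periodically by copies of the $2\times 2$ fundamental domain, each copy containing one white vertex of each type $\mathtt{W}_0,\mathtt{W}_1$ and one black vertex of each type $\mathtt{B}_0,\mathtt{B}_1$ (as one checks directly for the domain $\{\mathtt{w},\mathtt{w}+e_1,\mathtt{w}+e_2,\mathtt{w}+e_1+e_2\}$ with $\mathtt{w}\in\mathtt{W}_0$). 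Label every vertex by the pair consisting of its sublattice type and the lattice position $(u_1,u_2)\in\mathbb{Z}^2$ of its domain. The magnetic field $(H,V)$ is incorporated as a diagonal gauge conjugation $\mathbb{K}_{a,H,V} = D\,\mathbb{K}_a\,D^{-1}$ with $D(x) = e^{H u_1(x) + V u_2(x)}$; concretely this multiplies the weight of an edge by a power of $e^{H}$ (resp. $e^{V}$) whenever the edge crosses the chosen domain boundary in the $e_1$ (resp. $e_2$) direction. With this labeling $\mathbb{K}_{a,H,V}$ becomes an operator on $\ell^2(\mathbb{Z}^2)\otimes\mathbb{C}^2$ that is a finite-range convolution in the $\mathbb{Z}^2$-variable.

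Next I would apply the Fourier transform $f\mapsto \widehat f(z,w) = \sum_{u\in\mathbb{Z}^2}f(u)\,z^{-u_1}w^{-u_2}$, under which a convolution operator becomes multiplication by its matrix-valued symbol. Computing the contributions of the four edges at a white vertex of type $\e1$ --- with weights and Kasteleyn--Percus signs as in Figure~\ref{fig:kmatrix}, the two edges that leave the domain contributing monomial factors $z^{\pm1}$ and $w^{\pm1}$ according to which boundary they cross --- one recognizes that this symbol is precisely the matrix $\mathcal{K}_a(z,w)$ introduced before the statement, whose determinant is the characteristic polynomial $P_a(z,w)$ of Equation~\ref{eq:Pazw}. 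Hence on any torus $\{|z| = e^H,\ |w| = e^V\}$ disjoint from the zero set of $P_a$, the operator $\mathbb{K}_{a,H,V}$ has a bounded inverse given by the Fourier multiplier with symbol $\mathcal{K}_a(z,w)^{-1}$, and inverting the Fourier transform expresses the entry $\mathbb{K}_{a,H,V}^{-1}(\mathbf{w},\mathbf{b})$ as the double contour integral of $(\mathcal{K}_a(z,w)^{-1})_{\e1\e2}$ over circles of radii $e^H$ and $e^V$. Replacing $\mathbf{b}$ by $\mathbf{b}+2ue_1+2ve_2$ merely shifts the lattice label by $(u,v)$, which inserts the monomial $z^uw^v$ into the integrand and leaves the $(\e1,\e2)$ entry unchanged; this gives exactly Equation~\ref{eq:translationinvariantoriginal}.

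The step I expect to carry the real content is pinning down \emph{which} inverse of $\mathbb{K}_{a,H,V}$ this formula computes and why it is the one relevant to the model: $\mathbb{K}_{a,H,V}$ is not invertible on $\ell^2$ when $P_a$ vanishes on the unit torus (the liquid phase), and even when it is invertible one must check that the Fourier-multiplier inverse is the unique inverse with the decay at infinity representing the translation-invariant ergodic Gibbs measure of the prescribed slope. The role of the magnetic field $(H,V)$ is precisely that it lets one slide the integration torus into a connected component of the complement of $\{P_a = 0\}$, so that the integrand is analytic there and the integral is unambiguous. In the regime relevant to this paper, namely $a<1$, one has $P_a(z,w) = -(2+2a^2+a(z+z^{-1})+a(w+w^{-1}))$ with no zero on $|z|=|w|=1$, since $1+a^2+a(\cos\theta+\cos\phi) \geq (1-a)^2 > 0$, so one may simply take $H=V=0$ and all of the above is immediate. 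The general statement is recorded here to allow later deformation of contours, and its full justification is that of \cite{kenyon_okounkov_sheffield_2006}, to which I refer for the remaining analytic details; see also \cite{chhita2016domino}, where this formula is used in the same way.
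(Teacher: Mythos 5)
The paper does not prove this theorem — it is quoted directly from \cite{kenyon_okounkov_sheffield_2006} — and your Bloch--Floquet/Fourier-multiplier reconstruction is precisely the argument of that source: the periodic Kasteleyn operator is a matrix-valued convolution on $\mathbb{Z}^2$ with symbol $\mathcal{K}_a(z,w)$, and Fourier inversion on a torus avoiding the zero set of $P_a$ yields the stated double contour integral, with the shift by $2ue_1+2ve_2$ contributing the monomial $z^uw^v$. Your closing observation that $P_a(z,w)=-2\left(1+a^2+a(\cos\theta+\cos\phi)\right)$ is bounded below in modulus by $2(1-a)^2>0$ on the unit torus, so that $H=V=0$ suffices in the regime $a<1$, correctly identifies why the formula is unambiguous as used in this paper.
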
 Here we have \begin{equation}
\label{eq:Kazwinverse}
\mathcal{K}_a(z,w)^{-1} = \frac{1}{P_a(z,w)}\begin{pmatrix}
i(a+w) & -(a+z) \\
-(a+z^{-1}) & i(a+w^{-1}).
\end{pmatrix}
 \end{equation}

Now we can state the formula that forms the starting point for our asymptotic analysis. This is a slight modification of the formula from \cite{chhita2016domino}, which is in turn a simplification of the formula from \cite{chhita2014}.

\begin{theorem}\label{theorem:Ka1} For $n = 4m$ and $0 <a < 1$, take $x = (x_1,x_2) \in \mathtt{W}_\e1,\, y = (y_1,y_2) \in \mathtt{B}_\e2$ with $\e1,\e2 \in \{0,1\}$. Then the entries of the inverse Kasteleyn matrix $K_a^{-1}$ are given by
\begin{multline}
K_a^{-1}((x_1,x_2),(y_1,y_2)) = \mathbb{K}_{a,0,0}^{-1}((x_1,x_2),(y_1,y_2)) - \Big(\mathcal{I}^{0,0}_{\e1,\e2} (a, x_1, x_2, y_1, y_2) \\
-\mathcal{I}^{1,0}_{\e1,\e2}(a, x_1, x_2, y_1, y_2) - \mathcal{I}^{0,1}_{\e1,\e2} (a, x_1, x_2, y_1, y_2) + \mathcal{I}^{1,1}_{\e1,\e2} (a, x_1, x_2, y_1, y_2)\Big)\label{eq:Ka1inverse}
\end{multline}
where $\mathbb{K}_{a,0,0}^{-1}((x_1,x_2),(y_1,y_2))$ is defined in Equation~\ref{eq:translationinvariantoriginal} and $\mathcal{I}^{j,k}_{\e1,\e2} (a, x_1, x_2, y_1, y_2)$ is defined in Equation~\ref{eq:mathcalI}.
\end{theorem}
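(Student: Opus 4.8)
The plan is to obtain Theorem~\ref{theorem:Ka1} as a bookkeeping rewrite of the inverse Kasteleyn formula proved by Chhita and Johansson in \cite{chhita2016domino}, which already has exactly the shape of Equation~\ref{eq:Ka1inverse}: a translation-invariant term plus an alternating sum of four double contour integrals, the only differences being (i) the normalizing function inside the integrals is their $H_{x_1,x_2}(\omega)$ rather than the $\widetilde H_{x_1,x_2}(\omega)$ of Equation~\ref{eq:Htilde}, and (ii) a possibly different placement of $\omega$-independent constants (powers of $i$, of $\sqrt{2c}$, and signs) among the prefactor, the function $V^{j,k}_{\e1,\e2}$ of Equation~\ref{eq:Voriginal}, and the integrand. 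So the first step is to quote that formula precisely --- including its branch conventions --- and to check that the branch of $\sqrt{\omega^2+2c}$ fixed in Equation~\ref{eq:sqrtbranchcut} is the one used there, so that $G$ in Equation~\ref{eq:G} is the same analytic function on the annulus $\sqrt{2c}<|\omega|<1/\sqrt{2c}$ and the contours of radii $r$ and $1/r$ with $\sqrt{2c}<r<1$ avoid all branch cuts (the branch cut of $G(\omega)$ lies on $|\omega|\le\sqrt{2c}$ and that of $G(\omega^{-1})$ on $|\omega|\ge 1/\sqrt{2c}$ along the imaginary axis).

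The second step handles the translation-invariant piece with no real work: Theorem~\ref{theorem:planeinverseoriginal} applied with $H=V=0$ says $\mathbb{K}_{a,0,0}^{-1}((x_1,x_2),(y_1,y_2))$ is precisely the double integral over unit circles of $(\mathcal{K}_a(z,w)^{-1})_{\e1\e2}\,z^u w^v$, where $(u,v)$ is the displacement $y-x$ expressed in the $e_1,e_2$ basis; this is the same object appearing in \cite{chhita2016domino}, so it suffices to identify indices and to translate between Euclidean coordinates and fundamental-domain displacements.

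The third and only substantive step is the passage from $H$ to $\widetilde H$. Since $\widetilde H_{x_1,x_2}(\omega)$ differs from $H_{x_1,x_2}(\omega)$ only by the $\omega$-independent factors $(-i)^{2m-x_1/2}$ and $(i)^{-(2m-x_2/2)}$ extracted from $(-iG(\omega))^{2m-x_1/2}$ and $(iG(\omega^{-1}))^{-(2m-x_2/2)}$, each ratio $h_{j,k}(\w1,\w2)$ in Equation~\ref{eq:hij} differs from its $H$-analogue by an explicit power of $i$ depending only on $x_1,x_2,y_1,y_2,j,k$ (here one uses that $2n=8m$ is even, so the substitutions $x_2\mapsto 2n-x_2$ and $y_1\mapsto 2n-y_1$ only flip a sign in an exponent). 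I would collect these constants for all $(j,k,\e1,\e2)\in\{0,1\}^4$, combine them with the prefactor $i^{y_1-x_1}$ in Equation~\ref{eq:mathcalI}, with the signs built into $V^{j,k}_{\e1,\e2}$ through Equation~\ref{eq:Q} (in particular the $(-1)^{\e2+1}Q(\cdot,-\w2)$ term and the $(-1)^{\gamma_2 j+\gamma_1 k}$ signs), and with the constants Chhita and Johansson place in the analogous positions, and check that the net constant multiplying each of the four terms is exactly $1$, so that Equation~\ref{eq:Ka1inverse} reproduces their formula term by term. This is a calculation of the same flavor as the algebraic simplifications already carried out in \cite{chhita2016domino} and \cite{chhita2014}.

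The main obstacle is precisely this constant-and-sign bookkeeping across four terms and sixteen index choices: there is no analytic content --- no estimates, contour deformations, or limits --- but a single misplaced $i$ or $(-1)$ would be fatal, so I would guard against it by checking the resulting formula in a concrete small case (e.g.\ $m=1$ against a direct computation of $K_a^{-1}$, or the leading orders of the expansion in $1-a$). Once the constants are matched, Theorem~\ref{theorem:Ka1} follows immediately.
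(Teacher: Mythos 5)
Your overall strategy --- treating the theorem as a rewrite of the Chhita--Johansson formula, carrying $\mathbb{K}_{a,0,0}^{-1}$ over unchanged, converting $H$ to $\widetilde H$ by extracting $\omega$-independent powers of $i$, and closing with a check that the net coefficients of the four terms are $1,-1,-1,1$ --- is exactly what the paper does in Appendix~\ref{sec:formuladerivation}. The $H\to\widetilde H$ step and the final sign verification are carried out there just as you describe.

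The gap is in your premise that the quoted formula ``already has exactly the shape'' of Equation~\ref{eq:Ka1inverse} up to placement of constants. The formula from \cite{chhita2016domino} that serves as the starting point (Equations~\ref{eq:B00}--\ref{eq:B11} and \ref{eq:Ka1original}) has the kernel $\w2/(\w2^2-\w1^2)$ and the unsymmetrized sum $\sum_{\g1,\g2}Q^{\e1,\e2}_{\g1,\g2}(\w1,\w2)$, not $V^{j,k}_{\e1,\e2}(\w1,\w2)/(\w2-\w1)$. The difference between these two kernels is not an $\omega$-independent constant, so it cannot be absorbed into your bookkeeping of powers of $i$ and signs. The conversion is the one substantive step of the proof: write $\w2/(\w2^2-\w1^2)=\tfrac12\left(1/(\w2-\w1)+1/(\w2+\w1)\right)$, substitute $\w2\to-\w2$ in the second half (the contour $\circlecontour_{1/r}$ is invariant under this), and use the parities $G(-\omega)=-G(\omega)$, $H_{y_1,y_2+1}(-\w2)=(-1)^{\e2+1}H_{y_1,y_2+1}(\w2)$ (and its analogue for $H_{2n-y_1,y_2+1}$), and $Q^{\e1,\e2}_{\g1,\g2}(\w1,-\w2)=(-1)^{\e2+1}Q^{\e1,\e2}_{\g1,\g2}(\w1,\w2)$ to recombine the two halves into a single $1/(\w2-\w1)$ integral against the symmetrized $V^{j,k}_{\e1,\e2}$; this is precisely where the $(-1)^{\e2+1}Q(\w1,-\w2)$ term in Equation~\ref{eq:Voriginal} originates. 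With that step inserted before your constant-matching, the rest of your plan goes through as written.
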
 
The proof of Theorem~\ref{theorem:Ka1} is given in Appendix~\ref{sec:formuladerivation}.

\subsection{Statement of asymptotic results}
First we state the asymptotic expansion for $\mathcal{I}^{j,k}_{\e1,\e2} (a, x_1, x_2, y_1, y_2)$. Recall the functions \[f^\pm(w) = \sqrt{1/2-2iw} \pm \sqrt{1/2+2iw}\] defined in Equation~\ref{eq:fpm}. For fixed $\alpha_x,\alpha_y < 0$, define \begin{align}
\begin{split}
g_{0,0}(w,z) &= \bconst^2(-2i(w-z) + \alpha_x f^-(w) - \alpha_y f^-(z)) \\
g_{1,0}(w,z) &= \bconst^2(-2i(w-z) + \alpha_x f^-(w) + \alpha_y f^+(z)) \\
g_{0,1}(w,z) &= \bconst^2(-2i(w-z) + \alpha_x f^+(w) - \alpha_y f^-(z)) \\
g_{1,1}(w,z) &= \bconst^2(-2i(w-z) + \alpha_x f^+(w) + \alpha_y f^+(z))
\end{split}\label{eq:gij}
\end{align} We consider the case where $\alpha_x = \alpha_y = \alpha$ is the asymptotic coordinate from Equation~\ref{eq:asymptoticcoord}.
Our first important asymptotic result is
\begin{theorem}\label{thm:Iasymptotics}
 Recall the integrals $\mathcal{I}^{j,k}_{\e1,\e2} (a, x_1, x_2, y_1, y_2)$ defined in Equation~\ref{eq:mathcalI}. Let the contours $\Cmainw$, $\Cotherw$, $\Cmainz$ and $\Cotherz$ be as defined in Section~\ref{sec:overviewofresults}. Recall the functions $g_{j,k}(w,z)$ defined in Equation~\ref{eq:gij} and the functions $A^{j,k}_{\e1,\e2}(w,z)$ defined in Equation~\ref{eq:Aall}. Then for $-1/\sqrt{2} \leq \alpha < 0$,
  \begin{equation}
 \mathcal{I}^{0,0}_{\e1,\e2} (a, x_1, x_2, y_1, y_2) =  \frac{\zeta(x,y) \bconst m^{-1/2}}{8(2\pi i)^2\Sigma(x,y)}\int_{\Cmainw}dw \int_{\Cmainz}dz \frac{A^{0,0}_{\e1,\e2}(w,z)}{i(z-w)} e^{g_{0,0}(w,z)} + O(m^{-1}),
 \end{equation} and for $\alpha < -1/\sqrt{2}$,
 \begin{multline}
 \mathcal{I}^{0,0}_{\e1,\e2} (a, x_1, x_2, y_1, y_2) =  \frac{\zeta(x,y) \bconst m^{-1/2}}{8(2\pi i)^2\Sigma(x,y)}\\
\times\Bigg( \int_{\Cmainw}dw \int_{\Cmainz}dz \frac{A^{0,0}_{\e1,\e2}(w,z)}{i(z-w)} e^{g_{0,0}(w,z)} - 2\pi  \int_{-\eta}^{\eta} A^{0,0}_{\e1,\e2}(w,w) dw\Bigg) + O(m^{-1}).
 \end{multline} For $(j,k) \neq (0,0)$ for any $\alpha < 0$, 
 \begin{equation}
 \mathcal{I}^{j,k}_{\e1,\e2} (a, x_1, x_2, y_1, y_2) =  \frac{\zeta(x,y) \bconst m^{-1/2}}{8(2\pi i)^2\Sigma(x,y)} \int_{\mathcal{C}_j}dw \int_{\mathcal{C}_k'}dz \frac{A^{j,k}_{\e1,\e2}(w,z)}{i(z-w)} e^{g_{j,k}(w,z)} + O(m^{-1})
 \end{equation}
\end{theorem}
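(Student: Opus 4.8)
\textbf{Proof plan for Theorem~\ref{thm:Iasymptotics}.}

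The plan is to apply a two-dimensional steepest-descent (saddle-point) analysis to the double contour integral $\mathcal{I}^{j,k}_{\e1,\e2}$ in Equation~\ref{eq:mathcalI}, writing the integrand as (an algebraic prefactor) times $h_{j,k}(\w1,\w2)$ and tracking the leading exponential behavior of $h_{j,k}$ as $m\to\infty$. First I would substitute the asymptotic coordinates from Equation~\ref{eq:asymptoticcoord} into the definition $\widetilde H_{x_1,x_2}(\omega)$ in Equation~\ref{eq:Htilde}, and perform the change of variables $\omega = 1 + m^{-1/2}\bconst^{-1} w$ (so that the saddle region is at $w = O(1)$) together with the analogous substitution for $\w2$; the key computation here is a Taylor expansion of $\log\omega$, of $G(\omega)$, and of $\sqrt{\omega^2 + 2c}$ in powers of $m^{-1/2}$, using $c = \tfrac12 - \tfrac12(1-a)^2 + \dots = \tfrac12 - \tfrac12\bconst^2 m^{-1} + \dots$. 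This should produce $\log h_{j,k} = \bconst^2 g_{j,k}(w,z)\cdot(1 + o(1))$ with $g_{j,k}$ exactly as in Equation~\ref{eq:gij} (the two branches $f^\pm$ arising respectively from $\widetilde H_{y_1,y_2+1}$ versus $\widetilde H_{2n-y_1,y_2+1}$, i.e.\ from whether $k=0$ or $k=1$, and similarly for $j$ from reflecting $x_2 \mapsto 2n - x_2$), and a subleading correction of relative size $m^{-1/2}$ that I would need to keep through one more order because the prefactor $V^{j,k}_{\e1,\e2}/(\w2 - \w1)$ is itself $O(m^{-1/2})$ near $\w1 = \w2 = 1$.

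Next I would analyze the algebraic prefactor. The denominator $\w2 - \w1$ becomes $m^{-1/2}\bconst^{-1}(z - w)$, contributing the $i(z-w)$ in the denominator (after tracking factors of $i$ and the $i^{y_1 - x_1}$ prefactor, which combine with $\zeta(x,y)$ and $\Sigma(x,y)$); and the numerator $V^{j,k}_{\e1,\e2}(\w1,\w2)$ — expanded via Equations~\ref{eq:Voriginal}–\ref{eq:Q} using the limiting values $G(\omega)\to 1$, $t(\omega)\to \pm\text{(branch of }\sqrt{\cdot})$, and $\sqrt{\omega^2+2c}, \sqrt{\omega^{-2}+2c} \to$ the $\sqrt{1/2 \mp 2iw}$ factors — should converge, after extracting the overall $\zeta(x,y)m^{-1/2}/\Sigma(x,y)$ and the combinatorial constant $1/(8(2\pi i)^2)$, precisely to $A^{j,k}_{\e1,\e2}(w,z)$ of Equation~\ref{eq:Aall}. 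Here the sum over $\gamma_1,\gamma_2$ in $V$ becomes the sum over the four sign choices visible inside $A$, and the $(-1)^{\e2+1}Q(\omega_1,-\omega_2)$ term accounts for the reflection producing the $f^+$ versus $f^-$ dichotomy; I would verify the identity $V^{j,k}_{\e1,\e2}\big/(\text{scaling}) \to A^{j,k}_{\e1,\e2}$ by direct (if tedious) algebra, possibly deferring the bulk of it to an appendix.

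Then I would justify deforming the circular contours $\circlecontour_r$ and $\circlecontour_{1/r}$ to the steepest-descent contours $\mathcal{C}_j$ and $\mathcal{C}_k'$ defined in Section~\ref{sec:overviewofresults}: this uses Cauchy's theorem, checking that no poles of the integrand (the only candidates being $\w1 = \w2$, i.e.\ $w=z$, and the branch points/branch cuts at $\omega^2 = -2c$, i.e.\ $w = \pm i/(4\bconst)$ rescaled to $\pm i/4$) are crossed, and that the contributions near the branch cut $i(1/4,\infty)$ are controlled. The subtle point responsible for the case split at $\alpha = -1/\sqrt{2}$ in the $(0,0)$ integral is that when $\alpha < -1/\sqrt{2}$ the saddle $w = -\eta$ moves onto the real axis and the descent contour must wrap the branch cut, and crucially the contour deformation for $\w1$ crosses the contour for $\w2$ — i.e.\ the pole at $w = z$ — picking up a residue $2\pi i$ times the diagonal integral $\int_{-\eta}^{\eta} A^{0,0}_{\e1,\e2}(w,w)\,dw$ (this is the extra term in the second formula). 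For $(j,k)\neq(0,0)$ the relevant saddle is $w = -\eta'$ with $\eta' \in i(0,1/4)$ for all $\alpha<0$, so the contours never collide and no residue appears. Finally, on the steepest-descent contours the exponent $\operatorname{Re} g_{j,k}$ has a strict maximum at the saddle (or, for $\alpha < -1/\sqrt2$ in the $(0,0)$ case, along the relevant arcs), so the integral converges to the stated $m$-independent double integral with error $O(m^{-1})$ coming from the next term in the expansion of $\log h_{j,k}$; I would control the tails and the contributions of the genuinely non-leading parts of the integrand by standard Laplace-method estimates, using that on these contours $|e^{\text{(correction)}/m^{1/2}}|$ is bounded.

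\textbf{Main obstacle.} The hard part will be twofold: (i) carrying the asymptotic expansion of $h_{j,k}$ to one order beyond leading consistently — because the $m^{-1/2}$ prefactor forces us to be careful that the ``$1 + o(1)$'' in $\log h_{j,k} = \bconst^2 g_{j,k}(1+o(1))$ does not hide a term that survives multiplication by a large contour length or by the $1/(w-z)$ singularity — and (ii) the contour-deformation bookkeeping in the regime $\alpha < -1/\sqrt{2}$, where one must simultaneously handle the saddle hitting the real axis, the wrapping of the branch cut (accounting for the three pieces of $\Cmainw$ described in Section~\ref{sec:overviewofresults} and the partial branch-cut cancellation noted after Equation~\ref{eq:sqrtbranchcut}), and the residue at $w = z$; getting the sign and the limits $\pm\eta$ of that residue term exactly right, and showing the branch-cut-wrapping pieces contribute only to the error, is where the real care is needed.
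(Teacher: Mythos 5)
Your overall strategy is the paper's: expand the exponent $h_{j,k}$ locally at the saddle to get $g_{j,k}$, expand $V^{j,k}_{\e1,\e2}$ to get $A^{j,k}_{\e1,\e2}$, deform the circles to steepest-descent contours, and account for the residue at $\w1=\w2$ when the contours must cross. However, there are concrete gaps in the plan as written. First, your change of variables $\omega = 1 + m^{-1/2}\bconst^{-1}w$ is wrong in both location and scale. The bounded saddle points of $\log\widetilde H$ sit at $\omega = \pm i + O(m^{-1})$ (not at $1$), and the correct local coordinate is $\omega = \pm i + \bconst^2 m^{-1}w$: the branch points $\pm i\sqrt{2c}$ lie at distance $\bconst^2m^{-1}/4$ from $\pm i$, which is what makes them land at $w=\pm i/4$ and produces the limit functions $f^\pm$. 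With your $m^{-1/2}$ scale the factor $\omega^{2m}$ contributes a divergent $O(m^{1/2})$ term to the exponent and the branch points collapse to $w=0$, so $g_{j,k}$ would not emerge. Relatedly, because there are \emph{two} saddle neighborhoods ($+i$ and $-i$), you must use the parity of $V^{j,k}_{\e1,\e2}h_{j,k}$ in each variable to reduce to the neighborhood of $+i$ and to show that the cross terms ($\w1$ near $-i$, $\w2$ near $+i$), where $\w2-\w1\approx 2i$ is not small, are $O(m^{-3/2})$; this step is absent from your plan and is where the split $\w2/(\w2^2-\w1^2)=\tfrac12(1/(\w2-\w1)+1/(\w2+\w1))$ gets disposed of.

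Second, your claim that for $(j,k)\neq(0,0)$ ``the contours never collide'' is false. For $\alpha<-1/\sqrt2$ the $w$-contour through the real saddles $\pm\eta$ leaves the unit disk (in $\omega$-coordinates), so the deformations for $\mathcal{I}^{1,0}_{\e1,\e2}$ and $\mathcal{I}^{0,1}_{\e1,\e2}$ also cross the opposing contour and pick up residue terms; these only fail to contribute because the residue function vanishes identically on the diagonal, $A^{1,0}_{\e1,\e2}(w,w)=A^{0,1}_{\e1,\e2}(w,w)=0$, whereas $A^{0,0}_{\e1,\e2}(w,w)\neq 0$ gives the extra single integral $\int_{-\eta}^{\eta}A^{0,0}_{\e1,\e2}(w,w)\,dw$. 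Without this observation your argument gives the wrong (or at least an unjustified) answer for $(1,0)$ and $(0,1)$ in the regime $\alpha<-1/\sqrt2$; only $\mathcal{I}^{1,1}_{\e1,\e2}$ genuinely avoids a crossing. Your remaining points (keeping the expansion to one order beyond leading because the whole quantity is $O(m^{-1/2})$, exponential bounds on the arcs joining the two saddle neighborhoods, and tail bounds along the unbounded descent contours) are the right remaining ingredients and match the paper's Lemmas on the error control.
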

The proof can be found in Section~\ref{sec:integralformulas}.

Now we state the asymptotic expansion for $\mathbb{K}_{a,0,0}^{-1}(x,y)$. For $-1 \leq z \leq -3+2\sqrt{2}$, let \[
 \theta(z) = \frac{1}{2z} \left(i \sqrt{4z^2 - (4z + z^2 + 1)^2} - (4z + z^2+1)\right),
 \] and define \[
 k^{(v)}(z) = \frac{\theta(z)^v + \theta(z)^{-v}}{2}.
 \] Also define \begin{equation}b(z) = -\frac{1}{2}\log\left(\frac{4\sqrt{2}(1+z)}{1-z + \sqrt{-1 -6z - z^2}}\right).\end{equation}   We define coefficients $c_i(pe_1 + qe_2)$ for $p,q\in \mathbb{Z}$. When $p$ is odd and $q$ is even, define 
\begin{align}c_0(pe_1 + qe_2) &= \frac{1}{\pi}\int_{-1}^{-3 + 2\sqrt{2}} \frac{z^{(|p|-1)/2}k^{(q/2)}(z)}{\sqrt{-1-6z-z^2}}dz\label{eq:c0}\\
c_1(pe_1 + qe_2) &= \frac{1}{2\pi}\label{eq:c1}\\
c_2(pe_1 + qe_2) &= \frac{(-1)^{q/2}}{\pi}\Bigg(\int_{-1}^{-3 + 2\sqrt{2}} \left(\frac{1}{2}(-z)^{(|p|-1)/2} + \sum_{i=0}^{(|p|-3)/2}(-z)^{i}\right) \frac{k^{(q/2)}(z)}{\sqrt{-1-6z-z^2}}dz\label{eq:c2} \\
 &\hspace{1cm}-\log 2  - \int_{-1}^{-3 + 2\sqrt{2}} \frac{dk^{(q/2)}}{dz}(z) b(z)dz  \Bigg).\nonumber\end{align}When $p$ is even and $q$ is odd, define \[c_i(pe_1 + qe_2) = c_i(qe_1 + pe_2)\] for $i=0,1,2$.
Our second important asymptotic result is
\begin{theorem}\label{thm:Kgasinversefullasymptotics}
Take $x \in \mathtt{W}$ and $y\in \mathtt{B}$. Let $\zeta(x,y)$ and $\Sigma(x,y)$ be as defined in Equations~\ref{eq:zeta} and \ref{eq:Sigmaxy} respectively, and $c_i$ as defined above in Equations~\ref{eq:c0}--\ref{eq:c2}. Then \begin{multline*}
\mathbb{K}_{a,0,0}^{-1}(x,y) 
= \frac{1}{\Sigma(x,y)}\Big(c_0(y-x) + c_0(y-x) \bconst m^{-1/2}/2 \\
+ \zeta(x,y)(c_1(y-x) \bconst m^{-1/2} \log \bconst m^{-1/2}  + c_2(y-x) \bconst m^{-1/2} )\Big) + O(m^{-1}\log m).
 \end{multline*}
\end{theorem}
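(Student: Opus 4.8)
The plan is to start from the double contour integral for the whole-plane inverse Kasteleyn matrix in Theorem~\ref{theorem:planeinverseoriginal} with vanishing magnetic field $(H,V)=(0,0)$, so that both contours are unit circles, and to carry out a matched asymptotic expansion as $a=1-\bconst m^{-1/2}\to 1$. Writing $y-x=pe_1+qe_2$ fixes the indices $\e1,\e2$ and the integer exponents $u,v$ appearing in $z^uw^v$; after pulling out the explicit phases $i^{y_1-x_1}$ and the overall factor $1/\Sigma(x,y)$, and using the $z\leftrightarrow w$ symmetry of the integrand to reduce the $p$-even $q$-odd case to the $p$-odd $q$-even one, it suffices to analyze
\[
\frac{1}{(2\pi i)^2}\int_{|z|=1}\int_{|w|=1}\frac{N_{\e1\e2}(z,w)\,z^{u}w^{v}}{P_a(z,w)}\,\frac{dz}{z}\,\frac{dw}{w},
\]
where $N_{\e1\e2}$ is the relevant entry of the numerator matrix in Equation~\ref{eq:Kazwinverse} and $P_a$ is as in Equation~\ref{eq:Pazw}. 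The key geometric fact is that on the torus $|z|=|w|=1$ the polynomial $P_a$ is strictly negative and attains its largest value $P_a(-1,-1)=-2(1-a)^2$ only at $z=w=-1$; hence all of the asymptotics is controlled by the neighborhood of $(-1,-1)$, which shrinks at rate $m^{-1/2}$.

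First I would do the $w$-integral by residues. As a function of $w$ the integrand has poles only at the two roots $w_\ast(z),w_\ast(z)^{-1}$ of $P_a(z,\cdot)=0$, exactly one of which, say $w_\ast(z)$, lies inside the unit disk for every $|z|=1$ (this uses that $P_a$ does not vanish on the torus). Taking that residue leaves a one-dimensional integral over $|z|=1$ whose integrand is analytic off the branch cut of $w_\ast(z)$ sitting inside the disk; at $a=1$ one has $w_\ast(z)=\theta(z)$. Collapsing the $|z|$-contour onto that cut (and, in the edge cases where the integrand also has a pole at $z=0$, recording its residue there) turns the expression into an integral over a segment $[\ell(a),r(a)]$ with $r(a)\to -3+2\sqrt 2$ and $\ell(a)=-1+\sqrt 2\,\bconst m^{-1/2}+O(m^{-1})$, so at $a=1$ the segment is exactly $[-1,-3+2\sqrt 2]$. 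Matching the residue Jacobian and the discontinuity of $w_\ast$ across the cut with the rational uniformization by $\theta(z)$ and the functions $k^{(v)}(z)$ identifies the $a=1$ value of the resulting integral with $c_0(y-x)$; collecting the $(a-1)$-linear contributions of the various $a$-dependent factors (the discriminant $\beta(z)^2-4a^2$ under the square root, the residue Jacobian, and $w_\ast(z)^v$) that multiply the leading integrand without altering its $z$-profile produces the overall factor $1+\bconst m^{-1/2}/2+O(m^{-1})$.

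The two $m^{-1/2}$-order terms come from expanding this $z$-integral on either side of a fixed point $z=-1+\delta$. On the far part $[-1+\delta,\,r(a)]$ the integrand is smooth in $a$, but $\partial_a(\text{integrand})|_{a=1}$ has a simple pole at $z=-1$; integrating $1/(z+1)$ produces a term of order $\bconst m^{-1/2}\log\delta$ together with a constant in which the function $b(z)$ appears once the pole is removed by an integration by parts (this is the source of the $\int\frac{dk^{(q/2)}}{dz}(z)\,b(z)\,dz$ term of Equation~\ref{eq:c2}). On the near part $[\ell(a),\,-1+\delta]$ I would substitute $z+1=\sqrt 2\,\bconst m^{-1/2}\,\nu$ and use that $N_{\e1\e2}$ vanishes at $z=-1$ when $a=1$ but equals a nonzero $O(m^{-1/2})$ quantity at the moving branch point $z=\ell(a)$ (where $w_\ast(\ell(a))=-1$); the resulting $\nu$-integral is an inverse hyperbolic cosine, contributing $\tfrac12\bconst m^{-1/2}\log m=\bconst m^{-1/2}\big(-\log(\bconst m^{-1/2})+\log\bconst\big)$, plus a $\bconst m^{-1/2}\log\delta$ piece of opposite sign to the one from the far part and further $O(m^{-1/2})$ constants (among them a $-\log 2$). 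Adding the two parts, the $\delta$-dependence cancels, the $\log m$ reorganizes as $\zeta(x,y)\,c_1(y-x)\,\bconst m^{-1/2}\log(\bconst m^{-1/2})$ with $c_1(y-x)=1/(2\pi)$, and the surviving constants assemble into $\zeta(x,y)\,c_2(y-x)\,\bconst m^{-1/2}$; keeping one further order in each part gives the $O(m^{-1}\log m)$ error.

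The hard part will be the bookkeeping in this last step: verifying that the $\log\delta$ contributions of the near and far parts cancel exactly — this amounts to a consistency identity relating the residue of $\partial_a(\text{integrand})$ at $z=-1$ to the boundary value of $N_{\e1\e2}$ at the moving branch point — and then checking that the remaining finite pieces reorganize precisely into the defining integral of $c_2(y-x)$ in Equation~\ref{eq:c2}, with its integration-by-parts term and its $-\log 2$, and into $c_0(y-x)$ in Equation~\ref{eq:c0}. Running in parallel is the phase bookkeeping: tracking $i^{y_1-x_1}$, $\Sigma(x,y)$, $\zeta(x,y)$ and the signs produced by evaluating $z^{u}$ and $w^{v}$ near $-1$, so as to confirm the stated signs and the universal logarithmic coefficient $1/(2\pi)$, together with establishing the error bounds uniformly in the integration variables.
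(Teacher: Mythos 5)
Your reduction steps coincide with the paper's: the $z\leftrightarrow w$, $z\to z^{-1}$, $w\to w^{-1}$ symmetries (the paper's Lemma~\ref{lemma:K11invequalities}), the residue evaluation of the inner integral at the root $w_\ast(z)$ inside the unit disk (Lemma~\ref{lemma:K11singlecontourintegral}), and the collapse of the $z$-contour onto the branch cut $[z_1,z_2]$ with the identification of $\theta_a(z)$ and $k^{(v)}$ (Theorems~\ref{thm:kgasinversereal}--\ref{thm:kgasinverserealuneg}) are all exactly what the paper does; note only that the paper sidesteps your ``residue at $z=0$'' edge case by substituting $z\to z^{-1}$ first when $-u+v>0$, which is cleaner than tracking an extra residue. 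Where you genuinely diverge is the final asymptotic step. You propose a matched asymptotic expansion with an intermediate cutoff at $z=-1+\delta$: a near-region rescaling $z+1=\sqrt2\,\bconst m^{-1/2}\nu$ producing an inverse hyperbolic cosine (hence the $\tfrac12\log m$), a far-region expansion in $a$ whose $\partial_a$ has a simple pole at $z=-1$ (hence $b(z)$ after integration by parts), and cancellation of the $\log\delta$ pieces. The paper instead integrates by parts against the partial integral $S(a,z)$ (Lemma~\ref{lemma:intbyparts}), reduces everything to complete and incomplete elliptic integrals of the first kind via explicit rational substitutions (Theorem~\ref{thm:K11asymptotics}), and imports the $\log$-asymptotics from Carlson's uniform expansion with explicit error bounds (Lemma~\ref{lemma:Sazellipticasymp}); it then uses algebraic recursions among the coefficients $s_0,s_1,s_2$ (Lemma~\ref{lemma:asymcoeffsrelations} and its corollaries) to reach the symmetric form of $c_2$ in Equation~\ref{eq:c2} and to produce the $c_0\,\bconst m^{-1/2}/2$ term from the two sub-cases $x\in\mathtt{W}_0$ versus $x\in\mathtt{W}_1$. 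Your route is morally an unpacked, self-contained version of the elliptic-integral asymptotics and should work, and it buys independence from the classical expansion of $F(\lambda,k)$ near $k=1$; what the paper's route buys is uniform, quantitative control of the error in $z$ near both endpoints (where the integrand degenerates differently, as $(z+1)^{-2}h^2\log h$ near $z=-1$ and as $(z-(-3+2\sqrt2))^{-1/2}h^2$ near the other end), which is precisely the uniformity your plan defers to ``bookkeeping'' and which must be established to justify the stated $O(m^{-1}\log m)$ remainder and the interchange of the $a\to1$ limit with the $z$-integration on the far region.
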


The proof can be found in Section~\ref{sec:Kgasinverseasym}. This, together with Theorem~\ref{thm:Iasymptotics} are all that we need to derive the main result.

\subsection{Proof of Theorem \ref{thm:mainresult}}\label{sec:proofofmainresult}
Here we prove the main result.
\begin{proof}
By Theorem~\ref{theorem:Ka1} we have \[ K_a^{-1}(x,y) = \mathbb{K}_{a,0,0}^{-1}(x,y) - \sum_{j,k=0}^1 (-1)^{j+k}\mathcal{I}^{j,k}_{\e1,\e2} (a, x_1, x_2, y_1, y_2).\] The asymptotic expansion of the first term is given in Theorem \ref{thm:Kgasinversefullasymptotics}, and the asymptotic expansions of the remaining terms are given in Theorem~\ref{thm:Iasymptotics}. Comparing the formulas in Theorems~\ref{thm:Iasymptotics} with the definitions of $I_k(\alpha,\e1,\e2)$ in Equations~\ref{eq:I1to4}--\ref{eq:I0}, and with the formula for $A_{\e1,\e2}^{0,0}(w,w)$ in Equation~\ref{eq:Aww} we see that \begin{multline*}
\mathcal{I}^{0,0}_{\e1,\e2} (a, x_1, x_2, y_1, y_2) \\
= \begin{dcases} -\frac{\bconst m^{-1/2}}{ 32 \pi^2}\frac{\zeta(x,y)}{K_1(y,x)} I_1(\alpha,\e1,\e2) + O(m^{-1}) & \text{ if } -1/\sqrt{2} \leq \alpha < 0 \\
 -\frac{\bconst m^{-1/2}}{ 32 \pi^2}\frac{\zeta(x,y)}{K_1(y,x)} (I_1(\alpha,\e1,\e2) + 8\pi I_0(\alpha, \e1,\e2))+ O(m^{-1}) & \text{ if } \alpha < -1/\sqrt{2}\end{dcases} 
\end{multline*} and \begin{align*}
\mathcal{I}^{1,0}_{\e1,\e2} (a, x_1, x_2, y_1, y_2) &= -\frac{\bconst m^{-1/2}}{ 32 \pi^2}\frac{\zeta(x,y)}{K_1(y,x)} I_2(\alpha,\e1,\e2)+ O(m^{-1}) \\
\mathcal{I}^{0,1}_{\e1,\e2} (a, x_1, x_2, y_1, y_2) &= -\frac{\bconst m^{-1/2}}{ 32 \pi^2}\frac{\zeta(x,y)}{K_1(y,x)} I_3(\alpha,\e1,\e2) + O(m^{-1})\\
\mathcal{I}^{1,1}_{\e1,\e2} (a, x_1, x_2, y_1, y_2) &= -\frac{\bconst m^{-1/2}}{ 32 \pi^2}\frac{\zeta(x,y)}{K_1(y,x)} I_4(\alpha,\e1,\e2)+ O(m^{-1})
\end{align*} So we see that for $-1/\sqrt{2} \leq \alpha < 0$, we have \begin{multline*}
 \sum_{j,k=0}^1 (-1)^{j+k}\mathcal{I}^{j,k}_{\e1,\e2} (a, x_1, x_2, y_1, y_2) = -\frac{\zeta(x,y)\bconst m^{-1/2}}{ 32 \pi^2\Sigma(x,y)}(I_1(\alpha,\e1,\e2)\\
 - I_2(\alpha,\e1,\e2) - I_3(\alpha,\e1,\e2) + I_4(\alpha,\e1,\e2)) + O(m^{-1})
\end{multline*} and for $\alpha < -1/\sqrt{2}$, we have \begin{multline*}
 \sum_{j,k=0}^1 (-1)^{j+k}\mathcal{I}^{j,k}_{\e1,\e2} (a, x_1, x_2, y_1, y_2) = -\frac{\zeta(x,y)\bconst m^{-1/2}}{ 32 \pi^2\Sigma(x,y)}(8\pi I_0(\alpha,\e1,\e2) \\
+ I_1(\alpha,\e1,\e2)  - I_2(\alpha,\e1,\e2) - I_3(\alpha,\e1,\e2) + I_4(\alpha,\e1,\e2)) + O(m^{-1})
\end{multline*} 
From Theorem~\ref{thm:Kgasinversefullasymptotics} we have
\begin{multline*}
\mathbb{K}_{a,0,0}^{-1}(x,y) 
= \frac{1}{\Sigma(x,y)}\Bigg(c_0(y-x) + c_0(y-x) \bconst m^{-1/2}/2 \\
+ \zeta(x,y)\left(\frac{\bconst m^{-1/2} \log \bconst m^{-1/2}}{2\pi}  + c_2(y-x) \bconst m^{-1/2} \right)\Bigg) + O(m^{-1}\log m).
 \end{multline*} Putting together these two formulas and comparing with the definition of $\psi(\alpha, \e1, \e2)$ in Equation~\ref{eq:psixy} gives the result.

\end{proof}

\subsection{Proof of Corollary \ref{cor:onepointcor}}\label{sec:proofofcor}

We now give a proof of Corollary~\ref{cor:onepointcor} on one-point correlation functions.

\begin{proof}[Proof of Corollary~\ref{cor:onepointcor}]
We start from the formula in Theorem~\ref{thm:mainresult} and compute the coefficients $c_0(y-x)$ and $c_2(y-x)$ when $(x,y)$ is an edge. We have $y-x = \pm e_1$ or $\pm e_2$. It is clear from the definitions that $c_i(e_1) = c_i(-e_1) = c_i(e_2) = c_i(-e_2)$. So we compute $c_i(e_1)$, i.e. we set $p=1$ and $q=0$ in Equations~\ref{eq:c0} and \ref{eq:c2}. We have
\[c_0(e_1 ) = \frac{1}{\pi}\int_{-1}^{-3 + 2\sqrt{2}} \frac{1}{\sqrt{-1-6z-z^2}}dz =\frac{1}{4}\] and
\[
c_2(e_1) = \frac{1}{\pi}\Bigg(\int_{-1}^{-3 + 2\sqrt{2}}  \frac{1}{2\sqrt{-1-6z-z^2}}dz -\log 2  \Bigg) = \frac{1}{8} - \frac{\log 2}{\pi}.\] Next we note that $\rho(x,y) = K_a(y,x)K_a^{-1}(x,y)$, where $K_a(y,x) = w(x,y)\epsilon(x,y)$ with $w(x,y)$ the weight of $(x,y)$ and $\epsilon(x,y)$ the Kasteleyn-Percus orientation. We have $\epsilon(x,y) = \Sigma(x,y)$, and \[
w(x,y) = \begin{cases} a &\text{if } \zeta(x,y) = 1 \\
 1 &\text{if } \zeta(x,y) = -1 \end{cases}.\] So we can write \[K_a(y,x) = \Sigma(x,y)\left(1 - \frac{\bconst m^{-1/2}}{2}(1+\zeta(x,y)) \right)+O(m^{-1})\]
So for $(x,y)$ an edge of $\Gamma$, we have
\begin{multline}
\rho(x,y) = \frac{1}{4}\left(1 + \frac{\bconst m^{-1/2}}{2} -  \frac{\bconst m^{-1/2}}{2}(1+\zeta(x,y))  \right) \\
+\zeta (x,y)\bconst m^{-1/2}\left(\frac{ \log (\bconst m^{-1/2})}{2\pi} + \frac{1}{8} - \frac{\log 2}{\pi}
+ \psi(\alpha,\e1,\e2)\right) + O(m^{-1}\log m )
\end{multline} $-1/\sqrt{2} \leq \alpha < 0$ and 
\begin{multline}
\rho(x,y) = \frac{1}{4}\left(1 + \frac{\bconst m^{-1/2}}{2} -  \frac{\bconst m^{-1/2}}{2}(1+\zeta(x,y))\right) \\
+\zeta (x,y)\bconst m^{-1/2}\bigg(\frac{\log (\bconst m^{-1/2})}{2\pi} + \frac{1}{8} - \frac{\log 2}{\pi} + \frac{I_0(\alpha, \e1, \e2)}{4\pi}
+ \psi(\alpha,\e1,\e2)\bigg)+ O(m^{-1}\log m )
\end{multline} for $\alpha < -1/\sqrt{2}$.
We simplify to get the result.
\end{proof}

\section{Asymptotics of $\mathcal{I}^{j,k}_{\e1,\e2} (a, x_1, x_2, y_1, y_2)$}\label{sec:Iasymptotics}
In this section, we compute the asymptotics of  $\mathcal{I}^{j,k}_{\e1,\e2} (a, x_1, x_2, y_1, y_2)$ in the limit $m \rightarrow \infty$, with weights 1 and $a = 1 - \bconst m^{-1/2}$. We finish with a proof of Theorem~\ref{thm:Iasymptotics}. First, we must identify the main asymptotic term in $m$ in the integrand of each integral. The only term in each integral that depends directly on $m$ is $h_{j,k}(\w1,\w2)$ but we also have $V_{\e1, \e2}^{j,k} (\w1, \w2)$ depending on $m$ through $a$. However, this dependence is not exponential, so this term is not relevant for our saddle point analysis. The dependence of $h_{j,k}(\w1,\w2)$ on $m$ is a bit more complicated, and we will look at it more carefully.

\subsection{Asymptotic co-ordinates}
For $x\in \mathtt{W}$ and $y\in \mathtt{B}$ we write
\begin{align}
\begin{split}
x_1 &= [4m + 2m^{1/2}\alpha_x \bconst] + \overline{x_1} \\ 
x_2 &= [4m + 2m^{1/2}\alpha_x \bconst] + \overline{x_2} \\ 
y_1 &= [4m + 2m^{1/2}\alpha_y \bconst] + \overline{y_1} \\ 
y_2 &= [4m + 2m^{1/2}\alpha_y \bconst] + \overline{y_2},
\end{split}\label{eq:asymptotic_coords_xy}
\end{align}
where $\alpha_x,\alpha_y < 0$ and the integer parts $\overline{x_1}, \overline{x_2}, \overline{y_1}$ and $\overline{y_2}$ are order 1. Recall that $\bconst > 0$ is a constant and $a = 1 - \bconst m^{-1/2}$. For the moment the asymptotic coordinates $\alpha_x$ and $\alpha_y$ are not necessarily equal, though in the final result we only consider $\alpha_x = \alpha_y = \alpha$ as in Equation~\ref{eq:asymptoticcoord}. In a future paper, we will look at the two-point correlations where $\alpha_x \neq \alpha_y$. 

\subsection{Approximate location of saddle points}

Let $x =(x_1, x_2) \in \mathtt{W}$ and $y = (y_1,y_2) \in \mathtt{B}$ be as in Equation~\ref{eq:asymptotic_coords_xy}. To find the contours of steepest ascent and descent for the four integrals $\mathcal{I}^{j,k}_{\e1,\e2} (a, x_1, x_2, y_1, y_2)$ with $j,k \in \{0,1\}$ we need to analyze the functions $\widetilde{H}_{x_1 + 1, x_2}(\w1)$, $\widetilde{H}_{y_1, y_2 + 1}(\w2)$, $\widetilde{H}_{2n-y_1, y_2 + 1}(\w2)$ and $\widetilde{H}_{x_1 + 1, 2n - x_2}(\w1)$. 

First we look at the function $\widetilde{H}_{x_1 + 1, x_2}(\omega)$. 

 \begin{lemma}\label{lemma:saddlepointlocation}
 Let $(x_1,x_2) \in \mathtt{W}$ be such that $x_i = [4m + 2m^{1/2}\alpha \bconst] + \overline{x_i}$ with $\alpha < 0$, and with the integer parts $\overline{x_i}$ of order 1. The saddle points of $\log \widetilde{H}_{x_1+1, x_2}(\omega)$ that are bounded as $m \rightarrow \infty$ occur at $\omega = \pm i + O(m^{-1})$.
 \end{lemma}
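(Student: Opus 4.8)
The plan is to work with the logarithmic derivative $\Phi'=(\log\widetilde{H}_{x_1+1,x_2})'$ and reduce the saddle equation $\Phi'(\omega)=0$ to an estimate on two explicit algebraic factors. Since $x_1$ is odd, $(x_1+1)/2$ and $x_2/2$ are integers, and by Equation~\ref{eq:Htilde} we have $\log\widetilde{H}_{x_1+1,x_2}(\omega)=2m\log\omega+r_1\log(-iG(\omega))-r_2\log(iG(\omega^{-1}))$ with $r_1=2m-(x_1+1)/2$ and $r_2=2m-x_2/2$. Differentiating Equation~\ref{eq:G} gives $G'(\omega)/G(\omega)=-1/\sqrt{\omega^2+2c}$, and applying the chain rule to $G(\omega^{-1})$ then yields
\[
\Phi'(\omega)=\frac{2m}{\omega}-\frac{r_1}{\sqrt{\omega^2+2c}}-\frac{r_2\,\omega^{-2}}{\sqrt{\omega^{-2}+2c}} ,
\]
so after multiplying by $\omega$ the equation $\Phi'(\omega)=0$ becomes $\dfrac{r_1\,\omega}{\sqrt{\omega^2+2c}}+\dfrac{r_2}{\omega\,\sqrt{\omega^{-2}+2c}}=2m$. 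The hypothesis $x_i=[4m+2m^{1/2}\alpha\bconst]+\overline{x_i}$ gives $r_1=-\alpha\bconst m^{1/2}+O(1)$ and $r_2=-\alpha\bconst m^{1/2}+O(1)$, both of order $m^{1/2}$, whereas the right-hand side is of order $m$.

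This size mismatch is the crux. Let $\omega_*$ be a saddle point that stays bounded. First I would note $\omega_*$ is also bounded away from $0$: near $0$ and near $\infty$ both of $\omega/\sqrt{\omega^2+2c}$ and $1/(\omega\sqrt{\omega^{-2}+2c})$ are $O(1)$, so the equation would give $2m=O(m^{1/2})$, impossible for large $m$. For $\omega_*$ bounded above and bounded away from $0$, the equation then forces at least one of $\omega_*/\sqrt{\omega_*^2+2c}$, $1/(\omega_*\sqrt{\omega_*^{-2}+2c})$ to have modulus $\gtrsim m^{1/2}$; hence one of $\sqrt{\omega_*^2+2c}$, $\sqrt{\omega_*^{-2}+2c}$ has modulus $\lesssim m^{-1/2}$, i.e. $|\omega_*^2+2c|\lesssim m^{-1}$ or $|\omega_*^{-2}+2c|\lesssim m^{-1}$. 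Factoring $\omega^2+2c=(\omega-i\sqrt{2c})(\omega+i\sqrt{2c})$ (and likewise $\omega^{-2}+2c$) and using that $\omega_*$ is bounded away from $0$ so it can be close to at most one branch point, this places $\omega_*$ within $O(m^{-1})$ of one of $\pm i\sqrt{2c}$, $\pm i/\sqrt{2c}$.

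Finally, from $c=1/(a+a^{-1})$ and $a=1-\bconst m^{-1/2}$ one computes $2c=1-\bconst^2 m^{-1}/2+O(m^{-3/2})$, hence $\sqrt{2c}=1-\bconst^2 m^{-1}/4+O(m^{-3/2})$ and $1/\sqrt{2c}=1+\bconst^2 m^{-1}/4+O(m^{-3/2})$, so all four branch points $\pm i\sqrt{2c}$, $\pm i/\sqrt{2c}$ lie within $O(m^{-1})$ of $\pm i$. Combining this with the previous step shows every bounded saddle point lies within $O(m^{-1})$ of $i$ or $-i$, which is the claim. (For the existence of saddle points there --- needed for the steepest-descent analysis later, if not for the bare statement --- I would substitute $\omega=\pm i+t/m$, verify that $m^{-1}\Phi'(\omega)$ converges uniformly for $t$ in compact sets to an explicit function of $t$ whose zeros solve a transcendental equation of the type appearing in Equations~\ref{eq:eta}--\ref{eq:eta'}, and apply Hurwitz's theorem.)

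The step I expect to be the main obstacle is that the branch points $\pm i\sqrt{2c}$, $\pm i/\sqrt{2c}$ approach $\pm i$ at exactly the rate $O(m^{-1})$ of the asserted localization, so the estimates have to be made quantitative with explicit control of the constants rather than by soft compactness arguments; distinguishing whether $\omega_*$ is close to $\pm i$ or merely close to the (moving) branch points is where the care is needed. The other ingredients --- the formula for $\Phi'$, the orders of $r_1$ and $r_2$, and the expansion of $\sqrt{2c}$ --- are routine.
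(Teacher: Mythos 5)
Your proposal is correct and follows essentially the same route as the paper: differentiate $\log\widetilde{H}_{x_1+1,x_2}$, observe that the exponents $r_1,r_2$ are only $O(m^{1/2})$ against the $2m/\omega$ term, conclude that a bounded saddle forces $\omega^2+2c=O(m^{-1})$ or $\omega^{-2}+2c=O(m^{-1})$, and finish with $2c=1-\bconst^2m^{-1}/2+O(m^{-3/2})$. Your closing worry is unfounded: since the claim is only $\omega=\pm i+O(m^{-1})$, the $O(m^{-1})$ drift of the branch points $\pm i\sqrt{2c}$, $\pm i/\sqrt{2c}$ is absorbed into the error term, so no finer distinction is needed.
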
 
 \begin{proof}We have
 \[
\widetilde{H}_{x_1+1, x_2}(\omega) = \frac{\omega^{2m} (-iG(\omega))^{-\alpha \bconst m^{1/2}+O(1)}}{(i G(\omega^{-1}))^{-\alpha \bconst m^{1/2}+O(1)}}.\] So \[
\log \widetilde{H}_{x_1+1, x_2}(\omega) = m\big(2\log\omega + m^{-1/2}(-\alpha \bconst \log (-iG(\omega)) + \alpha \bconst \log (iG(\omega^{-1}))) + O(m^{-1})\big).\]Note that $G(\omega)$ and $G(\omega^{-1})$ also depend on $m$ through $c = 1/(a + a^{-1})$. We want to find the saddle points of this function that are bounded at $m \rightarrow \infty$. Let \[
g_{\alpha}(\omega) = 2\log\omega + m^{-1/2}(-\alpha \bconst \log (-iG(\omega)) + \alpha \bconst \log (iG(\omega^{-1})))),
\]
 so $\log \widetilde{H}_{x_1+1, x_2}(\omega) = m(g_{\alpha}(\omega) + O(m^{-1}))$. Then $\log \widetilde{H}_{x_1+1, x_2}(\omega)$ has a saddle point when \begin{equation} g_{\alpha}'(\omega) = \frac{1}{\omega}\left(2 + \alpha \bconst  m^{-1/2}\left( \frac{\omega}{\sqrt{\omega^2 + 2c}} + \frac{\omega^{-1}}{ \sqrt{\omega^{-2} + 2c}}\right)\right) = O(m^{-1})\label{eq:saddlederivative}\end{equation} recalling that the square roots are defined in Equation~\ref{eq:sqrtbranchcut}.

We are looking for saddle points $\omega_c$ that approach a finite limit as $m \rightarrow 0$. So we need solutions to \[
  2 + \alpha \bconst m^{-1/2} \left( \frac{\omega}{\sqrt{\omega^2 + 2c}} + \frac{\omega^{-1}}{ \sqrt{\omega^{-2} + 2c}}\right) = O(m^{-1}).
 \] By exchanging $\omega$ and $\omega^{-1}$, it is clear that if $\omega$ is a solution then so is $\omega^{-1}$. We also see that we must have either $\omega^2 + 2c = O(m^{-1})$ or $\omega^{-2} + 2c = O(m^{-1})$. We compute \begin{align}\begin{split}
 2c &= \frac{2}{a + a^{-1}} \\
 &= \frac{1}{1 + \bconst^2 m^{-1}/2 + O(m^{-3/2})} \\
 &= 1 - \bconst^2 m^{-1}/2 + O(m^{-3/2}).
 \end{split}
 \end{align}
Therefore the saddle points occur when $\omega^2 = -1 \pm O(m^{-1})$ (as this coincides with $\omega^{-2} = -1 \pm O(m^{-1})$), so $\omega = \pm i + O(m^{-1})$. 
 \end{proof}
 
We have a similar lemma for $\widetilde{H}_{2n-y_1, y_2 + 1}(\omega)$.
  
  \begin{lemma}\label{lemma:saddlepointlocation1}
 Let $(y_1,y_2) \in \mathtt{B}$ be such that $y_i = [4m + 2m^{1/2}\alpha \bconst] + \overline{y_i}$ with $\alpha < 0$ and the integer parts $\overline{y_i}$ order 1. The saddle points of $\log \widetilde{H}_{2n-y_1, y_2 + 1}(\omega)$ that are bounded as $m \rightarrow \infty$ occur at $\omega = \pm i + O(m^{-1})$.
 \end{lemma}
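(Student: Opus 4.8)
The plan is to mirror the proof of Lemma~\ref{lemma:saddlepointlocation} almost verbatim, since the function $\widetilde H_{2n-y_1,y_2+1}(\omega)$ differs from $\widetilde H_{x_1+1,x_2}(\omega)$ only in the exponent of $-iG(\omega)$: where the first had $2m - (x_1+1)/2$, the second has $2m - (2n-y_1)/2 = 2m - (8m-y_1)/2 = y_1/2 - 2m$. First I would substitute the asymptotic coordinate $y_1 = [4m + 2m^{1/2}\alpha\bconst] + \overline{y_1}$ into the exponents appearing in \eqref{eq:Htilde}, obtaining
\[
\widetilde H_{2n-y_1,y_2+1}(\omega) = \frac{\omega^{2m}\,(-iG(\omega))^{y_1/2 - 2m}}{(iG(\omega^{-1}))^{2m - (y_2+1)/2}} = \frac{\omega^{2m}\,(-iG(\omega))^{\alpha\bconst m^{1/2} + O(1)}}{(iG(\omega^{-1}))^{-\alpha\bconst m^{1/2} + O(1)}},
\]
where I have used $y_1/2 - 2m = (2m^{1/2}\alpha\bconst)/2 + O(1) = \alpha\bconst m^{1/2} + O(1)$ and $2m - (y_2+1)/2 = 2m - (4m + 2m^{1/2}\alpha\bconst + O(1))/2 = -\alpha\bconst m^{1/2} + O(1)$. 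Note that compared to $\widetilde H_{x_1+1,x_2}$ the sign of the exponent of $-iG(\omega)$ has flipped, but the exponent of $iG(\omega^{-1})$ is unchanged.

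Next I would take the logarithm and extract the leading factor of $m$, writing $\log\widetilde H_{2n-y_1,y_2+1}(\omega) = m(\tilde g_\alpha(\omega) + O(m^{-1}))$ with
\[
\tilde g_\alpha(\omega) = 2\log\omega + m^{-1/2}\bigl(\alpha\bconst\log(-iG(\omega)) + \alpha\bconst\log(iG(\omega^{-1}))\bigr).
\]
Differentiating and using $G'(\omega)/G(\omega) = -1/\sqrt{\omega^2+2c}$ (which follows from \eqref{eq:G} after a short computation, exactly as used implicitly in the previous lemma's \eqref{eq:saddlederivative}), the saddle point condition $\tilde g_\alpha'(\omega) = O(m^{-1})$ becomes
\[
\frac{1}{\omega}\left(2 + \alpha\bconst m^{-1/2}\left(\frac{-\omega}{\sqrt{\omega^2+2c}} + \frac{\omega^{-1}}{\sqrt{\omega^{-2}+2c}}\right)\right) = O(m^{-1}).
\]
For a bounded saddle point $\omega$ as $m\to\infty$, the bracketed correction term is $O(m^{-1/2})$ unless one of the square roots is small, so one again needs $\omega^2 + 2c = O(m^{-1})$ or $\omega^{-2} + 2c = O(m^{-1})$. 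Using $2c = 1 - \bconst^2 m^{-1}/2 + O(m^{-3/2})$ as computed in Lemma~\ref{lemma:saddlepointlocation}, both conditions give $\omega^2 = -1 + O(m^{-1})$, hence $\omega = \pm i + O(m^{-1})$.

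There is no real obstacle here — the argument is a routine adaptation. The one point that needs a sentence of care is that the relative sign change between the two $G$-terms (from the $2n - y_1$ reflection) does not affect the conclusion: it changes the coefficient of the $\omega/\sqrt{\omega^2+2c}$ piece but the leading balance still forces one of $\omega^2+2c$ or $\omega^{-2}+2c$ to vanish to order $m^{-1}$, and since $2c \to 1$ this pins $\omega$ near $\pm i$ regardless of signs. I would also remark that the symmetry $\omega \leftrightarrow \omega^{-1}$ is slightly broken by the sign flip, but this only means the two candidate saddles $\omega = i$ and $\omega = -i$ are no longer exchanged by inversion in the naive way; it does not change their locations to leading order. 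So the proof is essentially a one-paragraph repetition of the previous one with the sign bookkeeping adjusted.
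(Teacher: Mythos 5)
Your proposal is correct and follows essentially the same route as the paper: the paper likewise rewrites $\widetilde H_{2n-y_1,y_2+1}(\omega)$ with exponents $\pm\alpha\bconst m^{1/2}+O(1)$, obtains the identical saddle-point equation with the sign of the $\omega/\sqrt{\omega^2+2c}$ term flipped, and then notes the argument concludes exactly as in Lemma~\ref{lemma:saddlepointlocation}. Your extra remarks on the broken $\omega\leftrightarrow\omega^{-1}$ symmetry are harmless but not needed.
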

 \begin{proof}We have
 \[
\widetilde{H}_{2n-y_1, y_2 + 1}(\omega) = \frac{\omega^{2m} (-iG(\omega))^{\alpha \bconst m^{1/2}+O(1)}}{(i G(\omega^{-1}))^{-\alpha \bconst m^{1/2}+O(1)}}.\] So \[
\log \widetilde{H}_{2n-y_1, y_2 + 1}(\omega) = m\big(2\log\omega + m^{-1/2}(\alpha \bconst \log (-iG(\omega)) + \alpha \bconst \log (iG(\omega^{-1}))) + O(m^{-1})\big).\]  Let \[
g_{\alpha}^{(1)}(\omega) = 2\log\omega + m^{-1/2}(\alpha \bconst \log (-iG(\omega)) + \alpha \bconst \log (iG(\omega^{-1}))))
\] so $\log \widetilde{H}_{2n-y_1, y_2 + 1}(\omega) = m(g_{\alpha}^{(1)}(\omega) + O(m^{-1}))$.
Then $\log \widetilde{H}_{2n-y_1, y_2 + 1}(\omega)$ has a saddle point when \begin{equation*} (g_{\alpha}^{(1)})'(\omega) = \frac{1}{\omega}\left(2 + \alpha \bconst  m^{-1/2}\left( -\frac{\omega}{\sqrt{\omega^2 + 2c}} + \frac{\omega^{-1}}{ \sqrt{\omega^{-2} + 2c}}\right)\right) = O(m^{-1}).\end{equation*}
The proof is finished in exactly the same way as Lemma~\ref{lemma:saddlepointlocation}. The only difference is the sign of the first term.
\end{proof} 

Finally we prove a similar lemma for $\widetilde{H}_{x_1 + 1, 2n - x_2}(\omega)$. 

\begin{lemma}\label{lemma:saddlepointlocation2}
 Let $(x_1,x_2) \in \mathtt{W}$ be such that $x_i = [4m + 2m^{1/2}\alpha \bconst] + \overline{x_i}$ with $\alpha < 0$ and the integer parts $\overline{x_i}$ order 1. The saddle points of $\log \widetilde{H}_{x_1+1, 2n-x_2}(\omega)$ that are bounded as $m \rightarrow \infty$ occur at $\omega = \pm i + O(m^{-1})$.
 \end{lemma}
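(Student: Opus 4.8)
The plan is to mirror the proofs of Lemmas~\ref{lemma:saddlepointlocation} and \ref{lemma:saddlepointlocation1} essentially verbatim, since $\widetilde{H}_{x_1+1, 2n-x_2}(\omega)$ differs from those functions only in the signs of the exponents of $-iG(\omega)$ and $iG(\omega^{-1})$. First I would write out $\widetilde{H}_{x_1+1, 2n-x_2}(\omega)$ using the definition in Equation~\ref{eq:Htilde}: with $x_1 = [4m + 2m^{1/2}\alpha\bconst] + \overline{x_1}$ and $2n - x_2 = 8m - [4m + 2m^{1/2}\alpha\bconst] - \overline{x_2} = 4m - 2m^{1/2}\alpha\bconst m^{1/2} + O(1)$ (taking care with the integer part), the exponent $2m - x_1/2$ of $-iG(\omega)$ becomes $2m - 2m - m^{1/2}\alpha\bconst + O(1) = -\alpha\bconst m^{1/2} + O(1)$, while the exponent $2m - (2n-x_2)/2$ of $iG(\omega^{-1})$ in the denominator becomes $2m - 2m + m^{1/2}\alpha\bconst + O(1) = \alpha\bconst m^{1/2} + O(1)$. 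Hence
\[
\widetilde{H}_{x_1+1, 2n-x_2}(\omega) = \frac{\omega^{2m}\,(-iG(\omega))^{-\alpha\bconst m^{1/2} + O(1)}}{(iG(\omega^{-1}))^{\alpha\bconst m^{1/2} + O(1)}},
\]
so that $\log \widetilde{H}_{x_1+1,2n-x_2}(\omega) = m\big(g_\alpha^{(2)}(\omega) + O(m^{-1})\big)$ with
\[
g_\alpha^{(2)}(\omega) = 2\log\omega + m^{-1/2}\big(-\alpha\bconst \log(-iG(\omega)) - \alpha\bconst\log(iG(\omega^{-1}))\big).
\]

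Next I would differentiate, using $G'(\omega)/G(\omega) = -1/\sqrt{\omega^2+2c}$ (which follows from Equation~\ref{eq:G} and is the computation already carried out in Equation~\ref{eq:saddlederivative}), to obtain
\[
(g_\alpha^{(2)})'(\omega) = \frac{1}{\omega}\left(2 + \alpha\bconst m^{-1/2}\left(\frac{\omega}{\sqrt{\omega^2+2c}} - \frac{\omega^{-1}}{\sqrt{\omega^{-2}+2c}}\right)\right).
\]
This is precisely Equation~\ref{eq:saddlederivative} with the sign of the second term in the inner bracket flipped — exactly the situation of Lemma~\ref{lemma:saddlepointlocation1} but with the first rather than the second term negated. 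I would then quote the argument from there: a bounded saddle point satisfies $2 + \alpha\bconst m^{-1/2}(\cdots) = O(m^{-1})$, which forces one of $\omega^2 + 2c$ or $\omega^{-2} + 2c$ to be $O(m^{-1})$; together with $2c = 1 - \bconst^2 m^{-1}/2 + O(m^{-3/2})$ from the computation in the proof of Lemma~\ref{lemma:saddlepointlocation}, this gives $\omega^2 = -1 + O(m^{-1})$, i.e. $\omega = \pm i + O(m^{-1})$.

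There is essentially no obstacle here: the only point requiring care is the bookkeeping of the integer parts in $2n - x_2$ (checking that $\lfloor 4m + 2m^{1/2}\alpha\bconst\rfloor$ and $8m - \lfloor 4m + 2m^{1/2}\alpha\bconst\rfloor - \overline{x_2}$ produce exponents $\mp\alpha\bconst m^{1/2} + O(1)$ as claimed, with the $O(1)$ absorbing both $\overline{x_2}$ and the fractional part), and then confirming that the sign pattern in $(g_\alpha^{(2)})'$ is such that the branch-cut cancellation noted after Equation~\ref{eq:sqrtbranchcut} and the quadratic analysis go through unchanged. Since the conclusion $\omega = \pm i + O(m^{-1})$ depends only on which of $\omega^2 + 2c$, $\omega^{-2}+2c$ is forced to be small — and both are symmetric under $\omega \mapsto \omega^{-1}$, which also preserves the bracket up to sign — the result is identical regardless of the sign configuration, so I would simply remark that the proof is the same as that of Lemma~\ref{lemma:saddlepointlocation1} \emph{mutatis mutandis}.
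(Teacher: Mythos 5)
Your proposal is correct and follows the paper's proof essentially verbatim: the same expansion of the exponents of $-iG(\omega)$ and $iG(\omega^{-1})$, the same derivative $(g_\alpha^{(2)})'(\omega)$ with the sign of the second term flipped relative to Equation~\ref{eq:saddlederivative}, and the same reduction to the quadratic analysis of Lemma~\ref{lemma:saddlepointlocation}. The only blemishes are cosmetic: a stray extra factor of $m^{1/2}$ in your intermediate expression for $2n-x_2$ (the exponent you then compute is correct), and the remark comparing the sign pattern to Lemma~\ref{lemma:saddlepointlocation1} has the roles of ``first'' and ``second'' term reversed.
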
 
 \begin{proof}
 \[
\widetilde{H}_{x_1+1, 2n-x_2}(\omega) = \frac{\omega^{2m} (-iG(\omega))^{-\alpha \bconst m^{1/2}+O(1)}}{(i G(\omega^{-1}))^{\alpha \bconst m^{1/2}+O(1)}}.\] So \[
\log \widetilde{H}_{x_1+1, 2n - x_2}(\omega) = m\big(2\log\omega 
+ m^{-1/2}(-\alpha \bconst \log (-iG(\omega)) - \alpha \bconst \log (iG(\omega^{-1}))) + O(m^{-1})\big).\] We want to find the saddle points of this function that are bounded at $m \rightarrow \infty$. Let \[
g_{\alpha}^{(2)}(\omega) = 2\log\omega + m^{-1/2}(-\alpha \bconst \log (-iG(\omega)) - \alpha \bconst \log (iG(\omega^{-1})))),
\]
 so $\log \widetilde{H}_{x_1+1, 2n - x_2}(\omega) = m(g_{\alpha}^{(2)}(\omega) + O(m^{-1}))$. Then $\log \widetilde{H}_{x_1+1,2n- x_2}(\omega)$ has a saddle point when \begin{equation*} (g_{\alpha}^{(2)})'(\omega) = \frac{1}{\omega}\left(2 + \alpha \bconst  m^{-1/2}\left( \frac{\omega}{\sqrt{\omega^2 + 2c}} - \frac{\omega^{-1}}{ \sqrt{\omega^{-2} + 2c}}\right)\right) = O(m^{-1})\end{equation*} The proof is finished in exactly the same way as Lemma~\ref{lemma:saddlepointlocation}. The only difference is the sign of the second term.
 \end{proof}

\subsection{Basic asymptotic expansions}

We have found that the saddle points are at a distance of order $m^{-1}$ from $\pm i$. So we will move the contours to pass through the appropriate saddle points in asymptotic coordinates and locally follow a path of steepest descent. Hence we will need to find asymptotic expansions of our integrands near $\pm i$. First we show that $V^{j,k}_{\e1, \e2} (\w1, \w2) h_{j,k}(\w1,\w2)$ is an even function in each variable, so we only need to look at the asymptotics for $\w1$ and $\w2$ both in a neighborhood of $i$.

First we note that by choice of branch cut we have \begin{equation}\label{eq:sqrtminus}
\sqrt{(-\omega)^2 + 2c} = -\sqrt{\omega^2 + 2c}
\end{equation} and so \begin{equation}\label{eq:Gtevenodd}
G(-\omega) = -G(\omega) \hspace{0.2cm}\text{ and }\hspace{0.2cm} t(-\omega) = t(\omega)\end{equation} Now we can prove the following lemmas.
\begin{lemma} Let $V^{j,k}_{\e1, \e2} (\w1, \w2)$ be as defined in Equation~\ref{eq:Voriginal}. Then 
\begin{align*}V_{\e1,\e2}^{j,k}(-\w1,\w2) &= (-1)^{1+\e1} V_{\e1,\e2}^{j,k}(\w1,\w2)\\V_{\e1,\e2}^{j,k}(\w1,-\w2) &= (-1)^{1+\e2} V_{\e1,\e2}^{j,k}(\w1,\w2)\end{align*}
\end{lemma}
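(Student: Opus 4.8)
The plan is to trace the parity directly through the definition of $V^{j,k}_{\e1,\e2}$ in Equation~\ref{eq:Voriginal}, reducing everything to the parity of the building blocks $Q^{\e1,\e2}_{\g1,\g2}$ and, below those, the functions $\mathrm{x}^{\e1,\e2}_{\g1,\g2}$, $G$ and $t$. The key input is the observation recorded in Equations~\ref{eq:sqrtminus}--\ref{eq:Gtevenodd}: the branch-cut convention gives $\sqrt{(-\omega)^2+2c} = -\sqrt{\omega^2+2c}$, hence $G(-\omega) = -G(\omega)$ while $t(-\omega) = t(\omega)$. From these I would first determine how $\mathrm{x}^{\e1,\e2}_{\g1,\g2}(\w1,\w2)$ transforms under $\w1 \mapsto -\w1$ (and separately $\w2 \mapsto -\w2$), using the explicit formula in Equation~\ref{eq:xg1g2}: the prefactor $G(\w1)G(\w2)/\prod\sqrt{\cdots}$ and the factor $(1-\w1^2\w2^2)$ each have a definite parity, and $\mathrm{y}^{\e1,\e2}_{\g1,\g2}(G(\w1),G(\w2))$ depends on $\w1$ only through $G(\w1)$, which simply changes sign. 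Then I would propagate this through Equation~\ref{eq:Q} for $Q^{\e1,\e2}_{\g1,\g2}$, where the only $\w1$-dependence beyond $\mathrm{x}$ is $t(\w1)^{\g1}$ (even) and $G(\w1)^{\e1}$ (contributes $(-1)^{\e1}$); one has to be careful that $Q$ uses $\mathrm{x}_{\g1,\g2}^{\e1,\e2}(\w1,\w2^{-1})$, i.e. the second slot is inverted, but $(-\omega)^{-1} = -(\omega^{-1})$ so the sign still passes through cleanly.

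Concretely, the first computation gives that $Q^{\e1,\e2}_{\g1,\g2}(-\w1,\w2) = (-1)^{1+\e1} Q^{\e1,\e2}_{\g1,\g2}(\w1,\w2)$, with the crucial point being that the sign is \emph{independent of $\g1$ and $\g2$} — this is what allows the sum over $\g1,\g2$ in Equation~\ref{eq:Voriginal} to factor out the common sign. (The $(-1)^{\g2 j+\g1 k}$ and $(-1)^{\e1+\e2+\cdots}$ coefficients are unaffected by $\w1\mapsto-\w1$.) For the second variable I expect $Q^{\e1,\e2}_{\g1,\g2}(\w1,-\w2) = (-1)^{1+\e2}Q^{\e1,\e2}_{\g1,\g2}(\w1,\w2)$ by the symmetric argument, again $\g$-independently; here the term $(-1)^{\e2+1}Q^{\e1,\e2}_{\g1,\g2}(\omega_1,-\omega_2)$ in the definition of $V$ needs a tiny extra check, since replacing $\w2\mapsto-\w2$ swaps $Q(\w1,\w2)$ and $Q(\w1,-\w2)$ inside the bracket, and one must verify the overall sign $(-1)^{\e2+1}$ works out consistently — it should, precisely because $Q(\w1,-\w2) = (-1)^{1+\e2}Q(\w1,\w2)$.

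Once the parity of each $Q$ is established, the conclusion for $V$ is immediate: pulling the common factor $(-1)^{1+\e1}$ (resp.\ $(-1)^{1+\e2}$) out of every summand of $\frac12\sum_{\g1,\g2}$ yields exactly $V^{j,k}_{\e1,\e2}(-\w1,\w2) = (-1)^{1+\e1}V^{j,k}_{\e1,\e2}(\w1,\w2)$ and $V^{j,k}_{\e1,\e2}(\w1,-\w2) = (-1)^{1+\e2}V^{j,k}_{\e1,\e2}(\w1,\w2)$.

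I expect the main obstacle to be purely bookkeeping: correctly tracking the parity contributions from the tangle of square-root prefactors and from the inverted arguments $\w2^{-1}$ appearing in Equations~\ref{eq:xg1g2}--\ref{eq:Q}, and in particular confirming that no residual $\g1,\g2$-dependence sneaks into the sign (if it did, the clean factorization would fail). A secondary subtlety is branch-cut consistency — one must make sure that $\sqrt{\w1^{-2}+2c}$ and the combination defining $\sqrt{\omega^2+2c}$ in Equation~\ref{eq:sqrtbranchcut} genuinely obey $\sqrt{(-\omega)^2+2c}=-\sqrt{\omega^2+2c}$ on the relevant domain (which is asserted in Equation~\ref{eq:sqrtminus} and which I would simply cite), and that $t(\omega)=\omega\sqrt{\omega^{-2}+2c}$ is therefore even as claimed in Equation~\ref{eq:Gtevenodd}. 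No saddle-point analysis or asymptotics enters here; it is an exact algebraic identity.
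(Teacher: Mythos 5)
Your proposal is correct and follows essentially the same route as the paper: establish that $\mathrm{y}^{\e1,\e2}_{\g1,\g2}$ is even in each argument, deduce $\mathrm{x}^{\e1,\e2}_{\g1,\g2}(-\w1,\w2)=-\mathrm{x}^{\e1,\e2}_{\g1,\g2}(\w1,\w2)$ from Equations~\ref{eq:sqrtminus}--\ref{eq:Gtevenodd}, conclude $Q^{\e1,\e2}_{\g1,\g2}(-\w1,\w2)=(-1)^{1+\e1}Q^{\e1,\e2}_{\g1,\g2}(\w1,\w2)$ with a sign independent of $\g1,\g2$, and factor it out of the sum. The only (harmless) difference is for the second identity, where the paper simply observes that $\w2\mapsto-\w2$ swaps the two $Q$-terms inside the bracket of Equation~\ref{eq:Voriginal} so the sign $(-1)^{1+\e2}$ pops out structurally, with no need for the parity of $Q$ in its second variable that you additionally propose to verify.
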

\begin{proof}
The second equality is clear from Equation~\ref{eq:Voriginal}. For the first equality, note that $\mathrm{y}_{\gamma_1,\gamma_2}^{\e1,\e2}(u,v)$ is a even function in each variable. Then from Equations~\ref{eq:sqrtminus}--\ref{eq:Gtevenodd} we see that $\mathrm{x}_{\gamma_1,\gamma_2}^{\e1,\e2}(-\w1,\w2) = -\mathrm{x}_{\gamma_1,\gamma_2}^{\e1,\e2}(\w1,\w2)$, and using Equation~\ref{eq:Gtevenodd} again we obtain $Q_{\gamma_1,\gamma_2}^{\e1,\e2}(-\w1,\w2) = (-1)^{1+\e1} Q_{\gamma_1,\gamma_2}^{\e1,\e2}(\w1,\w2)$, from which the result follows.
\end{proof} 
\begin{lemma} Take $x = (x_1,x_2) \in \mathtt{W}_\e1,$ and $y = (y_1,y_2) \in \mathtt{B}_\e2$ with $\e1,\e2\in \{0,1\}$ and let $h_{j,k}(\w1,\w2)$ be as defined in Equation~\ref{eq:hij}.
Then we have \begin{align*}
h_{j,k}(-\w1,\w2) &= (-1)^{1+\e1} h_{j,k}(\w1,\w2) \\
h_{j,k}(\w1,-\w2) &= (-1)^{1+\e2} h_{j,k}(\w1,\w2) 
\end{align*} 
\end{lemma}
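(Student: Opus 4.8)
The plan is to reduce both identities to a single symmetry of the building block $\widetilde{H}_{p,q}$ from Equation~\ref{eq:Htilde}, and then to read off the signs from the congruences defining $\mathtt{W}_{\e1}$ and $\mathtt{B}_{\e2}$.

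First I would prove that for \emph{even} integers $p,q$ with $0 < p,q < 2n$ one has
\[
\widetilde{H}_{p,q}(-\omega) = (-1)^{(p-q)/2}\,\widetilde{H}_{p,q}(\omega).
\]
This is immediate from Equation~\ref{eq:Htilde}: since $2m$ is even, $(-\omega)^{2m} = \omega^{2m}$; by Equation~\ref{eq:Gtevenodd}, $-iG(-\omega) = iG(\omega) = -\bigl(-iG(\omega)\bigr)$, and since $(-\omega)^{-1} = -\omega^{-1}$, $iG\bigl((-\omega)^{-1}\bigr) = -iG(\omega^{-1}) = -\bigl(iG(\omega^{-1})\bigr)$. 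Because $p,q$ are even, the exponents $2m - p/2$ and $2m - q/2$ are integers, so these sign flips come out of the powers with no branch ambiguity, contributing $(-1)^{2m-p/2}$ from the numerator and $(-1)^{-(2m-q/2)}$ from the denominator; as $(-1)^{\pm k}=(-1)^k$, the product is $(-1)^{p/2}(-1)^{-q/2}=(-1)^{(p-q)/2}$.

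Next I would substitute this into Equation~\ref{eq:hij}. Each $h_{j,k}$ has numerator $\widetilde{H}_{x_1+1,x_2}(\w1)$ when $j=0$ and $\widetilde{H}_{x_1+1,2n-x_2}(\w1)$ when $j=1$, so replacing $\w1$ by $-\w1$ multiplies $h_{j,k}$ by $(-1)^{(x_1+1-x_2)/2}$ or by $(-1)^{(x_1+1-(2n-x_2))/2}=(-1)^{(x_1+1+x_2)/2}$ respectively, where the last step uses that $n=4m$ is even. Since $x\in\mathtt{W}_{\e1}$ gives $x_1+x_2 \equiv 2\e1+1 \pmod{4}$ and $x_2$ is even, in either case the exponent is $\equiv \e1+1 \pmod{2}$, so the factor is $(-1)^{1+\e1}$; this gives the first identity. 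For the second, the same computation is applied to the denominator, which is $\widetilde{H}_{y_1,y_2+1}(\w2)$ when $k=0$ and $\widetilde{H}_{2n-y_1,y_2+1}(\w2)$ when $k=1$: replacing $\w2$ by $-\w2$ multiplies $h_{j,k}$ by the reciprocal of $(-1)^{(y_1-y_2-1)/2}$ or of $(-1)^{(2n-y_1-y_2-1)/2}=(-1)^{(y_1+y_2+1)/2}$, but reciprocals of $\pm 1$ are harmless, and $y\in\mathtt{B}_{\e2}$ with $y_2$ odd makes both exponents $\equiv \e2+1 \pmod{2}$, giving the factor $(-1)^{1+\e2}$.

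There is no real obstacle here; the argument is bookkeeping. The only points that need attention are checking that all the exponents that appear ($2m-p/2$, $(x_1+1\pm x_2)/2$, $(y_1-y_2-1)/2$, and so on) are genuinely integers --- which is exactly why the hypotheses put $x\in\mathtt{W}$ and $y\in\mathtt{B}$, so that $x_1+1,\,x_2,\,y_1,\,y_2+1$ are all even --- so that the extractions $(iz)^k=(-1)^k(-iz)^k$ are legitimate, and carefully reducing the mod-$4$ congruences to mod-$2$ statements about the halved exponents.
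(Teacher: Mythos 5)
Your proof is correct and follows essentially the same route as the paper's: both arguments reduce to the sign flip $G(-\omega)=-G(\omega)$ applied to the factors of $\widetilde{H}$, followed by reading off the parity of the halved exponents from the mod-$4$ congruences defining $\mathtt{W}_{\e1}$ and $\mathtt{B}_{\e2}$. Your only (harmless) organizational difference is that you isolate the general identity $\widetilde{H}_{p,q}(-\omega)=(-1)^{(p-q)/2}\widetilde{H}_{p,q}(\omega)$ up front, whereas the paper computes the $h_{0,0}$ case directly and notes the other cases follow identically.
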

\begin{proof}
First we look at $h_{0,0}(\w1,\w2) = \widetilde{H}_{x_1+1,x_2}(\w1)/\widetilde{H}_{y_1,y_2+1}(\w2)$. We have 
\begin{align*}
\widetilde{H}_{x_1+1,x_2}(-\w1) &= \frac{\w1^{2m}(-iG(-\w1))^{2m - (x_1+1)/2}}{(iG(-\w1^{-1}))^{2m-x_2/2}}\\
&= (-1)^{-(x_1+1)/2+x_2/2}\widetilde{H}_{x_1+1,x_2}(\w1)\\
&= (-1)^{(x_1+1)/2+x_2/2}\widetilde{H}_{x_1+1,x_2}(\w1)\\
&=(-1)^{\e1+1}\widetilde{H}_{x_1+1,x_2}(\w1)
\end{align*} where we use the fact that $x_1$ is odd since $x$ is a white vertex, and $x_1 + x_2 \equiv 2\e1 + 1 \mod 4$.  Similarly \begin{align*}
\widetilde{H}_{y_1,y_2+1}(-\w2) &= (-1)^{-y_1/2+(y_2+1)/2}\widetilde{H}_{y_1,y_2+1}(\w2)\\
&= (-1)^{y_1/2+(y_2+1)/2}\widetilde{H}_{y_1,y_2+1}(\w2)\\
&=(-1)^{\e2+1}\widetilde{H}_{y_1,y_2+1}(\w2)
\end{align*}
This proves the result for $i=0,j=0$. For the other cases, note that $(-1)^{(2n - x_2)/2} = (-1)^{x_2/2}$ and $(-1)^{(2n - y_1)/2} = (-1)^{y_1/2}$. So we can prove these in exactly the same way.
\end{proof} Hence we have the following theorem.
\begin{theorem}\label{thm:integrandeven}
Take $x = (x_1,x_2) \in \mathtt{W}_\e1,$ and $y = (y_1,y_2) \in \mathtt{B}_\e2$ with $\e1,\e2\in \{0,1\}$ and let $h_{j,k}(\w1,\w2)$ be as defined in Equation~\ref{eq:hij}. Let $V^{j,k}_{\e1, \e2} (\w1, \w2)$ be as defined in Equation~\ref{eq:Voriginal}. Then \begin{align*}V_{\e1,\e2}^{j,k}(-\w1,\w2)h_{j,k}(-\w1,\w2) &=  V_{\e1,\e2}^{j,k}(\w1,\w2)h_{j,k}(\w1,\w2) \\
V_{\e1,\e2}^{j,k}(\w1,-\w2) h_{j,k}(\w1,-\w2) &= V_{\e1,\e2}^{j,k}(\w1,\w2)h_{j,k}(\w1,\w2) \end{align*}
\end{theorem}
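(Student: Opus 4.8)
The plan is simply to multiply together the two lemmas immediately preceding the statement. The first of these gives $V_{\e1,\e2}^{j,k}(-\w1,\w2) = (-1)^{1+\e1}V_{\e1,\e2}^{j,k}(\w1,\w2)$ and the second gives $h_{j,k}(-\w1,\w2) = (-1)^{1+\e1}h_{j,k}(\w1,\w2)$. Multiplying these two identities, the prefactor becomes $(-1)^{1+\e1}\cdot(-1)^{1+\e1} = (-1)^{2+2\e1} = 1$, which yields the first displayed equation. The second displayed equation follows in exactly the same way, using instead the two identities in the second variable: $V_{\e1,\e2}^{j,k}(\w1,-\w2) = (-1)^{1+\e2}V_{\e1,\e2}^{j,k}(\w1,\w2)$ and $h_{j,k}(\w1,-\w2) = (-1)^{1+\e2}h_{j,k}(\w1,\w2)$, whose product again carries the prefactor $(-1)^{2+2\e2}=1$.

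There is no real obstacle here; all the substantive work has already been carried out in the two lemmas. Those in turn rest on: (i) the evenness of the rational functions $\mathrm{y}_{\gamma_1,\gamma_2}^{\e1,\e2}$ in each argument; (ii) the branch-cut relations $\sqrt{(-\omega)^2+2c} = -\sqrt{\omega^2+2c}$ and hence $G(-\omega)=-G(\omega)$, $t(-\omega)=t(\omega)$ from Equations~\ref{eq:sqrtminus}--\ref{eq:Gtevenodd}, which track through $\mathrm{x}_{\gamma_1,\gamma_2}^{\e1,\e2}$ and $Q_{\gamma_1,\gamma_2}^{\e1,\e2}$ and then $V_{\e1,\e2}^{j,k}$; and (iii) the parity constraints on vertex coordinates ($x_1$ odd for $x\in\mathtt{W}$, $x_1+x_2\equiv 2\e1+1\bmod 4$, and the analogous conditions for $y$), together with $(-1)^{(2n-x_2)/2}=(-1)^{x_2/2}$ and $(-1)^{(2n-y_1)/2}=(-1)^{y_1/2}$, which pin down the sign picked up by each $\widetilde{H}$ factor under $\omega\mapsto-\omega$. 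The only thing worth emphasizing is the bookkeeping point that the two odd pieces have \emph{matching} signs $(-1)^{1+\e1}$, so their product is genuinely even; this is what makes the combined integrand symmetric under $\w1\mapsto-\w1$ and under $\w2\mapsto-\w2$, which is precisely what is needed to reduce the later saddle-point analysis to a single neighborhood of $\omega=i$ in each variable.
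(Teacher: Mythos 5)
Your proposal is correct and is exactly the paper's argument: the theorem is stated in the paper as an immediate consequence ("Hence we have the following theorem") of the two preceding lemmas, obtained by multiplying the sign identities $V_{\e1,\e2}^{j,k}(-\w1,\w2)=(-1)^{1+\e1}V_{\e1,\e2}^{j,k}(\w1,\w2)$ and $h_{j,k}(-\w1,\w2)=(-1)^{1+\e1}h_{j,k}(\w1,\w2)$ (and their analogues in the second variable) so that the prefactors cancel. Your accounting of what those lemmas rest on is also accurate.
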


Now we will compute some asymptotic expansions that we will need later. Let \begin{equation}\label{eq:omegaexpansion}
\omega = i + \bconst^2 m^{-1} w
\end{equation} for $|w| < m^{\delta}$ for some $0 < \delta < 1/2$. First, we have the following lemma.
\begin{lemma}\label{lemma:sqrtexpansions}
Let the $ \sqrt{\omega^2 + 2c}$ and $\sqrt{\omega^{-2} + 2c}$ be as defined in Equation~\ref{eq:sqrtbranchcut} and $\omega$ as in Equation~\ref{eq:omegaexpansion}. Then 
\begin{align}
 \sqrt{\omega^2 + 2c} &=i\bconst m^{-1/2}\sqrt{1/2-2iw} + O(m^{-1}w), \\
 \sqrt{\omega^{-2} + 2c} &= -i\bconst m^{-1/2}\sqrt{1/2 + 2iw} + O(m^{-1}w).
\end{align} 
\end{lemma}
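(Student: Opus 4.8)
The plan is to expand the branch-cut definition in Equation~\ref{eq:sqrtbranchcut} directly, the only subtlety being that the radicand will itself be of size $m^{-1}$. First I would record the scalar expansions that drive everything. From $a = 1 - \bconst m^{-1/2}$ one has, exactly as in the proof of Lemma~\ref{lemma:saddlepointlocation}, $2c = 1 - \bconst^2 m^{-1}/2 + O(m^{-3/2})$, whence $\sqrt{2c} = 1 - \bconst^2 m^{-1}/4 + O(m^{-3/2})$ and $1 \mp \sqrt{2c}$ with the same error. Substituting $\omega = i + \bconst^2 m^{-1} w$ from Equation~\ref{eq:omegaexpansion} into the two factors appearing in Equation~\ref{eq:sqrtbranchcut},
\[
-i(\omega + i\sqrt{2c}) = 1 + \sqrt{2c} - i\bconst^2 m^{-1} w = 2 - i\bconst^2 m^{-1} w + O(m^{-1}),
\]
\[
-i(\omega - i\sqrt{2c}) = 1 - \sqrt{2c} - i\bconst^2 m^{-1} w = \bconst^2 m^{-1}\bigl(1/4 - iw\bigr) + O(m^{-3/2}).
\]
Thus the first factor stays bounded away from $0$, with principal square root $\sqrt{2}\bigl(1 + O(m^{-1}(1+|w|))\bigr)$, while the second factor carries the whole $m^{-1/2}$ scale.

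The one delicate point is the second square root, whose argument is itself $O(m^{-1})$: rather than Taylor-expanding $\sqrt{\cdot}$ naively, I would factor out the $\bconst^2 m^{-1}$ scale and write $\sqrt{\bconst^2 m^{-1}(1/4-iw) + O(m^{-3/2})} = \bconst m^{-1/2}\sqrt{1/4 - iw} + E$, bounding $E$ through the identity $\sqrt{A+\epsilon} - \sqrt{A} = \epsilon/(\sqrt{A+\epsilon}+\sqrt{A})$; on the range $|w| < m^{\delta}$, $\delta < 1/2$, and for $w$ staying away from the branch point $-i/4$ (which the contours in Section~\ref{sec:overviewofresults} avoid) the denominator is $\gg m^{-1/2}$, so $E$ is of smaller order uniformly. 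Multiplying the two factors and using $\sqrt{2}\sqrt{1/4-iw} = \sqrt{1/2-2iw}$ gives $\sqrt{\omega^2 + 2c} = i\bconst m^{-1/2}\sqrt{1/2 - 2iw} + O(m^{-1}w)$. To be sure the product of principal square roots in Equation~\ref{eq:sqrtbranchcut} lands on the principal branch of $\sqrt{1/2-2iw}$ rather than its negative, I would check agreement at the single value $w = 0$ — both sides equal $i\bconst m^{-1/2}/\sqrt{2}$ — and then argue by continuity on the connected region of $w$ under consideration.

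For $\sqrt{\omega^{-2}+2c}$ I would not repeat the computation but use symmetry: $\omega^{-2} = (-\omega^{-1})^2$, so by the evenness relation in Equation~\ref{eq:sqrtminus}, $\sqrt{\omega^{-2}+2c} = -\sqrt{(-\omega^{-1})^2 + 2c}$; since Equation~\ref{eq:omegaexpansion} gives $-\omega^{-1} = i + \bconst^2 m^{-1}(-w) + O(m^{-2}w^2)$, the part already proved applies with $w$ replaced by $-w + O(m^{-1}w^2)$, yielding $-i\bconst m^{-1/2}\sqrt{1/2 + 2iw} + O(m^{-1}w)$ as claimed. The main obstacle here is purely the error bookkeeping of the previous paragraph — making the estimate of the $O(m^{-1})$-sized square root genuinely uniform over $|w| < m^{\delta}$ and confirming that the branch-cut convention of Equation~\ref{eq:sqrtbranchcut} is compatible with the principal branch in the target formula; everything else is routine power-series manipulation.
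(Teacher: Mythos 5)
Your proposal is correct and follows essentially the same route as the paper's proof: direct substitution of $\omega = i + \bconst^2 m^{-1}w$ into the two factors of Equation~\ref{eq:sqrtbranchcut}, with the first factor contributing $\sqrt{2}$ up to small error and the second carrying the $\bconst^2 m^{-1}(1/4-iw)$ scale, followed by the evenness relation of Equation~\ref{eq:sqrtminus} applied to $-\omega^{-1} = i - \bconst^2 m^{-1}w + O(m^{-2}w^2)$ for the second identity. Your explicit branch check at $w=0$ and the algebraic bound on the perturbed square root are slightly more careful than the paper's Taylor-expansion bookkeeping, but the argument is the same.
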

\begin{proof}
We compute
 \begin{align*}
 \sqrt{\omega^2 + 2c} &= i\sqrt{-i(\omega + i\sqrt{2c})}\sqrt{-i(\omega - i \sqrt{2c})} \\
 &= i\sqrt{-i(i + \bconst^2 m^{-1} w+ i(1- \bconst^2 m^{-1}/4) + O(m^{-3/2}))}\\
 &\hspace{1cm}\times\sqrt{-i(i + \bconst^2 m^{-1} w - i(1- \bconst^2 m^{-1}/4) + O(m^{-3/2}))}\\
 &= i(\sqrt{2}+O(m^{-1}w))\sqrt{1 -i\bconst^2 m^{-1} w - (1- \bconst^2 m^{-1}/4) + O(m^{-3/2})} \\
 &= i(\sqrt{2}+O(m^{-1}w))\sqrt{-i\bconst^2 m^{-1} w + \bconst^2 m^{-1}/4 + O(m^{-3/2})} \\
 &= i\bconst m^{-1/2}(1
+O(m^{-1}w))\sqrt{1/2-2iw  + O(m^{-1/2}))} \\
 &= i\bconst m^{-1/2}\sqrt{1/2-2iw} + O(m^{-1}w^{-1/2}, m^{-3/2}w^{3/2})\\
  &= i\bconst m^{-1/2}\sqrt{1/2-2iw} + O(m^{-1}w).
\end{align*}
For the second equation, we have $\omega^{-1} = -i + \bconst^2m^{-1}w + O(m^{-2})$. By Equation~\ref{eq:sqrtminus}, we have \begin{align*}
\sqrt{\omega^{-2} + 2c} &= \sqrt{(-i + \bconst^2m^{-1}w + O(m^{-2}))^2 + 2c} \\
&= -\sqrt{(i - \bconst^2m^{-1}w + O(m^{-2}))^2 + 2c}
\end{align*} from which the result follows by a similar calculation.
\end{proof} Then we have 
\begin{align}\label{eq:Gasymptotics}
\begin{split}
-iG(\omega) &= \frac{1}{\sqrt{2c}}(1  - \bconst m^{-1/2}\sqrt{1/2-2iw}  + O(m^{-1}w)) \\
iG(\omega^{-1}) &= \frac{1}{\sqrt{2c}}(1  - \bconst m^{-1/2}\sqrt{1/2+2iw}  + O(m^{-1}w)).\end{split}
\end{align} 
Also, 
\begin{align}
\begin{split}
\log \omega &= \log (i + \bconst^2 m^{-1} w) \\
&= \log i + \log(1 - i\bconst^2 m^{-1} w) \\
&= \frac{\pi i}{2} -i\bconst^2 m^{-1} w + O(m^{-2}w^2),
\end{split}\label{eq:logomegaexpansion}
\end{align}
\begin{align}
\begin{split}
\log(-iG(\omega)) &= \log \frac{1}{\sqrt{2c}} + \log (1  - \bconst m^{-1/2}\sqrt{1/2 - 2i w} + O(m^{-1}w)) \\
&=   - \bconst m^{-1/2}\sqrt{1/2 - 2i w} +  O(m^{-1}w)
\end{split}\label{eq:logGomegaexpansion}
\end{align} and \begin{align}
\log (iG(\omega^{-1})) &=  -  \bconst m^{-1/2}\sqrt{ 1/2+2i w} +  O(m^{-1}w)\label{eq:logGomegainvexpansion}\end{align}

\subsection{Asymptotic expansion of exponential part of integrands}
First we will find asymptotic expansions for the terms $h_{j,k}(\w1,\w2)$ in the integrands. Let $\alpha_x,\,\alpha_y < 0$, $x = (x_1,x_2) \in \mathtt{W}_\e1$ and $y = (y_1, y_2) \in \mathtt{B}_\e2$ with $\e1,\e2\in \{0,1\}$ be as in Equation~\ref{eq:asymptotic_coords_xy}.
We define the asymptotic variables $w$ and $z$ near $\w1 = i$ and $\w2 = i$ respectively.
\begin{definition}\label{def:asymptoticsvars}
In a neighborhood of $i$, let \begin{equation}\label{eq:omega12expansion}
\w1 = i + \bconst^2 m^{-1} w \hspace{0.5cm} \text{and} \hspace{0.5cm} \w2 = i + \bconst^2 m^{-1} z 
\end{equation} for $|w|, |z| < m^{\delta}$ for some $0 < \delta < 1/2$.
\end{definition}
Then we can prove the following lemma.
\begin{lemma}
For $\alpha_x,\,\alpha_y < 0$, $x = (x_1,x_2) \in \mathtt{W}_\e1$ and $y = (y_1, y_2) \in \mathtt{B}_\e2$ as in Equation~\ref{eq:asymptotic_coords_xy}, and local coordinates $w,z$ as in Defintion~\ref{def:asymptoticsvars}, we have
\begin{align}
\widetilde{H}_{x_1 + 1, x_2}(\w1) &= (-1)^m e^{\bconst^2 (-2iw + \alpha_x(\sqrt{1/2 - 2i w} - \sqrt{1/2+2i w})) + O(m^{-1/2}w)} \label{eq:Hx1x2asymp}\\
\widetilde{H}_{y_1, y_2 + 1}(\w2) &= (-1)^m e^{\bconst^2 (-2iz + \alpha_y(\sqrt{1/2 - 2i z} - \sqrt{1/2+2i z})) + O(m^{-1/2}z)}\label{eq:Hy1y2asymp}\\
\widetilde{H}_{2n-y_1, y_2 + 1}(\w2) &= (-1)^m e^{\bconst^2(-2iz - \alpha_y( \sqrt{1/2 - 2i z} + \sqrt{1/2+2i z}))+ O(m^{-1/2}z)}\label{eq:H2ny1y2asymp}\\
\widetilde{H}_{x_1 + 1, 2n - x_2}(\w1) &= (-1)^m e^{\bconst^2(-2iw + \alpha_x (\sqrt{1/2-2iw} + \sqrt{1/2+2iw}))+ O(m^{-1/2}w)}\label{eq:Hx12nx2asymp}
\end{align}
\end{lemma}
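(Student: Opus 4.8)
The plan is to reduce all four identities to a single mechanical substitution into the basic expansions \ref{eq:logomegaexpansion}--\ref{eq:logGomegainvexpansion} already in hand; the four cases differ only in the signs of two terms and in whether the local variable is $w$ or $z$, and both of these are read straight off the exponents in the definition \ref{eq:Htilde}. Consider the first identity. Since a white vertex has $x_1$ odd and $x_2$ even, the exponents $2m-(x_1+1)/2$ and $2m-x_2/2$ are integers, so $\widetilde{H}_{x_1+1,x_2}$ is unambiguous and
\[
\log\widetilde{H}_{x_1+1,x_2}(\w1)=2m\log\w1+\bigl(2m-\tfrac{x_1+1}{2}\bigr)\log(-iG(\w1))-\bigl(2m-\tfrac{x_2}{2}\bigr)\log(iG(\w1^{-1}))
\]
holds with all logarithms in the principal branch (a valid choice near $\w1=i$, since $-iG(\w1)$ and $iG(\w1^{-1})$ lie near $1/\sqrt{2c}\approx1$ there, exactly as in the derivation of \ref{eq:logGomegaexpansion}--\ref{eq:logGomegainvexpansion}). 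By the asymptotic coordinates \ref{eq:asymptotic_coords_xy} both exponents equal $-\ax\bconst m^{1/2}+O(1)$; substituting $\w1=i+\bconst^2 m^{-1}w$ from Definition~\ref{def:asymptoticsvars} and inserting \ref{eq:logomegaexpansion}--\ref{eq:logGomegainvexpansion} collapses the right side to
\[
\log\widetilde{H}_{x_1+1,x_2}(\w1)=i\pi m-2i\bconst^2 w+\ax\bconst^2\bigl(\sqrt{1/2-2iw}-\sqrt{1/2+2iw}\bigr)+O(m^{-1/2}w),
\]
where $i\pi m=2m\log i$ exponentiates to exactly $(-1)^m$; factoring out $\bconst^2$ from the remaining terms and exponentiating gives \ref{eq:Hx1x2asymp}.

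The one point needing care is the error accounting, and it is here that the restriction $0<\delta<1/2$ in Definition~\ref{def:asymptoticsvars} is used. The terms discarded above are: $2m$ times the $O(m^{-2}w^2)$ remainder of \ref{eq:logomegaexpansion}; the $-\ax\bconst m^{1/2}$ part of each exponent times the $O(m^{-1}w)$ remainders of \ref{eq:logGomegaexpansion}--\ref{eq:logGomegainvexpansion}; and the $O(1)$ part of each exponent times the $O(\bconst m^{-1/2})$ leading terms of those expansions. Each of these is a negative power of $m$ times a bounded power of $w$, so with $|w|<m^{\delta}$ and $\delta<1/2$ all of them are $O(m^{-1/2}w)$ and, in particular, negligible against the retained $\sqrt{1/2\pm2iw}$ terms.

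The remaining three identities follow by the identical computation, only the two signs in front of $\sqrt{1/2\mp2iw}$ (resp.\ $\sqrt{1/2\mp2iz}$) changing, as dictated by \ref{eq:Htilde}. For $\widetilde{H}_{y_1,y_2+1}$ one has $2m-y_1/2=2m-(y_2+1)/2=-\ay\bconst m^{1/2}+O(1)$, the same pattern as the first case with $(\ay,z)$ in place of $(\ax,w)$, giving \ref{eq:Hy1y2asymp}. For $\widetilde{H}_{2n-y_1,y_2+1}$, using $2n=8m$ gives $2m-(2n-y_1)/2=+\ay\bconst m^{1/2}+O(1)$ while $2m-(y_2+1)/2=-\ay\bconst m^{1/2}+O(1)$; the flipped sign on the first exponent flips the sign of the $\sqrt{1/2-2iz}$ contribution, producing the combination $-\ay(\sqrt{1/2-2iz}+\sqrt{1/2+2iz})$ of \ref{eq:H2ny1y2asymp}. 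For $\widetilde{H}_{x_1+1,2n-x_2}$, $2m-(x_1+1)/2=-\ax\bconst m^{1/2}+O(1)$ and $2m-(2n-x_2)/2=+\ax\bconst m^{1/2}+O(1)$, so now the flipped sign on the second exponent flips the sign of the $\sqrt{1/2+2iw}$ contribution, giving $\ax(\sqrt{1/2-2iw}+\sqrt{1/2+2iw})$ as in \ref{eq:Hx12nx2asymp}. The main obstacle is thus purely clerical: keeping the four sign patterns straight and confirming in each case that the errors collapse as in the previous paragraph; there is no analytic difficulty beyond the expansions already established.
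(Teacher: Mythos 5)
Your proposal is correct and follows essentially the same route as the paper: take logarithms, identify the exponents $2m-(x_1+1)/2$ etc.\ as $\mp\alpha\bconst m^{1/2}+O(1)$ via the asymptotic coordinates, substitute the expansions of $\log\w1$, $\log(-iG(\w1))$ and $\log(iG(\w1^{-1}))$, and consolidate the error terms using $|w|<m^\delta$ with $\delta<1/2$. The sign bookkeeping for the three remaining cases matches the paper's computation exactly.
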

\begin{proof}
We will look at these expressions one at a time. Firstly we have \[\widetilde{H}_{x_1 + 1, x_2}(\w1) = \frac{\w1^{2m} (-iG(\w1))^{-m^{1/2}\alpha_x \bconst +O(1)}}{(iG(\w1^{-1}))^{-m^{1/2}\alpha_x \bconst + O(1)}}\] so \[
\log \widetilde{H}_{x_1 + 1, x_2}(\w1) = 2m \w1 - (m^{1/2}\alpha_x \bconst  + O(1))(\log(-iG(\w1)) - \log(iG(\w1^{-1}))).
\] From Equations \ref{eq:logomegaexpansion}, \ref{eq:logGomegaexpansion} and \ref{eq:logGomegainvexpansion} we have \begin{multline*}
\log \widetilde{H}_{x_1 + 1, x_2}(\w1) = 2m\left(\frac{\pi i}{2} -i\bconst^2 m^{-1} w + O(m^{-2}w^2)\right)  \\
- (m^{1/2}\alpha_x \bconst  + O(1))(- \bconst m^{-1/2}\sqrt{1/2 - 2i w} +  \bconst m^{-1/2}\sqrt{ 1/2+2i w}+  O(m^{-1}w) ) \\
= m\pi i - 2 i \bconst^2 w + \alpha_x \bconst^2 (\sqrt{1/2 - 2i w} -\sqrt{ 1/2+2i w})  + O(m^{-1/2}w)
\end{multline*} where for the error term, we note that because $|w| \leq m^\delta$ with $0 < \delta < 1/2$, we can consolidate the $O(m^{-1}w^2),\, O(m^{-1/2}w^{1/2})$ and $O(m^{-1}w)$ error terms into one error term of order $m^{-1/2}w$. Equation~\ref{eq:Hx1x2asymp} follows. Equation~\ref{eq:Hy1y2asymp} also follows by replacing $\alpha_x$ with $\alpha_y$ and $w$ with $z$.

Next we look at \[\widetilde{H}_{2n - y_1, y_2+1}(\w2) = \frac{\w2^{2m} (-iG(\w2))^{m^{1/2}\alpha_y \bconst +O(1)}}{(iG(\w2^{-1}))^{-m^{1/2}\alpha_y \bconst + O(1)}}.\] Again, using Equations \ref{eq:logomegaexpansion}, \ref{eq:logGomegaexpansion} and \ref{eq:logGomegainvexpansion} we see that \begin{align*}
\log \widetilde{H}_{2n - y_1, y_2+1}(\w2) &= 2m \w2 + (m^{1/2}\alpha_y \bconst  + O(1))(\log(-iG(\w2)) + \log(iG(\w2^{-1}))) \\
&= m\pi i - 2 i \bconst^2 z - \alpha_y \bconst^2 (\sqrt{1/2 - 2i z} +\sqrt{ 1/2+2i z})  + O(m^{-1/2}z)
\end{align*} from which Equation~\ref{eq:H2ny1y2asymp} follows. Similarly, \[\widetilde{H}_{x_1 + 1, 2n - x_2}(\w1) = \frac{\w1^{2m} (-iG(\w1))^{-m^{1/2}\alpha_x \bconst +O(1)}}{(iG(\w1^{-1}))^{m^{1/2}\alpha_x \bconst + O(1)}}\] and so \begin{align*}
\log \widetilde{H}_{x_1 + 1, 2n - x_2}(\w1) &=  2m \w1 - (m^{1/2}\alpha_x \bconst  + O(1))(\log(-iG(\w1)) + \log(iG(\w1^{-1}))) \\
&= m\pi i - 2 i \bconst^2 w + \alpha_x \bconst^2 (\sqrt{1/2 - 2i w} +\sqrt{ 1/2+2i w})  + O(m^{-1/2}w)
\end{align*} from which Equation~\ref{eq:Hx12nx2asymp} follows.
\end{proof}

Recall the functions \[f^\pm(w) = \sqrt{1/2-2iw} \pm \sqrt{1/2+2iw}\] defined in Equation~\ref{eq:fpm} and $g_{j,k}(w,z)$ defined in Equation~\ref{eq:gij}.  The following theorem follows immediately.
\begin{theorem}\label{thm:Hw1w2expansion}
Let $\alpha_x,\,\alpha_y < 0$, $x = (x_1,x_2) \in \mathtt{W}_\e1$ and $y = (y_1, y_2) \in \mathtt{B}_\e2$ be as in Equation~\ref{eq:asymptotic_coords_xy}. Let $w$ and $z$ be local coordinates for $\w1, \w2$ near $i$ respectively as defined in Definition~\ref{def:asymptoticsvars}. Let the functions $h_{j,k}(\w1,\w2)$ be as defined in Equation~\ref{eq:hij}. Let the functions $g_{j,k}(w,z)$ be as defined in \ref{eq:gij}. Then 
\begin{equation}
h_{j,k}(\w1,\w2) = e^{g_{j,k}(w,z) + O(m^{-1/2}w, m^{-1/2}z)}
\end{equation}
These are the exponential parts of the integrands of $\mathcal{I}^{j,k}_{\e1,\e2} (a, x_1, x_2, y_1, y_2)$ near $\w1 = i$ and $\w2 = i$.
\end{theorem}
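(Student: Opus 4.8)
The plan is to obtain this as an immediate corollary of the preceding lemma by forming the appropriate quotients. Recall from Equation~\ref{eq:hij} that each $h_{j,k}(\w1,\w2)$ is a ratio whose numerator is one of $\widetilde{H}_{x_1+1,x_2}(\w1)$ or $\widetilde{H}_{x_1+1,2n-x_2}(\w1)$ and whose denominator is one of $\widetilde{H}_{y_1,y_2+1}(\w2)$ or $\widetilde{H}_{2n-y_1,y_2+1}(\w2)$, with the choice of numerator governed by $j$ and of denominator by $k$. First I would substitute the four asymptotic expansions \ref{eq:Hx1x2asymp}--\ref{eq:Hx12nx2asymp} into the four cases of \ref{eq:hij}. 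In every case the leading factor $(-1)^m$ appears in both numerator and denominator and cancels, so $h_{j,k}(\w1,\w2)$ reduces to an exponential whose exponent is the difference of the two relevant $\bconst^2(\cdots)$ exponents together with an error term.

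Second, I would check that this difference of exponents is exactly $g_{j,k}(w,z)$ as defined in Equation~\ref{eq:gij}. For $(j,k)=(0,0)$ the exponent is $\bconst^2\big(-2iw+\alpha_x(\sqrt{1/2-2iw}-\sqrt{1/2+2iw})\big)-\bconst^2\big(-2iz+\alpha_y(\sqrt{1/2-2iz}-\sqrt{1/2+2iz})\big)$, which by the definition of $f^-$ in \ref{eq:fpm} equals $\bconst^2(-2i(w-z)+\alpha_x f^-(w)-\alpha_y f^-(z))=g_{0,0}(w,z)$. When $k=1$ the denominator is $\widetilde{H}_{2n-y_1,y_2+1}(\w2)$, whose exponent carries $-\alpha_y(\sqrt{1/2-2iz}+\sqrt{1/2+2iz})=-\alpha_y f^+(z)$; subtracting it flips the sign and produces the $+\alpha_y f^+(z)$ appearing in $g_{1,0}$ and $g_{1,1}$. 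Similarly, when $j=1$ the numerator $\widetilde{H}_{x_1+1,2n-x_2}(\w1)$ contributes $\alpha_x f^+(w)$ instead of $\alpha_x f^-(w)$. So the four difference-of-exponents match $g_{0,0},g_{1,0},g_{0,1},g_{1,1}$ respectively.

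Third, I would consolidate the error terms. Each of the four expansions carries an exponent error $O(m^{-1/2}w)$ or $O(m^{-1/2}z)$; since $|w|,|z|<m^{\delta}$ with $0<\delta<1/2$ these are $o(1)$, and after subtracting two such exponents they combine into a single exponent error of the form $O(m^{-1/2}w,m^{-1/2}z)$, giving $h_{j,k}(\w1,\w2)=e^{g_{j,k}(w,z)+O(m^{-1/2}w,m^{-1/2}z)}$. The final sentence of the theorem — that these are the exponential parts of the integrands — then follows by reading off from Equation~\ref{eq:mathcalI} that the integrand of $\mathcal{I}^{j,k}_{\e1,\e2}$ factors as the rational function $i^{y_1-x_1}V^{j,k}_{\e1,\e2}(\w1,\w2)/(\w1(\w2-\w1))$ times $h_{j,k}(\w1,\w2)$, and the first factor has no exponential dependence on $m$.

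There is no substantive obstacle; the only real care needed is the bookkeeping of the signs that distinguish the $f^-$ and $f^+$ contributions across the four cases, and verifying that the error bounds from the four lemmas are uniform on the region $|w|,|z|<m^{\delta}$ so that they may legitimately be merged into one exponential error term.
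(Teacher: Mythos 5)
Your proposal is correct and is exactly the argument the paper intends: Theorem~\ref{thm:Hw1w2expansion} is stated as an immediate corollary of the preceding lemma (Equations~\ref{eq:Hx1x2asymp}--\ref{eq:Hx12nx2asymp}), obtained by cancelling the $(-1)^m$ factors in the quotients of Equation~\ref{eq:hij}, matching the difference of exponents with $g_{j,k}$ via the definition of $f^\pm$, and merging the exponent errors on $|w|,|z|<m^\delta$. One small bookkeeping slip: in Equation~\ref{eq:hij} the \emph{first} index $j$ selects the denominator ($\widetilde{H}_{2n-y_1,y_2+1}$ when $j=1$, producing the $+\alpha_y f^+(z)$ term of $g_{1,0}$ and $g_{1,1}$) and the \emph{second} index $k$ selects the numerator, the reverse of what your prose says — but your explicit $(0,0)$ computation and the final pairing of exponents with $g_{0,0},g_{1,0},g_{0,1},g_{1,1}$ are nonetheless correct.
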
 

\subsection{Asymptotic expansion of pre-exponential part of integrands}
The integrand of each integral $\mathcal{I}^{j,k}_{\e1,\e2} (a, x_1, x_2, y_1, y_2)$, defined in Equation~\ref{eq:mathcalI}, also contains a pre-exponential term \[\frac{V^{j,k}_{\e1, \e2} (\w1, \w2)}{\w1(\w2-\w1)}
\] where $V^{j,k}_{\e1, \e2} (\w1, \w2)$ is defined in Equation~\ref{eq:Voriginal}. We now state the asymptotic expansion of this term near $i$.
\begin{theorem}\label{thm:Vexpansion}
Let $\w1 = i + \bconst^2 m^{-1} w$ and $\w2 = i + \bconst^2 m^{-1} z $ for $|w|, |z| < m^{\delta}$ for some $0 < \delta < 1/2$ as in Definition~\ref{def:asymptoticsvars}. Then we have
\begin{equation}
V^{j,k}_{\e1,\e2}(\omega_1,\omega_2) = m^{1/2} \frac{(-1)^{\e1\e2}i^{\e1-\e2}}{16\bconst}(A^{j,k}_{\e1,\e2}(w,z)  + O(m^{-1/2}))
\end{equation} as $m\rightarrow \infty$, where $A^{j,k}_{\e1,\e2}(w,z)$ is defined in Equation~\ref{eq:Aall}.

\end{theorem}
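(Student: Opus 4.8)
The plan is to carry out a direct asymptotic expansion of the right-hand side of Equation~\ref{eq:Voriginal}, substituting the local coordinates $\w1 = i + \bconst^2 m^{-1}w$, $\w2 = i + \bconst^2 m^{-1}z$ of Definition~\ref{def:asymptoticsvars} together with $a = 1 - \bconst m^{-1/2}$, and unwinding the definition through $Q^{\e1,\e2}_{\gamma_1,\gamma_2}$ (Equation~\ref{eq:Q}), $\mathrm{x}^{\e1,\e2}_{\gamma_1,\gamma_2}$ (Equation~\ref{eq:xg1g2}), the $\mathrm{y}$-functions (Equations~\ref{eq:y00}--\ref{eq:otherys}), $f_{a,b}$ (Equation~\ref{eq:fa}) and $t$ (Equation~\ref{eq:t}). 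I will use the building-block expansions already in hand: Lemma~\ref{lemma:sqrtexpansions} for $\sqrt{\w i^{\pm2}+2c}$, Equation~\ref{eq:Gasymptotics} for $-iG(\w1)$ and $iG(\w2^{-1})$, the parity relations \eqref{eq:sqrtminus}--\eqref{eq:Gtevenodd}, and the elementary consequences $2c = 1 - \bconst^2 m^{-1}/2 + O(m^{-3/2})$, $t(\w1) = \bconst m^{-1/2}\sqrt{1/2+2iw} + O(m^{-1}w)$, $t(\w2^{-1}) = \bconst m^{-1/2}\sqrt{1/2-2iz} + O(m^{-1}z)$, $G(\w1) = i + O(m^{-1/2})$, $G(\w2^{-1}) = -i + O(m^{-1/2})$ and $1 - \w1^2\w2^{-2} = 2i\bconst^2 m^{-1}(w-z) + O(m^{-2})$. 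The second summand in Equation~\ref{eq:Voriginal}, which involves $\w2 \mapsto -\w2$ and so a point near $-i$, is reduced to the same local analysis near $i$ by \eqref{eq:sqrtminus}--\eqref{eq:Gtevenodd}.

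The one delicate point is that the scaling approaches the configuration $(a,u,v) = (1,i,-i)$, at which the polynomial $f_{a,b}(u,v)$ in the denominator of every $\mathrm{y}$-function vanishes, as do the numerators of the $\mathrm{y}$-functions. I would first Taylor-expand $f_{a,1}$ about $(1,i,-i)$, writing $\delta_a = -\bconst m^{-1/2}$, $\delta_u = -\bconst m^{-1/2}\sqrt{1/2-2iw}$, $\delta_v = -\bconst m^{-1/2}\sqrt{1/2+2iz}$ for the increments of $a$, $u=G(\w1)$, $v=G(\w2^{-1})$: the linear and cross terms cancel and what remains is proportional to $\delta_a^2 - \delta_u^2 - \delta_v^2$, giving $f_{a,1}(G(\w1),G(\w2^{-1})) = 32 i\bconst^2 m^{-1}(w-z)\bigl(1 + O(m^{-1/2})\bigr)$; note the four square roots appearing once the $\delta$'s are substituted are exactly those in the overall factor of $A^{j,k}_{\e1,\e2}$ in Equation~\ref{eq:Aall}. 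I would then record the orders of vanishing of the numerators at $(1,i,-i)$: that of $\mathrm{y}^{0,0}_{0,0}$ vanishes to third order (all partials through second order vanish in the variables $a,u^2,v^2$), those of $\mathrm{y}^{0,0}_{1,0}$ and $\mathrm{y}^{0,0}_{0,1}$ each factor as a product of two first-order-vanishing terms (one being $1+a^2v^2$, resp. $1+a^2u^2$), and that of $\mathrm{y}^{0,0}_{1,1}$ vanishes to first order. Together with the factor $1-\w1^2\w2^{-2} = O(m^{-1})$ and the factors $t(\w1)^{\gamma_1}t(\w2^{-1})^{\gamma_2} = O(m^{-(\gamma_1+\gamma_2)/2})$ in $Q$, this makes each $Q^{\e1,\e2}_{\gamma_1,\gamma_2}$ exactly $O(m^{1/2})$, hence $V^{j,k}_{\e1,\e2} = O(m^{1/2})$; the individual $Q$'s carry a simple pole at $w=z$ coming from $f_{a,1}$, but the residues cancel in the sum over $\gamma_1,\gamma_2$, producing the pole-free $A^{j,k}_{\e1,\e2}(w,z)$.

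With the orders pinned down, the remaining task is purely computational: reduce $\mathrm{y}^{0,1},\mathrm{y}^{1,0},\mathrm{y}^{1,1}$ to $\mathrm{y}^{0,0}$ via Equation~\ref{eq:otherys} (under our scaling, swapping $a\leftrightarrow b$ interchanges $1-\bconst m^{-1/2}$ with $1$, and inverting $u$ or $v$ reflects the corresponding $\sqrt{1/2\mp 2i\cdot}$); substitute the increments $\delta_a,\delta_u,\delta_v$ — and, where needed, their $O(m^{-1})$ corrections from Lemma~\ref{lemma:sqrtexpansions} — into the leading nonvanishing Taylor coefficients of the numerators; divide by the leading term of $f_{a,1}$ above; multiply by $G(\w1)G(\w2^{-1})/(\text{four square roots})$, by $(1-\w1^2\w2^{-2})$, by $t(\w1)^{\gamma_1}t(\w2^{-1})^{\gamma_2}G(\w1)^{\e1}G(\w2^{-1})^{\e2}$ and by the sign $(-1)^{\e1+\e2+\e1\e2+\gamma_1(1+\e2)+\gamma_2(1+\e1)}$; and finally sum over $\gamma_1,\gamma_2$ weighted by $(-1)^{\gamma_2 j + \gamma_1 k}$. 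The factor $i^{\e1-\e2}$ in the statement comes from $G(\w1)^{\e1}G(\w2^{-1})^{\e2}\to i^{\e1}(-i)^{\e2}$, while $(-1)^{\e1\e2}$ and $1/(16\bconst)$ come from the surviving signs and the constants $32$, $4$, $4(a^2+1)$, $4(a^2+1)^2$. The final step is the algebraic identity that the resulting rational expression in $\sqrt{1/2\pm2iw}$, $\sqrt{1/2\pm2iz}$ equals $A^{j,k}_{\e1,\e2}(w,z)$ of Equation~\ref{eq:Aall}.

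The main obstacle is the size of this symbolic computation rather than any conceptual difficulty: the numerators of the $\mathrm{y}$-functions are polynomials of degree $8$, there are four pairs $(\gamma_1,\gamma_2)$ and four pairs $(\e1,\e2)$, each $\mathrm{y}$ must be expanded one order past its leading vanishing, and the square roots have to be carried consistently through the branch conventions of \eqref{eq:sqrtbranchcut} and \eqref{eq:sqrtminus}. In practice I would verify the order-of-vanishing claims — especially the third-order vanishing of the numerator of $\mathrm{y}^{0,0}_{0,0}$ — and the final matching with $A^{j,k}_{\e1,\e2}$ with the aid of computer algebra, and present in the text only the reduced expressions, the cancellation of the $1/(w-z)$ residues, and the extraction of the $m^{1/2}$-coefficient.
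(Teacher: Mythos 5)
Your plan coincides with the paper's proof in almost every respect: the paper also proceeds by a direct local expansion of every constituent of $V^{j,k}_{\e1,\e2}$, isolates the numerators of the $\mathrm{y}$-functions (the polynomials $\mathrm{z}^{\e1,\e2}_{\g1,\g2}$ of Equations~\ref{eq:z00}--\ref{eq:zothers}), and relies on exactly the cancellations you identify --- in particular the third-order vanishing of the numerator of $\mathrm{y}^{0,0}_{0,0}$, the bookkeeping that makes each $Q^{\e1,\e2}_{\g1,\g2}$ uniformly of order $m^{1/2}$, and the reduction of the $\w2\mapsto-\w2$ summand to the analysis near $i$ via the parity relations. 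Your expansion $f_{a,1}(G(\w1),G(\w2^{-1})) = 32i\bconst^2 m^{-1}(w-z)\bigl(1+O(m^{-1/2})\bigr)$ is also correct as long as $|w-z|$ is not small.

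The genuine gap is in your treatment of the locus $w=z$. First, the claim that the individual $Q^{\e1,\e2}_{\g1,\g2}$ carry simple poles at $w=z$ whose residues cancel only in the sum over $\g1,\g2$ is false: each $Q$ is separately regular there, because the explicit factor $1-\w1^2\w2^2$ in the definition of $\mathrm{x}^{\e1,\e2}_{\g1,\g2}$ (Equation~\ref{eq:xg1g2}) vanishes on exactly the same locus as $f_{a,1}(G(\w1),G(\w2))$. This is not an accident but the exact identity of Equation~\ref{eq:frelation} (quoted from Chhita--Johansson), which says $(1-\w1^2\w2^2)/f_{a,1}(G(\w1),G(\w2)) = 1/(4(1+a^2)^2G(\w1)^2G(\w2)^2)$; this is the one ingredient missing from your outline. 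Second, and consequently, expanding numerator and denominator separately and then dividing does not yield a uniform $O(m^{-1/2})$ relative error: when $|w-z| \lesssim m^{-1/2}$ the leading terms $2i\bconst^2m^{-1}(w-z)$ and $32i\bconst^2m^{-1}(w-z)$ are dominated by their respective remainders and the quotient is uncontrolled. This is not a removable corner case, since the theorem is applied on the diagonal $\w1=\w2$ in Theorem~\ref{thm:Easymptoticintegral}, and the $w$- and $z$-contours meet for $\alpha\le -1/\sqrt{2}$. The repair is precisely to substitute Equation~\ref{eq:frelation} to replace $(1-\w1^2\w2^2)/f_{a,1}$ by its exact, manifestly regular value before any asymptotics --- this is how the paper reaches the pole-free closed form of Equation~\ref{eq:Vsimplified} and only then expands the $\mathrm{z}$'s, $t$'s, $G$'s and square roots. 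With that substitution the rest of your computation goes through as described.
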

The proof is given in Appendix~\ref{sec:Vexpansionproof}.

\subsection{Location of saddle points in asymptotic coordinates and their properties}
We will move our contours to contours of steepest descent for the functions in Theorem~\ref{thm:Hw1w2expansion} in a neighborhood of $i$, and symmetric contours in a neighborhood of $-i$. We will show in Theorem~\ref{thm:arcerrors} that outside these neighborhoods the contribution to the integral is exponentially small. Let
\begin{align}
p_{\alpha_x}(w) &= -2iw + \alpha_x (\sqrt{1/2-2iw} - \sqrt{1/2+2iw} )\label{eq:pdef}\\
q_{\alpha_x}(w) &= -2iw + \alpha_x (\sqrt{1/2-2iw} + \sqrt{1/2+2iw} )\label{eq:qdef}. 
\end{align} We can write
\begin{equation*}
p_{\alpha_x}(w) = -2iw + \alpha_x f^-(w) \hspace{0.5cm}\text{and}\hspace{0.5cm} q_{\alpha_x}(w) = -2iw + \alpha_x f^+(w), 
\end{equation*} where $f^\pm(w)$ are defined in Equation~\ref{eq:fpm}, and note that since $f^-(w) = -f^-(-w)$ and $f^+(w) = f^+(-w)$, we have $2iz - \alpha_yf^-(z) = -p_{\alpha_y}(z) = p_{\alpha_y}(-z)$ and $2iz + \alpha_y f^+(z) = q_{\alpha_y}(-z)$. So from Theorem~\ref{thm:Hw1w2expansion} we have 
\begin{align}\begin{split}
h_{0,0}(\w1,\w2) &= \exp(\bconst^2(p_{\alpha_x}(w) + p_{\alpha_y}(-z)) + O(m^{-1/2}w, m^{-1/2}z))\\
h_{1,0}(\w1,\w2) &= \exp(\bconst^2(p_{\alpha_x}(w) + q_{\alpha_y}(-z)) + O(m^{-1/2}w, m^{-1/2}z)) \\
h_{0,1}(\w1,\w2) &= \exp(\bconst^2(q_{\alpha_x}(w) + p_{\alpha_y}(-z)) + O(m^{-1/2}w, m^{-1/2}z)) \\
h_{1,1}(\w1,\w2) &= \exp(\bconst^2(q_{\alpha_x}(w) + q_{\alpha_y}(-z)) + O(m^{-1/2}w, m^{-1/2}z))\end{split}\label{eq:expparts}
\end{align} So it is sufficient for us to find contours of steepest descent for $p_{\alpha}(w)$ and $q_{\alpha}(w)$, where $\alpha < 0$. We first find the saddle points. These occur when $p_{\alpha}'(w) = 0$ and $q_{\alpha}'(w) = 0$ respectively. We compute \begin{equation}\label{eq:pprime}
p_{\alpha}'(w) = i\left(-2 - \alpha \left(\frac{1}{\sqrt{1/2 - 2iw}} + \frac{1}{\sqrt{1/2 + 2iw}}\right)\right)
\end{equation} and \begin{equation}\label{eq:qprime}
q_{\alpha_x}'(w) = i\left(-2 - \alpha \left(\frac{1}{\sqrt{1/2 - 2iw}} - \frac{1}{\sqrt{1/2 + 2iw}}\right)\right).
\end{equation} To analyze the locations of the zeros of $p_{\alpha}'(w)$ and $q_{\alpha}'(w)$ we will first define a couple of functions and prove some lemmas about them. For $w \in \mathbb{C}\setminus i((-\infty, -1/4]\cup [1/4, \infty))$, let \begin{align}\begin{split}
\psi^+(w) &= \frac{1}{\sqrt{1/2 - 2 i w}} + \frac{1}{\sqrt{1/2+ 2i w}} \\
\psi^-(w) &= \frac{1}{\sqrt{1/2 - 2 i w}} - \frac{1}{\sqrt{1/2+ 2i w}}.\end{split}\label{eq:psi}\end{align} So we are looking for solutions to \begin{equation}
\psi^+(w) = -2/\alpha \hspace{0.5cm} \text{and} \hspace{0.5cm} \psi^-(w) = -2/\alpha\label{eq:psipm}
\end{equation}

\begin{lemma}\label{lemma:etaquadratic}
Take $\alpha < 0$. Solutions to the equations $\psi^+(w) = \pm 2/\alpha$ and $\psi^-(w) = \pm 2/\alpha$ all satisfy the quartic equation \begin{equation}256\,w^4 - 16(\alpha^4 + 2\alpha^2 - 2)\,w^2 + (-2\alpha^2 + 1) = 0.\label{eq:etaquadratic}\end{equation}
\end{lemma}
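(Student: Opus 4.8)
The plan is to eliminate the square roots by two successive squarings. First I would set $a = \sqrt{1/2-2iw}$ and $b = \sqrt{1/2+2iw}$ (principal branch, as throughout the paper), so that the only facts needed about them are the two symmetric relations $a^2 + b^2 = 1$ and $a^2 b^2 = 1/4 + 4w^2$. In this notation $\psi^+(w) = (a+b)/(ab)$ and $\psi^-(w) = (b-a)/(ab)$, and since $(a\pm b)^2 = 1 \pm 2ab$ one obtains the identities
\[
\bigl(\psi^+(w)\bigr)^2 = \frac{1 + 2ab}{1/4 + 4w^2}, \qquad \bigl(\psi^-(w)\bigr)^2 = \frac{1 - 2ab}{1/4 + 4w^2},
\]
valid wherever $1/4 + 4w^2 \neq 0$.

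The advantage of squaring first is that all four equations $\psi^\pm(w) = \pm 2/\alpha$ collapse into the single statement $\bigl(\psi^\pm(w)\bigr)^2 = 4/\alpha^2$. Feeding this into the identities above, any solution $w$ satisfies $\pm 2ab = 4(1/4 + 4w^2)/\alpha^2 - 1 = (1 + 16w^2 - \alpha^2)/\alpha^2$. Squaring again removes the sign and, after substituting $4a^2b^2 = 1 + 16w^2$, gives $(1 + 16w^2)\alpha^4 = (1 + 16w^2 - \alpha^2)^2$. Writing $P = 1 + 16w^2$ this reads $P^2 - (2\alpha^2 + \alpha^4)P + \alpha^4 = 0$; expanding $P$ and collecting powers of $w$ produces exactly $256\,w^4 - 16(\alpha^4 + 2\alpha^2 - 2)\,w^2 + (1 - 2\alpha^2) = 0$, which is Equation~\ref{eq:etaquadratic}.

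There is no genuinely hard step; the only care needed is bookkeeping. I would check that the one place the manipulation could break, $1/4 + 4w^2 = 0$ i.e. $w = \pm i/4$, already lies on the excluded branch cuts $i((-\infty,-1/4] \cup [1/4,\infty))$, so it is not in the domain of $\psi^\pm$ and need not be considered. Moreover, since the lemma only asserts that every solution satisfies the quartic — not the converse — the extraneous roots possibly introduced by squaring cause no problem. As a consistency check, the resulting quartic is even in both $w$ and $\alpha$, which matches the symmetries $w \mapsto -w$, $\psi^+ \leftrightarrow \psi^-$ and $\alpha \mapsto -\alpha$ that permute the four original equations.
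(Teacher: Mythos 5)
Your proof is correct and follows essentially the same route as the paper: square once to isolate the cross term $\sqrt{1/2-2iw}\sqrt{1/2+2iw}$, clear denominators, and square again, arriving at the same quadratic in $1+16w^2$. The only differences are cosmetic (your shorthand $a,b$ for the two square roots — note this clashes with the paper's weight parameter $a$ — and your explicit remarks about the excluded points $w=\pm i/4$ and the one-directional nature of the claim, which the paper leaves implicit).
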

\begin{proof}
This is just a computation. Start from \[
\pm \frac{1}{\sqrt{1/2 - 2 i \eta}} + \frac{1}{\sqrt{1/2+ 2i \eta}} = \pm \frac{2}{\alpha}.\] Squaring both sides, we obtain\[
\frac{1}{1/2-2i\eta} + \frac{1}{1/2 + 2i\eta} \pm \frac{2}{\sqrt{1/2 - 2i\eta}\sqrt{1/2 + 2i\eta}} = \frac{4}{\alpha^2}.\] Then we multiply by $(1/2 - 2i\eta)(1/2 + 2i\eta)$ to obtain \[
1 + 2\sqrt{1/2 - 2i\eta}\sqrt{1/2 + 2i\eta} = \frac{4}{\alpha^2}\left(\frac{1}{4} + 4\eta^2\right)\] and rearranging and squaring again, we obtain \[
\left(1 - \frac{4}{\alpha^2}\left(\frac{1}{4} + 4\eta^2\right)\right)^2 = 1 + 16\eta^2.\] Multiplying everything out and rearranging, we obtain Equation~\ref{eq:etaquadratic}.
\end{proof}

\begin{lemma}\label{lemma:psiimageim} Consider the restriction of $\psi^\pm(w)$ to $w \in i(-1/4, 1/4)$. Then $\psi^+(w)$ takes all values in $[2\sqrt{2}, \infty)$, and $\psi^-(w)$ takes all values in $(-\infty, \infty)$.
\end{lemma}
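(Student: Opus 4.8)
The plan is to reduce the statement to an elementary one‑variable calculus problem by parametrizing the segment as $w = it$ with $t \in (-1/4,1/4)$. Under this substitution $1/2 - 2iw = 1/2 + 2t$ and $1/2 + 2iw = 1/2 - 2t$, and for $t$ in the open interval $(-1/4,1/4)$ both quantities are real and lie in $(0,1)$; in particular they stay away from the branch cuts, so the principal‑branch square roots appearing in $\psi^\pm$ reduce to the ordinary positive real square roots. Hence $\psi^\pm$ restricted to $i(-1/4,1/4)$ are the real‑valued functions
\[
\phi^+(t) = \frac{1}{\sqrt{1/2+2t}} + \frac{1}{\sqrt{1/2-2t}}, \qquad \phi^-(t) = \frac{1}{\sqrt{1/2+2t}} - \frac{1}{\sqrt{1/2-2t}},
\]
and it suffices to determine the images of $\phi^+$ and $\phi^-$ on $(-1/4,1/4)$.

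First I would treat $\phi^+$. It is even in $t$, so it is enough to understand it on $[0,1/4)$. A direct differentiation gives $(\phi^+)'(t) = (1/2-2t)^{-3/2} - (1/2+2t)^{-3/2}$, which is strictly positive for $t \in (0,1/4)$ since $1/2-2t < 1/2+2t$ there; thus $\phi^+$ is strictly increasing on $[0,1/4)$. Its value at $t=0$ is $\sqrt 2 + \sqrt 2 = 2\sqrt 2$, and as $t \to 1/4^-$ the term $(1/2-2t)^{-1/2}$ tends to $+\infty$ while the other term stays bounded, so $\phi^+(t) \to +\infty$. By continuity and the intermediate value theorem the image of $[0,1/4)$ is exactly $[2\sqrt 2, \infty)$, and by evenness this is the full image of $\psi^+$ on $i(-1/4,1/4)$.

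Next I would treat $\phi^-$, which is odd in $t$. Differentiating gives $(\phi^-)'(t) = -(1/2+2t)^{-3/2} - (1/2-2t)^{-3/2}$, which is strictly negative throughout $(-1/4,1/4)$, so $\phi^-$ is strictly decreasing. As $t \to 1/4^-$ we have $\phi^-(t) \to -\infty$ (the subtracted term blows up while the other stays bounded), and as $t \to -1/4^+$ we have $\phi^-(t) \to +\infty$ (by oddness, or directly because the first term blows up). Continuity and the intermediate value theorem then give that $\phi^-$ takes every value in $(-\infty,\infty)$, which is the claim for $\psi^-$.

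There is no genuine obstacle in this argument; the only point worth stating carefully is the reduction in the first paragraph — namely that along the open segment $i(-1/4,1/4)$ the arguments $1/2 \mp 2iw$ remain in the domain on which $\psi^\pm$ were defined and are in fact positive reals — after which everything is standard single‑variable calculus: sign of the derivative for monotonicity, one‑sided limits at the endpoints, and the intermediate value theorem.
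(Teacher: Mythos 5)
Your proof is correct and follows essentially the same route as the paper's: parametrize the segment as $w=it$, reduce $\psi^\pm$ to real one-variable functions, and use continuity together with the blow-up at $t\to\pm 1/4$ to get the images. The only cosmetic difference is that you establish the minimum $2\sqrt{2}$ of $\psi^+$ via the sign of the derivative (monotonicity on $[0,1/4)$), whereas the paper gets the same bound from the AM--GM inequality; both are equally valid.
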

\begin{proof}
Let $w = it$ for $t \in (-1/4, 1/4)$. Then \[\psi^\pm(it) = \frac{1}{\sqrt{1/2 + 2 t}} \pm \frac{1}{\sqrt{1/2- 2 t}}.\] These are continuous for $t \in (-1/4,1/4)$. We have \begin{align*}
\lim_{t\rightarrow -\frac{1}{4}}\psi^-(it) &= \infty \\
\lim_{t\rightarrow \frac{1}{4}}\psi^-(it) &= -\infty
\end{align*} hence $\psi^-(it)$ takes all values in $(-\infty, \infty)$.

For $\psi^+(it)$, note that for $t \in (-1/4, 1/4)$, by the AM-GM inequality we have $\psi^+(it) \geq 2\sqrt{2}$, with equality if and only if $t = 0$. Since $lim_{t\rightarrow 1/4}\psi^+(it) = $, we see that $\psi^+(it)$ takes all values in $(0, 2\sqrt{2}]$.
\end{proof}

\begin{lemma}\label{lemma:psiimagereal}
Consider the restriction of $\psi^+(w)$ to $w \in \mathbb{R}$. Then $\psi^+(w)$ takes all values in $(0, 2\sqrt{2}]$.
\end{lemma}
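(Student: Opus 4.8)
The plan is to show that $\psi^+$ restricted to $\mathbb{R}$ is a continuous, strictly positive, real-valued function with maximum value $2\sqrt{2}$ (attained only at $w=0$) and with $\psi^+(w)\to 0$ as $w\to\pm\infty$; the claim then follows from the intermediate value theorem. This parallels the argument used for Lemma~\ref{lemma:psiimageim}.

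First I would check that $\psi^+(w)$ is real and positive for real $w$. For $w\in\mathbb{R}$ the numbers $1/2-2iw$ and $1/2+2iw$ are complex conjugates lying in the open right half-plane, so their principal square roots are again complex conjugates and lie in the open sector $\{|\arg|<\pi/4\}$; hence $\frac{1}{\sqrt{1/2-2iw}}=\overline{\frac{1}{\sqrt{1/2+2iw}}}$ and $\psi^+(w)=2\operatorname{Re}\frac{1}{\sqrt{1/2+2iw}}$ is real. Since $\frac{1}{\sqrt{1/2+2iw}}$ lies in the same open sector, it has strictly positive real part, so $\psi^+(w)>0$.

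Next I would establish the upper bound. Each summand has modulus $\bigl|\frac{1}{\sqrt{1/2\pm 2iw}}\bigr|=(1/4+4w^2)^{-1/4}$, which for $w\in\mathbb{R}$ is at most $(1/4)^{-1/4}=\sqrt{2}$, with equality exactly at $w=0$. Since $\psi^+(w)$ is real and positive it equals its own modulus, so the triangle inequality gives $\psi^+(w)\le 2\sqrt{2}$, while at $w=0$ one computes $\psi^+(0)=\frac{1}{\sqrt{1/2}}+\frac{1}{\sqrt{1/2}}=2\sqrt{2}$. As $w\to\pm\infty$ both summands tend to $0$, so $\psi^+(w)\to 0$. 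Being continuous on the connected set $[0,\infty)$ with $\psi^+(0)=2\sqrt{2}$ and $\lim_{w\to\infty}\psi^+(w)=0$, and being everywhere positive, $\psi^+$ attains every value of $(0,2\sqrt{2}]$ on $[0,\infty)$; by the bound just proved it attains no value outside $(0,2\sqrt{2}]$ on all of $\mathbb{R}$. Hence the range of $\psi^+|_{\mathbb{R}}$ is exactly $(0,2\sqrt{2}]$.

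There is no substantive obstacle here: the only point requiring care is the branch bookkeeping, namely verifying that on the real axis the two square roots are honest complex conjugates (so that $\psi^+$ is real and coincides with its modulus); everything else is an elementary estimate and an application of the intermediate value theorem.
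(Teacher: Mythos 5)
Your proof is correct and follows essentially the same route as the paper: write $\psi^+(w)=2\operatorname{Re}\bigl(1/\sqrt{1/2-2iw}\bigr)$ for real $w$, evaluate at $w=0$ and at infinity, and invoke continuity. You are in fact more careful than the paper's own two-line argument, which omits the verification that $\psi^+\le 2\sqrt{2}$ and $\psi^+>0$ on all of $\mathbb{R}$ (and contains a typo, asserting $\lim_{w\to\infty}\psi^+(w)=\infty$ rather than $0$); your modulus bound $(1/4+4w^2)^{-1/4}\le\sqrt{2}$ supplies exactly the missing pieces.
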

\begin{proof}
In the case that $w$ is real, we have \[\psi^+(w) = 2\, \mathrm{Re}\left(\frac{1}{\sqrt{1/2 - 2iw}}\right)\] We have $\psi^+(0) = 2\sqrt{2}$ and $\lim_{w\rightarrow \infty} \psi^+(w) = \infty$. Since $\psi^+(w)$ is continuous, it takes all values in the range $(0, 2\sqrt{2})$.
\end{proof}

\begin{lemma} \label{lemma:psipmsols}
Take $\alpha < 0$. Then \begin{enumerate}
\item There exists $\eta \in [0, \infty) \cup i(0, 1/4)$ such that $\psi^+(w) = -2/\alpha$ has exactly two solutions $w = \pm \eta$, unless $\alpha = -1/\sqrt{2}$ in which case the only solution is $w = 0$.
\item There exists $\eta' \in i(0, 1/4)$ such that Equation $\psi^-(w) = 2/\alpha$ has exactly one solution $w = \eta'$ and Equation $\psi^-(w) = -2/\alpha$ has exactly one solution $w = -\eta'$.
\end{enumerate}  
\end{lemma}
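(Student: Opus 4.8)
The strategy is to obtain existence of solutions from the range computations in Lemmas~\ref{lemma:psiimageim} and~\ref{lemma:psiimagereal}, and to bound the total number of solutions using the quartic of Lemma~\ref{lemma:etaquadratic}. The one elementary observation used repeatedly is that for every $w_0$ in the common domain of $\psi^\pm$,
\[
\psi^+(w_0)+\psi^-(w_0)=\frac{2}{\sqrt{1/2-2iw_0}}\ne 0,\qquad \psi^+(w_0)-\psi^-(w_0)=\frac{2}{\sqrt{1/2+2iw_0}}\ne 0,
\]
so no $w_0$ satisfying one of $\psi^+(w)=\pm 2/\alpha$ can also satisfy one of $\psi^-(w)=\pm 2/\alpha$.

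For existence, recall $-2/\alpha>0$, with $-2/\alpha=2\sqrt2$ exactly when $\alpha=-1/\sqrt2$. If $-1/\sqrt2\le\alpha<0$ then $-2/\alpha\ge 2\sqrt2$, so by Lemma~\ref{lemma:psiimageim} there is $\eta\in i[0,1/4)$ with $\psi^+(\eta)=-2/\alpha$, and $\eta=0$ precisely when $\alpha=-1/\sqrt2$ (on this segment $\psi^+\ge 2\sqrt2$ with equality only at the origin). If $\alpha<-1/\sqrt2$ then $0<-2/\alpha<2\sqrt2$, so by Lemma~\ref{lemma:psiimagereal} there is $\eta\in(0,\infty)$ with $\psi^+(\eta)=-2/\alpha$. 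In either case $\psi^+$ is even, so $-\eta$ is also a solution, and $\eta\in[0,\infty)\cup i(0,1/4)$. For $\eta'$: since $2/\alpha<0$, $\psi^-(0)=0$, and $\psi^-(it)=(1/2+2t)^{-1/2}-(1/2-2t)^{-1/2}$ is strictly decreasing on $(-1/4,1/4)$, Lemma~\ref{lemma:psiimageim} yields a unique $\eta'\in i(0,1/4)$ with $\psi^-(\eta')=2/\alpha$; by oddness $\psi^-(-\eta')=-2/\alpha$.

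It remains to show these exhaust the solution sets. By Lemma~\ref{lemma:etaquadratic} every solution of $\psi^+(w)=\pm 2/\alpha$ or $\psi^-(w)=\pm 2/\alpha$ is a root of the quartic Equation~\ref{eq:etaquadratic}, which has exactly four roots with multiplicity, occurring in $\pm$ pairs. We have exhibited the four roots $\pm\eta$ and $\pm\eta'$; by the elementary observation, $\eta\ne\pm\eta'$ whenever $\eta\ne 0$ (e.g. $\eta=\eta'$ would give $\psi^+(\eta)+\psi^-(\eta)=0$, and $\eta=-\eta'$ would give $\psi^+(\eta)-\psi^-(\eta)=0$), so away from $\alpha=-1/\sqrt2$ these are four distinct roots and hence all of them. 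When $\alpha=-1/\sqrt2$ one has $\eta=0$ and Equation~\ref{eq:etaquadratic} reads $4w^2(64w^2+3)=0$, so $0$ is a double root and the roots are $\{0,0,\eta',-\eta'\}$, consistent with ``$\pm\eta$'' collapsing to the point $0$. Now any solution of $\psi^+(w)=-2/\alpha$ lies among $\{\pm\eta,\pm\eta'\}$, and $\pm\eta'$ are excluded because they already solve $\psi^-(w)=\pm 2/\alpha$; hence the solution set is exactly $\{\pm\eta\}$, which equals $\{0\}$ when $\alpha=-1/\sqrt2$, proving (1). Symmetrically, $\pm\eta$ cannot solve $\psi^-(w)=\pm 2/\alpha$, so the solutions of $\psi^-(w)=2/\alpha$ lie among $\{\pm\eta'\}$; since $\psi^-(-\eta')=-2/\alpha\ne 2/\alpha$, the only one is $\eta'$, and likewise $\psi^-(w)=-2/\alpha$ has the unique solution $-\eta'$, proving (2). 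The main obstacle is precisely this last step of bookkeeping — matching the two geometrically located pairs against the four algebraic roots, and treating the degenerate double root at $\alpha=-1/\sqrt2$; the existence claims themselves are routine intermediate-value arguments.
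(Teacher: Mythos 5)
Your proof is correct and follows essentially the same route as the paper's: existence of $\pm\eta$ and $\pm\eta'$ from the range computations in Lemmas~\ref{lemma:psiimageim} and~\ref{lemma:psiimagereal}, then exhaustion of all solutions by counting roots of the quartic in Lemma~\ref{lemma:etaquadratic}, with the degenerate double root handled at $\alpha=-1/\sqrt{2}$. Your explicit check that $\eta\neq\pm\eta'$ via the identities $\psi^+(w)\pm\psi^-(w)=2/\sqrt{1/2\mp 2iw}\neq 0$ makes rigorous a step the paper leaves implicit when it asserts that $\pm\eta,\pm\eta'$ are four distinct roots.
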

\begin{proof}

First we look at solutions to $\psi^-(w) = \pm 2/\alpha$. By Lemma~\ref{lemma:psiimageim}, we see that $\psi^-(w) = -2/\alpha$ has at least one solution in $i(-1/4,1/4)$. Moreover, $\psi^-(it) < 0$ for $t > 0$, $\psi^-(it) > 0$ for $t < 0$ and $\psi^-(it) = 0$ for $t = 0$, so since $-2/\alpha > 0$, any solution to $\psi^-(it) = 2/\alpha$ has $t \in (0, 1/4)$. So we can take $w=\eta'$ be a solution to $\psi^-(w) = 2/\alpha$ with $\eta' \in i(0, 1/4)$. Then it is clear that $w = -\eta'$ is a solution to $\psi^-(w) = -2/\alpha$. Note that since $-2/\alpha \neq 0$, $\eta'$ is non-zero.

Now we look at solutions to $\psi^+(w) = -2/\alpha$. By Lemmas \ref{lemma:psiimageim} and \ref{lemma:psiimagereal}, we see that $\psi^+(w) = -2/\alpha>0$ has at least one solution in $\mathbb{R}\cup i(0, 1/4)$. Moreover, it is clear that the negative of a solution is a solution, so we can take $w=\eta$ to be a solution to $\psi^+(w) = -2/\alpha$ in $[0, \infty) \cup i(0, 1/4)$. Then $w = -\eta$ is also a solution of $\psi^+(w) = -2/\alpha$.

So if $\eta \neq 0$, then we have already found four roots of Equation~\ref{eq:etaquadratic} (namely, $\pm \eta$, $\pm \eta'$), so there are no more. If $\eta_0 = 0$, so $\alpha = -1/\sqrt{2}$, then $\eta = 0$ is a repeated root of Equation~\ref{eq:etaquadratic} and we have still found all of the roots. By Lemma~\ref{lemma:etaquadratic}, we have found all solutions to $\psi^+(w) = \pm 2/\alpha$ and $\psi^-(w) = \pm 2/\alpha$. Note that the equation $\psi^+(w) = - 2/\alpha$ has two solutions while $\psi^+(w) = 2/\alpha$ has no solutions.
\end{proof}

Recall that we are looking for solutions to $p_\alpha'(w) = 0$ and $q_\alpha'(w) = 0$ where $p_\alpha'(w)$ and $q_\alpha'(w)$ are given by Equations \ref{eq:pprime} and \ref{eq:qprime} respectively. We can write \begin{align*}
p_\alpha'(w) &= i(-2 -\alpha \psi^+(w)) \\
q_\alpha'(w) &= i(-2 -\alpha \psi^-(w)).
\end{align*} We have the following lemma.

\begin{lemma}\label{lemma:eta0sols}
For $-1/\sqrt{2} < \alpha < 0$, $p_\alpha'(w) = 0$ has two solutions, which are in the interval $i(-1/4,1/4)$. For $\alpha = -1/\sqrt{2}$, $p_\alpha'(w) = 0$ has only one solution, $w = 0$. For $\alpha < -1/\sqrt{2}$, $p_\alpha'(w) = 0$ has two solutions, which are real.
In all cases if $w=\eta$ is a solution of $p_\alpha'(w) = 0$ then so is $w=-\eta$.
\end{lemma}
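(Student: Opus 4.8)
The plan is to deduce this lemma directly from the analysis of $\psi^+$ carried out in Lemmas~\ref{lemma:psiimageim}, \ref{lemma:psiimagereal} and \ref{lemma:psipmsols}. Since $p_\alpha'(w) = i(-2 - \alpha\,\psi^+(w))$, a point $w$ is a zero of $p_\alpha'$ precisely when $\psi^+(w) = -2/\alpha$; and because $\alpha < 0$, the number $-2/\alpha$ is a positive real. So the entire content of the lemma is a statement about the solution set of $\psi^+(w) = -2/\alpha$. Lemma~\ref{lemma:psipmsols}(1) already provides exactly two solutions, of the form $w = \pm\eta$ with $\eta \in [0,\infty)\cup i(0,1/4)$, except when $\alpha = -1/\sqrt2$, where the unique solution is $w = 0$. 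This immediately yields the $\alpha = -1/\sqrt2$ case and the symmetry claim (which can alternatively be seen directly from $\psi^+(-w) = \psi^+(w)$). What remains is to decide, in the two non-degenerate regimes, whether $\eta$ is real or purely imaginary.

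First I would record the elementary equivalence: for $\alpha < 0$, one has $-2/\alpha > 2\sqrt2$ iff $-1/\sqrt2 < \alpha < 0$, while $0 < -2/\alpha < 2\sqrt2$ iff $\alpha < -1/\sqrt2$. In the first regime, Lemma~\ref{lemma:psiimagereal} tells us that $\psi^+$ on $\mathbb{R}$ only attains values in $(0, 2\sqrt2]$, so no real $w$ solves $\psi^+(w) = -2/\alpha$; hence the solution $\eta$ furnished by Lemma~\ref{lemma:psipmsols}(1) must lie in $i(0,1/4)$, and both solutions $\pm\eta$ lie in $i(-1/4,1/4)$. In the second regime, Lemma~\ref{lemma:psiimageim} tells us that $\psi^+$ on $i(-1/4,1/4)$ only attains values in $[2\sqrt2,\infty)$, so no point of that segment solves the equation; combined with $\psi^+(0) = 2\sqrt2 \neq -2/\alpha$ this forces $\eta \in (0,\infty)$, i.e.\ both solutions are real.

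There is no genuinely hard step here: the lemma is a bookkeeping corollary of the three preceding lemmas, and the proof is a short case split on the sign of $-2/\alpha - 2\sqrt2$. The only point deserving a word of care is to confirm that the imaginary solution lies in the \emph{open} interval $i(-1/4,1/4)$ rather than at an endpoint — this holds because $\psi^+$ is continuous on $i(-1/4,1/4)$ with $\psi^+(it) \to \infty$ only as $t \to \pm 1/4$, so a finite target value $-2/\alpha$ must be attained at an interior point — and to note that $\eta \neq 0$ whenever $\alpha \neq -1/\sqrt2$ (again because $\psi^+(0) = 2\sqrt2$), which keeps the three cases mutually exclusive and exhaustive. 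I would close by remarking that an entirely analogous argument, using the second half of Lemma~\ref{lemma:psipmsols} together with the fact that $\psi^-$ is odd, handles the zeros of $q_\alpha'$.
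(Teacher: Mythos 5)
Your proof is correct and follows essentially the same route as the paper: reduce $p_\alpha'(w)=0$ to $\psi^+(w)=-2/\alpha$, note which side of $2\sqrt{2}$ the value $-2/\alpha$ falls on in each regime, and invoke Lemmas~\ref{lemma:psipmsols}, \ref{lemma:psiimageim} and \ref{lemma:psiimagereal}. The paper's proof is just a terser version of the same case split, so no further comment is needed.
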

\begin{proof}
We are looking for solutions to $\psi^+(w) = -2/\alpha$. For $-1/\sqrt{2} < \alpha < 0$, we have $-2/\alpha \in (2\sqrt{2}, \infty)$. For $\alpha = -1/\sqrt{2}$, we have $-2/\alpha = 2\sqrt{2}$. For $\alpha < -1/\sqrt{2}$, we have $-2/\alpha \in (0, 2\sqrt{2})$. The lemma follows from Lemma~\ref{lemma:psipmsols}, Lemma~\ref{lemma:psiimageim} and  Lemma~\ref{lemma:psiimagereal}.
 \end{proof}
 
We have a somewhat simpler lemma for solutions to $\psi^-(w) = -2/\alpha$.
 
 \begin{lemma}\label{lemma:eta01sols}
For any $\alpha < 0$, there exists $\eta_0'$ in $i(0,1/4)$ such that Equation $\psi^+(w) = -2/\alpha$ has exactly one solution $w = -\eta'$.
\end{lemma}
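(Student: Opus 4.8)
The plan is to read this off from Lemma~\ref{lemma:psipmsols}, which has already classified the full solution set of the equations $\psi^{\pm}(w) = \pm 2/\alpha$ for every $\alpha < 0$. That lemma produces a number in $i(0,1/4)$ — the $\eta'$ of Equation~\ref{eq:eta'} — together with the $w \mapsto -w$ symmetry of each solution set, so the present statement is just the extraction of the branch of that classification in which the solution relevant to the saddle-point analysis lies on the open imaginary segment. Concretely, I would let $\eta'$ be as in Lemma~\ref{lemma:psipmsols}, quote from it (and from Lemma~\ref{lemma:psiimageim}) that $\eta' \in i(0,1/4)$, and conclude that $w = -\eta'$ is the solution asserted in the statement. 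Note that Lemma~\ref{lemma:psipmsols} deals precisely with the equations $\psi^{+}(w) = \pm 2/\alpha$ and $\psi^{-}(w) = \pm 2/\alpha$, so it supplies directly both the point $\eta'$ and the solution count needed here; in the present subsection the equation in play is the one governing the saddle of $q_\alpha$, via $q_\alpha'(w) = i(-2 - \alpha\,\psi^{-}(w))$.

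If a self-contained argument is preferred to a citation, I would argue directly. Writing $w = it$ with $t \in (-1/4,1/4)$, Lemma~\ref{lemma:psiimageim} gives that the restriction of the relevant function to this segment is continuous and surjective onto an interval containing $-2/\alpha > 0$, so a solution $w_0 = it_0$ exists on the segment; a one-line differentiation shows this restriction is strictly monotone on each of $(-1/4,0)$ and $(0,1/4)$, which forces the on-segment solution to be unique and fixes its sign, yielding $w_0 = -\eta'$ with $\eta' \in i(0,1/4)$. To exclude solutions off the imaginary axis I would invoke Lemma~\ref{lemma:etaquadratic}: every solution satisfies the quartic Equation~\ref{eq:etaquadratic}, whose at most four roots are all accounted for in the proof of Lemma~\ref{lemma:psipmsols}, leaving nothing further.

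The main obstacle is essentially nil: all the genuine analytic work was done in Lemmas~\ref{lemma:etaquadratic} through \ref{lemma:psipmsols}, and this lemma only requires matching the case division there against the statement. The single place a small computation is needed — if one wants the direct proof rather than a reference — is the strict monotonicity of the restriction to the imaginary segment $i(0,1/4)$, an elementary derivative estimate; via the citation route even that is unnecessary.
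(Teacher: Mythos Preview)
Your proposal is correct and follows essentially the same approach as the paper: the paper's own proof is the one-liner ``We are looking for solutions to $\psi^-(w) = -2/\alpha$. The result follows from Lemma~\ref{lemma:psipmsols}.'' You have also correctly identified that the $\psi^+$ in the lemma statement is a typo for $\psi^-$ (the lemma concerns the saddle of $q_\alpha$, governed by $q_\alpha'(w) = i(-2 - \alpha\,\psi^-(w))$), and your optional self-contained argument via monotonicity and the quartic root count is a sound elaboration of what Lemma~\ref{lemma:psipmsols} already encapsulates.
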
 
\begin{proof}
We are looking for solutions to $\psi^-(w) = -2/\alpha$. The result follows from Lemma~\ref{lemma:psipmsols}.  
\end{proof}

Now that we have found the solutions to $\psi^+(w) = -2/\alpha$ and $\psi^-(w) = -2/\alpha$ we can make the following definition.
\begin{definition}\label{def:etaeta'}
 Let $\eta$ be defined to be the unique complex number with non-negative real part and non-negative imaginary part that satisfies \begin{equation*}
\frac{1}{\sqrt{1/2 - 2 i \eta}} + \frac{1}{\sqrt{1/2+ 2i \eta}} = -2/\alpha.\end{equation*} and let $\eta'$ be the unique complex number that satisfies \begin{equation*}
\frac{1}{\sqrt{1/2 - 2 i \eta'}}- \frac{1}{\sqrt{1/2+ 2i \eta'}} = 2/\alpha.\end{equation*}
\end{definition} So $\eta$ satisfies $\psi^+(\eta) = -2/\alpha$ and so also satisfies $p_\alpha'(\eta) = 0$. Similarly $\eta'$ satisfies $\psi^-(\eta') = 2/\alpha$, hence $\psi^-(-\eta') = -2/\alpha$, so $q_\alpha'(-\eta') = 0$. This gives us the following theorem.

\begin{theorem}\label{thm:pqsaddlepoints}
Take $\alpha < 0$ and let $\eta$ and $\eta'$ be as defined in Definition~\ref{def:etaeta'}. Then \begin{enumerate}
\item The saddle points of $p_{\alpha}(w)$ are as follows. 
\begin{description}
\item[$-1/\sqrt{2} < \alpha < 0$.]\hfill\\
There are 2 distinct saddle points, at $w = \pm \eta$, where $\eta \in i (0,1/4)$. 
\item[$\alpha = -1/\sqrt{2}$.]\hfill\\
There is only one saddle point, which occurs at $w = 0$.
\item[$\alpha < -1/\sqrt{2}$.]\hfill\\
There are 2 distinct saddle points, at $w = \pm  \eta$, where $\eta \in (0, \infty)$.
\end{description}
\item The unique saddle point of $q_{\alpha}(w)$ is at $w = -  \eta'$ with $\eta'\in i(0,1/4)$.
\end{enumerate}
\end{theorem}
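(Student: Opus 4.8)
The plan is to read the theorem off directly from the critical-point analysis already carried out in Lemmas~\ref{lemma:eta0sols} and \ref{lemma:eta01sols}, combined with the value-range information in Lemmas~\ref{lemma:psiimageim} and \ref{lemma:psiimagereal}. A saddle point of $p_\alpha$ (resp. $q_\alpha$) in the sense used here is a zero of $p_\alpha'$ (resp. $q_\alpha'$), so by Equations~\ref{eq:pprime}--\ref{eq:qprime} together with the factorizations $p_\alpha'(w) = i(-2-\alpha\psi^+(w))$ and $q_\alpha'(w) = i(-2-\alpha\psi^-(w))$, the saddle points of $p_\alpha$ are exactly the solutions of $\psi^+(w) = -2/\alpha$ and those of $q_\alpha$ are exactly the solutions of $\psi^-(w) = -2/\alpha$. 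All of these lie among the (at most four) roots of the quartic in Lemma~\ref{lemma:etaquadratic}, which will let me conclude there are no further bounded critical points once the relevant solutions are located.

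For $p_\alpha$ I would split on the sign of $-2/\alpha - 2\sqrt{2}$. When $-1/\sqrt{2} < \alpha < 0$ we have $-2/\alpha > 2\sqrt{2}$: by Lemma~\ref{lemma:psiimagereal} there is no real solution of $\psi^+(w) = -2/\alpha$, while by Lemma~\ref{lemma:psiimageim} there is one on $i(0,1/4)$; since the negative of a solution is again a solution, Lemma~\ref{lemma:eta0sols} gives that the only saddle points are $w = \pm\eta$ with $\eta \in i(0,1/4)$ as in Definition~\ref{def:etaeta'}, and they are distinct because $-2/\alpha \neq 2\sqrt{2}$ forces $\eta \neq 0$. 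When $\alpha = -1/\sqrt{2}$ we have $-2/\alpha = 2\sqrt{2}$: by Lemma~\ref{lemma:psiimageim} the only point of $i[-1/4,1/4]$ attaining the minimal value $2\sqrt{2}$ is $w = 0$, and by Lemma~\ref{lemma:psiimagereal} the only real point attaining it is $w = 0$, so $w = 0$ is the unique saddle point — consistently with $w=0$ being a double root of the quartic. When $\alpha < -1/\sqrt{2}$ we have $0 < -2/\alpha < 2\sqrt{2}$: by Lemma~\ref{lemma:psiimageim} there is no solution on $i(-1/4,1/4)$ and by Lemma~\ref{lemma:psiimagereal} there is a positive real one, so Lemma~\ref{lemma:eta0sols} again yields exactly the distinct pair $w = \pm\eta$ with $\eta \in (0,\infty)$.

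For $q_\alpha$, part~2 of Lemma~\ref{lemma:psipmsols} (equivalently Lemma~\ref{lemma:eta01sols}) shows that $\psi^-(w) = -2/\alpha$ has exactly one solution for every $\alpha < 0$, namely $w = -\eta'$ with $\eta' \in i(0,1/4)$; I would simply note that $\psi^-(it) < 0$ for $t>0$ and $\psi^-(it) > 0$ for $t<0$, so that the solution, being positive, is pinned to $i(-1/4,0)$, matching $-\eta'$. I would also remark that although the quartic a priori permits up to four bounded critical points of $q_\alpha$, the other three roots $\pm\eta$ and $\eta'$ solve $\psi^+(w) = \pm 2/\alpha$ or $\psi^-(w) = 2/\alpha$ rather than $\psi^-(w) = -2/\alpha$, hence are not zeros of $q_\alpha'$.

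There is essentially no serious obstacle here: the statement is a repackaging of the preceding lemmas. The only steps requiring care are the bookkeeping of which of the four roots of Equation~\ref{eq:etaquadratic} actually satisfy $\psi^\pm(w) = -2/\alpha$ with the correct sign, and the verification that the borderline case $\alpha = -1/\sqrt{2}$ genuinely collapses the two saddles of $p_\alpha$ into one (a degenerate saddle at the origin) rather than producing a spurious extra critical point; both are handled by the range computations in Lemmas~\ref{lemma:psiimageim} and \ref{lemma:psiimagereal}.
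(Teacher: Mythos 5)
Your proposal is correct and follows essentially the same route as the paper: the paper's proof of this theorem simply identifies saddle points with zeros of $p_\alpha'$ and $q_\alpha'$ and cites Lemma~\ref{lemma:eta0sols} and Lemma~\ref{lemma:eta01sols}, which in turn rest on the quartic of Lemma~\ref{lemma:etaquadratic} and the range computations of Lemmas~\ref{lemma:psiimageim} and \ref{lemma:psiimagereal}, exactly as you describe. Your extra bookkeeping of which roots of the quartic solve which equation, and of the degenerate case $\alpha=-1/\sqrt{2}$, just makes explicit what the cited lemmas already contain.
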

\begin{proof}
Saddle points of $p_{\alpha}(w)$ occur where $p_{\alpha}'(w) = 0$, and saddle points of $q_{\alpha}(w)$ occur where $q_{\alpha}'(w) = 0$. The result follows from Lemma~\ref{lemma:eta0sols} and Lemma~\ref{lemma:eta01sols}.
\end{proof}

Now we follow with a lemma about the values of $p_\alpha''(w)$ and $q_\alpha''(w)$ at the saddle points. We will use this to find the direction of the steepest ascent and descent contours at the saddle points. 

\begin{lemma}\label{lemma:saddlepoints2ndderiv}
Take $\alpha < 0$ and let $\eta$ and $\eta'$ be as defined in Definition~\ref{def:etaeta'}. Then \begin{enumerate}
\item For $-1/\sqrt{2} < \alpha < 0$, we have $p_\alpha''(\eta) > 0$ and $p_\alpha''(-\eta) < 0$. \\
For $\alpha = -1/\sqrt{2}$, we have $p_\alpha''(\eta) = 0$, so $\eta = 0$ is a double saddle point. \\
For $\alpha < -1/\sqrt{2}$ we have $p_\alpha''(\eta) \in i(0,\infty)$ and $p_\alpha''(-\eta) \in i(-\infty, 0)$.
\item For all $\alpha < 0$, we have $q_\alpha''(-\eta') < 0$.
\end{enumerate}
\end{lemma}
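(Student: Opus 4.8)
The plan is to reduce the lemma to evaluating two explicit functions at the saddle points already located in Theorem~\ref{thm:pqsaddlepoints}. Differentiating the formulas in Equations~\ref{eq:pprime}--\ref{eq:qprime} for $p_\alpha'$ and $q_\alpha'$ once more, and using $\frac{d}{dw}(1/2-2iw)^{-1/2} = i(1/2-2iw)^{-3/2}$ together with $\frac{d}{dw}(1/2+2iw)^{-1/2} = -i(1/2+2iw)^{-3/2}$, one obtains
\[
p_\alpha''(w) = \alpha\left((1/2-2iw)^{-3/2} - (1/2+2iw)^{-3/2}\right), \qquad q_\alpha''(w) = \alpha\left((1/2-2iw)^{-3/2} + (1/2+2iw)^{-3/2}\right),
\]
with the powers principal. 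The elementary identity $1/2-2i(-w) = 1/2+2iw$ makes $p_\alpha''$ odd under $w\mapsto -w$ at points whose two arguments avoid the branch cut, so for the $p$-claims it suffices to compute at $w=\eta$ and read off $p_\alpha''(-\eta) = -p_\alpha''(\eta)$.

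First I would dispose of the cases in which the relevant saddle point lies on the imaginary axis. By Theorem~\ref{thm:pqsaddlepoints}, for $-1/\sqrt 2 < \alpha < 0$ one has $\eta = it$ with $t\in(0,1/4)$, for $\alpha = -1/\sqrt 2$ one has $\eta=0$, and for \emph{every} $\alpha<0$ one has $\eta' = it'$ with $t'\in(0,1/4)$. In each of these situations the two arguments $1/2\mp 2i\eta = 1/2\pm 2t$ (respectively $1/2\mp 2i\eta' = 1/2\pm 2t'$) are positive reals, so the $-3/2$ powers are positive reals and the whole problem collapses to comparing magnitudes and multiplying by $\alpha<0$: for $q_\alpha''(-\eta')$ the two positive terms add, yielding a strictly negative number; for $p_\alpha''(\eta)$ the bracket $(1/2+2t)^{-3/2}-(1/2-2t)^{-3/2}$ is strictly negative by monotonicity of $s\mapsto s^{-3/2}$, so $p_\alpha''(\eta)>0$ and hence $p_\alpha''(-\eta)<0$; and at $\alpha=-1/\sqrt2$ the two arguments coincide, so $p_\alpha''(0)=0$, confirming the double saddle. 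All of this is bookkeeping of signs of real numbers.

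The one genuinely delicate case --- and the place I expect the real work to be --- is $\alpha < -1/\sqrt 2$, where Theorem~\ref{thm:pqsaddlepoints} places $\eta$ on the positive real axis. Then $1/2-2i\eta$ and $1/2+2i\eta$ form a complex-conjugate pair lying strictly in the open right half-plane, well away from the cut, so the principal $-3/2$ powers are themselves conjugates; consequently $p_\alpha''(\eta) = \alpha\bigl((1/2-2i\eta)^{-3/2} - \overline{(1/2-2i\eta)^{-3/2}}\bigr) = 2i\alpha\,\mathrm{Im}\bigl((1/2-2i\eta)^{-3/2}\bigr)$ is purely imaginary, as claimed, and everything comes down to the sign of that imaginary part. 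To fix it I would write $1/2-2i\eta = re^{i\theta}$ with $\theta = \arg(1/2-2i\eta)\in(-\pi/2,0)$ (since $\eta>0$), so that $(1/2-2i\eta)^{-3/2} = r^{-3/2}e^{-\frac32 i\theta}$ with $-\tfrac32\theta \in (0,\tfrac{3\pi}{4})\subset(0,\pi)$; hence its imaginary part is strictly positive. Multiplying by $\alpha<0$ then pins $p_\alpha''(\eta)$, and via oddness $p_\alpha''(-\eta)=-p_\alpha''(\eta)$, to definite open halves of the imaginary axis; the only remaining care is to keep the labelling of $\eta$ versus $-\eta$ consistent with Theorem~\ref{thm:pqsaddlepoints} and with the steepest-descent conventions for $\Cmainw$ in Section~\ref{sec:overviewofresults}. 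The argument-tracking for the principal power of a fourth-quadrant point is thus the only real subtlety; everything else is routine differentiation and sign-chasing.
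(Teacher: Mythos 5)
Your proposal follows essentially the same route as the paper's proof: differentiate to get $p_\alpha''(w)=\alpha\bigl((1/2-2iw)^{-3/2}-(1/2+2iw)^{-3/2}\bigr)$ and $q_\alpha''(w)=\alpha\bigl((1/2-2iw)^{-3/2}+(1/2+2iw)^{-3/2}\bigr)$, reduce the imaginary-axis saddle points to a comparison of positive reals, and handle real $\eta$ via the conjugate pair $1/2\mp 2i\eta$ and the sign of $\mathrm{Im}\bigl((1/2-2i\eta)^{-3/2}\bigr)$. All of that is correct, and the oddness shortcut $p_\alpha''(-\eta)=-p_\alpha''(\eta)$ is legitimate since $p_\alpha$ is odd away from the branch cuts.

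The one place you stop short is exactly the place you flag as delicate: in the case $\alpha<-1/\sqrt2$ you establish $\mathrm{Im}\bigl((1/2-2i\eta)^{-3/2}\bigr)>0$ but never state which half of the imaginary axis $p_\alpha''(\eta)$ lands in. Carrying your computation to the end gives $p_\alpha''(\eta)=2i\alpha\,\mathrm{Im}\bigl((1/2-2i\eta)^{-3/2}\bigr)\in i(-\infty,0)$ and hence $p_\alpha''(-\eta)\in i(0,\infty)$ --- the \emph{opposite} of what the lemma statement asserts. You should be aware that this is a discrepancy internal to the paper: the body of the paper's own proof reaches the same conclusion as your completed computation ($p_\alpha''(\eta)\in i(-\infty,0)$), and the steepest-descent directions recorded in Theorem~\ref{thm:contoursnearsp} (descent at angle $-\pi/4$ through $w=\eta$) are consistent only with $p_\alpha''(\eta)\in i(-\infty,0)$, so the signs in the lemma statement appear to be transposed. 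Do not paper over this by leaving the sign implicit; state it explicitly and note that it corrects the statement rather than contradicting your own argument.
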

\begin{proof}
\begin{enumerate}
\item We compute \[
p_\alpha''(w) = \alpha\left(\frac{1}{(1/2 - 2iw)^{3/2}} - \frac{1}{(1/2 + 2iw)^{3/2}}\right).\] From Theorem~\ref{thm:pqsaddlepoints} we know that if $-1/\sqrt{2} < \alpha < 0$ then $\eta \in i(0, 1/4)$. So $1/2 - 2iw > 1/2 + 2iw$. Then since $\alpha < 0$, we have $p_\alpha''(\eta) > 0$ and $p_\alpha''(-\eta) < 0$. Also, if $\alpha = -1/\sqrt{2}$ then $\eta = 0$ so $p_\alpha''(\eta) = 0$. Finally, if $\alpha < -1/\sqrt{2}$ then $\eta \in (0, \infty)$ so \[
p_\alpha''(w) = 2\alpha i \,\mathrm{Im}\left(\frac{1}{(1/2 - 2iw)^{3/2}}\right).\] Then $\mathrm{Im}(1/2-2iw) < 0$ which means that $\mathrm{Im}(1/(1/2-2iw)^{3/2}) > 0$. So, $p_\alpha''(\eta) \in i(-\infty, 0)$ and $p_\alpha''(-\eta) \in i(0, \infty)$.

\item We compute \[
q_\alpha''(w) = \alpha\left(\frac{1}{(1/2 - 2iw)^{3/2}} + \frac{1}{(1/2 + 2iw)^{3/2}}\right).\] From Theorem~\ref{thm:pqsaddlepoints} we know that $\eta' \in i(0, 1/4)$.  So both terms are positive, and recalling that $\alpha < 0$, we see that $q_\alpha''(-\eta') < 0$.
\end{enumerate}
\end{proof}

Finally, since when $\alpha = -1/\sqrt{2}$ we have a double saddle point, we will need the third derivative of $p_{-1/\sqrt{2}}(w)$ at $w = 0$. We find that \begin{equation}\label{eq:doublethirdderiv}
p_{-1/\sqrt{2}}'''(0) = - 24 i.
\end{equation}

\subsection{Contours of steepest ascent and descent}\label{sec:contours}

Now we will characterize the contours of steepest ascent and descent of $p_{\alpha}(w)$ and $q_{\alpha}(w)$.  These are contours of constant $\mathrm{Im}(p_{\alpha}(w))$ and $\mathrm{Im}(q_{\alpha}(w))$. Contours of steepest ascent are those where $\mathrm{Re}(p_{\alpha}(w))$ and $\mathrm{Re}(q_{\alpha}(w))$ increases from the saddle point along the contour respectively, while contours of steepest descent are those where $\mathrm{Re}(p_{\alpha}(w))$ and $\mathrm{Re}(q_{\alpha}(w))$ decrease. Theorem~\ref{thm:pqsaddlepoints}, Lemma~\ref{lemma:saddlepoints2ndderiv} and Equation~\ref{eq:doublethirdderiv} contain all the information we need to find the directions of steepest ascent and descent contours at the saddle points.

\begin{theorem}\label{thm:contoursnearsp}
Take $\alpha < 0$ and let $\eta$ and $\eta'$ be as defined in Definition~\ref{def:etaeta'}. Then \begin{enumerate}
\item The contours of steepest ascent and descent near the saddle points $w = \pm \eta$ of $p_{\alpha}(w)$ are as follows. 
\begin{description}
\item[$-1/\sqrt{2} < \alpha < 0$.]\hfill\\
At the saddle point $w = \eta \in i(0, 1/4)$, the contour of steepest ascent passes through the saddle point parallel to the real axis, while the contour of steepest descent is parallel to the imaginary axis. At the saddle point $w = -\eta \in i(-1/4, 0)$, the contour of steepest ascent passes through the saddle point  parallel to the imaginary axis, while the contour of steepest descent is parallel to the real axis. 
\item[$\alpha = -1/\sqrt{2}$.]\hfill\\
There is a double saddle point at $w = 0$. There are contours of steepest ascent leaving the saddle point at angles $\pi/6$, $5\pi/6$ and $-\pi/2$. There are contours of steepest descent leaving the saddle point at angles $-\pi/6$, $-5\pi/6$ and $\pi/2$.
\item[$\alpha < -1/\sqrt{2}$.]\hfill\\
At the saddle point $w = \eta \in \mathbb{R}$, the contour of steepest ascent passes through the saddle point at an angle of $\pi/4$, while the contour of steepest descent passes through the saddle point at an angle of $-\pi/4$. At the saddle point $w = -\eta \in \mathbb{R}$, the contour of steepest ascent passes through the saddle point at an angle of $-\pi/4$, while the contour of steepest descent passes through the saddle point at an angle of $\pi/4$.

\end{description}
\item The contour of steepest ascent of $q_\alpha(w)$ passes through the saddle point $w = -\eta' \in i(-1/4, 0)$ parallel to the imaginary axis, while the contour of steepest descent is parallel to the real axis.
\end{enumerate}
\end{theorem}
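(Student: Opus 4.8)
The plan is the standard local saddle-point computation. Near a saddle point $w_0$ of a holomorphic function $g$ with $g''(w_0)\neq 0$, write $g''(w_0)=|g''(w_0)|e^{i\phi}$ and $w-w_0=re^{i\theta}$; then $\mathrm{Re}(g(w)-g(w_0))=\tfrac12|g''(w_0)|r^2\cos(\phi+2\theta)+O(r^3)$. Hence along a level curve of $\mathrm{Im}\,g$ through $w_0$ the real part decreases fastest in the directions $\theta=\tfrac{\pi-\phi}{2}$ and $\theta=\tfrac{\pi-\phi}{2}+\pi$ (steepest descent) and increases fastest in the directions $\theta=-\tfrac{\phi}{2}$ and $\theta=-\tfrac{\phi}{2}+\pi$ (steepest ascent); since in a suitable local holomorphic coordinate $g-g(w_0)=\zeta^2$, the set $\{\mathrm{Im}\,g=\mathrm{Im}\,g(w_0)\}$ consists of exactly two analytic arcs through $w_0$ crossing transversally, and the directions just found are their tangent directions.

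First I would apply this to $p_\alpha$ at the saddle points $w=\pm\eta$, reading off the argument $\phi$ of $p_\alpha''$ at each from Lemma~\ref{lemma:saddlepoints2ndderiv}(1). For $-1/\sqrt2<\alpha<0$, $p_\alpha''(\eta)>0$ gives $\phi=0$, so the steepest-descent contour through $w=\eta$ is parallel to the imaginary axis and the steepest-ascent contour parallel to the real axis, while $p_\alpha''(-\eta)<0$ gives $\phi=\pi$ and the two roles are exchanged at $w=-\eta$. For $\alpha<-1/\sqrt2$, $p_\alpha''(\pm\eta)$ is purely imaginary with opposite signs at the two points, so $\phi\in\{\pi/2,-\pi/2\}$ and the steepest directions are $\pm\pi/4$; the precise assignment of ascent versus descent at each of $\eta$ and $-\eta$ follows from the half-line of $i\mathbb{R}$ on which $p_\alpha''$ lies, as pinned down in Lemma~\ref{lemma:saddlepoints2ndderiv}. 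The identical argument applied to $q_\alpha$ at $w=-\eta'$, where $q_\alpha''(-\eta')<0$ by Lemma~\ref{lemma:saddlepoints2ndderiv}(2), gives $\phi=\pi$: steepest descent parallel to the real axis, steepest ascent parallel to the imaginary axis. This establishes part~(2) and part~(1) in all cases except $\alpha=-1/\sqrt2$.

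The remaining case $\alpha=-1/\sqrt2$, $w_0=0$, is the only one not handled by the quadratic expansion, and is the main point of the argument. Here I would use $p_{-1/\sqrt2}''(0)=0$ (Lemma~\ref{lemma:saddlepoints2ndderiv}(1)) together with $p_{-1/\sqrt2}'''(0)=-24i$ (Equation~\ref{eq:doublethirdderiv}), so that
\[
\mathrm{Re}\bigl(p_{-1/\sqrt2}(w)-p_{-1/\sqrt2}(0)\bigr)=4r^3\cos\!\bigl(3\theta-\tfrac{\pi}{2}\bigr)+O(r^4),\qquad w=re^{i\theta}.
\]
In a local holomorphic coordinate $p_{-1/\sqrt2}-p_{-1/\sqrt2}(0)=\zeta^3$, so $\{\mathrm{Im}\,p_{-1/\sqrt2}=\mathrm{Im}\,p_{-1/\sqrt2}(0)\}$ is a union of three analytic arcs through $0$; the steepest-descent directions solve $3\theta-\pi/2\equiv\pi\pmod{2\pi}$, namely $\theta\in\{-\pi/6,\pi/2,-5\pi/6\}$, and the steepest-ascent directions solve $3\theta-\pi/2\equiv0\pmod{2\pi}$, namely $\theta\in\{\pi/6,5\pi/6,-\pi/2\}$, as claimed. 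The one point requiring care throughout is that $\phi$ must be taken as a genuine argument modulo $2\pi$ rather than modulo $\pi$, so that steepest ascent and descent are not interchanged; this is exactly the information supplied by Lemma~\ref{lemma:saddlepoints2ndderiv} and Equation~\ref{eq:doublethirdderiv}, so I do not anticipate any substantive obstacle beyond this bookkeeping.
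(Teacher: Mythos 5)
Your proposal is correct and follows essentially the same route as the paper: a quadratic Taylor expansion at the simple saddle points combined with the location of $p_\alpha''$ and $q_\alpha''$ from Lemma~\ref{lemma:saddlepoints2ndderiv}, and a cubic expansion using $p_{-1/\sqrt{2}}'''(0)=-24i$ at the double saddle point, with your angle bookkeeping if anything more explicit than the paper's. The only caveat is that for $\alpha<-1/\sqrt{2}$ the stated assignment of ascent versus descent at $w=\pm\eta$ requires $p_\alpha''(\eta)\in i(-\infty,0)$ (which is what the \emph{proof} of Lemma~\ref{lemma:saddlepoints2ndderiv} derives, while its statement records the opposite sign), a discrepancy you inherit by deferring to that lemma rather than recomputing the sign yourself.
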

\begin{proof}
\begin{enumerate}
\item Locally we have \[
p_\alpha(w) = p_\alpha(\eta) + \frac{(w - \eta)^2}{2}p_\alpha''(\eta) + O((w - \eta)^3).\] So contours of steepest ascent are in the directions where $(w - \eta)^2p_\alpha''(\eta) > 0$ and contours of steepest descent are in the directions where  $(w - \eta)^2p_\alpha''(\eta) < 0$ as $w \rightarrow 0$. Then the results for $\alpha \neq -1/\sqrt{2}$ follow immediately from Lemma~\ref{lemma:saddlepoints2ndderiv}. For $\alpha = -1/\sqrt{2}$, which has a double saddle point at 0 we have \[
p_{-1/\sqrt{2}}(w) = p_{-1/\sqrt{2}}(0) + \frac{w^3}{6}p_{-1/\sqrt{2}}'''(0) + O(w^4).\] Then the result follows from Equation~\ref{eq:doublethirdderiv}.

\item We follow the same steps as above.
\end{enumerate}
\end{proof}

Now we proceed to describe the contours of steepest ascent and descent of $p_\alpha(w)$ and $q_\alpha(w)$. Note that writing $w = w_1 + i w_2$, we have \[p_\alpha(w_1 - iw_2) = \overline{p_\alpha(w_1+ iw_2)}\] and \[q_\alpha(w_1 - iw_2) = \overline{q_\alpha(w_1+ iw_2)}\] where the bar denotes complex conjugate. Also note that $p_\alpha(-w) = -p_\alpha(w)$. So all contours of steepest ascent and descent are symmetric under reflection in the imaginary axis, and for  $p_\alpha(w)$, contours of steepest ascent are just reflections in the real axis of contours of steepest descent. 

\begin{figure}[htbp]
\centering
\begin{subfigure}[t]{0.41\textwidth}
\centering
\includegraphics[width = \textwidth]{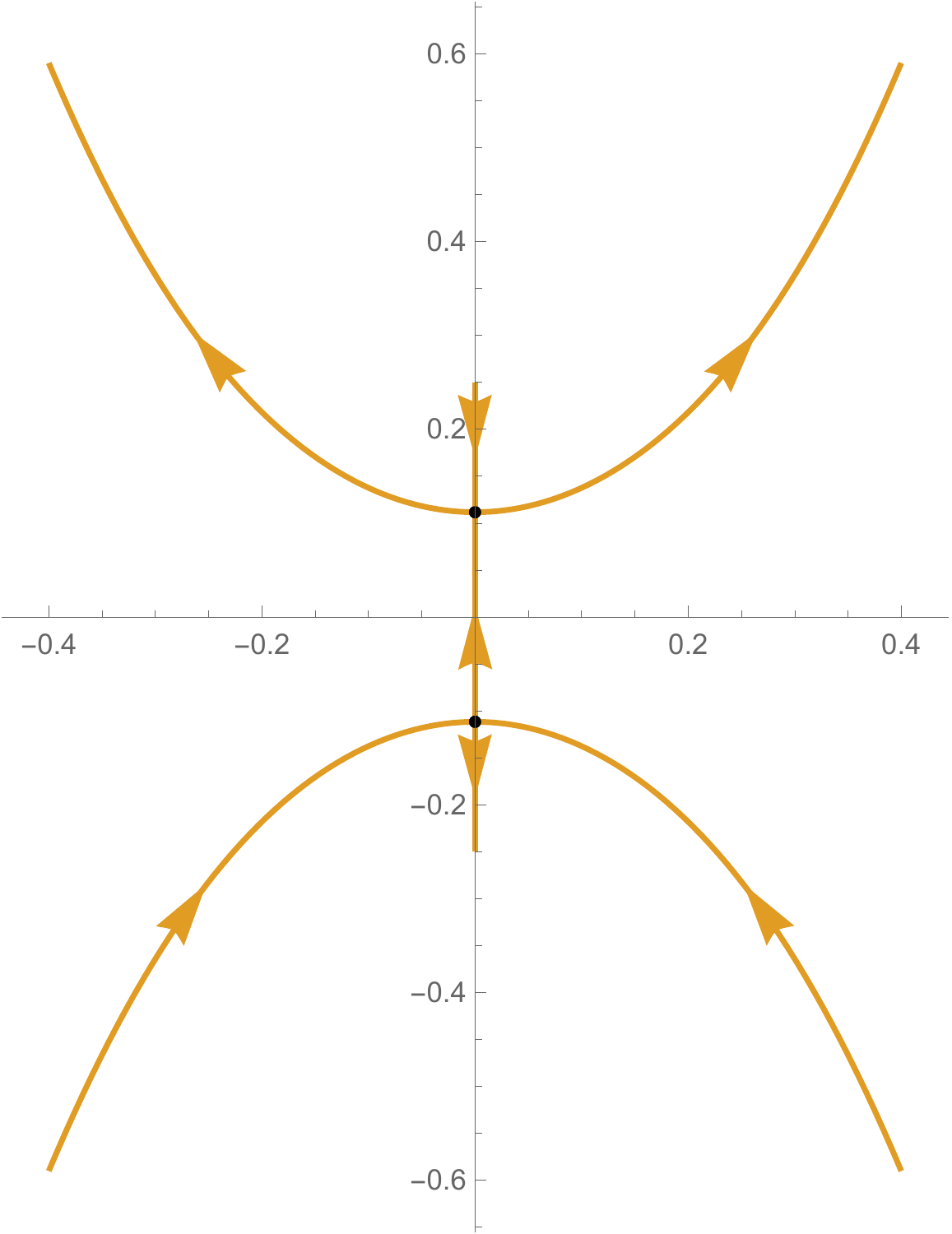}
\caption{$\alpha = -0.65$}
\label{fig:contoursarrowsgas}
\end{subfigure}\hspace{0.2cm}%
\begin{subfigure}[t]{0.41\textwidth}
\centering
\includegraphics[width = \textwidth]{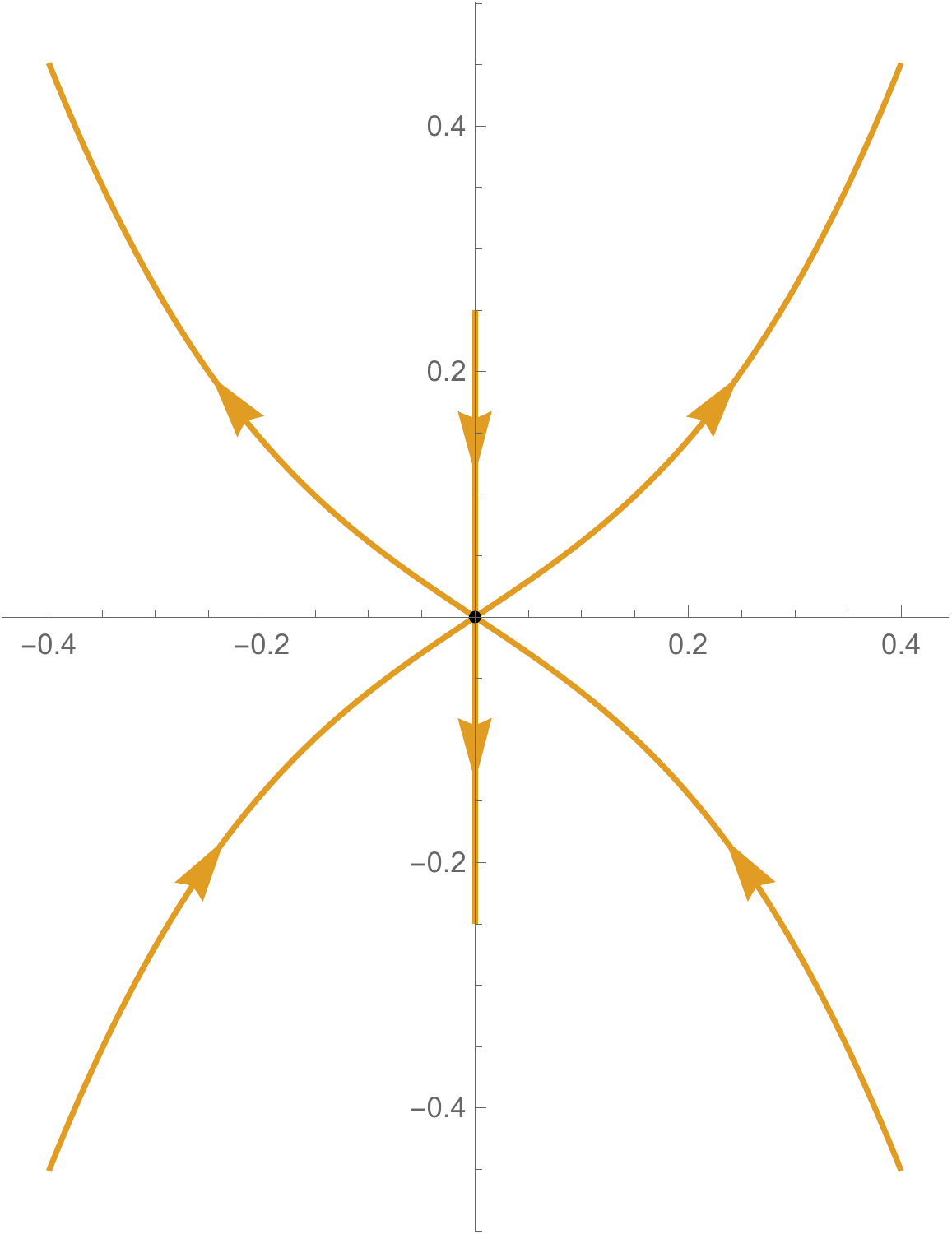}
\caption{$\alpha = -1/\sqrt{2}$}
\label{fig:contoursarrowsboundary}
\end{subfigure}\vspace{0.4cm}

\hspace{0.2cm}
\begin{subfigure}[t]{0.41\textwidth}
\centering
\includegraphics[width = \textwidth]{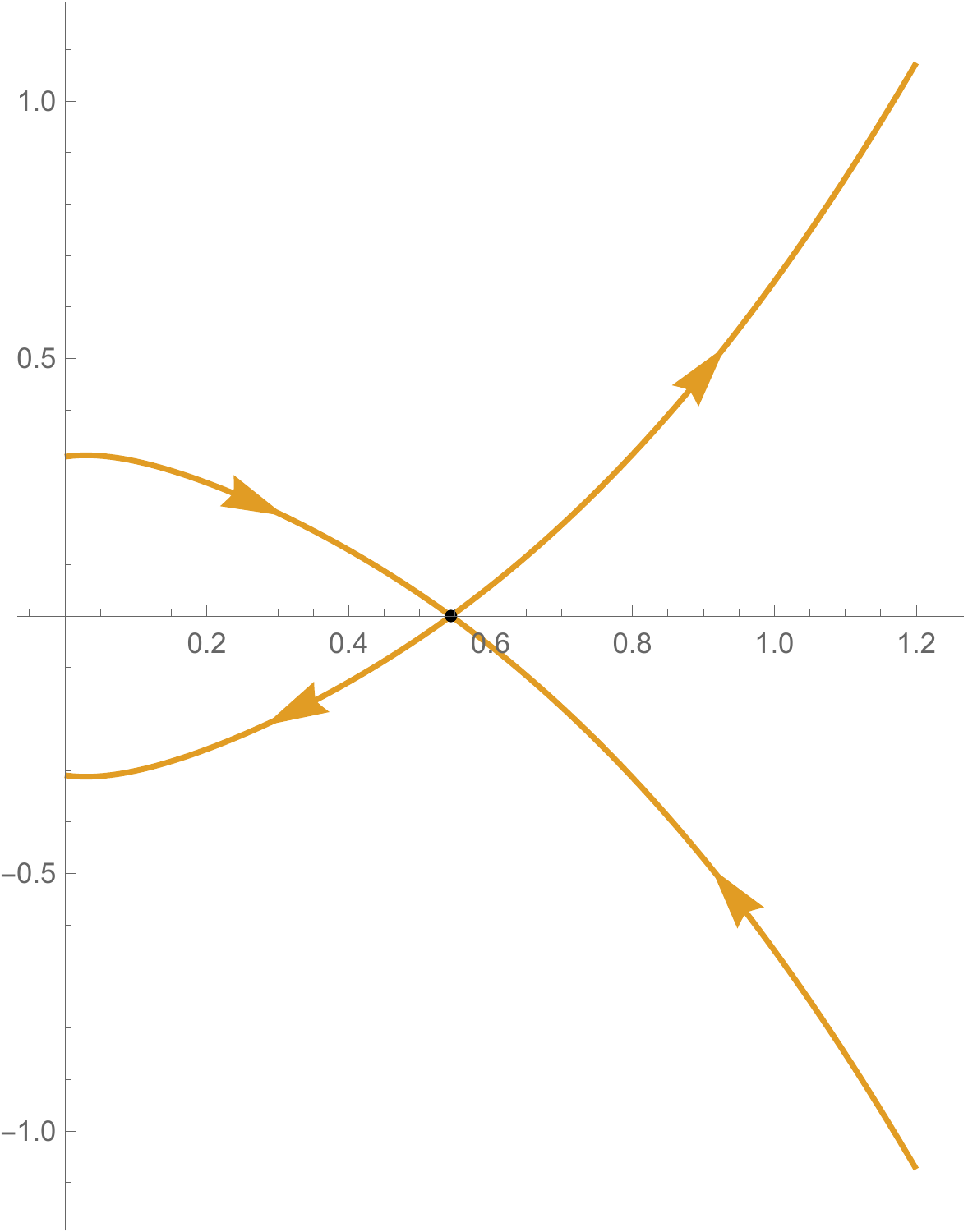}
\caption{$\alpha = -1.3$}
\label{fig:contoursarrowsliquid}
\end{subfigure}%
\begin{subfigure}[t]{0.41\textwidth}
\centering
\includegraphics[width = \textwidth]{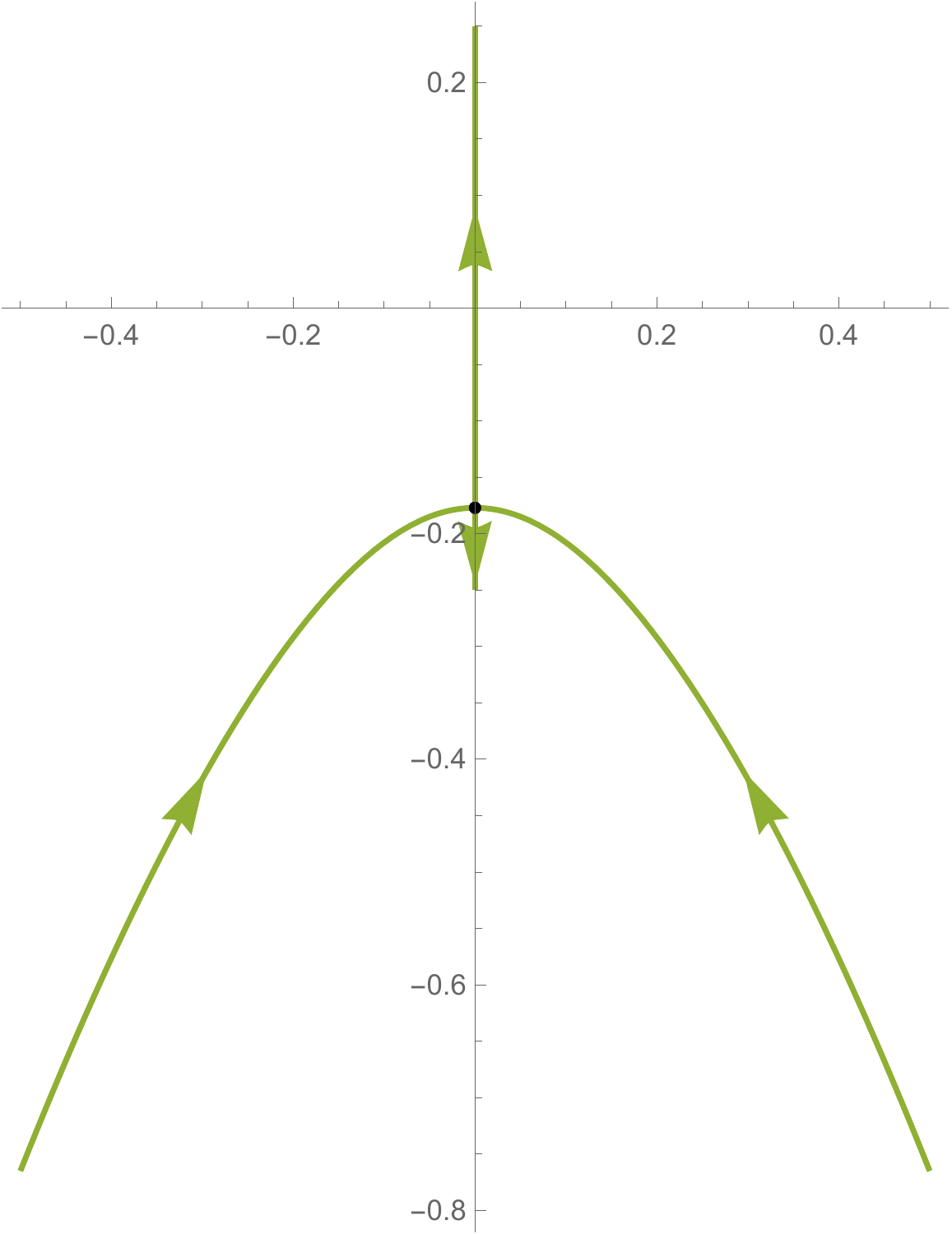}
\caption{$\alpha = -1.3$}
\label{fig:contoursarrowsq}
\end{subfigure}

\caption{Steepest ascent and descent contours for $p_\alpha(w)$ in orange and $q_\alpha(w)$ in green for different values of $\alpha$. The value of $\mathrm{Re}(p_\alpha(w))$ or $\mathrm{Re}(q_\alpha(w))$ increases in the direction of the arrows, i.e. ascent contours have arrows pointing away from the saddle point, and descent contours have arrows pointing towards the saddle point.}\label{fig:contoursarrows}
\end{figure}

First we look at $p_\alpha(w)$. Note that the contours of steepest ascent and descent can only cross at saddle points. Since the contours of steepest ascent are reflections in the real axis of contours of steepest descent, this means that neither can touch the real axis except where we have a saddle point. So the contours of steepest ascent and descent either go to the imaginary axis or to infinity. For $t\in \mathbb{R}$ with $it$ on the positive side of the branch cut, we have \[
\mathrm{Im}(p_\alpha(it)) = \begin{cases} 
0 & \text{if } |t| \leq 1/4 \\
-\alpha \sqrt{2|t| - 1/2} & \text{if } |t| > 1/4.\end{cases}\] We consider different values of $\alpha$. \begin{description}
\item[$-1/\sqrt{2} < \alpha < 0$]\hfill\\
 For $-1/\sqrt{2} < \alpha < 0$, both saddle points are on the imaginary axis between $-1/4$ and $1/4$, so we have $\mathrm{Im}(p_\alpha(w)) = 0$ on all contours of steepest ascent and descent. So from Theorem~\ref{thm:contoursnearsp} we must have a contour of steepest descent going from the saddle point at $w = \eta\in i(0,1/4)$ to the other saddle point at $w = -\eta\in i(-1/4,0)$ along the imaginary axis, a contour of steepest descent going from the saddle point at $w = \eta\in i(0,1/4)$ to the branch cut at $i/4$ along the imaginary axis, a contour of steepest ascent leaving the saddle point at $w = \eta$ perpendicularly into $\mathbb{H}^+$ and going to infinity in the positive half plane, a contour of steepest descent leaving the saddle point at $w = -\eta$ perpendicularly into $\mathbb{H}^+$ and going to infinity in the negative half plane, and a contour of steepest ascent going from the saddle point at $w = -\eta$ to the branch cut at $-i/4$ along the imaginary axis,. There are symmetric contours in the second and third quadrants. See Figure~\ref{fig:contoursarrowsgas}.

\item[$\alpha = -1/\sqrt{2}$]\hfill\\
For $\alpha = -1/\sqrt{2}$, we have $\mathrm{Im}(p_\alpha(\eta)) = 0$. The contours do not cross the real axis except at the origin. So we see from Theorem~\ref{thm:contoursnearsp} that we must have a descent contour from 0 to $i/4$ along the imaginary axis, an ascent contour from 0 to $-i/4$ along the imaginary axis, ascent contours going to infinity in the first and second quadrant, and descent contours going to infinity in the third and fourth quadrant. See Figure~\ref{fig:contoursarrowsboundary}.

\item[$\alpha < -1/\sqrt{2}$]\hfill\\
For $\alpha < -1/\sqrt{2}$, the steepest ascent and descent contours that pass through $w = \eta \in (0, \infty)$ have $\mathrm{Im}(p_\alpha(w)) = p_\alpha(\eta)/i = -2\eta - i\alpha(\sqrt{1/2 - 2i\eta} - \sqrt{1/2 + 2i\eta})$. It can be shown using Definition~\ref{def:etaeta'} that this expression is strictly positive for $\alpha < -1/\sqrt{2}$. So there is a point $w = it$ with $t > 1/4$ on the positive side branch cut such that $\mathrm{Im}(p_\alpha(it)) = \mathrm{Im}(p_\alpha(\eta))$ and  $\mathrm{Im}(p_\alpha(-it)) = \mathrm{Im}(p_\alpha(\eta))$. So for the ascent and descent contours from $w = \eta$ we have one going to the branch cut in the first quadrant, one going to the branch cut in the fourth quadrant, one going to infinity in the first quadrant and one going to infinity in the fourth quadrant. From Theorem~\ref{thm:contoursnearsp}, since the contours cannot cross, we see that the descent contours go to the branch cut in the first quadrant and infinity in the fourth quadrant, while the ascent contours go to the branch cut in the fourth quadrant and infinity in the first quadrant. We have symmetric contours in the second and third quadrants, but note that unlike in the $-1/\sqrt{2} < \alpha < 0$ case they do not join up on the imaginary axis, since they go to different sides of the branch cut. Indeed, the value of $\mathrm{Im}(p_\alpha(it))$ at the point that the contours hit the branch cut is non zero, and differs by a sign on either side of the branch cut. In practice, we will have to join up these contours by going around the branch cut. See Figure~\ref{fig:contoursarrowsliquid} for an example of the steepest ascent and descent contours in the first and fourth quadrants.

\end{description}

Now we look at $q_\alpha(w)$. We only have one saddle point, at $w = -\eta' \in i(-1/4,0)$. Here we have $\mathrm{Im}(q_\alpha(-\eta')) = 0$, so on the contours of steepest ascent and descent we have $\mathrm{Im}(q_\alpha(w)) = 0$. For $t\in \mathbb{R}$ with $it$ on the positive side of the branch cut, we have \[
\mathrm{Im}(q_\alpha(it)) = \begin{cases} 
-\alpha \sqrt{-2t - 1/2} & \text{if } t < -1/4 \\
0 & \text{if } -1/4 \leq t \leq 1/4 \\
\alpha \sqrt{2t - 1/2} & \text{if } t > 1/4.\end{cases}\] So the contours cannot go to the branch cut except at $\pm i/4$. For $w \in \mathbb{R}$, we have  $\mathrm{Im}(q_\alpha(w)) = -2w$, so the contours do not touch the real axis except at 0. Then from Theorem~\ref{thm:contoursnearsp} we see that there is a contour of steepest ascent from $-\eta'$ to $-i/4$ along the imaginary axis, a contour of steepest ascent from $-\eta'$ to $i/4$ along the imaginary axis, and contours of steepest descent that go to infinity in the third and fourth quadrants, and pass through the saddle point parallel to the real axis. See Figure~\ref{fig:contoursarrowsq}.

Now we look at the behavior of the contours as $|w| \rightarrow \infty$. We will only look in the third quadrant, since the contours in the other quadrants can be found by symmetry. 
\begin{theorem}\label{thm:thetalargew}
Take $\alpha < 0$. For $w$ in the third quadrant with $|w| \gg 1$, the contours of steepest descent of both $p_\alpha(w)$ and $q_\alpha(w)$ that go to infinity can be parametrized as $w = re^{i\theta}$ where \begin{equation}
\theta =  -\frac{\pi}{2} -\frac{1}{\sqrt{2}}\alpha r^{-1/2}  + O(r^{-1}).\label{eq:contourslargew}
\end{equation} for $r \rightarrow \infty$.
\end{theorem}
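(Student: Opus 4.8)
The plan is to extract the asymptotics directly from the leading- and subleading-order behavior of $p_\alpha$ and $q_\alpha$ as $w\to\infty$, combined with the characterizing property of a steepest descent contour: $\mathrm{Im}(p_\alpha(w))$ (resp. $\mathrm{Im}(q_\alpha(w))$) is constant along it.

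\textbf{Step 1 (large-$w$ expansion of $p_\alpha$ and $q_\alpha$).} For $|w|$ large one has $\sqrt{1/2\mp 2iw}=\sqrt{\mp 2iw}\,(1+O(|w|^{-1}))$. The descent contours in question run off to infinity in the third quadrant, hugging the branch cut $-i[1/4,\infty)$ of $\sqrt{1/2-2iw}$ on its $\mathrm{Re}(w)<0$ side; matching the arguments of the square roots there, I would show $f^-(w)=-2\sqrt w+O(|w|^{-1/2})$ and $f^+(w)=2i\sqrt w+O(|w|^{-1/2})$ (principal branch of $\sqrt w$), uniformly for $w$ in this region. Consequently $p_\alpha(w)=-2iw-2\alpha\sqrt w+O(|w|^{-1/2})$ and $q_\alpha(w)=-2iw+2i\alpha\sqrt w+O(|w|^{-1/2})$. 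The dominant term $-2iw$ has steepest descent direction $\arg w=-\pi/2$, so $\theta\to-\pi/2$; that a bona fide descent contour actually escapes to infinity in the third quadrant (rather than terminating on the branch cut) is part of the contour description already established in Section~\ref{sec:contours}.

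\textbf{Step 2 (solving for $\theta$).} Substitute $w=re^{i\theta}$ with $\theta=-\pi/2+\delta$, $\delta=\delta(r)\to0$, into the constancy relation $\mathrm{Im}(p_\alpha(w))=C$, where $C=\mathrm{Im}(p_\alpha(w_*))$ is the finite, $r$-independent value at the relevant saddle point $w_*$ of Theorem~\ref{thm:pqsaddlepoints} (so $C=0$ when $-1/\sqrt2\le\alpha<0$). Using $\mathrm{Im}(-2iw)=-2r\sin\delta$ and $\mathrm{Im}(-2\alpha\sqrt w)=\sqrt2\,\alpha\sqrt r\,(1+O(\delta))$, the relation becomes $-2r\sin\delta+\sqrt2\,\alpha\sqrt r=C+O(1)$; since the first two terms must balance as $r\to\infty$ we get $\delta=O(r^{-1/2})$, after which the $O(\delta)$ factor and the constant $C$ enter only at order $r^{-1}$. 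Solving gives $\delta=\pm\tfrac{1}{\sqrt2}\alpha r^{-1/2}+O(r^{-1})$, with the sign fixed by the requirement that $w$ lie in the third quadrant, which is exactly the stated expansion of $\theta$; the coefficient $1/\sqrt2$ is the only trace left of the constant $1/2$ inside $f^\pm$. For $q_\alpha$ the computation is identical because $\mathrm{Im}(2i\alpha\sqrt w)=2\alpha\,\mathrm{Re}(\sqrt w)$ also equals $\sqrt2\,\alpha\sqrt r\,(1+O(\delta))$, the same leading contribution, so the two descent contours share the same angular asymptotics; here the saddle value is $C=\mathrm{Im}(q_\alpha(-\eta'))=0$.

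\textbf{Main obstacle.} The delicate step is the branch bookkeeping in Step 1: because the contour approaches the branch cut $-i[1/4,\infty)$ of $\sqrt{1/2-2iw}$ as $w\to\infty$, one must fix the correct side of the cut and verify that the expansions $f^\pm(w)=(\text{const})\sqrt w+O(|w|^{-1/2})$, together with their error estimates, hold uniformly throughout a full third-quadrant neighbourhood of the contour rather than merely along the ray $\arg w=-\pi/2$. Everything after that reduces to expanding the real and imaginary parts of $-2iw$ and $\sqrt w$ and invoking the finiteness of the saddle values available from Theorem~\ref{thm:pqsaddlepoints} and Definition~\ref{def:etaeta'}.
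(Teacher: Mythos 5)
Your argument is correct and is essentially the paper's own proof: both expand $p_\alpha$ and $q_\alpha$ at infinity to the $\sqrt{w}$ order, impose constancy of the imaginary part along the descent contour, and balance the $r\delta$ term against the $\sqrt{r}$ term to obtain the angular correction of size $\alpha r^{-1/2}/\sqrt{2}$ (the paper works with the auxiliary angle $\phi=\theta/2+\pi/4$ where you use $\delta=\theta+\pi/2$, which is cosmetic). One caveat: the sign of $\delta$ is not free to be ``fixed by the third-quadrant requirement'' but is determined by the computation, and a careful third-quadrant branch determination gives $\theta=-\pi/2+\alpha r^{-1/2}/\sqrt{2}$, opposite in sign to the displayed formula -- a quadrant/sign tension already present in the paper itself (whose proof is carried out for $\theta\in(-\pi/2,0)$ despite the statement referring to the third quadrant), so it does not reflect a defect in your method.
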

\begin{proof}
For large $|w|$, we have \[
p_{\alpha}(w) =  -2iw + \alpha  (\sqrt{ - 2iw } -\sqrt{ 2iw}) + O(w^{-1/2}).\] and  \[
q_{\alpha}(w) =  -2iw + \alpha  (\sqrt{ - 2iw } + \sqrt{ 2iw}) + O(w^{-1/2}).\] Let $w = re^{i\theta}$, with $\theta \in (-\pi/2, 0)$. Then \begin{align*}
\sqrt{-2iw} &= \sqrt{2r} e^{i\left(\frac{\theta}{2} - \frac{\pi}{4}\right)},\\
\sqrt{2iw} &= \sqrt{2r} e^{i\left(\frac{\theta}{2} + \frac{\pi}{4}\right)}.\end{align*} Let $\phi = \frac{\theta}{2} + \frac{\pi}{4}$, so $\sqrt{-2iw} = \sqrt{2r}(-i\cos\phi + \sin\phi)$ and $\sqrt{2iw} = \sqrt{2r}(\cos\phi + i\sin\phi)$. Then 
\begin{equation}
\mathrm{Im}(p_{\alpha}(w)) =  -2r\cos\theta + \alpha  \sqrt{2r}(-\cos\phi -\sin\phi) + O(w^{-1/2})\label{eq:implarge}
\end{equation} and \begin{equation}\label{eq:imqlarge}
\mathrm{Im}(q_{\alpha}(w)) =  -2r\cos\theta + \alpha  \sqrt{2r}(-\cos\phi + \sin\phi) + O(w^{-1/2}). 
\end{equation} First we look at contours of constant $\mathrm{Im}(p_{\alpha}(w))$. Clearly we must have $\cos\theta \rightarrow 0$ as $r \rightarrow \infty$, so $\theta \rightarrow -\pi/2$ and $\phi \rightarrow 0$. We have $\cos\theta = \sin 2\phi$. We must have $\phi = O(r^{-1/2})$. Then the contour of steepest descent of $p_\alpha(w)$ in the third contour satisfies \begin{align*}
O(1) &= -2r(2\phi + O(\phi^3)) + \alpha\sqrt{2}\sqrt{r}(-1-\phi^2 -\phi + O(\phi^3)) \\
&= -4r\phi - \alpha \sqrt{2}\sqrt{r} + O(1)
\end{align*} So \begin{equation}
\phi = -\frac{1}{2\sqrt{2}}\alpha r^{-1/2} + O(r^{-1})\label{eq:contourslargephi}\end{equation} and from this we can deduce Equation~\ref{eq:contourslargew}. Note that since $\alpha < 0$ we have $\theta \in (-\pi/2, 0)$ as required. The contour of steepest descent of $q_\alpha(w)$ in the third contour also satisfies Equation~\ref{eq:contourslargephi} and therefore Equation~\ref{eq:contourslargew} since the $\sqrt{2r}\sin\phi$ terms in Equations \ref{eq:implarge} and \ref{eq:imqlarge} are $O(1)$. 
\end{proof}
From this we can prove the following corollary.
\begin{corollary}\label{cor:expboundcontour}
We have the following bounds on the exponential parts of the integrands as stated in Equation~\ref{eq:expparts}. There exists $A_0, c_0 > 0$ such that for $|w| > m^\delta$ on a contour of steepest descent that goes to infinity,
\begin{align*}
|\exp (\bconst^2 p_{\alpha}(w))| &< A_0 e^{-c_0 |w|} \\
|\exp (\bconst^2 q_{\alpha}(w))| &< A_0 e^{-c_0 |w|}
\end{align*}
\end{corollary}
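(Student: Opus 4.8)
The quantity to control is $|\exp(\bconst^2 p_\alpha(w))| = \exp(\bconst^2\,\mathrm{Re}\,p_\alpha(w))$, and since $\bconst>0$ the assertion is equivalent to a bound of the form $\mathrm{Re}\,p_\alpha(w)\le -c|w|+C$ (and the analogue for $q_\alpha$) on the tails $|w|\ge R_0$ of the steepest descent contours, with constants independent of $m$. Once $m$ is large enough that $m^\delta>R_0$, the region $\{|w|>m^\delta\}$ meets each contour only in such tails --- for $\alpha<-1/\sqrt2$ the branch-cut loop making up part of $\Cmainw$ is bounded, hence irrelevant --- so such a bound immediately gives the corollary with $c_0=\bconst^2$ and $A_0$ absorbing $e^{\bconst^2C}$ (and, if one wants a statement valid for all $m$, also absorbing the compact piece $m^\delta<|w|\le R_0$, on which $\mathrm{Re}\,p_\alpha$ is bounded). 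The plan is therefore just to upgrade the large-$|w|$ analysis from the proof of Theorem~\ref{thm:thetalargew} into a pointwise estimate.

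Concretely, I would substitute the parametrization $w=re^{i\theta}$ with $\theta=-\frac{\pi}{2}-\frac{1}{\sqrt{2}}\alpha r^{-1/2}+O(r^{-1})$ of Equation~\ref{eq:contourslargew} into the expansions $p_\alpha(w)=-2iw+\alpha(\sqrt{-2iw}-\sqrt{2iw})+O(w^{-1/2})$ and $q_\alpha(w)=-2iw+\alpha(\sqrt{-2iw}+\sqrt{2iw})+O(w^{-1/2})$ used there. Writing $\theta=-\pi/2+\delta$ with $\delta=O(r^{-1/2})$ gives $\mathrm{Re}(-2iw)=2r\sin\theta=-2r\cos\delta=-2r+O(1)$, $\mathrm{Re}\sqrt{-2iw}=\sqrt{2r}\sin(\delta/2)=O(1)$, and $\mathrm{Re}\sqrt{2iw}=\sqrt{2r}\cos(\delta/2)=\sqrt{2r}+O(r^{-1/2})$, so that $\mathrm{Re}\,p_\alpha(w)=-2r-\alpha\sqrt{2r}+O(1)=-2|w|+O(|w|^{1/2})$ and $\mathrm{Re}\,q_\alpha(w)=-2r+\alpha\sqrt{2r}+O(1)=-2|w|+O(|w|^{1/2})$, the $O(\sqrt r)$ term differing only in sign. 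In particular both are $\le -|w|$ once $|w|\ge R_0$ for a suitable fixed $R_0$, hence $|\exp(\bconst^2p_\alpha(w))|,|\exp(\bconst^2q_\alpha(w))|\le e^{-\bconst^2|w|}$ there. By the reflection symmetries recorded just before Theorem~\ref{thm:thetalargew} --- reflection in the imaginary axis sends a steepest descent contour of $p_\alpha$ (resp.\ $q_\alpha$) to another one with the same real-part values, reflection in the real axis conjugates $p_\alpha$ and $q_\alpha$, and $p_\alpha(-w)=-p_\alpha(w)$ --- the same estimate holds on every unbounded tail, and yields the bound $|\exp(\bconst^2(-p_\alpha(z)))|\le A_0e^{-c_0|z|}$ on the reflected contours $\Cmainz$, $\Cotherz$ (steepest ascent contours of $p_\alpha$), which is the form actually fed into the exponential parts in Equation~\ref{eq:expparts} after the substitution $z\mapsto -z$.

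The one point that needs care, and the main obstacle, is uniformity: one must check that the $O(w^{-1/2})$ remainders in the expansions of $p_\alpha$ and $q_\alpha$ and the $O(r^{-1})$ in the contour parametrization of Theorem~\ref{thm:thetalargew} carry constants valid for all $r\ge R_0$, so that $\mathrm{Re}\,p_\alpha(w)\le -|w|$ genuinely holds pointwise on the tail rather than only asymptotically. This is routine given the closed forms in Equations~\ref{eq:pdef}--\ref{eq:qdef}: away from the branch cut $i((-\infty,-1/4]\cup[1/4,\infty))$, which the tails avoid, the square roots $\sqrt{1/2\mp 2iw}$ are analytic with $O(1)$-controlled behavior, so the expansions and the contour equation can be estimated with explicit error bounds. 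The resulting constants $A_0,c_0$ depend only on $\bconst$ and $\alpha$, as claimed; everything else is bookkeeping.
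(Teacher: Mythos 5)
Your proposal is correct and follows essentially the same route as the paper: parametrize the tail of the contour via Theorem~\ref{thm:thetalargew}, compute $\mathrm{Re}\,p_\alpha(w)=-2r-\alpha\sqrt{2r}+O(1)$ and $\mathrm{Re}\,q_\alpha(w)=-2r+\alpha\sqrt{2r}+O(1)$, and conclude the exponential bound. Your additional remarks on uniformity of the error constants and on the bounded branch-cut loop for $\alpha<-1/\sqrt{2}$ are sensible refinements but not a different argument.
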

\begin{proof}
Parametrize $w$ as $w = r e^{i\theta}$ so $|w| = r$. Then by Theorem~\ref{thm:thetalargew} we have \[\theta =  -\frac{\pi}{2} -\frac{1}{\sqrt{2}}\alpha r^{-1/2}  + O(r^{-1}).\] Let $\phi = \frac{\theta}{2} + \frac{\pi}{4}$ as in the proof of the theorem. Then we have \begin{align*}
\mathrm{Re}(p_\alpha(w) &= 2r\sin \theta + \alpha (\sqrt{2r}\sin \phi - \sqrt{2r}\cos\phi)\\
&=2r(-1 + O(r^{-1})) + \alpha(\sqrt{2r}O(r^{-1/2}) - \sqrt{2r}(1 + O(r^{-1})) \\
&= -2r -\alpha \sqrt{2r} + O(1)
\end{align*}
So \[
|e^{\bconst^2 p_\alpha(w)}| = e^{\bconst^2(-2r-\alpha\sqrt{2r})} + O(1) \] Similarly, we find \[
|e^{\bconst^2 q_\alpha(w)}| = e^{\bconst^2(-2r+\alpha\sqrt{2r})} + O(1) \] So there exists $A_0, c_0>0$ such that \[|\exp (\bconst^2 p_{\alpha}(w))| < A_0 e^{-c_0 r} \text{ and }
|\exp (\bconst^2 q_{\alpha}(w))| < A_0 e^{-c_0 r}\] as required.
\end{proof}

\subsection{Contours of integration}

Recall that we need to evaluate the four integrals $\mathcal{I}^{j,k}_{\e1,\e2} (a, x_1, x_2, y_1, y_2)$, for $j,k\in \{0,1\}$ defined in Equation~\ref{eq:mathcalI}. These are double integrals in $\w1$ and $\w2$ over the contours $\circlecontour_r$ and $\circlecontour_{1/r}$, which are circles centered at the origin of radius $r$ and $1/r$ respectively, for $\sqrt{2c}<r<1$. The only singularities in the integrand are at $0$ and at $w=z$, and there are branch cuts on the imaginary axis at $i(-\infty, -1/\sqrt{2c})\cup(-\sqrt{2c},\sqrt{2c})\cup (\sqrt{2c},\infty)$. We can move the contours as long as we don't cross the branch cut (which includes the origin). If we cross the contours over each other we will pick up a single integral over the residues.
 
We want to move the contours of integration so that in a neighborhood of $i$ they are steepest descent contours for the exponents in Equation~\ref{eq:expparts}, and away from this neighborhood the contribution to the integral is negligible.
 
We define the following contours of integration, based on the contours described in detail in Section~\ref{sec:contours}. They are shown for different values of $\alpha$ in Figure~\ref{fig:C1C2}. First we repeat the definition of the contours $\Cmainw, \Cmainz, \Cotherw$ and $\Cotherz$ from Section~\ref{sec:overviewofresults}.

\begin{definition}\label{def:C1C2}
Let  $p_\alpha(w)$ and $q_\alpha(w)$ be as defined in Equation~\ref{eq:pdef} and Equation~\ref{eq:qdef} respectively. Let $\eta$, $\eta'$ be as defined in Definition~\ref{def:etaeta'}.

For $-1/\sqrt{2}<\alpha < 0$, let $\Cmainw$ be the steepest descent contour for $p_{\alpha}(w)$ that is contained in the negative half plane and passes through the saddle point at $w = -\eta$. For $\alpha = -1/\sqrt{2}$ let $\Cmainw$ be the steepest descent contour for $p_{\alpha}(w)$ that passes through the saddle point at $w = 0$ and enters the negative half plane at angles of $-\pi/6$ and $-5\pi/6$. For $\alpha < -1/\sqrt{2}$, let $\Cmainw$ consist of the steepest descent contour for $p_{\alpha}(w)$ starts from the branch cut $i(1/4,\infty)$, passes through the saddle point at $w = -\eta$ and goes to infinity in the third quadrant; the reflection in the imaginary axis of this contour; and a contour that goes around the branch cut at $i/4$. This contour is shown in detail in Figure~\ref{fig:C1}.

Let $\Cmainz$ be the reflection of $\Cmainw$ in the real axis.

Let $\Cotherw$ be the steepest descent contour for $q_{\alpha}(w)$. This passes through $w = -\eta'$ and goes to infinity in the negative half plane.

Let $\Cotherz$ be the reflection of $\Cotherw$ in the real axis.
\end{definition}

Now we define finite restrictions of these contours. Fix $\delta$ with $0 < \delta < 1/2$.
\begin{definition}\label{def:tildeC1C2}
Let $\Cmainwres$, $\Cotherwres$, $\Cmainzres$ and $\Cotherzres$ be the restrictions in the $w-$plane of $\Cmainw$, $\Cotherw$, $\Cmainz$ and $\Cotherz$ (defined in Definition~\ref{def:C1C2} above) to the region $|w| \leq m^\delta$.
\end{definition}

Finally we define contours $\Cmainwm$, $\Cotherwm$, $\Cmainzm$ and $\Cotherzm$ in the $\omega$-plane that the contours $\circlecontour_r$ and $\circlecontour_{1/r}$ will be deformed to. 
\begin{definition}\label{def:C1mC2m}
First we define $\Cmainwm$ and $\Cotherwm$. Near $\omega=i$, specifically for $|\omega - i| \leq \bconst^2 m^{\delta-1} w$, we can write $\omega = i + \bconst^2 m^{-1} w$ for $|w| \leq m^\delta$. Let the $\omega$-contour  $\Cmainwm$ in this region agree with the $w-$contour  $\widetilde{\mathcal{C}_{1}}$, and let the $\omega$-contour  $\Cotherwm$ in this region agree with the $w$-contour  $\widetilde{\mathcal{C}_{1}}'$. Near $\omega = -i$, define the contours so they are symmetric in the real axis. Outside of the regions defined by $|\omega \pm i| \leq \bconst^2 m^{\delta-1} $, join up the contours by arcs of constant radius $R_1$. Note that by Theorem~\ref{thm:thetalargew}, when $|w| = m^\delta$ is on $\widetilde{\mathcal{C}_{1}}$ or $\widetilde{\mathcal{C}_{1}}'$, we have $w = -m^\delta i \pm \alpha m^{\delta/2}/\sqrt{2} + O(1)$, so $\omega = i(1 - \bconst^2 m^{\delta-1}) + O(m^{\delta/2-1})$, and hence $R_1 = 1 - \bconst^2 m^{\delta-1} + O(m^{\delta/2-1})$.

We define $\Cmainzm$ and $\Cotherzm$ similarly, so that in a neighborhood of $i$ they agree with $\widetilde{\mathcal{C}_{2}}$ and $\widetilde{\mathcal{C}_{2}}'$ respectively, are symmetric in the real axis, and are joined by arcs of radius $R_2$ outside of the regions defined by $|\omega \pm i| \leq \bconst^2 m^{\delta-1}$. Here $R_2 = 1 + \bconst^2 m^{\delta-1}+ O(m^{\delta/2-1})$.
\end{definition}

Figure~\ref{fig:C1mC2m} shows the contours $\Cmainwm$ in blue and $\Cmainzm$ in orange for $\bconst = 1$, $\alpha = -3$, $m = 64$, $\delta = 9/20$.

\begin{figure}[htbp]
\centering
\includegraphics[width = 0.6\textwidth]{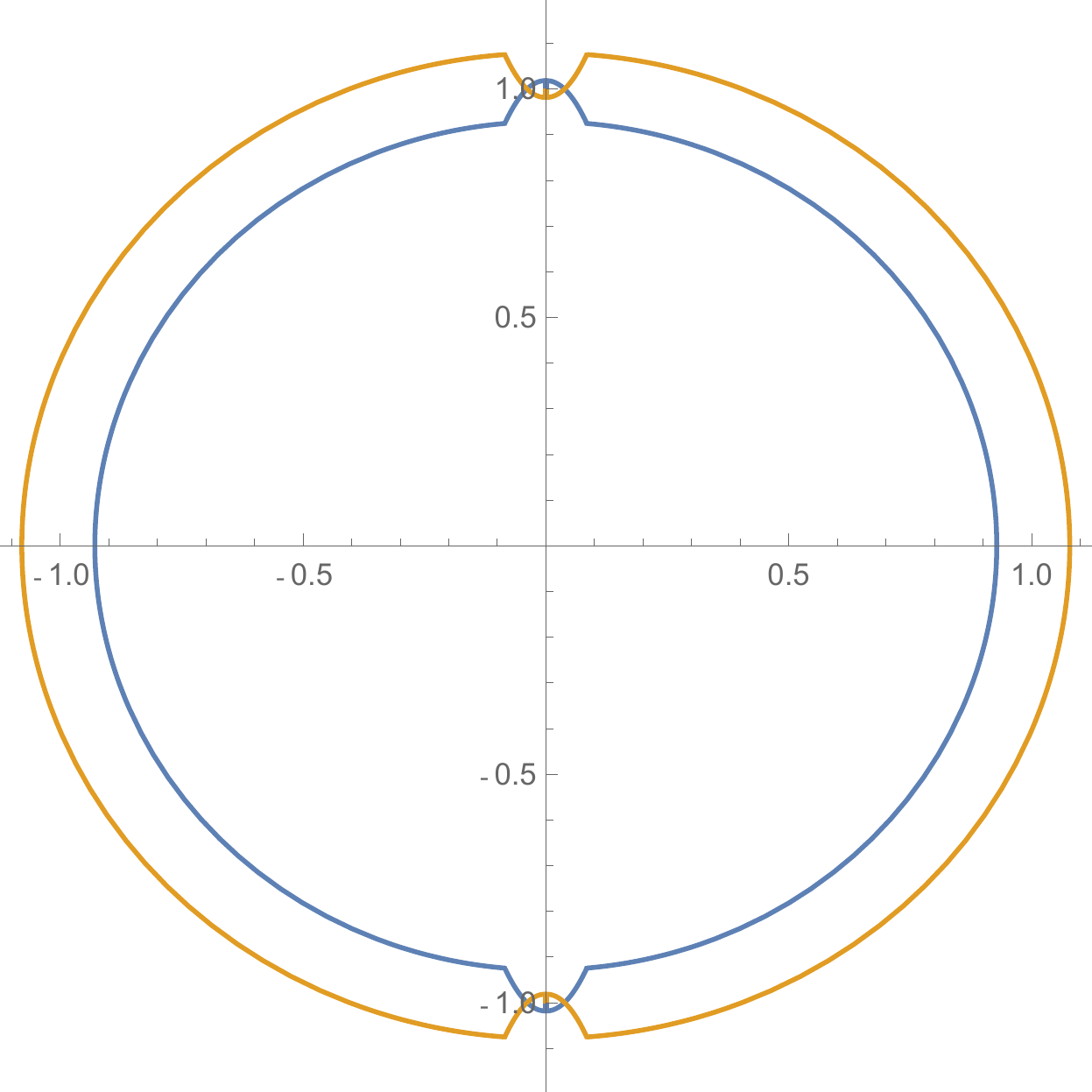}
\caption{The deformed contours for the integral $\mathcal{B}_{\e1, \e2} (a, x_1, x_2, y_1, y_2)$ with $\bconst = 1$, $\alpha = -3$, $m = 64$.}
\label{fig:C1mC2m}
\end{figure}

\subsection{Integral formulas}\label{sec:integralformulas}

For $\e1,\e2\in \{0,1\}$, let $x = (x_1,x_2) \in \mathtt{W}_\e1$, $y = (y_1,y_2) \in \mathtt{B}_\e2$, $\alpha_x < 0$ and $\alpha_y < 0$ be as in Equation~\ref{eq:asymptotic_coords_xy}. From now on we will assume that $\alpha_x = \alpha_y$. The case where $\alpha_x \neq \alpha_y$ will be dealt with in a later paper. We will deform the contours to the contours $\Cmainwm$, $\Cotherwm$, $\Cmainzm$ and $\Cotherzm$ in Definition~\ref{def:C1mC2m} as appropriate. We will show that on the arc segment away from the saddle points, the contribution is negligible. Then we are left with integrals $\Cmainwres$, $\Cotherwres$, $\Cmainzres$ and $\Cotherzres$. We will show that the difference between these and the integrals $\Cmainwm$, $\Cotherwm$, $\Cmainzm$ and $\Cotherzm$ is also negligible. In some cases, we need to cross the contours, and in these cases we will pick up a single integral from the residues. First we make some definitions.
\begin{definition}\label{def:Dintegrals}
Let $\Cmainwm$, $\Cotherwm$, $\Cmainzm$ and $\Cotherzm$ be as defined in Definition~\ref{def:C1mC2m}. Define \begin{equation*}
\mathcal{D}^{j,k}_{\e1,\e2}(a, x_1, x_2, y_1, y_2) = \frac{i^{y_1-x_1}}{(2\pi i)^2} \int_{\mathcal{C}_{j,m}} \frac{d\w1}{\w1} \int_{\mathcal{C}_{k,m}'}d\w2 \frac{V_{\e1, \e2}^{j,k} (\w1, \w2)}{\w2 - \w1} h_{j,k}(\w1,\w2)
\end{equation*}
\end{definition}
Note that the integrand of $\mathcal{D}^{j,k}_{\e1,\e2}(a, x_1, x_2, y_1, y_2)$ is the same as $\mathcal{I}_{j,k}(a, x_1, x_2, y_1, y_2)$, but the contours are different.
\begin{definition}\label{def:Eintegrals}
Let $\Cmainwm$, $\Cotherwm$, $\Cmainzm$ and $\Cotherzm$ be as defined in Definition~\ref{def:C1mC2m}. Let $\eta$ be as in Definition~\ref{def:etaeta'}, so the contours $\Cmainw$ and $\Cmainz$ cross at $\omega = i \pm \bconst m^{-1} \eta$. Let $\gamma_{00}$ be any contour in the $\omega$-plane from $\omega = i + \bconst m^{-1} \eta$ to $\omega = i - \bconst m^{-1} \eta$ that crosses the imaginary axis in the interval $i(\sqrt{2c},1/\sqrt{2c})$. Let $z_0 = x_0 + iy_0$ be the point in the first quadrant where the contours $\Cmainw$ and $\Cotherz$ cross. Let $\gamma_{10}$ be any contour in the $\omega$-plane between $i + \bconst m^{-1}(x_0 + iy_0)$ and $i + \bconst m^{-1}(x_0 - iy_0)$ that crosses the imaginary axis in the interval $i(\sqrt{2c},1/\sqrt{2c})$. Let $\gamma_{01}$ be any contour in the $\omega$-plane between $i + \bconst m^{-1}(-x_0 - iy_0)$ and $i + \bconst m^{-1}(-x_0 + iy_0)$ that crosses the imaginary axis in the interval $i(\sqrt{2c},1/\sqrt{2c})$. Define \begin{align}
 \mathcal{E}^{0,0}_{\e1,\e2}(a, x_1, x_2, y_1, y_2) &= \frac{i^{y_1-x_1}}{\pi i} \int_{\gamma_{00}} V_{\e1, \e2}^{0,0} (\omega, \omega) h_{0,0}(\omega, \omega) \frac{d\omega}{\omega}\label{eq:C00def}\\
\mathcal{E}^{1,0}_{\e1,\e2}(a, x_1, x_2, y_1, y_2) &= \frac{i^{y_1-x_1}}{\pi i} \int_{\gamma_{10}} V_{\e1, \e2}^{1,0} (\omega, \omega) h_{1,0}(\omega, \omega)\frac{d\omega}{\omega}\label{eq:C10def} \\
\mathcal{E}^{0,1}_{\e1,\e2}(a, x_1, x_2, y_1, y_2) &= \frac{i^{y_1-x_1}}{\pi i} \int_{\gamma_{01}} V_{\e1, \e2}^{0,1} (\omega, \omega) h_{0,1}(\omega,\omega)\frac{d\omega}{\omega}.\label{eq:C01def}
\end{align}
\end{definition}

We now prove the following theorem.

\begin{theorem}\label{thm:contourdeformations}
Let $\Cmainwm$, $\Cotherwm$, $\Cmainzm$ and $\Cotherzm$ be as defined in Definition~\ref{def:C1mC2m}. Recall the double integrals $\mathcal{I}^{j,k}_{\e1,\e2} (a, x_1, x_2, y_1, y_2)$ defined in Equation~\ref{eq:mathcalI}. Let $\mathcal{D}^{j,k}_{\e1,\e2}(a, x_1, x_2, y_1, y_2)$ for $j,k \in \{0,1\}$ be as in Definition~\ref{def:Dintegrals}. Then for $-1/\sqrt{2} \leq \alpha < 0$ we have
\begin{equation}
\mathcal{I}^{j,k}_{\e1,\e2} (a, x_1, x_2, y_1, y_2) = \mathcal{D}^{j,k}_{\e1,\e2}(a, x_1, x_2, y_1, y_2)
\label{eq:Igasdeformation}
\end{equation} i.e. we can deform the contours of integration to these contours without changing the value of the integral. However, for $\alpha < -1/\sqrt{2}$, in all but the last case we pick up an extra term when deforming the contours. Let $\mathcal{E}^{j,k}_{\e1,\e2}(a, x_1, x_2, y_1, y_2)$ for $(j,k) = (0,0)$, $(0,1)$ and $(1,0)$ be as in Definition~\ref{def:Eintegrals}. Then for $\alpha < -1/\sqrt{2}$ we have
\begin{align}\label{eq:I00liquiddeformation}
\mathcal{I}^{0,0}_{\e1,\e2} (a, x_1, x_2, y_1, y_2) &=  \mathcal{D}^{0,0}_{\e1,\e2}(a, x_1, x_2, y_1, y_2) + \mathcal{E}^{0,0}_{\e1,\e2}(a, x_1, x_2, y_1, y_2)\\\label{eq:I10liquiddeformation}
\mathcal{I}^{1,0}_{\e1,\e2} (a, x_1, x_2, y_1, y_2) &=\mathcal{D}^{1,0}_{\e1,\e2}(a, x_1, x_2, y_1, y_2) + \mathcal{E}^{1,0}_{\e1,\e2}(a, x_1, x_2, y_1, y_2)\\\label{eq:I01liquiddeformation}
\mathcal{I}^{0,1}_{\e1,\e2} (a, x_1, x_2, y_1, y_2) &= \mathcal{D}^{0,1}_{\e1,\e2}(a, x_1, x_2, y_1, y_2) + \mathcal{E}^{0,1}_{\e1,\e2}(a, x_1, x_2, y_1, y_2) \\
\mathcal{I}^{1,1}_{\e1,\e2} (a, x_1, x_2, y_1, y_2) &= \mathcal{D}^{1,1}_{\e1,\e2}(a, x_1, x_2, y_1, y_2) \label{eq:I11liquiddeformation}
\end{align}
\end{theorem}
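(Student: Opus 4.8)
This is a contour-deformation statement, so the plan is to move the two circular contours $\circlecontour_r$ (in $\w1$) and $\circlecontour_{1/r}$ (in $\w2$) in the definition of $\mathcal{I}^{j,k}_{\e1,\e2}$ from Equation~\ref{eq:mathcalI} onto the contours $\mathcal{C}_{j,m}$ and $\mathcal{C}_{k,m}'$ of Definition~\ref{def:C1mC2m}, one variable at a time, keeping track of any residue picked up along the way. In each variable separately, the integrand $V^{j,k}_{\e1,\e2}(\w1,\w2)\,h_{j,k}(\w1,\w2)/(\w1(\w2-\w1))$ is meromorphic away from the origin, the diagonal $\w1=\w2$, and the imaginary-axis branch cuts, the last of which leave open exactly the slits $i(\sqrt{2c},1/\sqrt{2c})$ and $i(-1/\sqrt{2c},-\sqrt{2c})$ through which all of these contours pass.

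\textbf{Step 1 (legitimacy of the deformations).} First I would verify that $\circlecontour_r$, $\circlecontour_{1/r}$ and each of $\mathcal{C}_{j,m}$, $\mathcal{C}_{k,m}'$ wind exactly once around $0$ and meet the imaginary axis only inside the two slits above. For the deformed contours this combines the expansion $\sqrt{2c}=1-\bconst^2m^{-1}/4+O(m^{-3/2})$ with the location of the relevant saddle points from Theorem~\ref{thm:pqsaddlepoints} (for $-1/\sqrt2<\alpha<0$ the $p$-saddles lie in $i(-1/4,1/4)$ and the $q$-saddle in $i(-1/4,0)$; for $\alpha<-1/\sqrt2$ the $p$-saddles are real and hence even closer in modulus to $\pm i$) and the fact, built into Definition~\ref{def:C1C2}, that for $\alpha<-1/\sqrt2$ the pieces of $\Cmainw$, and hence of $\Cmainz$, skirt the branch tip $i/4$ (resp.\ $-i/4$) from inside the slit. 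It then follows that $\circlecontour_r$ is homotopic to $\mathcal{C}_{j,m}$, and $\circlecontour_{1/r}$ to $\mathcal{C}_{k,m}'$, within $\mathbb{C}\setminus(\{0\}\cup\text{branch cuts})$, so the value of $\mathcal{I}^{j,k}_{\e1,\e2}$ can change under the deformation only through the diagonal pole $\w2=\w1$.

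\textbf{Step 2 (when is the diagonal pole swept across?).} The original contours satisfy $|\w1|=r<1<1/r=|\w2|$ and are therefore disjoint, so a residue is produced precisely when, during the deformation, the deformed $\w1$- and $\w2$-contours are forced to intersect each other. For $-1/\sqrt2\le\alpha<0$ the $\w1$-contour stays inside $\{|\omega|<1\}$ and the $\w2$-contour inside $\{|\omega|>1\}$ throughout---since in the local coordinate $\w1=i+\bconst^2m^{-1}w$ one has $|\w1|^2=1+2\bconst^2m^{-1}\operatorname{Im}w+O(m^{-2})$ and the relevant saddle lies in the open lower (resp.\ upper) half-$w$-plane---so no intersection occurs and $\mathcal{I}^{j,k}_{\e1,\e2}=\mathcal{D}^{j,k}_{\e1,\e2}$. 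For $\alpha<-1/\sqrt2$ the same argument covers $(j,k)=(1,1)$, whose contours are built from the $q$-curve $\Cotherw$, which still lies strictly inside/outside the unit circle. But for $(j,k)\in\{(0,0),(1,0),(0,1)\}$ the $p$-curve $\Cmainw$---and its mirror $\Cmainz$---is forced up to the branch cut $i(1/4,\infty)$ and therefore crosses $|\omega|=1$; reading off the steepest-descent graphs from Section~\ref{sec:contours}, one checks that the $\w1$- and $\w2$-contours then meet exactly at the two points recorded in Definition~\ref{def:Eintegrals} (for $(0,0)$ at $\omega=i\pm\bconst^2m^{-1}\eta$, the images of the two real $p$-saddles; for $(1,0)$ and $(0,1)$ at the image of the crossing of $\Cmainw$ with $\Cotherz$ and its reflection in the imaginary axis), together with the conjugate pair of the same points near $\omega=-i$, and nowhere else. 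Establishing this last point---that no additional, spurious crossings appear---is the crux, and I expect it to be the main obstacle: it requires a careful global comparison of the two steepest-descent graphs with one another, with the branch cuts, and with the unit circle in $\omega$-coordinates.

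\textbf{Step 3 (evaluating the residue contributions).} Across such a crossing curve the inner $\w2$-integral jumps by $\mp2\pi i$ times the residue at $\w2=\w1$ of $V^{j,k}_{\e1,\e2}(\w1,\w2)h_{j,k}(\w1,\w2)/(\w2-\w1)$, namely $\mp2\pi i\,V^{j,k}_{\e1,\e2}(\omega,\omega)h_{j,k}(\omega,\omega)$; integrating $\omega=\w1$ over the crossing curve and folding in the prefactor $i^{y_1-x_1}/(2\pi i)^2$ of $\mathcal{I}^{j,k}_{\e1,\e2}$ gives $\pm\frac{i^{y_1-x_1}}{2\pi i}\int V^{j,k}_{\e1,\e2}(\omega,\omega)h_{j,k}(\omega,\omega)\,d\omega/\omega$. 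The contribution of the mirror crossing near $\omega=-i$ equals this, by the evenness of $V^{j,k}_{\e1,\e2}(\,\cdot\,,\,\cdot\,)h_{j,k}(\,\cdot\,,\,\cdot\,)$ in each variable (Theorem~\ref{thm:integrandeven}) together with the invariance of $d\omega/\omega$ under $\omega\mapsto-\omega$; the two together supply the factor $2$ that converts $i^{y_1-x_1}/(2\pi i)^2$ into the $i^{y_1-x_1}/(\pi i)$ appearing in $\mathcal{E}^{j,k}_{\e1,\e2}$. Since any two crossing curves with the same endpoints differ by a closed loop avoiding the singularities of $\omega\mapsto V^{j,k}_{\e1,\e2}(\omega,\omega)h_{j,k}(\omega,\omega)/\omega$, one may take the crossing curve through the slit $i(\sqrt{2c},1/\sqrt{2c})$, which identifies the total correction with $\mathcal{E}^{j,k}_{\e1,\e2}(a,x_1,x_2,y_1,y_2)$ as in Definition~\ref{def:Eintegrals}; a check of orientations fixes the sign as the $+$ in Equations~\ref{eq:I00liquiddeformation}--\ref{eq:I01liquiddeformation}. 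Once Step 2 is settled, Steps 1 and 3 are routine.
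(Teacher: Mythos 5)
Your proposal follows essentially the same route as the paper's proof: deform the contours one variable at a time while avoiding the branch cuts, observe that for $-1/\sqrt{2}\leq\alpha<0$ (and for $(j,k)=(1,1)$ always) the $\w1$- and $\w2$-contours remain separated by the unit circle so no residue arises, and for $\alpha<-1/\sqrt{2}$ in the remaining cases collect the residue of the diagonal pole along the crossing segment, with the two components near $\pm i$ combining via Theorem~\ref{thm:integrandeven} to produce the prefactor $i^{y_1-x_1}/(\pi i)$ in $\mathcal{E}^{j,k}_{\e1,\e2}$. This matches the paper's argument (which writes $\Cmainwm=\mathcal{C}+\Lambda+(-\Lambda)$ and applies the residue theorem on $\Lambda$), so the proposal is correct; only your parenthetical identification of the endpoint pairs of $\gamma_{10}$, $\gamma_{01}$ as reflections in the imaginary rather than the real axis needs adjusting to agree with Definition~\ref{def:Eintegrals}.
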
 
\begin{proof}
First deform $\circlecontour_{1/r}$ to $\mathcal{C}_{k,m}'$, and deform $\circlecontour_r$ to a contour $\mathcal{C}$ lying entirely inside this contour, that agrees with $\widetilde{\mathcal{C}_j}$ as appropriate on the circular arc section, and is symmetric in the real axis. Under this deformation the contours do not cross the branch cut or each other. Now we consider different cases corresponding to different values of $\alpha$. 

If $-1/\sqrt{2} < \alpha < 0$, then $\mathcal{C}_{k,m}'$ lies entirely outside the unit circle, and $\mathcal{C}_{j,m}$ lies entirely inside the unit circle. So we can deform $\mathcal{C}$ to $\mathcal{C}_{j,m}$ without crossing the contours. In this process we also do not cross any branch cuts. This proves Equation~\ref{eq:Igasdeformation}.

 If $\alpha < -1/\sqrt{2}$, then the contours $\Cmainwm$ and $\Cmainzm$ cross the unit circle, but the $\Cotherwm$ and $\Cotherzm$ do not. For $\mathcal{I}^{1,1}_{\e1,\e2} (a, x_1, x_2, y_1, y_2)$, we can move the $\w1-$contour $\mathcal{C}$ to $\Cotherwm$ without crossing the contours, which proves Equation~\ref{eq:I11liquiddeformation}.  In the other cases the contours will cross. 
 
 First we look at the integral $\mathcal{I}^{0,0}_{\e1,\e2} (a, x_1, x_2, y_1, y_2)$. Note that we have    \[
\mathcal{I}^{0,0}_{\e1,\e2} (a, x_1, x_2, y_1, y_2) = \frac{i^{y_1-x_1}}{(2\pi i)^2}  \int_{\Cmainzm}d\w2 \int_{\mathcal{C}} \frac{d\w1}{\w1} \frac{V_{\e1, \e2}^{0,0} (\w1, \w2)}{\w2 - \w1} h_{0,0}(\w1,\w2)\] The contour $\Cmainwm - \mathcal{C}$ has two components, near $\omega = \pm i$ respectively. Let $\Lambda$ be the component that is near $i$, so $\Cmainwm = \mathcal{C} + \Lambda + (-\Lambda)$ and hence  \[
  \int_{\Cmainwm} = \int_\mathcal{C} + \int_\Lambda + \int_{-\Lambda}.\] So writing $\mathcal{I}^{0,0}_{\e1,\e2} (a, x_1, x_2, y_1, y_2)$ as in Equation~\ref{eq:I00liquiddeformation} we have \begin{equation}\label{eq:C00lambda}
   \mathcal{E}^{0,0}_{\e1,\e2}(a, x_1, x_2, y_1, y_2) = -\frac{i^{y_1-x_1}}{(2\pi i)^2}  \int_{\Cmainzm}d\w2 \int_{\Lambda + (-\Lambda)} \frac{d\w1}{\w1} \frac{V_{\e1, \e2}^{0,0} (\w1, \w2)}{\w2 - \w1} h_{0,0}(\w1,\w2)
  \end{equation} \begin{figure}[htb]
  \centering
  \includegraphics[width=0.5\textwidth]{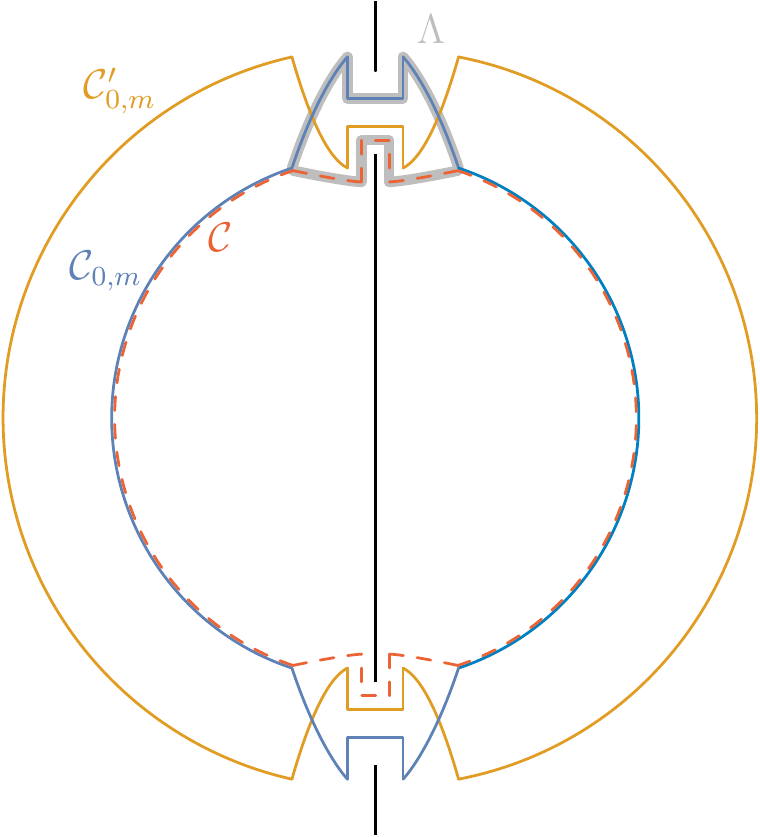}
  \caption{The contours $\Cmainwm$, $\Cmainzm$, $\mathcal{C}$ and $\Lambda$. The branch cut is shown in black.}
  \label{fig:contourdeformations}
\end{figure}    Let $\gamma_{00}$ be the segment of $\Cmainzm$ that is inside the contour $\Lambda$, which is exactly the segment of $\Cmainzm$ near $i$ that is inside $\Cmainwm$. These contours are shown in Figure~\ref{fig:contourdeformations}. Let $\omega_2 \in \gamma_{00}$, so $\omega_2$ is inside the contour $\Lambda$. By the residue theorem we have
 \[
  \int_{\Lambda} \frac{d\w1}{\w1}  \frac{V_{\e1, \e2}^{0,0} (\w1, \w2)}{\w2 - \w1} h_{0,0}(\w1,\w2) = -2\pi i  \frac{V_{\e1, \e2}^{0,0} (\w2, \w2)}{\w2}h_{0,0}(\w2,\w2) \] and similarly  \begin{align*}
  \int_{-\Lambda} \frac{d\w1}{\w1}  \frac{V_{\e1, \e2}^{0,0} (\w1, -\w2)}{(-\w2) - \w1} h_{0,0}(\w1,-\w2) &= -2\pi i  \frac{V_{\e1, \e2}^{0,0} (-\w2, -\w2)}{-\w2}h_{0,0}(-\w2,-\w2)\\
  &= 2\pi i  \frac{V_{\e1, \e2}^{0,0} (\w2, \w2)}{\w2} h_{0,0}(\w2,\w2)\end{align*} where we use Theorem~\ref{thm:integrandeven} in the last line. Then from Equation~\ref{eq:C00lambda} we have \begin{align*}
   \mathcal{E}^{0,0}_{\e1,\e2}(a, x_1, x_2, y_1, y_2) &= -\frac{i^{y_1-x_1}}{(2\pi i)^2}  \int_{\Cmainzm}d\w2 \int_{\Lambda + (-\Lambda)} \frac{d\w1}{\w1} \frac{V_{\e1, \e2}^{0,0} (\w1, \w2)}{\w2 - \w1}h_{0,0}(\w1,\w2)  \\
    &=-\frac{i^{y_1-x_1}}{(2\pi i)^2} \int_{\gamma_{00} + (-\gamma_{00})}d\w2 \int_{\Lambda} \frac{d\w1}{\w1} \frac{V_{\e1, \e2}^{0,0} (\w1, \w2)}{\w2 - \w1}h_{0,0}(\w1,\w2)  \\
   &= \frac{i^{y_1-x_1}}{\pi i} \int_{\gamma_{00}}d\w2   \frac{V_{\e1, \e2}^{0,0} (\w2, \w2)}{\w2}h_{0,0}(\w2,\w2)
   \end{align*} Equations \ref{eq:C10def} and \ref{eq:C01def} are obtained in a similar fashion.
 
\end{proof}

First we will find the asymptotics for the double integrals, and then we will look at the integrals $\mathcal{E}^{j,k}_{\e1,\e2}(a, x_1, x_2, y_1, y_2)$. First we prove the following theorem that bounds the integrand away from the saddle points.
\begin{theorem}\label{thm:arcerrors}
Let $\widetilde{H}_{x_1,x_2}(\omega)$ be as defined in Equation~\ref{eq:Htilde}.  Let $\circlecontour_{(m)}$ denote the part of the contour $\mathcal{C}_{k,m}$ that has constant radius $R_1$ as defined in Definition~\ref{def:C1mC2m}.  Let $\circlecontour_{(m)}'$ denote the part of the contour $\mathcal{C}_{k,m}'$ that has constant radius $R_2$ as defined in Definition~\ref{def:C1mC2m}. Then there exists $d > 0$ such that when $m$ is sufficiently large, for all $\w1\in \circlecontour_{(m)}$ we have \begin{align*}
|\widetilde{H}_{x_1 + 1, x_2}(\w1)| &< e^{-d m^\delta}, \\
|\widetilde{H}_{x_1 + 1, 2n - x_2}(\w1)| &< e^{-d m^\delta}
\end{align*} and for all $\w2 \in \circlecontour_{(m)}'$ we have
\begin{align*}
|\widetilde{H}_{y_1, y_2 + 1}(\w2)| > e^{d m^\delta},\\
|\widetilde{H}_{2n-y_1, y_2 + 1}(\w2)| > e^{d m^\delta}.
\end{align*}
\end{theorem}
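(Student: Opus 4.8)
The plan is to estimate $\log|\widetilde{H}|$ on the constant-radius arcs directly from the exact product formula $\widetilde{H}_{x_1,x_2}(\omega)=\omega^{2m}(-iG(\omega))^{2m-x_1/2}(iG(\omega^{-1}))^{-(2m-x_2/2)}$ of Equation~\ref{eq:Htilde}, rather than going through the near-$\pm i$ expansions. The decisive factor is the first one. On $\circlecontour_{(m)}$ one has $|\omega|=R_1=1-\bconst^2m^{\delta-1}+O(m^{\delta/2-1})$, so $|\omega|^{2m}=\exp(2m\log R_1)=\exp(-2\bconst^2m^\delta+O(m^{\delta/2}))$; on $\circlecontour_{(m)}'$ one has $|\omega|=R_2=1+\bconst^2m^{\delta-1}+O(m^{\delta/2-1})$, so $|\omega|^{2m}=\exp(2\bconst^2m^\delta+O(m^{\delta/2}))$. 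Since $\delta/2<\delta$, these already decay, respectively grow, like $e^{\pm c\,m^\delta}$, and the whole problem reduces to showing that the two remaining $G$-factors perturb $\log|\widetilde{H}|$ by only $o(m^\delta)$.

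For that, the plan uses two facts about $G$ from Equation~\ref{eq:G}. First, $G$ inverts the Joukowski-type map $z\mapsto\frac{\sqrt{2c}}{2}(z-z^{-1})$, so it maps $\mathbb{C}$ minus the slit $i[-\sqrt{2c},\sqrt{2c}]$ onto the open unit disc; on the arcs (which avoid this slit and lie within $O(m^{\delta-1})$ of the unit circle) both $|G(\omega)|$ and $|G(\omega^{-1})|$ are therefore $<1$ and stay bounded away from $0$ uniformly in $m$. Hence a $G$-factor with positive exponent has modulus $\le1$ and an $O(1)$ power of $G$ is a bounded factor. This already handles $\widetilde{H}_{x_1+1,2n-x_2}$ and $\widetilde{H}_{2n-y_1,y_2+1}$: with $n=4m$ they differ from $\widetilde{H}_{x_1+1,x_2}$ resp.\ $\widetilde{H}_{y_1,y_2+1}$ by the factors $(iG(\omega^{-1}))^{4m-x_2}$ resp.\ $(-iG(\omega))^{y_1-4m}$, and by Equation~\ref{eq:asymptotic_coords_xy} we have $4m-x_2>0$ and $y_1-4m<0$, so their moduli are $\le1$ and $\ge1$ respectively; thus the bounds for these two reduce to those for $\widetilde{H}_{x_1+1,x_2}$ and $\widetilde{H}_{y_1,y_2+1}$. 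Second --- the essential cancellation --- since $\overline{\omega^2+2c}=\omega^{-2}+2c$ on $|\omega|=1$ and $G(\bar\omega)=\overline{G(\omega)}$ by Schwarz reflection, one has $|G(\omega)|=|G(\omega^{-1})|$ identically on $|\omega|=1$. Writing $\omega=R_1u$ with $|u|=1$, the quantity $\log|G(\omega)|-\log|G(\omega^{-1})|$ equals $[\log|G(R_1u)|-\log|G(u)|]+[\log|G(\bar u)|-\log|G(R_1^{-1}\bar u)|]$, the middle difference $\log|G(u)|-\log|G(\bar u)|$ vanishing, so it is bounded by the radial variation of $\log|G|$ over radial segments of length $O(m^{\delta-1})$ on the circle of radius $\approx1$. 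That variation is $O(m^{\delta-1})$ away from the branch points $\pm i\sqrt{2c}$ and, using $G(\omega)-G(\pm i\sqrt{2c})=O(|\omega\mp i\sqrt{2c}|^{1/2})$ and $|G|\to1$ there, is $O(m^{(\delta-1)/2})$ near them. Since by Equation~\ref{eq:asymptotic_coords_xy} each of $2m-(x_1+1)/2,\ 2m-x_2/2,\ 2m-y_1/2,\ 2m-(y_2+1)/2$ equals $-\alpha\bconst m^{1/2}+O(1)$, the $G$-contribution to $\log|\widetilde{H}_{x_1+1,x_2}|$ resp.\ $\log|\widetilde{H}_{y_1,y_2+1}|$ is $O(m^{1/2}\cdot m^{(\delta-1)/2})+O(1)=O(m^{\delta/2})=o(m^\delta)$. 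Combined with the first step, $\log|\widetilde{H}_{x_1+1,x_2}(\omega)|=-2\bconst^2m^\delta+o(m^\delta)$ on $\circlecontour_{(m)}$ and $\log|\widetilde{H}_{y_1,y_2+1}(\omega)|=2\bconst^2m^\delta+o(m^\delta)$ on $\circlecontour_{(m)}'$, giving the theorem for any $d<2\bconst^2$ and $m$ large.

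The hard part will be the second step near $\omega=\pm i$, where the arc comes within $O(m^{\delta-1})$ of the branch point $\pm i\sqrt{2c}$ and $G$ has a square-root singularity: one must use the precise shape of the arc there from Definition~\ref{def:C1mC2m} --- in particular that the junction point has real part of order $m^{\delta/2-1}$, which keeps the arc off the slit --- and estimate the radial variation of $\log|G|$ across the branch point carefully, using that $|G|$ extends continuously across the slit with boundary value $1$ so that the relevant integral $\int|G'/G|$ is integrable of size $O(m^{(\delta-1)/2})$. As a cross-check one can verify the bound at the junction $|w|=m^\delta$ itself: by Corollary~\ref{cor:expboundcontour} together with Equation~\ref{eq:Hx1x2asymp}, $|\widetilde{H}_{x_1+1,x_2}|=|e^{\bconst^2p_\alpha(w)}|(1+o(1))\le A_0e^{-c_0m^\delta}$ there, and then bound $\frac{d}{d\phi}\mathrm{Re}\log\widetilde{H}=-\mathrm{Im}(\omega(\log\widetilde{H})'(\omega))$ along the arc (its non-$2m$ part being small by the same $|G(\omega)|=|G(\omega^{-1})|$ symmetry) to control growth up to $\omega=R_1$ on the real axis. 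Once these are in place the $z$-contour statements and the $2n-x_2$, $2n-y_1$ variants follow as above.
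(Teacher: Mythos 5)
Your proposal is correct and follows essentially the same route as the paper's proof in Appendix~\ref{sec:arcerrors}: isolate $|\omega|^{2m}=e^{\mp 2\bconst^2 m^{\delta}+O(m^{\delta/2})}$ as the dominant factor, show the remaining $G$-factors contribute only $O(m^{1/2}\cdot m^{(\delta-1)/2})=O(m^{\delta/2})$ to the logarithm, and dispose of the $2n-x_2$ and $2n-y_1$ variants via the sign of the extra exponent together with $|G|\le 1$. The only (immaterial) difference is in the sub-lemma bounding the ratio $|G(\omega)/G(\omega^{-1})|$: you integrate the radial variation of $\log|G|$ after using the conjugation symmetry, whereas the paper's Lemma~\ref{lemma:Gratiowbound} writes the ratio as $1+(\text{small})$ and bounds the numerator by a difference-of-square-roots manipulation --- both rest on the same $O(m^{(\delta-1)/2})$ square-root-singularity estimate near $\pm i\sqrt{2c}$.
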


The proof can be found in Appendix~\ref{sec:arcerrors}.

\begin{theorem}
Let $V_{\e1, \e2}^{j,k} (\w1, \w2)$ be as defined in Equation~\ref{eq:Voriginal} and $h_{j,k}(\w1,\w2)$ as defined in Equation~\ref{eq:hij}. Let $\widetilde{\mathcal{C}_j}$ and $\widetilde{\mathcal{C}_k}'$ be as defined in Definition~\ref{def:C1mC2m}. Let $\circlecontour_{(m)}$ denote the part of the contour $\mathcal{C}_{k,m}$ that has constant radius $R_1$ as defined in Definition~\ref{def:C1mC2m}.  Let $\circlecontour_{(m)}'$ denote the part of the contour $\mathcal{C}_{k,m}'$ that has constant radius $R_2$ as defined in Definition~\ref{def:C1mC2m}. Then there exists $A_1 > 0, c_1 > 0$ such that
\begin{equation*}
\frac{i^{y_1-x_1}}{(2\pi i)^2} \int_{\circlecontour_{(m)}} \frac{d\w1}{\w1} \int_{\mathcal{C}_{k,m}'}d\w2 \frac{V_{\e1, \e2}^{j,k} (\w1, \w2)}{\w2 - \w1} h_{j,k}(\w1,\w2) < A_1 e^{-c_1 m^\delta}
\end{equation*}
and 
\begin{equation*}
\frac{i^{y_1-x_1}}{(2\pi i)^2} \int_{\mathcal{C}_{j,m}} \frac{d\w1}{\w1} \int_{\circlecontour_{(m)}'}d\w2 \frac{V_{\e1, \e2}^{j,k} (\w1, \w2)}{\w2 - \w1} h_{j,k}(\w1,\w2) < A_1 e^{-c_1 m^\delta}
\end{equation*}
\end{theorem}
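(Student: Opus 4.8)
The plan is to bound the modulus of the integrand on the product of the circular-arc piece with the full complementary contour, using that the exponential factor $h_{j,k}(\w1,\w2)$ factors as a function of $\w1$ alone divided by a function of $\w2$ alone. Write $\widetilde{H}^{\mathrm{num}}$ for the numerator of $h_{j,k}$ (equal to $\widetilde{H}_{x_1+1,x_2}$ if $j=0$ and $\widetilde{H}_{x_1+1,2n-x_2}$ if $j=1$) and $\widetilde{H}^{\mathrm{den}}$ for its denominator (equal to $\widetilde{H}_{y_1,y_2+1}$ if $k=0$ and $\widetilde{H}_{2n-y_1,y_2+1}$ if $k=1$), so that on $\circlecontour_{(m)}\times\mathcal{C}_{k,m}'$
\[
\left|\frac{V_{\e1,\e2}^{j,k}(\w1,\w2)}{\w1(\w2-\w1)}h_{j,k}(\w1,\w2)\right| = \frac{|V_{\e1,\e2}^{j,k}(\w1,\w2)|}{|\w1|\,|\w2-\w1|}\cdot|\widetilde{H}^{\mathrm{num}}(\w1)|\cdot\frac{1}{|\widetilde{H}^{\mathrm{den}}(\w2)|}.
\]
Theorem~\ref{thm:arcerrors} supplies the exponentially small factor $|\widetilde{H}^{\mathrm{num}}(\w1)|<e^{-dm^\delta}$ for $\w1\in\circlecontour_{(m)}$; everything else must be shown to grow at most polynomially in $m$ along the contours.

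For the remaining factors I would argue as follows. First, $1/|\w1|$ is bounded since $|\w1|=R_1\to1$. Second, $|\w2-\w1|\gtrsim m^{\delta-1}$: on $\mathcal{C}_{k,m}'$ one has $|\w2|\geq R_2$ on the arc part and $|\w2|\geq1$ on the steepest-descent part near $\pm i$ (there $\w2=i+\bconst^2m^{-1}z$ with $\mathrm{Im}(z)>0$), so $|\w2|-|\w1|\geq\bconst^2 m^{\delta-1}$. Third, $V_{\e1,\e2}^{j,k}(\w1,\w2)$ is bounded by a fixed power of $m$: its only possible blow-ups come from $1/\sqrt{\omega^2+2c}$, $1/\sqrt{\omega^{-2}+2c}$ and $1/f_{a,b}(G(\w1),G(\w2^{-1}))$, and on these contours the distance to the corresponding singular sets is only polynomially small (for the square-root factors, $|\omega^2+2c|\gtrsim 2c-R_1^2\gtrsim m^{\delta-1}$ on the arcs, and near $\pm i$ one uses Lemma~\ref{lemma:sqrtexpansions}), while the zero $(1-\w1^2\w2^2)$ in $\mathrm{x}_{\g1,\g2}^{\e1,\e2}$ tames the would-be pole at the saddle, where $V_{\e1,\e2}^{j,k}$ is of size $m^{1/2}$ by Theorem~\ref{thm:Vexpansion}. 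Fourth — the crux — $1/|\widetilde{H}^{\mathrm{den}}(\w2)|$ is bounded by an absolute constant on all of $\mathcal{C}_{k,m}'$: on its arc part this is again Theorem~\ref{thm:arcerrors} (which gives $|\widetilde{H}^{\mathrm{den}}(\w2)|>e^{dm^\delta}>1$), and on the steepest-descent part near $\pm i$ it follows from Theorem~\ref{thm:Hw1w2expansion} together with the fact that, by construction (Definition~\ref{def:C1C2}), $\mathcal{C}_{k,m}'$ is near $\pm i$ a steepest-descent contour for the $\w2$-part of the integrand's exponent, which is precisely the negative of the exponent of $\widetilde{H}^{\mathrm{den}}$; hence $\mathrm{Re}$ of the exponent of $\widetilde{H}^{\mathrm{den}}$ increases away from the saddle and is bounded below along the contour.

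Combining these, the integrand is bounded by $Cm^{P}e^{-dm^\delta}$ on $\circlecontour_{(m)}\times\mathcal{C}_{k,m}'$ for fixed $C,P>0$. Since $\circlecontour_{(m)}$ has length $O(1)$ and $\mathcal{C}_{k,m}'$ has total $\omega$-length $O(1)$ (its steepest-descent pieces contribute only $O(m^{\delta-1})$), the double integral is $O(m^{P+1}e^{-dm^\delta})$, which is $<A_1e^{-c_1m^\delta}$ for any $0<c_1<d$ and a suitable $A_1$ once $m$ is large. The second inequality is proved identically with the two contours interchanged: Theorem~\ref{thm:arcerrors} now gives $1/|\widetilde{H}^{\mathrm{den}}(\w2)|<e^{-dm^\delta}$ for $\w2\in\circlecontour_{(m)}'$, while $|\widetilde{H}^{\mathrm{num}}(\w1)|$ is bounded above on all of $\mathcal{C}_{j,m}$ — on its steepest-descent part because $\mathcal{C}_{j,m}$ is a steepest-descent contour for the exponent of $\widetilde{H}^{\mathrm{num}}$ (so $\mathrm{Re}$ of that exponent is bounded above), and on its arc part by Theorem~\ref{thm:arcerrors}.

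The step I expect to be the main obstacle is the fourth point above and its mirror: one must verify that the reciprocal $\widetilde{H}$-factor stays bounded along the \emph{whole} complementary contour, not just near a single saddle — that is, that the steepest-descent portion is genuinely monotone in the relevant real part, and that this monotone regime splices onto the arc regime (where Theorem~\ref{thm:arcerrors} takes over) with no intermediate degeneration of $|\widetilde{H}|$. Keeping straight which of the two reflected contours is steepest-descent versus steepest-ascent for $p_\alpha$ and for $q_\alpha$, and checking the resulting statement separately in each of the cases $(j,k)=(0,0),(1,0),(0,1),(1,1)$, is the delicate bookkeeping; the polynomial bounds on $V_{\e1,\e2}^{j,k}$ near its pole locus are tedious but routine.
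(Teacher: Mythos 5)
Your proposal is correct and follows essentially the same route as the paper's own (two-line) proof: Theorem~\ref{thm:arcerrors} supplies the exponentially small $\widetilde{H}$-factor on the arc, while $V_{\e1,\e2}^{j,k}$ grows at most polynomially and $|\w2-\w1|$ is bounded below by a power of $m$. Your ``crux'' point --- that the reciprocal $\widetilde{H}$-factor must also be controlled on the steepest-descent portion of the complementary contour, via monotonicity of the real part of the exponent there --- is a genuine ingredient that the paper's proof leaves implicit, and your treatment of it is sound.
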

\begin{proof}
This follows from Theorem~\ref{thm:arcerrors} since the dependence of $V_{\e1, \e2}^{j,k} (\w1, \w2)$ on $m$ is not exponential, and $|\w2 - \w1|$ is bounded below by $\bconst^2 m^{\delta-1} + O(m^{\delta/2-1})$. 
\end{proof}

Then we are left with integrals over the parts of the contour corresponding to $\Cmainwres$, $\Cotherwres$, $\Cmainzres$ and $\Cotherzres$. 

Next we do a change of variables as in Definition~\ref{def:asymptoticsvars} and substitute the asymptotic expansions in Theorem~\ref{thm:Hw1w2expansion} and Theorem~\ref{thm:Vexpansion}. We will need the following lemma to bound the error terms in the exponent after substituting the asymptotic expansions.
\begin{lemma}\label{lemma:experrorbound}
Let $A^{j,k}_{\e1,\e2}(w,z)$ be as defined in Equation~\ref{eq:Aall} and  $g_{j,k}(w,z)$ be as defined in Equation~\ref{eq:gij}. Let the contours $\widetilde{\mathcal{C}_j}$, $\widetilde{\mathcal{C}_k}'$ be as defined in Definition~\ref{def:tildeC1C2}. Then
\[\int_{\widetilde{\mathcal{C}_j}}dw \int_{\widetilde{\mathcal{C}_k}'}dz \frac{A^{j,k}_{\e1,\e2}(w,z)}{z-w} e^{g_{j,k}(w,z) + O(m^{-1/2}w, m^{-1/2}z)} 
= \int_{\widetilde{\mathcal{C}_j}}dw \int_{\widetilde{\mathcal{C}_k}'}dz \frac{A^{j,k}_{\e1,\e2}(w,z)}{z-w} e^{g_{j,k}(w,z)} + O(m^{-1/2})\]
\end{lemma}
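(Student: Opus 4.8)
The plan is to bound the difference of the two displayed integrals directly. Let $E_m(w,z)$ denote the error term $O(m^{-1/2}w,m^{-1/2}z)$ in the exponent; by Theorem~\ref{thm:Hw1w2expansion} it satisfies $|E_m(w,z)|\le Cm^{-1/2}(|w|+|z|)$ for a constant $C$ independent of $m$ and all $w,z$ with $|w|,|z|\le m^\delta$. Since $\widetilde{\mathcal{C}_j}$ and $\widetilde{\mathcal{C}_k}'$ are contained in $\{|w|\le m^\delta\}$ and $\delta<1/2$, on these contours $|E_m(w,z)|\le 2Cm^{\delta-1/2}\to 0$, so for $m$ large $|e^{E_m(w,z)}-1|\le 2|E_m(w,z)|\le 4Cm^{-1/2}(|w|+|z|)$. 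Writing $e^{g_{j,k}+E_m}-e^{g_{j,k}}=e^{g_{j,k}}(e^{E_m}-1)$, the difference of the two integrals is then at most
\[4Cm^{-1/2}\int_{\widetilde{\mathcal{C}_j}}\int_{\widetilde{\mathcal{C}_k}'}\frac{|A^{j,k}_{\e1,\e2}(w,z)|}{|z-w|}\,|e^{g_{j,k}(w,z)}|\,(|w|+|z|)\,|dw|\,|dz|,\]
and since $\widetilde{\mathcal{C}_j}\subset\mathcal{C}_j$, $\widetilde{\mathcal{C}_k}'\subset\mathcal{C}_k'$ with a non-negative, $m$-independent integrand, it suffices to show that the same integral over the full contours $\mathcal{C}_j\times\mathcal{C}_k'$ is finite; the difference is then $O(m^{-1/2})$ with an $m$-independent implied constant.

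To prove this finiteness I would control three potential sources of divergence separately. First, inspection of Equation~\ref{eq:Aall} shows that $A^{j,k}_{\e1,\e2}(w,z)$ is a fixed algebraic function whose only singularities come from the factors $(1/2\mp 2iw)^{-1/2}$ and $(1/2\mp 2iz)^{-1/2}$, so it is bounded on $\mathcal{C}_j\times\mathcal{C}_k'$ except where a contour meets a branch point $\pm i/4$, where it contributes at worst an integrable inverse-square-root singularity in one variable. Second, $g_{j,k}(w,z)$ splits as a function of $w$ plus a function of $z$, so $|e^{g_{j,k}(w,z)}|$ factors; the contours $\mathcal{C}_j,\mathcal{C}_k'$ are chosen precisely so that $\mathrm{Re}\,g_{j,k}$ attains its maximum over $\mathcal{C}_j\times\mathcal{C}_k'$ at the relevant saddle point(s), and by Corollary~\ref{cor:expboundcontour} together with the contour constructions in Section~\ref{sec:contours} each factor decays at least like $e^{-c|w|}$ (resp.\ $e^{-c|z|}$) as $|w|$ (resp.\ $|z|$) tends to infinity along the contour, so $|e^{g_{j,k}(w,z)}|(|w|+|z|)$ is integrable in each variable. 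Third, by the descriptions in Section~\ref{sec:contours} the contours $\mathcal{C}_j$ and $\mathcal{C}_k'$ meet at only finitely many points (if at all), and there transversally: at each such point $p$ the two tangent directions are distinct, so $|z-w|\ge c'(|w-p|+|z-p|)$ near $(p,p)$ and $\iint|z-w|^{-1}|dw||dz|$ is finite in a neighborhood of $p$; away from these crossings $|z-w|$ is bounded below on the compact part of the contours, and on the unbounded part it tends to infinity because the unbounded arms of $\mathcal{C}_j$ lie in the lower half plane and those of $\mathcal{C}_k'$, being a reflection in the real axis, lie in the upper half plane.

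Assembling these, I would split $\mathcal{C}_j\times\mathcal{C}_k'$ into finitely many small neighborhoods of the crossing points, on which $|A^{j,k}_{\e1,\e2}|$ and $|e^{g_{j,k}}|$ are bounded and $\iint|z-w|^{-1}$ is finite; the remaining compact region, on which the integrand is bounded on a set of finite measure (the branch-point singularities of $A^{j,k}_{\e1,\e2}$ staying integrable); and the non-compact region, on which $|A^{j,k}_{\e1,\e2}|$ and $|z-w|^{-1}$ are bounded while $|e^{g_{j,k}}|(|w|+|z|)$ is dominated by a product of exponentially decaying one-variable functions. Each piece contributes a finite amount, so the full double integral is finite and the lemma follows. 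I expect the main obstacle to be the third point: reconciling the simple pole at $w=z$ with the steepest-descent decay into a single estimate uniform in $m$ forces us to invoke the explicit contour geometry of Section~\ref{sec:contours} — that the contours cross transversally at finitely many points and separate at infinity — rather than merely the pointwise bound of Corollary~\ref{cor:expboundcontour}.
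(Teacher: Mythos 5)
Your proposal is correct and follows essentially the same route as the paper's proof: factor the difference as $e^{g_{j,k}}(e^{E_m}-1)$, bound $|e^{E_m}-1|$ by $m^{-1/2}$ times the (polynomially growing) error function using $|E_m|\to 0$ on the truncated contours, and then dominate the resulting $m$-independent integral by its extension to the full contours $\mathcal{C}_j\times\mathcal{C}_k'$, whose convergence rests on Corollary~\ref{cor:expboundcontour} and the contour geometry. The only difference is one of detail: the paper asserts convergence of the limiting double integral by citing Lemma~\ref{lemma:Ccontourslargew}, whereas you spell out the treatment of the branch points of $A^{j,k}_{\e1,\e2}$ and the transversal crossings where $z=w$ — a worthwhile elaboration, but not a different argument.
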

\begin{proof}
The proof can be found in Appendix~\ref{sec:experrorbound}.
\end{proof}

Now we move to asymptotic coordinates to approximate our integrals.
\begin{lemma}\label{lemma:DtildeC}
Let $\mathcal{D}^{j,k}_{\e1,\e2}(a, x_1, x_2, y_1, y_2)$ be as defined in Definition~\ref{def:Dintegrals}. Let the contours $\Cmainwres$, $\Cotherwres$, $\Cmainzres$ and $\Cotherzres$ be as defined in Definition~\ref{def:tildeC1C2}. Recall the functions $g_{j,k}(w,z)$ defined in Equation~\ref{eq:gij}. Then \begin{multline}
\mathcal{D}^{j,k}_{\e1,\e2}(a, x_1, x_2, y_1, y_2) =  \bconst m^{-1/2}\frac{i^{y_1-x_1-1}(-1)^{\e1\e2}i^{\e1-\e2}}{8(2\pi i)^2}\\
\times\int_{\widetilde{\mathcal{C}_j}}dw \int_{\widetilde{\mathcal{C}_k}'}dz \frac{A^{j,k}_{\e1,\e2}(w,z)}{z-w} e^{g_{j,k}(w,z)} + O(m^{-1})
\end{multline}
\end{lemma}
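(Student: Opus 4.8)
The plan is to start from the definition of $\mathcal{D}^{j,k}_{\e1,\e2}(a, x_1, x_2, y_1, y_2)$ in Definition~\ref{def:Dintegrals}, restrict to the parts of the contours $\mathcal{C}_{j,m}$ and $\mathcal{C}_{k,m}'$ near $\pm i$ (using that the arc contributions are exponentially small by the preceding theorem), and then perform the change of variables $\w1 = i + \bconst^2 m^{-1}w$ on the piece near $i$ and $\w1 = -i + \bconst^2 m^{-1} w$ on the piece near $-i$, and similarly for $\w2$. By Theorem~\ref{thm:integrandeven} the integrand $V^{j,k}_{\e1,\e2}(\w1,\w2)h_{j,k}(\w1,\w2)$ is invariant under $\w1 \mapsto -\w1$ and $\w2 \mapsto -\w2$, so after accounting for the sign of $d\omega_i$ and of $1/\omega_i$ under these reflections, the contributions from the four combinations of neighborhoods of $\pm i$ for $(\w1,\w2)$ collapse into a single integral over the neighborhoods of $i$ alone, up to an overall factor of $4$ (or a sign bookkeeping that I expect to land on $4$). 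Concretely, near $i$ we have $d\w1 = \bconst^2 m^{-1} dw$, $d\w2 = \bconst^2 m^{-1}dz$, $1/\w1 = -i + O(m^{-1})$, and $\w2 - \w1 = \bconst^2 m^{-1}(z - w)$.

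Next I would substitute the asymptotic expansions. Theorem~\ref{thm:Vexpansion} gives $V^{j,k}_{\e1,\e2}(\w1,\w2) = m^{1/2}\frac{(-1)^{\e1\e2}i^{\e1-\e2}}{16\bconst}(A^{j,k}_{\e1,\e2}(w,z) + O(m^{-1/2}))$, and Theorem~\ref{thm:Hw1w2expansion} gives $h_{j,k}(\w1,\w2) = e^{g_{j,k}(w,z) + O(m^{-1/2}w,m^{-1/2}z)}$. Collecting the powers of $m$: the two Jacobians contribute $(\bconst^2 m^{-1})^2 = \bconst^4 m^{-2}$, the factor $1/(\w2-\w1)$ contributes $\bconst^{-2}m$, and $V$ contributes $m^{1/2}$; the net power is $m^{-2}\cdot m \cdot m^{1/2} = m^{-1/2}$, and the net $\bconst$-power is $\bconst^4\cdot\bconst^{-2}\cdot\bconst^{-1} = \bconst$. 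The factor $1/\w1 \to -i$ and the collapse factor $4$ combine with the prefactor $i^{y_1-x_1}/(2\pi i)^2$ and the $1/16$ from Theorem~\ref{thm:Vexpansion}; tracking these constants carefully should produce exactly $\bconst m^{-1/2}\frac{i^{y_1-x_1-1}(-1)^{\e1\e2}i^{\e1-\e2}}{8(2\pi i)^2}$, noting that $4/16 = 1/4$ does not quite match $1/8$, so I expect one factor of $2$ to come from the $1/(z-w)$ versus $1/(w-z)$ sign convention or from the precise relation between the length-$m^\delta$ restricted contours and $\widetilde{\mathcal{C}_j}$ — this is exactly the kind of bookkeeping that must be done with care.

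Finally, to replace the restricted integrand $e^{g_{j,k}(w,z)+O(m^{-1/2}w,m^{-1/2}z)}$ with $e^{g_{j,k}(w,z)}$ and to replace $A^{j,k}_{\e1,\e2}(w,z)+O(m^{-1/2})$ by $A^{j,k}_{\e1,\e2}(w,z)$, I would invoke Lemma~\ref{lemma:experrorbound}, whose hypotheses are precisely met here: on the steepest descent contours $\widetilde{\mathcal{C}_j}$, $\widetilde{\mathcal{C}_k}'$ the real part of $g_{j,k}$ decays, so $e^{g_{j,k}}$ is integrable and the multiplicative $O(m^{-1/2}w)$ and $O(m^{-1/2}z)$ corrections in the exponent, together with the additive $O(m^{-1/2})$ correction in $A$, produce only an $O(m^{-1})$ change overall once multiplied by the $\bconst m^{-1/2}$ prefactor. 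The main obstacle, as flagged, is not analytic but combinatorial-constant: getting the exact scalar factor in front right, including the signs from the reflection-symmetry collapse at $-i$ and the orientation of the restricted contours relative to $\widetilde{\mathcal{C}_j}$, and confirming that the $1/\w1$ factor near $-i$ contributes $+i$ (not $-i$) so that the two half-contributions add rather than cancel. I would double-check this by testing the $(j,k)=(0,0)$, $\e1=\e2=0$ case against the already-stated formula in Theorem~\ref{thm:Iasymptotics}.
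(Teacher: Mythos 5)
Your outline follows the paper's route (restrict to the neighborhoods of $\pm i$ using the exponential arc bounds, change variables, substitute the expansions of Theorems~\ref{thm:Vexpansion} and \ref{thm:Hw1w2expansion}, and absorb the exponent errors via Lemma~\ref{lemma:experrorbound}), but the step where you claim that ``the contributions from the four combinations of neighborhoods of $\pm i$ for $(\w1,\w2)$ collapse into a single integral over the neighborhoods of $i$ alone, up to an overall factor of $4$'' is false, and this is exactly where your constant goes wrong. The evenness of $V^{j,k}_{\e1,\e2}h_{j,k}$ from Theorem~\ref{thm:integrandeven} does let you fold the numerator, but it does nothing to the denominator $\w2-\w1$: in the two diagonal combinations (both variables near $i$, or both near $-i$) one has $\w2-\w1=\bconst^2 m^{-1}(z-w)$, which supplies a factor of order $m$, whereas in the two off-diagonal combinations ($\w1$ near $\mp i$, $\w2$ near $\pm i$) one has $\w2-\w1=\pm 2i+O(m^{-1})$, so $1/(\w2-\w1)=O(1)$ and those contributions are $O(m^{-3/2})$ --- they do not reduce to the same integral at all. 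The correct multiplicity is therefore $2$, not $4$, which is precisely the source of the $1/8$ versus $1/4$ discrepancy you flagged; your two guessed resolutions (a $1/(z-w)$ sign convention or the orientation of the restricted contours) are both incorrect. This is how the paper argues: it doubles the $\w2$-integral over the upper half-plane by symmetry, splits the $\w1$-integral into the pieces near $i$ and near $-i$ (called $S_1$ and $S_2$), and shows $S_2=O(m^{-3/2})$ because $1/(\w1+\w2)=-i/2+O(m^{-1})$ is bounded.

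The remaining bookkeeping in your sketch is fine: the power count $m^{-2}\cdot m\cdot m^{1/2}=m^{-1/2}$ and $\bconst^4\cdot\bconst^{-2}\cdot\bconst^{-1}=\bconst$ is correct, $1/\w1\to -i$ converts $i^{y_1-x_1}$ into $i^{y_1-x_1-1}$, and with the multiplicity $2$ one gets $2/16=1/8$ as in the statement. Your appeal to Lemma~\ref{lemma:experrorbound} for the exponent error and to the boundedness of the $O(m^{-1/2})$ correction in $A^{j,k}_{\e1,\e2}$ is also the paper's argument. But as written, the proof asserts a reduction that fails for half of the terms, so the gap is genuine rather than cosmetic.
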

\begin{proof}
Recall from Definition~\ref{def:asymptoticsvars} the change of variables for $\w1,\w2$ near $i$ \begin{equation*}
\w1 = i + \bconst^2 m^{-1} w \hspace{0.5cm} \text{and} \hspace{0.5cm} \w2 = i + \bconst^2 m^{-1} z
\end{equation*} for $|w|, |z| < m^{\delta}$ for some $0 < \delta < 1/2$. Near $-i$, we use variables $-\w1,-\w2$. By Theorem~\ref{thm:integrandeven}, and since our contours are symmetric in the real axis, we can just integrate over $\w2$ in the upper half plane and double the result. Note that $d\w1 = \bconst^2 m^{-1} dw$ and $d\w2 = \bconst^2 m^{-1} dz$. Then by Theorem~\ref{thm:arcerrors} we have
\begin{multline*}
\mathcal{D}^{j,k}_{\e1,\e2}(a, x_1, x_2, y_1, y_2) = 2\frac{i^{y_1-x_1}}{(2\pi i)^2} \int_{\widetilde{\mathcal{C}_j}} \bconst^2 m^{-1}\frac{dw}{\w1} \int_{\widetilde{\mathcal{C}_k}'}\bconst^2 m^{-1}dz \frac{V_{\e1, \e2}^{j,k} (\w1, \w2)}{\w2 - \w1}h_{j,k}(\w1,\w2)\\
+ 2\frac{i^{y_1-x_1}}{(2\pi i)^2} \int_{\widetilde{\mathcal{C}_j}} \bconst^2 m^{-1}\frac{-dw}{-\w1} \int_{\widetilde{\mathcal{C}_k}'}\bconst^2 m^{-1}dz \frac{V_{\e1, \e2}^{j,k} (-\w1, \w2)}{\w2 + \w1} h_{j,k}(-\w1,\w2)+O(A_1 e^{-c_1 m^\delta})
\end{multline*}
Call these two terms $S_1$ and $S_2$ respectively, so $\mathcal{D}^{j,k}_{\e1,\e2}(a, x_1, x_2, y_1, y_2) = S_1 + S_2+O(A_1 e^{-c_1 m^\delta})$. Substituting the expansions in Theorem~\ref{thm:Hw1w2expansion} and Theorem~\ref{thm:Vexpansion}, the first term becomes \begin{multline*}
S_1 = 2\frac{i^{y_1-x_1}(-1)^{\e1\e2}i^{\e1-\e2}}{16(2\pi i)^2} \bconst m^{-1/2}\\
\times  \int_{\widetilde{\mathcal{C}_j}}\frac{dw}{i + \bconst^2 m^{-1} w}\int_{\widetilde{\mathcal{C}_k}'}dz\Bigg( \frac{A^{j,k}_{\e1,\e2}(w,z) + O(m^{-1/2})}{z-w} e^{g_{j,k}(w,z) + O(m^{-1/2}w, m^{-1/2}z)}\Bigg)
\end{multline*} where $A^{j,k}_{\e1,\e2}(w,z)$ is defined in Equation~\ref{eq:Aall}. Since $|w| < m^\delta$ with $0<\delta<1/2$, we have $1/(i + \bconst^2 m^{-1} w) = -i + O(m^{\delta - 1}) = -i + O(m^{-1/2})$. So we can write \begin{multline*}
S_1 = 2\frac{i^{y_1-x_1-1}(-1)^{\e1\e2}i^{\e1-\e2}}{16(2\pi i)^2} \bconst m^{-1/2}(1  + O(m^{-1/2}))\\
\times \int_{\widetilde{\mathcal{C}_j}}dw \int_{\widetilde{\mathcal{C}_k}'}dz \frac{A^{j,k}_{\e1,\e2}(w,z)}{z-w} e^{g_{j,k}(w,z) + O(m^{-1/2}w, m^{-1/2}z)}
\end{multline*} By Lemma~\ref{lemma:experrorbound}, the error term in the exponent gives rise to a global $O(m^{-1/2})$ error, so we have \[
S_1 = 2\frac{i^{y_1-x_1-1}(-1)^{\e1\e2}i^{\e1-\e2}}{16(2\pi i)^2} \bconst m^{-1/2}\int_{\widetilde{\mathcal{C}_j}}dw \int_{\widetilde{\mathcal{C}_k}'}dz \frac{A^{j,k}_{\e1,\e2}(w,z)}{z-w} e^{g_{j,k}(w,z)} + O(m^{-1})
\] Using Theorem~\ref{thm:integrandeven}, we can write the $S_2$ term as \[S_2 = 2\frac{i^{y_1-x_1}}{(2\pi i)^2} \int_{\widetilde{\mathcal{C}_j}} \bconst^2 m^{-1}\frac{dw}{\w1} \int_{\widetilde{\mathcal{C}_k}'}\bconst^2 m^{-1}dz \frac{V_{\e1, \e2}^{j,k} (\w1, \w2)}{\w1 + \w2} h_{j,k}(\w1,\w2)\] so using the same substitutions as above, but noting that $1/(\w1+\w2) = -i/2 + O(m^{-1})$, we see that $S_2 = O(m^{-3/2})$. So this term is negligible and the result follows. 
\end{proof}
Now we prove that we can replace the contours $\Cmainwres$, $\Cotherwres$, $\Cmainzres$ and $\Cotherzres$ by $\Cmainw$, $\Cotherw$, $\Cmainz$ and $\Cotherz$ with exponentially small error.
\begin{lemma}\label{lemma:Ccontourslargew}
Let the contours $\Cmainwres$, $\Cotherwres$, $\Cmainzres$ and $\Cotherzres$ be as defined in Definition~\ref{def:tildeC1C2} and the contours $\Cmainw$, $\Cotherw$, $\Cmainz$ and $\Cotherz$ be as defined in Definition~\ref{def:C1C2}. Then there exists $c_0 > 0$ such that,
\[\int_{\mathcal{C}_j}dw \int_{\mathcal{C}_k'}dz \frac{A^{j,k}_{\e1,\e2}(w,z)}{z-w} e^{g_{j,k}(w,z)} = \int_{\widetilde{\mathcal{C}_j}}dw \int_{\widetilde{\mathcal{C}_k}'}dz \frac{A^{j,k}_{\e1,\e2}(w,z)}{z-w} e^{g_{j,k}(w,z)} + O( e^{-c_0 m^\delta}) \]
\end{lemma}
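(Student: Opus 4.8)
The plan is to show that extending each finite contour $\widetilde{\mathcal{C}_j}$, $\widetilde{\mathcal{C}_k}'$ to the full contour $\mathcal{C}_j$, $\mathcal{C}_k'$ changes the double integral by an exponentially small quantity, by bounding the tail contributions. Concretely, write $\mathcal{C}_j = \widetilde{\mathcal{C}_j} \cup \mathcal{C}_j^{\mathrm{tail}}$ where $\mathcal{C}_j^{\mathrm{tail}}$ is the part with $|w| > m^\delta$, and similarly for the $z$-contour. Expanding the product of contours, the difference between the two double integrals is a sum of three pieces: an integral over $\mathcal{C}_j^{\mathrm{tail}} \times \widetilde{\mathcal{C}_k}'$, one over $\widetilde{\mathcal{C}_j} \times (\mathcal{C}_k')^{\mathrm{tail}}$, and one over $\mathcal{C}_j^{\mathrm{tail}} \times (\mathcal{C}_k')^{\mathrm{tail}}$. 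It suffices to bound each of these by $O(e^{-c_0 m^\delta})$.

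First I would record the decay of the exponential factor on the tails. On the steepest-descent contours $\mathcal{C}_j$ and $\mathcal{C}_k'$ that go to infinity, $\mathrm{Re}(g_{j,k}(w,z))$ decreases away from the saddle point; combining the local behaviour near the saddle (where $\mathrm{Re}(g_{j,k})$ decreases quadratically, or cubically in the double-saddle case $\alpha=-1/\sqrt2$) with Corollary~\ref{cor:expboundcontour}, which gives $|\exp(\bconst^2 p_\alpha(w))|, |\exp(\bconst^2 q_\alpha(w))| < A_0 e^{-c_0|w|}$ for $|w| > m^\delta$ on a descent contour going to infinity, one obtains that on the tail $|w| > m^\delta$ we have $|e^{g_{j,k}(w,z)}| \leq e^{-c|w|}$ for some $c>0$ (and symmetrically in $z$), with uniform constants. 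In the $\alpha<-1/\sqrt2$ case where $\mathcal{C}_1$ also has a bounded portion wrapping the branch cut $i(1/4,\infty)$, that portion lies within $|w|\le m^\delta$ for large $m$, so it is entirely inside $\widetilde{\mathcal{C}_1}$ and contributes nothing to the tail.

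Next I would control the algebraic prefactor and the denominator. The function $A^{j,k}_{\e1,\e2}(w,z)$ from Equation~\ref{eq:Aall} is a rational combination of $\sqrt{1/2\pm 2iw}$, $\sqrt{1/2\pm 2iz}$ and their reciprocals; away from the branch points $\pm i/4$ it grows at most polynomially in $|w|,|z|$, and the denominators $\sqrt{1/2\pm 2iw}$ stay bounded below on the contours (which avoid neighbourhoods of the branch points except at the branch-cut portion just handled). The factor $1/(z-w)$ is bounded on the relevant regions since on $\mathcal{C}_j^{\mathrm{tail}}$ and $\widetilde{\mathcal{C}_k}'$ the contours do not intersect — they are separated by the real axis away from the saddles, so $|z-w|$ is bounded below. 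Hence on each tail piece the integrand is bounded by $C(1+|w|)^N(1+|z|)^N e^{-c|w|}e^{-c|z|}$ (with the $e^{-c|z|}$ factor present whenever the $z$-integration runs over its tail, and $|e^{g_{j,k}}|$ uniformly bounded on $\widetilde{\mathcal{C}_k}'$ otherwise). Integrating: on $\mathcal{C}_j^{\mathrm{tail}}\times\widetilde{\mathcal{C}_k}'$ the $z$-integral is $O(1)$ and $\int_{m^\delta}^\infty (1+r)^N e^{-cr}\,dr = O(e^{-c_0 m^\delta})$; the other two pieces are bounded the same way, with the $\mathcal{C}_j^{\mathrm{tail}}\times(\mathcal{C}_k')^{\mathrm{tail}}$ piece being even smaller. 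Summing the three gives the claimed $O(e^{-c_0 m^\delta})$.

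The main obstacle I anticipate is making the exponential decay bound $|e^{g_{j,k}(w,z)}| \le e^{-c(|w|+|z|)}$ genuinely uniform all the way along the tails — in particular verifying that $\mathrm{Re}(g_{j,k})$ is monotone (not just eventually decaying) along the whole of $\mathcal{C}_j^{\mathrm{tail}}$, so that the near-saddle quadratic decay and the large-$|w|$ asymptotic decay from Corollary~\ref{cor:expboundcontour} patch together without a gap on the intermediate range $|w|\sim m^\delta$. This follows from the steepest-descent property of the contours established in Section~\ref{sec:contours} (where $\mathrm{Im}(g_{j,k})$ is constant and $\mathrm{Re}(g_{j,k})$ strictly decreases), but one has to be slightly careful at $\alpha=-1/\sqrt2$ because of the degenerate saddle; there the cubic leading term still gives strict decay along the descent directions, so the argument goes through. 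Everything else is a routine size estimate.
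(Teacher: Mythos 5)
Your proposal is correct and follows essentially the same route as the paper: split off the tail pieces $|w|,|z|>m^\delta$, apply Corollary~\ref{cor:expboundcontour} for the exponential decay of $e^{g_{j,k}}$ there, check that $A^{j,k}_{\e1,\e2}/(z-w)$ is harmless, and integrate $\int_{m^\delta}^\infty e^{-c_0 r}\,dr$. The only differences are cosmetic (a three-piece rather than two-piece decomposition, and some extra care about the branch-cut loop and the degenerate saddle that the paper handles implicitly via the statement of Corollary~\ref{cor:expboundcontour}).
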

\begin{proof}
Recall from Equation~\ref{eq:pdef}, Equation~\ref{eq:qdef} and Equation~\ref{eq:gij} that we can write \begin{align*}
g_{0,0}(\w1,\w2) &= \bconst^2(p_{\alpha}(w) + p_{\alpha}(-z))\\
g_{1,0}(\w1,\w2) &= \bconst^2(p_{\alpha}(w) + q_{\alpha}(-z)) \\
g_{0,1}(\w1,\w2) &= \bconst^2(q_{\alpha}(w) + p_{\alpha}(-z))\\
g_{1,1}(\w1,\w2) &= \bconst^2(q_{\alpha}(w) + q_{\alpha}(-z))
\end{align*}  By Corollary \ref{cor:expboundcontour}, there exists $A_0, c_0 > 0$ for $w \in \mathcal{C}_j\setminus \widetilde{\mathcal{C}_j}$, we have \[|\exp (\bconst^2 p_{\alpha}(w))| < A_0 e^{-c_0 |w|} \text{ and } |\exp (\bconst^2 q_{\alpha}(w))| < A_0 e^{-c_0 |w|} \] and for $z \in \mathcal{C}_k'\setminus \widetilde{\mathcal{C}_k}'$, we have \[|\exp (\bconst^2 p_{\alpha}(-z))| < A_0 e^{-c_0 |z|} \text{ and } |\exp (\bconst^2 q_{\alpha}(-z))| < A_0 e^{-c_0 |z|} \] First we show that  \[
\int_{\mathcal{C}_j\setminus\widetilde{\mathcal{C}_j}}dw \int_{\mathcal{C}_k'}dz \frac{A^{j,k}_{\e1,\e2}(w,z)}{z-w} e^{g_{j,k}(w,z)} =  O( e^{-c_2m^\delta})
\] We have \[
\left|\int_{\mathcal{C}_j\setminus\widetilde{\mathcal{C}_j}}dw \int_{\mathcal{C}_k'}dz \frac{A^{j,k}_{\e1,\e2}(w,z)}{z-w} e^{g_{j,k}(w,z)}\right| 
\leq A_0 \int_{\mathcal{C}_j\setminus\widetilde{\mathcal{C}_j}}e^{-c_0 |w|}|dw|\int_{\mathcal{C}_k'}\left|\frac{A^{j,k}_{\e1,\e2}(w,z)}{z-w} e^{\bconst^2(2iz + \alpha f^\pm(-z))}dz\right|
\] Note that the integrand has no singularities on these contours. By the bounds above, the inner integral converges. By Theorem~\ref{thm:thetalargew} we can write \[
w = -i r \pm \frac{1}{\sqrt{2}} \alpha r^{1/2} + O(1)\] where $r = |w|$. Then \[|dw| = \left|-idr \pm \frac{1}{2\sqrt{2}}\alpha r^{-1/2}dr\right| = (1 + O(r^{-1/2})) dr\] so we have \begin{multline*}
\left|\int_{\mathcal{C}_j\setminus\widetilde{\mathcal{C}_j}}dw \int_{\mathcal{C}_k'}dz \frac{A^{j,k}_{\e1,\e2}(w,z)}{z-w} e^{g_{j,k}(w,z)}\right| \\
\leq \Bigg( 2A_0 \int_{m^\delta}^\infty e^{-c_0 r}(1 + O(r^{-1/2})) dr \Bigg) \sup_{w\in \mathcal{C}_j\setminus\widetilde{\mathcal{C}_j}} \int_{\mathcal{C}_k'}\left|\frac{A^{j,k}_{\e1,\e2}(w,z)}{z-w} e^{\bconst^2 (2iz + \alpha f^\pm(-z))}dz\right|.
\end{multline*} This is $O(e^{-c_0 m^\delta})$. Similarly, we have \[
\left|\int_{\mathcal{C}_j}dw \int_{\mathcal{C}_k'\setminus\widetilde{\mathcal{C}_k}'}dz \frac{A^{j,k}_{\e1,\e2}(w,z)}{z-w} e^{g_{j,k}(w,z)}\right| = O(e^{-c_0 m^\delta}).
\] The result follows.
\end{proof}
Putting Lemma~\ref{lemma:DtildeC} and Lemma~\ref{lemma:Ccontourslargew} together we obtain the following theorem.
\begin{theorem}\label{thm:DasymptoticintegraloverC}
Let $\mathcal{D}^{j,k}_{\e1,\e2}(a, x_1, x_2, y_1, y_2)$ be as defined in Definition~\ref{def:Dintegrals}. Let the contours $\Cmainw$, $\Cotherw$, $\Cmainz$ and $\Cotherz$ be as defined in Definition~\ref{def:C1C2}. Recall the functions $g_{j,k}(w,z)$ defined in Equation~\ref{eq:gij} and the functions $A^{j,k}_{\e1,\e2}(w,z)$ defined in Equation~\ref{eq:Aall}. Then \begin{multline}
\mathcal{D}^{j,k}_{\e1,\e2}(a, x_1, x_2, y_1, y_2) =  \bconst m^{-1/2}\frac{i^{y_1-x_1-1}(-1)^{\e1\e2}i^{\e1-\e2}}{8(2\pi i)^2}\\
\times\int_{\mathcal{C}_j}dw \int_{\mathcal{C}_k'}dz \frac{A^{j,k}_{\e1,\e2}(w,z)}{z-w} e^{g_{j,k}(w,z)} + O(m^{-1})
\end{multline}
\end{theorem}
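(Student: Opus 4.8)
The plan is to obtain this statement by simply chaining together the two preceding lemmas, with a small amount of error bookkeeping; all of the genuine analytic content has already been discharged.

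First I would invoke Lemma~\ref{lemma:DtildeC}, which expresses $\mathcal{D}^{j,k}_{\e1,\e2}(a, x_1, x_2, y_1, y_2)$ as the explicit prefactor $\bconst m^{-1/2}\frac{i^{y_1-x_1-1}(-1)^{\e1\e2}i^{\e1-\e2}}{8(2\pi i)^2}$ times the double integral of $\frac{A^{j,k}_{\e1,\e2}(w,z)}{z-w} e^{g_{j,k}(w,z)}$ over the \emph{finite} contours $\widetilde{\mathcal{C}_j}$ and $\widetilde{\mathcal{C}_k}'$, with an $O(m^{-1})$ remainder. This is exactly the conclusion we want except that the contours are the truncated versions $\widetilde{\mathcal{C}_j}$, $\widetilde{\mathcal{C}_k}'$ rather than the full steepest-descent contours $\mathcal{C}_j$, $\mathcal{C}_k'$.

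Next I would apply Lemma~\ref{lemma:Ccontourslargew}, which says precisely that replacing $\widetilde{\mathcal{C}_j},\widetilde{\mathcal{C}_k}'$ by $\mathcal{C}_j,\mathcal{C}_k'$ in that double integral changes its value by only $O(e^{-c_0 m^\delta})$. Substituting this into the formula from Lemma~\ref{lemma:DtildeC} produces the double integral over $\mathcal{C}_j$, $\mathcal{C}_k'$, at the cost of an additional error term of size $\bconst m^{-1/2}\cdot O(e^{-c_0 m^\delta})$. Since $\delta>0$, this quantity decays faster than any power of $m$, so in particular it is $O(m^{-1})$ and is absorbed into the remainder already present from Lemma~\ref{lemma:DtildeC}. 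This gives
\[
\mathcal{D}^{j,k}_{\e1,\e2}(a, x_1, x_2, y_1, y_2) =  \bconst m^{-1/2}\frac{i^{y_1-x_1-1}(-1)^{\e1\e2}i^{\e1-\e2}}{8(2\pi i)^2}\int_{\mathcal{C}_j}dw \int_{\mathcal{C}_k'}dz \frac{A^{j,k}_{\e1,\e2}(w,z)}{z-w} e^{g_{j,k}(w,z)} + O(m^{-1}),
\]
which is the claim.

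There is essentially no obstacle in this final step itself; the only thing to be careful about is the error bookkeeping just described, namely that the exponentially small error from extending the truncated contours to the full steepest-descent contours is multiplied by the overall $\bconst m^{-1/2}$ prefactor and so remains negligible compared with the $O(m^{-1})$ term. All of the real work — identifying the saddle points near $\pm i$, changing to the asymptotic coordinates $w,z$ of Definition~\ref{def:asymptoticsvars}, substituting the expansions of $h_{j,k}$ and $V^{j,k}_{\e1,\e2}$ from Theorems~\ref{thm:Hw1w2expansion} and \ref{thm:Vexpansion}, controlling the error in the exponent via Lemma~\ref{lemma:experrorbound}, discarding the circular-arc portions via Theorem~\ref{thm:arcerrors}, and bounding the tails of the steepest-descent contours via Corollary~\ref{cor:expboundcontour} — has already been carried out in Lemmas~\ref{lemma:DtildeC} and \ref{lemma:Ccontourslargew}.
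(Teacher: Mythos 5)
Your proposal is correct and is exactly the paper's own argument: the proof of this theorem is simply the concatenation of Lemma~\ref{lemma:DtildeC} with Lemma~\ref{lemma:Ccontourslargew}, with the exponentially small contour-extension error absorbed into the $O(m^{-1})$ remainder. Your extra remark that this error is multiplied by the $\bconst m^{-1/2}$ prefactor and is therefore $o(m^{-1})$ is the right (and only) bookkeeping point.
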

\begin{proof}
This follows directly from Lemma~\ref{lemma:DtildeC} and Lemma~\ref{lemma:Ccontourslargew}.
\end{proof}

Now we find the asymptotic behavior of the integrals $ \mathcal{E}^{j,k}_{\e1,\e2}(a, x_1, x_2, y_1, y_2)$.
\begin{theorem}\label{thm:Easymptoticintegral}
Let $\mathcal{E}^{j,k}_{\e1,\e2}(a, x_1, x_2, y_1, y_2)$ for $(j,k) = (0,0),(1,0)$ and $(0,1)$ be as defined in Definition~\ref{def:Eintegrals}. Then 
 \[\mathcal{E}^{0,0}_{\e1,\e2}(a, x_1, x_2, y_1, y_2) = -\bconst m^{-1/2} \frac{i^{y_1-x_1-1}(-1)^{\e1\e2}i^{\e1-\e2}}{16 \pi i} \int_{-\eta}^{\eta} A^{0,0}_{\e1,\e2}(w,w) dw + O(m^{-1})\] where 
 \begin{equation}\label{eq:Aww}
A^{0,0}_{\e1,\e2}(w,w) = \frac{-4(1 + (-1)^\e2\sqrt{1/2-2iw} + (-1)^\e1\sqrt{1/2+2iw})}{\sqrt{1/2-2iw}\sqrt{1/2+2iw}}.
 \end{equation} 
 Furthermore, we have
 \[
\mathcal{E}^{1,0}_{\e1,\e2}(a, x_1, x_2, y_1, y_2) = O(m^{-1}) \hspace{0.5cm}\text{and}\hspace{0.5cm} \mathcal{E}^{0,1}_{\e1,\e2}(a, x_1, x_2, y_1, y_2) = O(m^{-1}).
\]
\end{theorem}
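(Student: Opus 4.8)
The plan is to localize near $\omega=i$, pick convenient representatives for the integration contours, and substitute the diagonal expansions of the pre-exponential and exponential factors. Throughout we are in the regime $\alpha<-1/\sqrt2$ in which these integrals arise (cf.\ Theorem~\ref{thm:contourdeformations}), so $\eta>0$ is real (Theorem~\ref{thm:pqsaddlepoints}) and $z_0$ is a fixed point of the first quadrant. Since $\mathcal{E}^{j,k}_{\e1,\e2}$ does not depend on the admissible contour chosen in Definition~\ref{def:Eintegrals} (the integrand is holomorphic across the window $i(\sqrt{2c},1/\sqrt{2c})$ through which all such contours pass), I would first take convenient ones: for $\gamma_{00}$ the image under $\omega=i+\bconst^2m^{-1}w$ of the real segment $w\in[-\eta,\eta]$ traversed from $\eta$ to $-\eta$; for $\gamma_{10}$ and $\gamma_{01}$ a fixed curve, of length bounded independently of $m$, joining $z_0$ and $\overline{z_0}$ and meeting the imaginary axis only in the window. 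After the substitution $\omega=i+\bconst^2m^{-1}w$ (so $w$ is exactly the variable of Definition~\ref{def:asymptoticsvars}) one has $d\omega=\bconst^2m^{-1}\,dw$ and, uniformly on these bounded $w$-contours, $1/\omega=-i+O(m^{-1})$, $V^{j,k}_{\e1,\e2}(\omega,\omega)=m^{1/2}\tfrac{(-1)^{\e1\e2}i^{\e1-\e2}}{16\bconst}\big(A^{j,k}_{\e1,\e2}(w,w)+O(m^{-1/2})\big)$ by Theorem~\ref{thm:Vexpansion}, and $h_{j,k}(\omega,\omega)=e^{g_{j,k}(w,w)+O(m^{-1/2}w)}$ by Theorem~\ref{thm:Hw1w2expansion}.

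For $(j,k)=(0,0)$ the exponent degenerates on the diagonal: $g_{0,0}(w,w)=\bconst^2(\alpha f^-(w)-\alpha f^-(w))=0$, so $h_{0,0}(\omega,\omega)=1+O(m^{-1/2})$ uniformly on the real segment, on which $A^{0,0}_{\e1,\e2}(w,w)$ is bounded. Collecting the factors, every $O(m^{-1/2})$ correction is multiplied by the $m^{1/2}\cdot m^{-1}$ prefactor, hence is $O(m^{-1})$, leaving
\[
\mathcal{E}^{0,0}_{\e1,\e2}=\frac{i^{y_1-x_1}}{\pi i}\,m^{1/2}\,\frac{(-1)^{\e1\e2}i^{\e1-\e2}}{16\bconst}\,(-i)\,\bconst^2m^{-1}\int_{\eta}^{-\eta}A^{0,0}_{\e1,\e2}(w,w)\,dw+O(m^{-1}).
\]
Reversing the limits and simplifying the constant (using $i^{y_1-x_1-1}=-i\,i^{y_1-x_1}$) gives precisely the stated formula. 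As for the explicit form of $A^{0,0}_{\e1,\e2}(w,w)$, setting $z=w$ and $j=k=0$ in Equation~\ref{eq:Aall} kills the $2i(w-z)$ term and collapses the rest of the bracket to $4\sqrt{1/2-2iw}\sqrt{1/2+2iw}\big((-1)^{\e1+\e2}+(-1)^{\e1}\sqrt{1/2-2iw}+(-1)^{\e2}\sqrt{1/2+2iw}\big)$; multiplying by the prefactor $-(-1)^{\e1+\e2}/(\sqrt{1/2-2iw}\sqrt{1/2+2iw})^2$ and using $((-1)^{\e1+\e2})^2=1$ yields Equation~\ref{eq:Aww}.

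For $(j,k)=(1,0)$ and $(0,1)$ the key observation is that $A^{j,k}_{\e1,\e2}(w,w)\equiv0$ whenever $j\neq k$: in each of the three summands of the bracket in Equation~\ref{eq:Aall} some factor vanishes at $z=w$ in this case — the $2i(w-z)$ term trivially; the second summand because $\sqrt{1/2-2iw}+(-1)^{j}\sqrt{1/2-2iz}$ (for $(j,k)=(1,0)$), respectively $(-1)^{k}\sqrt{1/2+2iw}+\sqrt{1/2+2iz}$ (for $(j,k)=(0,1)$), vanishes; and the third summand because $\sqrt{1/2-2iw}\sqrt{1/2+2iz}+(-1)^{j+k}\sqrt{1/2+2iw}\sqrt{1/2-2iz}$ vanishes for $j+k$ odd. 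Hence $V^{1,0}_{\e1,\e2}(\omega,\omega)=m^{1/2}\cdot O(m^{-1/2})=O(1)$ uniformly on $\gamma_{10}$, while $h_{1,0}(\omega,\omega)=e^{g_{1,0}(w,w)+O(m^{-1/2}w)}$ with $g_{1,0}(w,w)=2\bconst^2\alpha\sqrt{1/2-2iw}$ is bounded there, and $1/\omega=O(1)$; since $\int_{\gamma_{10}}|d\omega|=\bconst^2m^{-1}\int_{\gamma_{10}}|dw|=O(m^{-1})$, we conclude $\mathcal{E}^{1,0}_{\e1,\e2}=O(m^{-1})$, and identically $\mathcal{E}^{0,1}_{\e1,\e2}=O(m^{-1})$.

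The algebra for $A^{j,k}_{\e1,\e2}(w,w)$ and the bookkeeping of constants and of contour orientation (which supplies the overall sign in the $(0,0)$ case) are routine. The only genuine subtlety is the \emph{uniformity} of the expansions of $V^{j,k}$ and $h_{j,k}$ along the entire contours, not just at the saddle; this is why one wants the contours to be bounded in the $w$-coordinate, which holds because $\eta$ and $z_0$ depend only on $\alpha$, not on $m$, while the window $i(\sqrt{2c},1/\sqrt{2c})$ contracts to the single point $\omega=i$ — so each admissible contour is, in the $w$-plane, a fixed curve of bounded length lying well inside $\{|w|<m^\delta\}$ for $m$ large, where Theorems~\ref{thm:Vexpansion} and \ref{thm:Hw1w2expansion} apply with uniform error.
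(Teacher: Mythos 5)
Your proposal is correct and follows essentially the same route as the paper's proof: exploit contour-independence to choose a representative $\gamma_{jk}$ that is a fixed bounded curve in the $w$-coordinate, substitute $\omega=i+\bconst^2m^{-1}w$ together with the diagonal expansions from Theorems~\ref{thm:Hw1w2expansion} and \ref{thm:Vexpansion}, observe that $A^{1,0}_{\e1,\e2}(w,w)=A^{0,1}_{\e1,\e2}(w,w)=0$ and $g_{0,0}(w,w)=0$, and track the constants and orientation. Your added remarks on why the integral is contour-independent and on the uniformity of the error terms are consistent with (and slightly more explicit than) the paper's argument, but do not constitute a different approach.
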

\begin{proof}
From the definition of $\mathcal{E}^{j,k}_{\e1,\e2}(a, x_1, x_2, y_1, y_2)$ we can choose the contours $\gamma_{jk}$ so that they are at most a distance of $O(m^{-1})$ from $i$. Then we can apply the change of variables $\omega = i + \bconst^2 m^{-1} w$ from Equation~\ref{eq:omegaexpansion}. Let $\Gamma_{jk}$ be the image of $\gamma_{jk}$ in the $w$-plane. Deform $\Gamma_{jk}$ so that it consits of vertical segments from the endpoints to the real axis, and a segment along the real axis. This does not depend on $m$. Then using Theorem~\ref{thm:Hw1w2expansion} and Theorem~\ref{thm:Vexpansion} we have \begin{align*}
 \mathcal{E}^{j,k}_{\e1,\e2}(a, x_1, x_2, y_1, y_2) &= \frac{i^{y_1-x_1-1}}{\pi i} \int_{\gamma_{jk}} V_{\e1, \e2}^{j,k} (\omega, \omega) h_{j,k}(\omega, \omega) \frac{d\omega}{\omega} \\
 &= \bconst m^{-1/2}\frac{i^{y_1-x_1}(-1)^{\e1\e2}i^{\e1-\e2} }{16\pi i} \\
 &\hspace{1cm} \int_{\Gamma_{jk}}  (A^{j,k}_{\e1,\e2}(w,w)  + O(m^{-1/2})) e^{g_{j,k}(w,w) + O(m^{-1/2} w)} dw
\end{align*} where $A^{j,k}_{\e1,\e2}(w,z)$ is defined in Equation~\ref{eq:Aall} and $g_{j,k}(w,z)$ is defined in Equation~\ref{eq:gij}. We compute 
\[
A^{0,0}_{\e1,\e2}(w,w) = -\frac{4(1 + (-1)^\e2\sqrt{1/2-2iw} + (-1)^\e1\sqrt{1/2+2iw})}{\sqrt{1/2-2iw}\sqrt{1/2+2iw}},
\]
\[
A^{1,0}_{\e1,\e2}(w,w) = 0 \hspace{0.2cm}\text{ and }\hspace{0.2cm} A^{0,1}_{\e1,\e2}(w,w) = 0. 
\] Since $\Gamma_{jk}$ does not depend on $m$, it is clear that \[
\mathcal{E}^{1,0}_{\e1,\e2}(a, x_1, x_2, y_1, y_2) = O(m^{-1})\hspace{0.2cm}\text{ and }\hspace{0.2cm} \mathcal{E}^{0,1}_{\e1,\e2}(a, x_1, x_2, y_1, y_2) = O(m^{-1}).\] Moreover, the contour $\Gamma_{00}$ is just a straight line along the real axis from $\eta$ to $-\eta$, and we have $g_{0,0}(w,w) = 0$, hence  \[\mathcal{E}^{0,0}_{\e1,\e2}(a, x_1, x_2, y_1, y_2) = \bconst m^{-1/2} \frac{i^{y_1-x_1-1}(-1)^{\e1\e2}i^{\e1-\e2}}{16 \pi i} \int_{\eta}^{-\eta} A^{0,0}_{\e1,\e2}(w,w) dw + O(m^{-1}).\] We reverse the orientation of the contour to obtain the result.
\end{proof}

Now we prove a lemma to make sense of the quantity $i^{y_1-x_1}(-1)^{\e1\e2}i^{\e1-\e2}$ that appears in many of our formulas.

\begin{lemma}\label{lemma:uglycoefficient}
For $\e1,\e2\in \{0,1\}$, let $x = (x_1, x_2) \in \mathtt{W}_\e1 $, $y = (y_1, y_2)\in \mathtt{B}_\e2$ be vertices that are joined by an edge. Let $\zeta (x,y)$ be as defined in Equation~\ref{eq:zeta}. Then
\begin{equation}
\frac{\zeta(x,y)}{\Sigma(y,x)} = (-1)^{(y_1-x_1 - 1)/2}(-1)^{\e1\e2}i^{\e1-\e2 + 1}
\end{equation}
\end{lemma}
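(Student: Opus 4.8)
The plan is a direct case analysis. Since $(x,y)$ is an edge of $\Gamma$ we must have $y-x\in\{e_1,-e_1,e_2,-e_2\}$, where $e_1=(1,1)$ and $e_2=(-1,1)$, so there are only four cases; in each one I will compute the three quantities $\zeta(x,y)$, $\Sigma(y,x)$, and the claimed right-hand side, and check that they match.

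Before the case check I would record three elementary facts. First, $\Sigma$ depends only on the difference vector, and for an edge this vector is $\pm e_1$ or $\pm e_2$, so $\Sigma(y,x)=\epsilon(x,y)$, which equals $1$ when $y=x\pm e_1$ and $i$ when $y=x\pm e_2$. Second, subtracting the congruences $x_1+x_2\equiv 2\e1+1\pmod 4$ and $y_1+y_2\equiv 2\e2+1\pmod 4$ gives $(y_1+y_2)-(x_1+x_2)\equiv 2(\e2-\e1)\pmod 4$; since the left side is $\pm2$ when $y-x=\pm e_1$ and $0$ when $y-x=\pm e_2$, this forces $\e2=1-\e1$ in the first two cases and $\e2=\e1$ in the last two. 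Third, writing $x_1=2a+1$ (odd, since $x\in\mathtt{W}$) and $x_2=2b$ (even), the congruence $x_1+x_2\equiv 2\e1+1\pmod 4$ says $a+b\equiv\e1\pmod 2$, so $(x_2-x_1+1)/2=b-a\equiv\e1\pmod 2$; this lets me rewrite $\zeta(x,y)=(-1)^{(y_2-x_1)/2}$ as an explicit power of $-1$ depending only on $\e1$ and the shift, once $y_2-x_1$ is substituted. In all four cases $y_2-x_1$ is even and $y_1-x_1$ is odd, being differences of integers of fixed parities, so the halvings in the exponents are legitimate.

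With these in hand the verification is routine bookkeeping. For example, if $y=x+e_1$ then $(y_1-x_1-1)/2=0$, $\zeta(x,y)=(-1)^{b-a}=(-1)^{\e1}$, while $\e2=1-\e1$ gives $(-1)^{\e1\e2}=1$ and $i^{\e1-\e2+1}=i^{2\e1}=(-1)^{\e1}$, and $\Sigma(y,x)=1$, so both sides equal $(-1)^{\e1}$. The remaining cases $y=x-e_1$, $y=x+e_2$, $y=x-e_2$ are treated identically: the shift determines $(y_1-x_1-1)/2$ and $y_2-x_1$, the relation from the second fact above pins down $(-1)^{\e1\e2}i^{\e1-\e2+1}$, $\Sigma(y,x)$ is read off from the first fact, and the powers of $i$ on the two sides agree. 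There is no conceptual obstacle; the only thing to be careful about is signs, together with the observation that for an edge the exponents $\e1-\e2+1$ and $(y_1-x_1-1)/2$ take values only in $\{0,\pm1,2\}$, so no branch ambiguity arises. The lemma is really just a reconciliation of the three sign conventions $\zeta$, $\Sigma$ ($=\epsilon$), and the $i^{y_1-x_1}$ prefactor from Theorem~\ref{theorem:Ka1}.
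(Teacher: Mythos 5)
Your proposal is correct and takes essentially the same approach as the paper: a direct verification from the definitions of $\zeta$, $\Sigma$ and the congruences defining $\mathtt{W}_{\e1}$, $\mathtt{B}_{\e2}$. The paper compresses the bookkeeping into a single congruence manipulation ending in the two cases $\e1=\e2$ versus $\e1\neq\e2$, whereas you organize it as an explicit check over the four edge directions $\pm e_1,\pm e_2$, but the underlying facts used are identical.
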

\begin{proof}
Firstly, if $\e1=\e2$ then $\Sigma(x,y) = i$, and if $\e1\neq\e2$ then $\Sigma(x,y) = 1$. Recall that $\zeta(x,y)=(-1)^{(y_2-x_1)/2}$. We have \begin{align*}
(-1)^{(y_1-x_1 - 1)/2}(-1)^{\e1\e2}i^{\e1-\e2 + 1}\zeta(x,y)^{-1} &= (-1)^{(-y_2+x_2 - 1)/2}(-1)^{\e1\e2}i^{\e1-\e2 + 1} \\
&= (-1)^{\e1\e2 - \e2 + 1}i^{\e1-\e2 + 1} \\
&= \begin{cases} -i &\text{ if } \e1=\e2 \\
1 & \text{ if } \e1 \neq \e2 \end{cases}
\end{align*}
\end{proof}

Now we are ready to prove Theorem~\ref{thm:Iasymptotics}.
\begin{proof}[Proof of Theorem~\ref{thm:Iasymptotics}]
Combining Theorem~\ref{thm:contourdeformations}, Theorem~\ref{thm:DasymptoticintegraloverC} and Theorem~\ref{thm:Easymptoticintegral}, we have the following formulas. Then for $-1/\sqrt{2} \leq \alpha < 0$,
  \begin{multline}
 \mathcal{I}^{0,0}_{\e1,\e2} (a, x_1, x_2, y_1, y_2) =  \bconst m^{-1/2}\frac{i^{y_1-x_1-1}(-1)^{\e1\e2}i^{\e1-\e2}}{8(2\pi i)^2}\\
\times\int_{\Cmainw}dw \int_{\Cmainz}dz \frac{A^{0,0}_{\e1,\e2}(w,z)}{z-w} e^{g_{0,0}(w,z)} + O(m^{-1}),
 \end{multline} and for $\alpha < -1/\sqrt{2}$,
 \begin{multline}
 \mathcal{I}^{0,0}_{\e1,\e2} (a, x_1, x_2, y_1, y_2) =  \bconst m^{-1/2}\frac{i^{y_1-x_1-1}(-1)^{\e1\e2}i^{\e1-\e2}}{8(2\pi i)^2}\\
\times\Bigg( \int_{\Cmainw}dw \int_{\Cmainz}dz \frac{A^{0,0}_{\e1,\e2}(w,z)}{z-w} e^{g_{0,0}(w,z)} - 2\pi i \int_{-\eta}^{\eta} A^{0,0}_{\e1,\e2}(w,w) dw\Bigg) + O(m^{-1}).
 \end{multline} For $(j,k) \neq (0,0)$ for any $\alpha < 0$, 
 \begin{multline}
 \mathcal{I}^{j,k}_{\e1,\e2} (a, x_1, x_2, y_1, y_2) =  \bconst m^{-1/2}\frac{i^{y_1-x_1-1}(-1)^{\e1\e2}i^{\e1-\e2}}{8(2\pi i)^2}\\
\times \int_{\mathcal{C}_j}dw \int_{\mathcal{C}_k'}dz \frac{A^{j,k}_{\e1,\e2}(w,z)}{z-w} e^{g_{j,k}(w,z)} + O(m^{-1})
 \end{multline}
Then we use Lemma~\ref{lemma:uglycoefficient} to finish the proof.
\end{proof}

This concludes the asymptotic analysis of $\mathcal{I}^{j,k}_{\e1,\e2} (a, x_1, x_2, y_1, y_2)$.

\section{Asymptotics of $ \mathbb{K}_{a,0,0}^{-1}((x_1,x_2),(y_1,y_2))$}\label{sec:Jasymptotics}

\subsection{Derivation of real integral formula}

In this section we look at the asymptotics of $\mathbb{K}_{a,0,0}^{-1}((x_1,x_2),(y_1,y_2))$, which appears in Equation~\ref{eq:Ka1inverse}. We finish with a proof of Theorem~\ref{thm:Kgasinversefullasymptotics}.

For $\e1,\e2\in \{0,1\}$, take $\mathbf{w} = (w_1,w_2) \in \mathtt{W}_\e1$ and $\mathbf{b} = (b_1,b_2) \in \mathtt{B}_\e2$ in the same fundamental domain, and $u, v \in \mathbb{Z}$. Recall that $e_1 = (1,1)$ and $e_2 = (-1,1)$. Let $\circlecontour_1$ denote a contour of unit radius centered at the origin, traversed in a counter-clockwise direction. Let $\mathcal{K}_a(z, w)^{-1}$ be as defined in Equation~\ref{eq:Kazwinverse}. From Equation~\ref{eq:translationinvariantoriginal} we have
\begin{equation}\label{eq:gastranslationinvariantoriginal}
\mathbb{K}_{a,0,0}^{-1}(\mathbf{w}, \mathbf{b} + 2ue_1 + 2ve_2) = \frac{1}{(2\pi i)^2} \int_{\circlecontour_1}\frac{dz}{z}\int_{\circlecontour_1}\frac{dw}{w} (\mathcal{K}_a(z, w)^{-1})_{\e1\e2} z^u w^v 
\end{equation}
where for convenience rows and columns of the $2\times 2$ matrix $\mathcal{K}_a(z, w)^{-1}$ are indexed by 0 and 1.

Recall from Equation~\ref{eq:Pazw} the characteristic polynomial
 \[P_a(z,w) = -2 -2a^2 - aw^{-1} - aw - az^{-1} - az,\] which is the determinant of $\mathcal{K}_a(z, w)$ and appears in the denominator of the integrand.
 
First we will prove some symmetry relations.
  \begin{lemma}\label{lemma:K11invequalities}
  Let  $\mathbf{w}_0\in \mathtt{W}_0$, $\mathbf{w}_1\in \mathtt{W}_1$, $\mathbf{b}_0\in \mathtt{B}_0$ and $\mathbf{b}_1\in \mathtt{B}_1$, be vertices in one fundamental domain, and take $u,v\in \mathbb{Z}$. Then \begin{multline}
  \label{eq:K11invequalitiesall} \mathbb{K}_{a,0,0}^{-1}(\mathbf{w}_0, \mathbf{b}_1 + 2ue_1 + 2ve_2) = \mathbb{K}_{a,0,0}^{-1}(\mathbf{w}_0, \mathbf{b}_1 + 2ue_1 - 2ve_2) \\= \mathbb{K}_{a,0,0}^{-1}(\mathbf{w}_1, \mathbf{b}_0 - 2ue_1 + 2ve_2)  = \mathbb{K}_{a,0,0}^{-1}(\mathbf{w}_1, \mathbf{b}_0 - 2ue_1 - 2ve_2) \\ =  i\mathbb{K}_{a,0,0}^{-1}(\mathbf{w}_0, \mathbf{b}_0 + 2ve_1 + 2ue_2) =  i\mathbb{K}_{a,0,0}^{-1}(\mathbf{w}_0, \mathbf{b}_0 - 2ve_1 + 2ue_2)\\ = i\mathbb{K}_{a,0,0}^{-1}(\mathbf{w}_1, \mathbf{b}_1 + 2ve_1 - 2ue_2)  = i\mathbb{K}_{a,0,0}^{-1}(\mathbf{w}_1, \mathbf{b}_1 - 2ve_1 - 2ue_2)
  \end{multline}
  \end{lemma}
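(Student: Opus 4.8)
The plan is to obtain all eight identities in~\eqref{eq:K11invequalitiesall} directly from the integral representation~\eqref{eq:gastranslationinvariantoriginal} by means of three elementary substitutions in the double contour integral: $z\mapsto z^{-1}$, $w\mapsto w^{-1}$, and the interchange of $z$ and $w$. Each substitution maps the positively oriented unit circle $\circlecontour_1$ to itself; parametrising $z=e^{i\vartheta}$ one checks at once that $\oint_{\circlecontour_1}f(z)\,\frac{dz}{z}=\oint_{\circlecontour_1}f(z^{-1})\,\frac{dz}{z}$, and similarly in $w$, while $z\leftrightarrow w$ is merely a relabelling of the dummy variables in the double integral. Since we only substitute and never deform the contour, no pole crossing is involved.

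First I would record the four entries of the inverse matrix~\eqref{eq:Kazwinverse} as $M_{\e1\e2}(z,w):=(\mathcal{K}_a(z,w)^{-1})_{\e1\e2}$, so that $M_{00}=i(a+w)/P_a$, $M_{01}=-(a+z)/P_a$, $M_{10}=-(a+z^{-1})/P_a$ and $M_{11}=i(a+w^{-1})/P_a$, and extract from the explicit formula~\eqref{eq:Pazw} the symmetries $P_a(z,w)=P_a(z^{-1},w)=P_a(z,w^{-1})=P_a(w,z)$. Combining these with the numerators yields the entry relations $M_{10}(z,w)=M_{01}(z^{-1},w)$, $M_{11}(z,w)=M_{00}(z,w^{-1})$ and $i\,M_{00}(z,w)=M_{01}(w,z)$; moreover $M_{01}$ does not involve $w$ and $M_{10}$ does not involve $z$, so they are fixed by $w\mapsto w^{-1}$ and $z\mapsto z^{-1}$ respectively. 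With this notation~\eqref{eq:gastranslationinvariantoriginal} reads $\mathbb{K}_{a,0,0}^{-1}(\mathbf{w},\mathbf{b}+2ue_1+2ve_2)=\frac{1}{(2\pi i)^2}\oint\oint M_{\e1\e2}(z,w)\,z^uw^v\,\frac{dz}{z}\frac{dw}{w}$ for $\mathbf{w}\in\mathtt{W}_{\e1}$, $\mathbf{b}\in\mathtt{B}_{\e2}$ in one fundamental domain.

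Then each identity becomes a single substitution, and it suffices to exhibit seven of them to connect all eight expressions. Within the $(0,1)$ and $(1,0)$ blocks, reflecting the sign of the $e_2$-coefficient (the $w$-exponent) is $w\mapsto w^{-1}$, since $M_{01}$ and $M_{10}$ do not depend on $w$; within the $(0,0)$ and $(1,1)$ blocks, reflecting the sign of the $e_1$-coefficient (the $z$-exponent) is $z\mapsto z^{-1}$, since $M_{00}$ and $M_{11}$ do not depend on $z$. Passing between the $(0,1)$ and $(1,0)$ blocks uses $z\mapsto z^{-1}$ together with $M_{10}(z,w)=M_{01}(z^{-1},w)$; passing from the $(0,1)$ block to $i$ times the $(0,0)$ block, with the roles of $u$ and $v$ interchanged, uses the swap $z\leftrightarrow w$ together with $i\,M_{00}(z,w)=M_{01}(w,z)$; and passing from the $(0,0)$ block to the $(1,1)$ block uses $w\mapsto w^{-1}$ together with $M_{11}(z,w)=M_{00}(z,w^{-1})$.

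The main issue is purely bookkeeping: at every step one must track simultaneously which matrix entry is produced, how the monomial $z^uw^v$ transforms (in particular that $z\leftrightarrow w$ interchanges $u$ and $v$), and which power of $i$ is introduced by the relation $i\,M_{00}(z,w)=M_{01}(w,z)$. There is no analytic obstacle; the only genuine check is that the three substitutions preserve the measure $\frac{dz}{z}\frac{dw}{w}$ on $\circlecontour_1\times\circlecontour_1$, which is the elementary computation noted in the first paragraph, and the rest is algebra with~\eqref{eq:Kazwinverse} and~\eqref{eq:Pazw}.
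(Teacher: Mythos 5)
Your proposal is correct and takes essentially the same route as the paper, whose proof is just the terse version of yours: all the identities follow from the substitutions $w\mapsto w^{-1}$, $z\mapsto z^{-1}$ and the interchange of $z$ and $w$ in the double contour integral of Equation~\ref{eq:gastranslationinvariantoriginal}. One small slip: the sentence ``$M_{01}$ does not involve $w$ and $M_{10}$ does not involve $z$'' should say that \emph{both} $M_{01}(z,w)=-(a+z)/P_a$ and $M_{10}(z,w)=-(a+z^{-1})/P_a$ have numerators free of $w$ (so both are fixed by $w\mapsto w^{-1}$, given the recorded invariance of $P_a$); you in fact use the correct statement later, so the argument is unaffected.
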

  \begin{proof}
The first, third and sixth equalities follows from applying the change of variables $w \rightarrow w^{-1}$ to the integral in Equation~\ref{eq:gastranslationinvariantoriginal}. The second, fifth and seventh equalities follow from applying the change of variables $z \rightarrow z^{-1}$. The fourth equality follows from exchanging the variables $z$ and $w$.
  \end{proof}
  Hence it suffices to compute $\mathbb{K}_{a,0,0}^{-1}(\mathbf{w}_0, \mathbf{b}_1 + 2ue_1 + 2ve_2)$ for $u,v\in\mathbb{Z}$ with $v\geq 0$. 
  
\begin{lemma}\label{lemma:K11singlecontourintegral}
 Let $\mathbf{w}_0\in \mathtt{W}_0$ and $\mathbf{b}_1\in \mathtt{B}_1$ be vertices in one fundamental domain. Take $u,v\in\mathbb{Z}$ with $v\geq 0$. Then we can write
\begin{equation}\label{eq:K11singlecontourintegral}
\mathbb{K}_{a,0,0}^{-1}(\mathbf{w}_0, \mathbf{b}_1 + 2ue_1 + 2ve_2)= \frac{1}{2\pi i} \int_{\circlecontour_1} \frac{dz}{z} \frac{(a+z)\,z^u\,w_0(z)^v}{a\sqrt{(2(a + a^{-1}) + z+z^{-1})^2 - 4}}\end{equation} where \[
 w_0(z) = \frac{1}{2}\left(\sqrt{(2(a + a^{-1}) + (z+z^{-1}))^2 - 4} - (2(a + a^{-1}) + (z+z^{-1}))\right)\]
  and where the square root in the denominator refers to the principal branch of the square root. 
\end{lemma}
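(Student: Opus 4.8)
The plan is to start from the double-contour integral formula in Equation~\ref{eq:gastranslationinvariantoriginal} specialised to $(\e1,\e2)=(0,1)$, and to carry out the inner $w$-integral explicitly by residues. From Equation~\ref{eq:Kazwinverse} the relevant matrix entry is $(\mathcal{K}_a(z,w)^{-1})_{01} = -(a+z)/P_a(z,w)$, so that
\[
\mathbb{K}_{a,0,0}^{-1}(\mathbf{w}_0,\mathbf{b}_1+2ue_1+2ve_2) = \frac{1}{(2\pi i)^2}\int_{\circlecontour_1}\frac{dz}{z}\,z^u\,(-(a+z))\int_{\circlecontour_1}\frac{dw}{w}\,\frac{w^v}{P_a(z,w)}.
\]
Writing $P_a(z,w) = -a w^{-1}\bigl(w^2 + (2/a)(1+a^2+ \tfrac{a}{2}(z+z^{-1}))w + 1\bigr) = -aw^{-1}(w-w_0(z))(w-w_1(z))$, where $w_0(z)w_1(z)=1$, the two roots are $w_0(z)$ (the one inside the unit circle, with $|w_0(z)|<1$ for $z$ on the unit circle and $a<1$) and $w_1(z)=w_0(z)^{-1}$ (outside). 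The explicit quadratic formula gives exactly the stated $w_0(z)$, with the branch of the square root fixed so that $|w_0(z)|<1$; one should record that the discriminant is $(2(a+a^{-1})+z+z^{-1})^2-4$, which is what appears under the root.

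The main steps, in order, are: (1) factor $P_a(z,w)$ in $w$ and identify $w_0(z)$ as the unique root inside $\circlecontour_1$ — here one must check that for $z\in\circlecontour_1$ the discriminant does not vanish and the principal branch selects the inside root, which is where a short estimate (using $a<1$ and $\mathrm{Re}(z+z^{-1})\le 2$) is needed; (2) compute the inner integral $\frac{1}{2\pi i}\int_{\circlecontour_1}\frac{w^v}{w\,P_a(z,w)}dw$ by residues. Since $v\ge 0$, the integrand $\frac{w^v}{w\,P_a(z,w)} = \frac{-w^{v}}{a\,(w-w_0(z))(w-w_1(z))}$ has poles inside the contour only at $w=w_0(z)$ (the factor $1/w$ is cancelled, since $P_a$ already contributes $w^{-1}$, so there is no pole at the origin — this is the reason the hypothesis $v\ge0$ is used, and also why the $1/w$ in the original formula disappears). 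The residue at $w=w_0(z)$ equals $\frac{-w_0(z)^v}{a(w_0(z)-w_1(z))}$, and $w_0(z)-w_1(z) = w_0(z)-w_0(z)^{-1} = -\sqrt{(2(a+a^{-1})+z+z^{-1})^2-4}$ by the quadratic formula; (3) substitute this back, absorb signs, and obtain Equation~\ref{eq:K11singlecontourintegral}.

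I expect step (1) — pinning down which root of the quadratic lies inside the unit circle and that the principal-branch square root produces it consistently for all $z\in\circlecontour_1$ — to be the main obstacle, since it is the only place where a genuine (if routine) inequality argument is required rather than algebra; everything after that is a one-residue computation. A clean way to handle it is to note $w_0(z)w_1(z)=1$ and $w_0(z)+w_1(z) = -(2(a+a^{-1})+z+z^{-1})$, so the two roots are reciprocal and at least one has modulus $\le 1$; strict inequality $|w_0(z)|<1$ on $\circlecontour_1$ follows because equality $|w_0(z)|=1$ would force $w_1(z)=\overline{w_0(z)}$ and hence $w_0(z)+w_1(z)\in[-2,2]$, contradicting $|2(a+a^{-1})+z+z^{-1}|\ge 2(a+a^{-1})-2 > 2$ for $a<1$. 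One then checks that the stated formula for $w_0(z)$, with the principal square root, indeed gives this root (e.g. by evaluating at $z=1$ and using continuity and non-vanishing of the discriminant on $\circlecontour_1$).
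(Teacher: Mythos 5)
Your proposal follows essentially the same route as the paper: specialize the double-contour formula to the $(0,1)$ entry of $\mathcal{K}_a(z,w)^{-1}$, factor $P_a(z,w)$ as a quadratic in $w$ with reciprocal roots, note that $v\ge 0$ removes the pole at the origin, and evaluate the inner integral by the residue at the unique root $w_0(z)$ inside the unit circle (your Vieta-style argument for $|w_0(z)|<1$ is a valid variant of the paper's direct computation that $w_0(z)$ is real with $T=2(a+a^{-1})+z+z^{-1}>2$). The one slip is the sign of the root difference: $w_0(z)-w_0(z)^{-1}=+\sqrt{T^2-4}$ (equivalently $w_0^{-1}-w_0=-\sqrt{T^2-4}$, as the paper records), not $-\sqrt{T^2-4}$; with the correct sign the residue $-w_0(z)^v/\bigl(a(w_0(z)-w_0(z)^{-1})\bigr)$ combines with the prefactor $-(a+z)$ to give exactly Equation~\ref{eq:K11singlecontourintegral}.
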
 

\begin{proof}
We have \begin{equation}\label{eq:K11specificedge}
    \mathbb{K}_{a,0,0}^{-1}(\mathbf{w}_0, \mathbf{b}_1+ 2ue_1 + 2ve_2) = \frac{1}{(2\pi i)^2} \int_{\circlecontour_1}\frac{z^u dz}{z}\int_{\circlecontour_1}\frac{dw}{w} \frac{-(a+z)w^v}{P_a(z,w)}
 \end{equation}
We can use the residue theorem to evaluate the inner integral. Note that since $\lim_{w\rightarrow 0} w P_a(z,w) = -a$, the integrand does not have a singularity at 0 for any $v\geq 0$. Also note that for $a < 1$, $P_a(z,w)$ does not have any zeros with $(z,w) \in \circlecontour_1\times \circlecontour_1$. We want to find $w_0(z)$ where $P_a(w_0(z), z) = 0$ and $|w_0(z)| < 1$ for $z \in \circlecontour_1$. The quadratic formula gives \begin{align*}
 w_0(z) &= \frac{1}{2}\left(\sqrt{(2(a + a^{-1}) + (z+z^{-1}))^2 - 4} - (2(a + a^{-1}) + (z+z^{-1}))\right).\end{align*} Note that since $z \in \circlecontour_1$, we have $z + z^{-1} \in \mathbb{R}$. Since $a < 1$ and $ z + z^{-1} \geq -2$, we have $a + a^{-1} > 2$ and so $2(a + a^{-1}) + (z+z^{-1}) > 2$. Hence $w_0(z)$ is real for $z\in \circlecontour_1$. It is clear that $|w_0(z)| < 1$. Since $P_a(z,w)$ is invariant under $w \rightarrow -w$, it is clear that the other root of the quadratic equation is $w_0(z)^{-1}$, given by \[
 w_0(z)^{-1} = \frac{1}{2}\left(-\sqrt{(2(a + a^{-1}) + (z+z^{-1}))^2 - 4} - (2(a + a^{-1}) + (z+z^{-1}))\right).\]
 
To find the residue at $w_0$ we first calculate
\[
\frac{\partial P_a}{\partial w}(z, w) = a(w^{-2} - 1)\] So \[
\frac{1}{\frac{\partial P_a}{\partial w}(z, w_0)} = \frac{w_0}{a(w_0^{-1} - w_0)}.\] Since we have $|w_0(z)| < 1$ for $z \in \circlecontour_1$, this is always finite. Also note that we have $w_0(z)^{-1} - w_0(z) = -\sqrt{(2(a + a^{-1}) + (z+z^{-1}))^2 - 4}$. So we can write
\begin{align*}
 \frac{1}{2\pi i} \int_{\circlecontour_1}  \frac{-(a+z)w^v}{P_a(z, w)}  \frac{dw}{w} &= \frac{w_0(z)}{a(w_0^{-1}(z) - w_0(z))} \frac{-(a+z)w_0(z)^v}{w_0(z)}\\
 &= \frac{(a+z)w_0(z)^v}{a\sqrt{(2(a + a^{-1}) + (z+z^{-1}))^2 - 4}} .\end{align*}  
Substituting this expression into Equation~\ref{eq:K11specificedge} gives the required result.
\end{proof}
Now we will deform the contours to write this as a real integral. This requires some care with the square root. We first consider the case where $u+v\geq0$.

\begin{theorem}\label{thm:kgasinversereal}
 Let $\mathbf{w}_0\in \mathtt{W}_0$ and $\mathbf{b}_1\in \mathtt{B}_1$ be vertices in the same fundamental domain. Take $u,v\in\mathbb{Z}$ with $v\geq 0$ and $u+v\geq 0$. For $a < 1$, let \begin{align}\begin{split} z_1 &= -(a + a^{-1} - 1) + \sqrt{(a+a^{-1} -1)^2 - 1} \\ 
 z_2 &= -(a + a^{-1} + 1) + \sqrt{(a + a^{-1} + 1)^2 - 1}.\label{eq:z1z2}\end{split}\end{align} Also for $z\in\mathbb{R}$ with $z_1 \leq z \leq z_2$ let \begin{equation}
\theta_a(z) = \frac{1}{2z} \left(i \sqrt{4z^2 - (2(a+a^{-1})z + z^2 + 1)^2} - (2(a + a^{-1})z + z^2+1)\right).\label{eq:thetaa}
\end{equation} Then we can write
\begin{equation}
\mathbb{K}_{a,0,0}^{-1}(\mathbf{w}_0, \mathbf{b}_1 + 2ue_1 + 2ve_2)= \frac{1}{2\pi a} \int_{z_1}^{z_2} \frac{(a+z)\,z^u\, (\theta_a(z)^v + \overline{\theta_a(z)}^v)}{ \sqrt{4z^2 - (2(a+a^{-1})z + z^2 + 1)^2}} dz
\end{equation}
\end{theorem}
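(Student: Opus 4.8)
The plan is to start from the single contour integral of Lemma~\ref{lemma:K11singlecontourintegral}, rewrite the square root in an algebraic form that has clean branch cuts, deform the unit circle $\circlecontour_1$ inward onto the cut of this square root, and read off the stated formula from the jump across that cut.

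\textbf{Step 1: an algebraic branch of the square root.} Write $N(z) = z^2 + 2(a+a^{-1})z + 1$. Then $(2(a+a^{-1}) + z + z^{-1})^2 - 4 = z^{-2}(N(z)^2 - 4z^2)$, and $N(z)^2 - 4z^2$ factors as $(z - z_1)(z - z_1^{-1})(z - z_2)(z - z_2^{-1})$ with $z_1, z_2$ as in Equation~\ref{eq:z1z2}; these four roots are real and, since $a < 1$, satisfy $z_2^{-1} < z_1^{-1} < -1 < z_1 < z_2 < 0$. I would then introduce $R(z) = \sqrt{(z - z_1)(z - z_1^{-1})(z - z_2)(z - z_2^{-1})}$, defined as the product of the principal branches of the four factors $(z-\cdot)^{1/2}$; this function is single-valued and analytic on $\mathbb{C}\setminus([z_2^{-1},z_1^{-1}]\cup[z_1,z_2])$, has $R(1)>0$, and behaves like $z^2$ as $z\to+\infty$. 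Checking the value at $z=1$ and using continuity on the (connected) unit circle shows that $z^{-1}R(z)$ coincides there with the principal branch of $\sqrt{(2(a+a^{-1})+z+z^{-1})^2-4}$. Hence, setting $w_0(z) = \tfrac{1}{2z}(R(z) - N(z))$, the integrand of Lemma~\ref{lemma:K11singlecontourintegral} may be written on $\circlecontour_1$ as $f(z)\,dz$ with $f(z) = \dfrac{(a+z)\,z^u\,w_0(z)^v}{a\,R(z)}$, and $f$ is single-valued and analytic on $\mathbb{C}\setminus([z_2^{-1},z_1^{-1}]\cup[z_1,z_2]\cup\{0\})$.

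\textbf{Step 2: the point $z=0$ and the deformation.} Using $z_1 + z_1^{-1} + z_2 + z_2^{-1} = -4(a+a^{-1})$ one gets $R(0) = N(0) = 1$ and $R'(0) = N'(0) = 2(a+a^{-1})$, so $R(z) - N(z) = O(z^2)$, hence $w_0(z) = O(z)$ and $f(z) = O(z^{u+v})$ near $0$. This is exactly where the hypotheses $v \ge 0$ (so that $w_0^v$ sits in the numerator) and $u+v \ge 0$ enter: together they make $z=0$ a removable singularity of $f$. Inside the open unit disc the only remaining singular set of $f$ is the segment $[z_1,z_2]$, so by Cauchy's theorem the counterclockwise integral of $f$ over $\circlecontour_1$ equals the counterclockwise integral over a loop hugging $[z_1,z_2]$, with no residue contribution.

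\textbf{Step 3: the jump across $[z_1,z_2]$ and the bookkeeping.} On $(z_1,z_2)$ one has $N(x)^2 - 4x^2 < 0$, and the product-of-principal-factors description of $R$ gives the boundary values $R_\pm(x) = \pm i\sqrt{4x^2 - N(x)^2}$, so $R_-(x) = -R_+(x)$. Correspondingly the two boundary values of $w_0$ on this cut are $\theta_a(x)$ and $\overline{\theta_a(x)}$, with $\theta_a$ as in Equation~\ref{eq:thetaa}, and $\theta_a(x)\overline{\theta_a(x)} = 1$. Since $w_{0,+}(x)^v + w_{0,-}(x)^v = \theta_a(x)^v + \overline{\theta_a(x)}^v$ is symmetric in the two boundary values, the jump is $f_-(x) - f_+(x) = \dfrac{(a+x)\,x^u}{a\,R_-(x)}\bigl(\theta_a(x)^v + \overline{\theta_a(x)}^v\bigr)$; substituting $R_-(x) = -i\sqrt{4x^2-N(x)^2}$, shrinking the loop onto the cut so that $\tfrac{1}{2\pi i}\oint = \tfrac{1}{2\pi i}\int_{z_1}^{z_2}(f_- - f_+)\,dx$, and using $\tfrac{1}{2\pi i}\cdot\tfrac{1}{-i} = \tfrac{1}{2\pi}$ produces exactly the claimed formula. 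The resulting integral converges at $z_1$ and $z_2$ because $\sqrt{4x^2 - N(x)^2}$ has only simple zeros there while $\theta_a$ stays bounded.

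\textbf{Main obstacle.} The delicate point is Step~1: one must not work directly with the principal branch of $\sqrt{(2(a+a^{-1})+z+z^{-1})^2-4}$, since that function carries spurious extra branch cuts — notably along the curve where $2(a+a^{-1})+z+z^{-1}$ is purely imaginary, which passes through the interior of the segment $[z_1,z_2]$ and would obstruct collapsing the contour onto that segment. Choosing instead the algebraic branch $R$, whose only cuts are the two real segments, resolves this. After that, the work is entirely a matter of pinning down signs consistently — the normalization of $R$, the boundary values $R_\pm$ on $(z_1,z_2)$, and the orientation of the loop around $[z_1,z_2]$ — so that the prefactor emerges as $\tfrac{1}{2\pi a}$ with the correct sign.
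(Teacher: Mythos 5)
Your proposal is correct and follows essentially the same route as the paper's proof: replace the principal branch of $\sqrt{(2(a+a^{-1})+z+z^{-1})^2-4}$ by an algebraic branch (the paper's $\phi_a$, your $R$) whose only cut inside the unit disk is $[z_1,z_2]$, use $v\geq 0$ and $u+v\geq 0$ to make $z=0$ removable, collapse $\circlecontour_1$ onto the cut, and read off the formula from the boundary values $\pm i\sqrt{4z^2-(2(a+a^{-1})z+z^2+1)^2}$. The only cosmetic difference is that you factor $N(z)^2-4z^2$ into four linear factors where the paper uses two quadratics, which changes nothing in substance.
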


\begin{proof}
From Lemma~\ref{lemma:K11singlecontourintegral} we have \[\mathbb{K}_{a,0,0}^{-1}(\mathbf{w}_0, \mathbf{b}_1 + 2ue_1 + 2ve_2)= \frac{1}{2\pi i} \int_{\circlecontour_1} \frac{dz}{z} \frac{(a+z)\,z^u\,w_0(z)^v}{a\sqrt{(2(a + a^{-1}) + z+z^{-1})^2 - 4}}\] where $w_0(z)$ is defined in the statement of Lemma~\ref{lemma:K11singlecontourintegral}.
 Since $z + z^{-1}$ is real and greater that $-2$ on the torus, we have $(2(a + a^{-1}) + (z+z^{-1}))^2 - 4 > 0$ for $z \in \circlecontour_1$, so there are no branch cuts on the contour of integration. Let \begin{equation}
 \phi_a(z) = \sqrt{z(2(a + a^{-1}-1) + z+z^{-1})}\sqrt{z(2(a + a^{-1}+1) + z+z^{-1})}
\end{equation} where these square roots are the principal branch. Note that $2(a + a^{-1}-1) + z+z^{-1} - 2$ and $2(a + a^{-1}+1) + z+z^{-1}$ are both real and positive for $|z| = 1$. Then for $z \in \circlecontour_1$ we have \[
\phi_a(z) = z\sqrt{(2(a + a^{-1}) + z+z^{-1})^2 - 4}
\] and therefore \[w_0(z) = \frac{1}{2z}\left(\phi_a(z) - (2(a + a^{-1}) + (z+z^{-1}))\right)\] for $z\in \circlecontour_1$. We can extend $z\sqrt{(2(a + a^{-1}) + z+z^{-1})^2 - 4}$ to the disk $\{|z| \leq 1\}$ by $\phi_a(z)$, except where $\phi_a(z)$ has branch cuts. There are branch cuts when \[
z(2(a + a^{-1}-1) + z+z^{-1}) \in \mathbb{R}_{\leq 0} \text{ or } z(2(a + a^{-1}+1) + z+z^{-1}) \in \mathbb{R}_{\leq 0}.
\] If $z(2(a + a^{-1}\pm 1) + z+z^{-1}) \in \mathbb{R}$ then $\mathrm{Im}(z^2 + 2(a + a^{-1}\pm 1)z ) = 0$, so $(z - \overline{z})(z + \overline{z} + 2(a + a^{-1}\pm 1)) = 0$. Hence either $z \in \mathbb{R}$ or $\mathrm{Re}(z) = -(a + a^{-1}\pm 1) < -1$. So inside the unit disk the only branch cuts are on the real line. We find $z^2 + 2(a + a^{-1} \pm 1)z + 1 < 0$ when $-(a+a^{-1} \pm 1) - \sqrt{(a+a^{-1}\pm 1)^2 - 1} < z < -(a+a^{-1} \pm 1) + \sqrt{(a+a^{-1}\pm 1)^2 - 1}$. We have \begin{align*}-(a+a^{-1} + 1) - \sqrt{(a+a^{-1}+ 1)^2 - 1} &< -(a+a^{-1} -1) - \sqrt{(a+a^{-1}- 1)^2 - 1} \\
 &< -(a+a^{-1} - 1)+\sqrt{(a+a^{-1}- 1)^2 - 1} \\
 &< -(a+a^{-1} + 1) + \sqrt{(a+a^{-1}+ 1)^2 - 1}.\end{align*} For $-(a+a^{-1} - 1) - \sqrt{(a+a^{-1}- 1)^2 - 1} < z < -(a+a^{-1} - 1) + \sqrt{(a+a^{-1}- 1)^2 - 1}$ the branch cuts effectively cancel. For $z < -(a+a^{-1} - 1) - \sqrt{(a+a^{-1}- 1)^2 - 1}$ we do not have $|z| \leq 1$. So the only branch cut inside the contour is along the real axis between $z=z_1$ and $z=z_2$ with \begin{align*}\begin{split} z_1 &= -(a + a^{-1} - 1) + \sqrt{(a+a^{-1} -1)^2 - 1} \\
 z_2 &= -(a + a^{-1} + 1) + \sqrt{(a + a^{-1} + 1)^2 - 1}.\end{split}\end{align*} 
 
We have \[\mathbb{K}_{a,0,0}^{-1}(\mathbf{w}_0, \mathbf{b}_1 + 2ue_1 + 2ve_2)= \frac{1}{2\pi i} \int_{\circlecontour_1} \frac{(a+z)\,z^u\, (\frac{1}{2z}(\phi_a(z) - 2((a + a^{-1})z + z^2+1)))^v}{a\phi_a(z)}dz.\] The integrand is meromorphic in the unit disk except when $z_1 \leq z \leq z_2$. We can show that $(\frac{1}{2z}(\phi_a(z) - 2((a + a^{-1})z + z^2+1)))^v$ has a zero of order $v$ at $z=0$. Since $u+v \geq 0$ and $\phi_a(z)$ has no zeros in the unit disk except on the branch cut, the integrand has no poles outside of the branch cut, and so we can deform the contour to surround the branch cut. As $z$ tends to the branch cut in the upper half plane, we have \begin{align*}
 \phi_a(z) = &\sqrt{z(2(a + a^{-1}-1) + z+z^{-1})}\sqrt{z(2(a + a^{-1}+1) + z+z^{-1})}\\
 &= i\sqrt{z(2(a + a^{-1}-1) + z+z^{-1})}\sqrt{-z(2(a + a^{-1}+1) + z+z^{-1})} \\
  &= i\sqrt{2(a + a^{-1})z - 2z + z^2+1)}\sqrt{-2(a + a^{-1})z - 2z - z^2-1)} \\
 &= i \sqrt{4z^2 - (2(a+a^{-1})z + z^2 + 1)^2}
 \end{align*} while for $z$ tending to the branch cut in the lower half plane we have \[
  \phi_a(z) =  - i \sqrt{4z^2 - (2(a+a^{-1})z + z^2 + 1)^2}
 \] Let \[
\theta_a(z) = \frac{1}{2z} \left(i \sqrt{4z^2 - (2(a+a^{-1})z + z^2 + 1)^2} - (2(a + a^{-1})z + z^2+1)\right).
\]
So we obtain \begin{equation*}
\mathbb{K}_{a,0,0}^{-1}(\mathbf{w}_0, \mathbf{b}_1 + 2ue_1 + 2ve_2)= \frac{1}{2\pi a} \int_{z_1}^{z_2} \frac{(a+z)\,z^u\, (\theta_a(z)^v + \overline{\theta_a(z)}^v)}{ \sqrt{4z^2 - (2(a+a^{-1})z + z^2 + 1)^2}} dz
\end{equation*} as required. 
\end{proof} For the case where $-u + v > 0$, we have a similar result. 
\begin{theorem}\label{thm:kgasinverserealuneg}
 Let $\mathbf{w}_0\in \mathtt{W}_0$ and $\mathbf{b}_1\in \mathtt{B}_1$ be vertices in the same fundamental domain. Take $u,v\in\mathbb{Z}$ with $v\geq 0$ and $-u+v > 0$. For $a < 1$, let $z_1,z_2$ and $\theta_a(z)$ be as in Theorem \ref{thm:kgasinversereal}. Then we can write
\begin{equation}
\mathbb{K}_{a,0,0}^{-1}(\mathbf{w}_0, \mathbf{b}_1 + 2ue_1 + 2ve_2)= \frac{1}{2\pi a} \int_{z_1}^{z_2} \frac{(a+z^{-1})\,z^{-u}\, (\theta_a(z)^v + \overline{\theta_a(z)}^v)}{ \sqrt{4z^2 - (2(a+a^{-1})z + z^2 + 1)^2}} dz
\end{equation}
\end{theorem}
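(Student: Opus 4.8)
The plan is to reduce this case to Theorem~\ref{thm:kgasinversereal} by exploiting the $z\leftrightarrow z^{-1}$ symmetry of the single-contour integral in Lemma~\ref{lemma:K11singlecontourintegral}. Starting from
\[
\mathbb{K}_{a,0,0}^{-1}(\mathbf{w}_0, \mathbf{b}_1 + 2ue_1 + 2ve_2)= \frac{1}{2\pi i} \int_{\circlecontour_1} \frac{dz}{z} \frac{(a+z)\,z^u\,w_0(z)^v}{a\sqrt{(2(a + a^{-1}) + z+z^{-1})^2 - 4}},
\]
I would substitute $z\mapsto z^{-1}$. The unit circle $\circlecontour_1$ is carried to itself with orientation preserved, $dz/z$ is invariant, and since $w_0(z)$ and the square root in the denominator depend on $z$ only through $z+z^{-1}$ they are unchanged; only the factor $(a+z)z^u$ becomes $(a+z^{-1})z^{-u}$. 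This already produces exactly the integrand claimed in the statement, so it remains to redo the deformation onto the real axis.

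Next I would run the contour-deformation argument from the proof of Theorem~\ref{thm:kgasinversereal} essentially verbatim. On $\circlecontour_1$ one rewrites $z\sqrt{(2(a+a^{-1})+z+z^{-1})^2-4}$ as $\phi_a(z)$ and analytically continues into the unit disk; as established there, the only branch cut of $\phi_a$ inside the disk is the segment $[z_1,z_2]\subset\mathbb{R}$, $\phi_a$ has no zeros in the punctured disk away from this cut and $\phi_a(0)=\pm1\neq 0$, while $w_0(z)=\tfrac{1}{2z}\bigl(\phi_a(z)-2((a+a^{-1})z+z^2+1)\bigr)$ vanishes to order $v$ at the origin. Hence inside the disk, off the cut, the only place the integrand could fail to be holomorphic is $z=0$: there $(a+z^{-1})z^{-u}$ has a pole of order $\max(u+1,0)$ and $w_0(z)^v$ a zero of order $v$, so the integrand is holomorphic at $0$ precisely when $v-u-1\geq 0$, i.e. when $-u+v>0$, which is our hypothesis. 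One may then collapse $\circlecontour_1$ onto a contour wrapping $[z_1,z_2]$ without crossing singularities, take the boundary values $\phi_a\to\pm i\sqrt{4z^2-(2(a+a^{-1})z+z^2+1)^2}$ and $w_0\to\theta_a,\ \overline{\theta_a}$ from the two sides of the cut exactly as in Theorem~\ref{thm:kgasinversereal}, and add the two contributions to obtain the stated real integral.

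The argument is almost entirely bookkeeping and I do not expect a genuine obstacle. The one point that needs care is the power count at $z=0$ after the substitution: it is what singles out $-u+v>0$ (rather than $u+v\geq 0$) as the correct hypothesis, and it is worth checking the edge cases $u=0$ and $u=-1$ explicitly. One should also record that, because $w_0$ is literally invariant under $z\mapsto z^{-1}$, the boundary-value computation of $\phi_a$ and $w_0$ on the two sides of $[z_1,z_2]$ is unchanged, so that step may simply be cited rather than repeated. (As a consistency check, both Theorem~\ref{thm:kgasinversereal} and the present theorem apply when $0\le u<v$ and must then yield the same value of $\mathbb{K}_{a,0,0}^{-1}(\mathbf{w}_0, \mathbf{b}_1 + 2ue_1 + 2ve_2)$.)
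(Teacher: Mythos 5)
Your proposal is correct and follows essentially the same route as the paper: the paper likewise applies the substitution $z\mapsto z^{-1}$ to the formula of Lemma~\ref{lemma:K11singlecontourintegral}, notes that $w_0(z)$ and the denominator are invariant, and then completes the contour deformation exactly as in Theorem~\ref{thm:kgasinversereal}. Your explicit pole count at $z=0$ (holomorphy there requiring $v-u-1\ge 0$, i.e.\ $-u+v>0$) is the one step the paper leaves implicit, and you have it right.
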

\begin{proof}
 As for the proof of Theorem \ref{thm:kgasinversereal}, we start with the integral formula from \ref{lemma:K11singlecontourintegral}. Then we apply the change of variables $z\rightarrow z^{-1}$. Noting that $w_0(z)$ and the denominator of the integrand are invariant under this change of variables, we obtain 
\[\mathbb{K}_{a,0,0}^{-1}(\mathbf{w}_0, \mathbf{b}_1 + 2ue_1 + 2ve_2)= \frac{1}{2\pi i} \int_{\circlecontour_1} \frac{dz}{z} \frac{(a+z^{-1})\,z^{-u}\,w_0(z)^v}{a\sqrt{(2(a + a^{-1}) + z+z^{-1})^2 - 4}}\] The proof is completed in the same way as for Theorem~\ref{thm:kgasinversereal}.
\end{proof}

\subsection{Asymptotics}\label{sec:Kgasinverseasym}

Now we find the asymptotics of these integrals as $a$ tends to 1 from below. First we find the asymptotics for $u = v = 0$. We will use this in the computation of the asymptotics for the other cases.

\begin{theorem}\label{thm:K11asymptotics}
 Let $\mathbf{w}_0\in \mathtt{W}_0$ and $\mathbf{b}_1\in \mathtt{B}_1$ be vertices in the same fundamental domain. Let $\mathbb{K}_{a,0,0}^{-1}(\mathbf{w}_0, \mathbf{b}_1)$ be as defined in Equation~\ref{eq:gastranslationinvariantoriginal}. Then 
 \[a \mathbb{K}_{a,0,0}^{-1}(\mathbf{w}_0, \mathbf{b}_1) = \frac{1}{4} - \frac{1}{2\pi}\bconst m^{-1/2}(- \log (\bconst m^{-1/2}) +2\log 2 ) + O(m^{-1}\log m). \]
\end{theorem}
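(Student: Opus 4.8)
The plan is to reduce the $u=v=0$ case to an explicit one-variable trigonometric integral and then carry out an elementary but careful asymptotic analysis as $a = 1 - \bconst m^{-1/2} \to 1$. Starting from the single-contour formula of Lemma~\ref{lemma:K11singlecontourintegral} with $u=v=0$ and substituting $z=e^{i\phi}$ on the unit circle (on which the radicand $(2(a+a^{-1})+z+z^{-1})^2-4$ is real and positive, so the principal square root is the ordinary positive one), the $i\sin\phi$ part of the numerator is odd in $\phi$ and drops out, leaving
\[
a\,\mathbb{K}_{a,0,0}^{-1}(\mathbf{w}_0,\mathbf{b}_1)=\frac{1}{2\pi}\int_0^{\pi}\frac{(a+\cos\phi)\,d\phi}{\sqrt{(a+a^{-1}-1+\cos\phi)(a+a^{-1}+1+\cos\phi)}}.
\]
Write $\kappa=a+a^{-1}-2$ and $\mu=a^{-1}-1$, so that $a+a^{-1}-1=1+\kappa$ and $a+\cos\phi=(1+\kappa+\cos\phi)-\mu$, and note the elementary expansions $\kappa=\bconst^2m^{-1}+O(m^{-3/2})$, $\mu=\bconst m^{-1/2}+O(m^{-1})$. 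Splitting the numerator accordingly gives
\[
a\,\mathbb{K}_{a,0,0}^{-1}(\mathbf{w}_0,\mathbf{b}_1)=\frac{1}{2\pi}J_1(\kappa)-\frac{\mu}{2\pi}J_2(\kappa),
\]
where $J_1(\kappa)=\int_0^{\pi}\sqrt{(1+\kappa+\cos\phi)/(3+\kappa+\cos\phi)}\,d\phi$ and $J_2(\kappa)=\int_0^{\pi}\big((1+\kappa+\cos\phi)(3+\kappa+\cos\phi)\big)^{-1/2}\,d\phi$. The term $J_1$ is harmless: its integrand is bounded uniformly for small $\kappa\ge0$ and converges pointwise, so by dominated convergence $J_1(\kappa)\to J_1(0)$, and the substitutions $1+\cos\phi=2\cos^2(\phi/2)$ then $u=\sin(\phi/2)$ evaluate $J_1(0)=2\int_0^1(2-u^2)^{-1/2}\,du=\pi/2$, whence $\tfrac{1}{2\pi}J_1(0)=\tfrac14$; the only near-degenerate region is $\phi$ near $\pi$ (where $1+\cos\phi\sim(\pi-\phi)^2/2$), and a direct estimate there gives $J_1(\kappa)-J_1(0)=O(\kappa\log\kappa)=O(m^{-1}\log m)$.

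The heart of the matter is $J_2(\kappa)$, which diverges logarithmically as $\kappa\to0$. I would substitute $t=1+\cos\phi$, so that $d\phi=-dt/\sqrt{t(2-t)}$ and $1+\kappa+\cos\phi=t+\kappa$, $3+\kappa+\cos\phi=t+2+\kappa$, obtaining $J_2(\kappa)=\int_0^2\big(t(t+\kappa)(2-t)(t+2+\kappa)\big)^{-1/2}\,dt$, in which the singularity is concentrated at $t=0$ and carried by the factor $1/\sqrt{t(t+\kappa)}$, whose elementary primitive is $2\log(\sqrt t+\sqrt{t+\kappa})$. Then: (i) replace $\sqrt{(2-t)(t+2+\kappa)}$ by $\sqrt{4-t^2}$, an error of $O(\kappa\log\kappa)$ since the difference of reciprocals is $O(\kappa/\sqrt{2-t})$, which is integrable against $1/\sqrt{t(t+\kappa)}$ near both endpoints; (ii) split off the smooth, convergent piece $\int_0^2\big(\tfrac{1}{\sqrt{4-t^2}}-\tfrac12\big)\,dt/\sqrt{t(t+\kappa)}$, whose $\kappa\to0$ limit is $\int_0^2\big(\tfrac{1}{\sqrt{4-t^2}}-\tfrac12\big)\,dt/t=\tfrac12\log2$ (via $t=2\sin\theta$, which reduces it to $\tfrac12\int_0^{\pi/2}\tan(\theta/2)\,d\theta$); and (iii) evaluate the remaining $\tfrac12\int_0^2 dt/\sqrt{t(t+\kappa)}=\log(\sqrt2+\sqrt{2+\kappa})-\tfrac12\log\kappa=\tfrac32\log2-\tfrac12\log\kappa+O(\kappa)$. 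Summing, $J_2(\kappa)=2\log2-\tfrac12\log\kappa+O(\kappa\log\kappa)$. Since $-\tfrac12\log\kappa=-\log(\bconst m^{-1/2})+O(m^{-1/2})$ and $\mu=\bconst m^{-1/2}+O(m^{-1})$, this gives $-\tfrac{\mu}{2\pi}J_2(\kappa)=-\tfrac{1}{2\pi}\bconst m^{-1/2}\big(2\log2-\log(\bconst m^{-1/2})\big)+O(m^{-1}\log m)$, and combining with $\tfrac{1}{2\pi}J_1(\kappa)=\tfrac14+O(m^{-1}\log m)$ yields the stated formula.

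The main obstacle is the analysis of $J_2$ in steps (i)--(iii): one must extract not merely the $\log\kappa$ divergence but the exact constant $2\log2$, and one must verify that the replacement in (i) and each subtraction introduce errors no worse than $O(\kappa\log\kappa)=O(m^{-1}\log m)$ --- in particular that $\kappa/\sqrt{2-t}$ and the integrand of the smooth piece remain integrable against the singular factor $1/\sqrt{t(t+\kappa)}$ uniformly in small $\kappa$, both near $t=0$ and near $t=2$. Everything else --- the trigonometric reduction, the value $J_1(0)=\pi/2$, and the expansions of $\kappa$ and $\mu$ in $m$ --- is routine bookkeeping.
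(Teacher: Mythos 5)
Your argument is correct, and it reaches the stated expansion by a genuinely different route from the paper. The paper first passes to the real integral of Theorem~\ref{thm:kgasinversereal} over $[z_1,z_2]$, applies the chain of substitutions $z=(t-1)/(t+1)$, $u=t^2$, etc., to split $S(a)=S_1(a)+S_2(a)$ with $S_1(a)=\pi/2$ exactly, and then massages $S_2(a)$ into the closed form $a\,\mathbb{K}_{a,0,0}^{-1}(\mathbf{w}_0,\mathbf{b}_1)=\tfrac14+\tfrac{a-a^{-1}}{2\pi(a+a^{-1})}\int_0^1((1-4c^2y^2)(1-y^2))^{-1/2}dy$ (Equation~\ref{eq:kgasinversesimplified}), quoting the standard $\log(4/\sqrt{1-k^2})$ asymptotics of the complete elliptic integral near modulus $1$. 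You instead parametrize the unit circle directly, isolate the regular part $J_1$ (whose limit $\pi/2$ you compute correctly, with the needed $O(\kappa\log\kappa)$ rate from the near-degeneracy at $\phi=\pi$) and extract both the $-\tfrac12\log\kappa$ divergence and the constant $2\log 2$ of $J_2$ by elementary means, using the explicit primitive $2\log(\sqrt t+\sqrt{t+\kappa})$ and the convergent correction $\tfrac12\log 2$; all your error controls (steps (i)--(iii)) are sound and land within $O(\kappa\log\kappa)=O(m^{-1}\log m)$. What your approach buys is self-containedness: no elliptic-integral asymptotics are needed. What the paper's approach buys is the exact identity in Equation~\ref{eq:kgasinversesimplified}, and, more importantly, the intermediate objects $S(a,z)$, $S_1$, $S_2$ and Lemma~\ref{lemma:Sazellipticasymp}, which are reused essentially verbatim in the proof of Theorem~\ref{thm:Kgasinverseasym1} for general $(u,v)$; your decomposition does not obviously generalize to the incomplete integrals $S(a,z)$ needed there, so it replaces only this one theorem, not the surrounding machinery.
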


\begin{proof}
From Theorem~\ref{thm:kgasinversereal} we have \begin{equation}\label{eq:kgasinverserealunsimplified}
\mathbb{K}_{a,0,0}^{-1}(\mathbf{w}_0, \mathbf{b}_1)= \frac{1}{\pi a} \int_{z_1}^{z_2} \frac{ a+z}{ \sqrt{4z^2 - (2(a+a^{-1})z + z^2 + 1)^2}} dz
\end{equation} Let \begin{equation}\label{eq:Sa}
S(a) = \int_{z_1}^{z_2} \frac{ a+z}{  \sqrt{4z^2 - (2(a+a^{-1})z + z^2 + 1)^2}} dz
\end{equation} so $a \mathbb{K}_{a,0,0}^{-1}(\mathbf{w}_0, \mathbf{b}_1) = S(a)/\pi$. First we make the substitution $z = (t-1)/(t+1)$ and set $t_i = (1+z_i)/(1-z_i)$ to obtain \[
S(a) = \int_{t_1}^{t_2} \left(a + \frac{t-1}{t+1}\right)\frac{1}{\sqrt{-(a + a^{-1} + 2)(a+a^{-1})(t^2 -t_2^2)(t^2 - t_1^2)}}dt.
\] We split this integral into two integrals $S(a) = S_1(a) + S_2(a)$ where \[
S_1(a) = \int_{t_1}^{t_2} \left( \frac{-2t}{t^2-1}\right)\frac{1}{\sqrt{-(a + a^{-1} + 2)(a+a^{-1})(t^2 -t_2^2)(t^2 - t_1^2)}}dt.
\] and
\[
S_2(a) = \int_{t_1}^{t_2} \left(a + \frac{t^2 + 1}{t^2 - 1}\right)\frac{1}{\sqrt{-(a + a^{-1} + 2)(a+a^{-1})(t^2 -t_2^2)(t^2 - t_1^2)}}dt.
\] We will deal with these separately. First we look at $S_1(a)$. We make a change of variables $u = t^2$ to obtain \[
S_1(a) = \int_{t_1^2}^{t_2^2} \left( \frac{1}{1-u}\right)\frac{1}{\sqrt{-(a + a^{-1} + 2)(a+a^{-1})(u -t_2^2)(u - t_1^2)}}du
\] and another change of variables $u = \frac{1}{2}(t_1^2 + t_2^2) + \frac{1}{2}(t_2^2 - t_1^2)\frac{2v}{v^2 + 1}$ to obtain \[
S_1(a) = \int_{-1}^{1} \frac{\sqrt{(a + a^{-1} +1)^2 - 1}}{(a + a^{-1} +1)(v^2 + 1) - 2v} dv.
\] Now we let $\delta = (a + a^{-1} + 1)^{-1}$ and set $y = v-\delta$.
Then we obtain \[
S_1(a) = \int_{-1-\delta}^{1-\delta}\frac{\sqrt{1-\delta^2}}{y^2 + 1 - \delta^2} dy = \frac{\pi}{2}.
\] For $S_2(a)$ we make the substitution $t = t_1t_2/\sqrt{t_2^2 - (t_2^2 - t_1^2)y^2}$ and note that we have $t_1^2 = (a+a^{-1}-2)/(a+a^{-1})$ and $t_2^2 = (a+a^{-1})/(a+a^{-1}+2)$ to obtain
 \[
S_2(a) = \int_0^1 \left(a-1 + \frac{a + a^{-1} - 2}{-1 + \frac{2}{a+a^{-1}} y^2}\right) \frac{1}{(a + a^{-1})\sqrt{\left(1 - \frac{4}{(a+a^{-1})^2}y^2\right)(1-y^2)}}dy.
\] Recall that $c = 1/(a+a^{-1})$. Let $k=2c$. We can write \[
S_2(a) = \frac{a-1}{a+a^{-1}} \int_0^1 \frac{1}{\sqrt{(1 - k^2 y^2)(1-y^2)}}dy 
- \frac{a + a^{-1} - 2}{a+a^{-1}}\int_0^{1}\frac{1}{(1-ky^2)\sqrt{(1-k^2 y^2)(1 - y^2)}} dy.
\] The first term is an elliptic integral. To deal with the second term we first compute \begin{align*}
2\int_0^1 &\frac{1}{(1-ky^2)\sqrt{(1-k^2 y^2)(1 - y^2)}} dy -  \int_0^1 \frac{1}{\sqrt{(1 - k^2 y^2)(1-y^2)}}dy \\
&\hspace{1cm}= \int_0^1 \left(\frac{1 +k y^2}{1 - k y^2}\right) \frac{1}{\sqrt{(1-k^2 y^2)(1 - y^2)}} \\
&\hspace{1cm}= \int_1^{\frac{1+k}{1-k}}u (u^2 -1)\frac{1}{\sqrt{(k+1)^2 - (k-1)^2 u^2}} du \\
& \hspace{1cm}= \int_1^{\left(\frac{1+k}{1-k}\right)^2} \frac{1}{2(1-k)}\frac{1}{\sqrt{\left(\left(\frac{1+k}{1-k}\right)^2 - v\right)(v-1)}} dv \\
&\hspace{1cm}= \frac{\pi}{2(1-k)}
\end{align*} where we use the substitutions $u = (1 + ky^2)/(1 - ky^2)$ and $v = u^2$. Therefore, noting that $1-k = (a+a^{-1}-2)/(a+a^{-1})$ we have \[
S_2(a) = \frac{a - a^{-1}}{2(a + a^{-1})} \int_0^1 \frac{1}{\sqrt{(1 - k^2 y^2)(1-y^2)}} dy - \frac{\pi}{4}
\] So recalling that $a \mathbb{K}_{a,0,0}^{-1}(\mathbf{w}_0, \mathbf{b}_1) = (S_1(a) + S_2(a))/\pi$ we have \begin{equation}
a \mathbb{K}_{a,0,0}^{-1}(\mathbf{w}_0, \mathbf{b}_1) = \frac{1}{4} + \frac{a - a^{-1}}{2\pi (a + a^{-1})} \int_0^1 \frac{1}{\sqrt{(1 - 4 c^2 y^2)(1-y^2)}} dy.\label{eq:kgasinversesimplified}
\end{equation} This integral is a complete elliptic integral of the first kind and its asymptotics are well known. We find that \[
 a \mathbb{K}_{a,0,0}^{-1}(\mathbf{w}_0, \mathbf{b}_1) = \frac{1}{4} - \frac{h}{2\pi}(- \log h +2\log 2 ) + O(h^2\log h). 
\] where $a = 1-h$. The result follows.
\end{proof}

We now make the following definitions. Let \begin{equation}\label{eq:Sxia}
S(a, z) = \int_{z_1}^{z} \frac{ a+\xi}{  \sqrt{4\xi^2 - (2(a+a^{-1})\xi + \xi^2 + 1)^2}} dz.
\end{equation} Note that $S(a, z_2) = S(a)$, defined in Equation~\ref{eq:Sa}, and $S(a, z_1) = 0$. Also note that $S(a,z)$ is continuous in $z$. For $z_1 \leq z \leq z_2$, let \begin{equation}
g^{(u,v)}(a,z)  = \frac{z^u(\theta_a(z)^{v}+ \theta_a(z)^{-v})}{2} = \frac{z^u(\theta_a(z)^{|v|}+ \overline{\theta_a(z)}^{|v|})}{2}\label{eq:guvaz}
\end{equation} where $\theta_a(z)$ is defined in Equation~\ref{eq:thetaa}. Note that $g^{(u,v)}(a,z)$ = $g^{(u,-v)}(a,z)$, and that $g^{(u,v)}(a,z)$ is real for $z_1 \leq z \leq z_2$. 

Below we prove a lemma that will allow us to find the asymptotics of $\mathbb{K}_{a,0,0}^{-1}(\mathbf{w}_0, \mathbf{b}_1 + 2ue_1 + 2ve_2)$ for general $u,v$, from the asymptotics of $g^{(u,v)}(a,z)$ and $S(a,z)$.

\begin{lemma}\label{lemma:intbyparts}
Let $\mathbf{w}_0\in \mathtt{W}_0$ and $\mathbf{b}_1\in \mathtt{B}_1$ be vertices in the same fundamental domain and take $u,v\in\mathbb{Z}$ with $u + |v| \geq 0$. Then we have 
\begin{equation}\label{eq:K11intbypart}
\mathbb{K}_{a,0,0}^{-1}(\mathbf{w}_0, \mathbf{b}_1 + 2ue_1 + 2ve_2) = \frac{1}{\pi a} \left(g^{(u,v)}(a, z_2) S(a, z_2) - \int_{z_1}^{z_2} g^{(u,v)}_2(a,z) S(a, z)dz\right)
\end{equation} where $g^{(u,v)}_2(a,z) = \frac{\partial g^{(u,v)}}{\partial z} (a,z)$.
\end{lemma}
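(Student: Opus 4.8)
The plan is to obtain \eqref{eq:K11intbypart} by integration by parts applied to the real integral formula from Theorem~\ref{thm:kgasinversereal}. Recall from that theorem (valid when $u+v\ge 0$, and from Theorem~\ref{thm:kgasinverserealuneg} when $-u+v>0$, so that together they cover $u+|v|\ge 0$) that
\[
\mathbb{K}_{a,0,0}^{-1}(\mathbf{w}_0, \mathbf{b}_1 + 2ue_1 + 2ve_2)= \frac{1}{\pi a} \int_{z_1}^{z_2} g^{(u,v)}(a,z)\, \frac{\partial S}{\partial z}(a,z)\, dz,
\]
where I would first check that the integrand of the Theorem~\ref{thm:kgasinversereal} formula is exactly $\tfrac{1}{\pi a} g^{(u,v)}(a,z)\,\partial_z S(a,z)$: indeed $\partial_z S(a,z) = (a+z)/\sqrt{4z^2-(2(a+a^{-1})z+z^2+1)^2}$ by the fundamental theorem of calculus applied to the definition \eqref{eq:Sxia}, and $g^{(u,v)}(a,z)=\tfrac12 z^u(\theta_a(z)^v+\theta_a(z)^{-v})=\tfrac12 z^u(\theta_a(z)^{|v|}+\overline{\theta_a(z)}^{|v|})$ by \eqref{eq:guvaz}, which matches the numerator factor $z^u(\theta_a(z)^v+\overline{\theta_a(z)}^v)$ up to the factor $2$ absorbed into $g$ versus the $\tfrac{1}{2\pi a}$ in Theorem~\ref{thm:kgasinversereal}. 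For the $-u+v>0$ case one uses the corresponding identity with $a+z^{-1}$ and $z^{-u}$; since $S(a,z)$ is defined via the same square-root denominator, one checks the primitive still works after the change of variables already performed in the proof of Theorem~\ref{thm:kgasinverserealuneg}.

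Next I would integrate by parts, treating $u=g^{(u,v)}(a,z)$ and $dv = \partial_z S(a,z)\,dz$:
\[
\int_{z_1}^{z_2} g^{(u,v)}(a,z)\,\frac{\partial S}{\partial z}(a,z)\,dz = \Big[g^{(u,v)}(a,z)S(a,z)\Big]_{z_1}^{z_2} - \int_{z_1}^{z_2} g^{(u,v)}_2(a,z)\,S(a,z)\,dz.
\]
The boundary term at $z=z_1$ vanishes because $S(a,z_1)=0$ (noted right after \eqref{eq:Sxia}), leaving $g^{(u,v)}(a,z_2)S(a,z_2)$ at the upper endpoint. Multiplying through by $\tfrac{1}{\pi a}$ gives exactly \eqref{eq:K11intbypart}. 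The one point requiring care is the integrability/regularity of the integrand near the endpoints $z_1,z_2$: the factor $\partial_z S(a,z)$ has an inverse-square-root singularity at $z_1$ and $z_2$ (since $4z^2-(2(a+a^{-1})z+z^2+1)^2$ vanishes there to first order), which is integrable, and consequently $S(a,z)$ is continuous up to the endpoints (again as noted after \eqref{eq:Sxia}) with $S(a,z_1)=0$; meanwhile $g^{(u,v)}(a,z)$ and its $z$-derivative $g^{(u,v)}_2(a,z)$ are bounded near $z_1,z_2$ because $\theta_a(z)$ and $\overline{\theta_a(z)}$ have modulus $1$ there and are smooth in $z$ in a neighborhood of $[z_1,z_2]$ away from $z=0$ (and $0\notin[z_1,z_2]$ for $a<1$). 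Hence the integration by parts is legitimate: the boundary evaluation makes sense and the remaining integral $\int_{z_1}^{z_2} g^{(u,v)}_2(a,z)S(a,z)\,dz$ converges absolutely.

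The main obstacle I anticipate is not the integration by parts itself, which is routine, but making sure all the bookkeeping of cases ($u+v\ge0$ versus $-u+v>0$, and $v$ versus $|v|$) is consistent so that the single clean formula \eqref{eq:K11intbypart} holds uniformly for all $u,v$ with $u+|v|\ge0$; in particular one must verify that the $z^{-u}$, $a+z^{-1}$ version from Theorem~\ref{thm:kgasinverserealuneg} produces the same $g^{(u,v)}(a,z)$ after accounting for $g^{(u,v)}=g^{(u,-v)}$ and the symmetry $\theta_a(z)\overline{\theta_a(z)}=1$. A secondary technical point is confirming the endpoint behaviour of $\theta_a(z)$ near $z_1,z_2$ (where the radicand $4z^2-(2(a+a^{-1})z+z^2+1)^2$ changes sign) so that $g^{(u,v)}$ and $g^{(u,v)}_2$ are genuinely bounded there; this follows from the explicit formula \eqref{eq:thetaa} but should be stated.
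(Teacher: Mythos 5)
Your core argument --- writing the integrand of Theorem~\ref{thm:kgasinversereal} as $\tfrac{1}{\pi a}\,g^{(u,v)}(a,z)\,\partial_z S(a,z)$ and integrating by parts, with the lower boundary term killed by $S(a,z_1)=0$ --- is exactly the paper's proof, and your endpoint-regularity remarks are sound (indeed somewhat more careful than the paper's).

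The one place you go astray is the case bookkeeping, and it would derail the write-up if followed literally. Under the hypothesis $u+|v|\ge 0$ you do not need Theorem~\ref{thm:kgasinverserealuneg} at all: first invoke Lemma~\ref{lemma:K11invequalities} to replace $v$ by $|v|$ (harmless, since $g^{(u,v)}=g^{(u,-v)}$ by \eqref{eq:guvaz}), after which $v\ge 0$ and $u+v=u+|v|\ge 0$ place you squarely inside the hypotheses of Theorem~\ref{thm:kgasinversereal}; this is the paper's route. Your claim that Theorems~\ref{thm:kgasinversereal} and \ref{thm:kgasinverserealuneg} ``together cover $u+|v|\ge0$'' is backwards --- both require $v\ge0$, so neither covers $v<0$ without the symmetry step, while for $v\ge0$ the first alone suffices. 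Worse, if you actually run the integration by parts on the Theorem~\ref{thm:kgasinverserealuneg} integrand, the identity $a+z^{-1}=az^{-1}(a^{-1}+z)$ shows it equals $2a\,g^{(-u-1,v)}(a,z)\,\partial_z S(a^{-1},z)$, so ``the primitive still works'' only in the sense that you land on $S(a^{-1},z)$ with the shifted index $-u-1$ --- that is, on the formula underlying Theorem~\ref{thm:Kgasinverseasym2}, not on \eqref{eq:K11intbypart}. Drop that branch: the two-theorem split belongs to the later asymptotics, not to this lemma.
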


\begin{proof}
Let \[
r(a,z) =\frac{ a+z}{  \sqrt{4z^2 - (2(a+a^{-1})z + z^2 + 1)^2}}\] so $S(a,z) = \int_{z_1}^{z_2} r(a,z')dz'$. Then from Theorem~\ref{thm:kgasinversereal} and Lemma~\ref{lemma:K11invequalities} we have \[
\mathbb{K}_{a,0,0}^{-1}(\mathbf{w}_0, \mathbf{b}_1 + 2ue_1 + 2ve_2) = \frac{1}{\pi a} \int_{z_1}^{z_2} g^{(u,v)}(a,z) r(a,z) dz.
\] We know that $g^{(u,v)}(a,z)$ is a polynomial in $z$ for $u - |v|\geq 0$ and $z_1 \leq z \leq z_2$. Then integration by parts gives \begin{align*}
\int_{z_1}^{z_2} g^{(u,v)}(a,z) r(a,z) dz &= g^{(u,v)}(a,z_2) S(a, z_2) - g^{(u,v)}(a,z_1) S(a, z_1) - \int_{z_1}^{z_2} g^{(u,v)}_2(a,z) \int_{z_1}^z r(a,z') dz' \, dz \\
&= g^{(u,v)}(a,z_2) S(a, z_2) - \int_{z_1}^{z_2} g^{(u,v)}_2(a,z) S(a, z) \, dz
\end{align*} as required.
\end{proof}

We are almost ready to complete our asymptotic analysis. First we state a lemma about the asymptotics of a particular elliptic integral which will appear in the computation of the asymptotics of $S(a,z)$.

\begin{lemma}\label{lemma:Sazellipticasymp}
For $-1 < z < -3+2\sqrt{2}$, let \[
\lambda_z = \sqrt{\frac{a+a^{-1}}{2}}\frac{\sqrt{2z(a+a^{-1}-1) + z^2 + 1}}{1+z}\] for $a$ sufficiently close to 1 such that the argument of the square root is positive. Let $k = 2c = 2/(a+a^{-1})$. Then
\[
\int_0^{\lambda_z} \frac{1}{\sqrt{(1 - k^2 y^2)(1-y^2)}} dy = -\log h + \log\left(\frac{4\sqrt{2}(1+z)}{1-z + \sqrt{-1 -6z - z^2}}\right)  + R(h,z) 
\] as $h \rightarrow 0$, where $a = 1 - h$ and \[|R(h,z)| < A (z+1)^{-2}h^2\log h + A'(z-(-3+2\sqrt{2}))^{-1/2}h^2 + A'' h\log h
\] as $h\to 0$ for some constants $A,A'$ and $A''$ not depending on $z$ or $h$.

\end{lemma}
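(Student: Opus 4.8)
The plan is to identify $\int_0^{\lambda_z}dy/\sqrt{(1-k^2y^2)(1-y^2)}$ as an incomplete elliptic integral of the first kind whose modulus $k=2c=2/(a+a^{-1})$ tends to $1$ as $h\to0$, and to extract its logarithmic blow-up by playing it off against the complete integral $K(k)$, whose $k\to1$ asymptotics are classical. The first step is to record the reciprocal-modulus identity
\[
\int_0^{\lambda}\frac{dy}{\sqrt{(1-k^2y^2)(1-y^2)}}=K(k)-\int_0^{\mu}\frac{dy}{\sqrt{(1-k^2y^2)(1-y^2)}},\qquad \mu=\sqrt{\frac{1-\lambda^2}{1-k^2\lambda^2}},
\]
valid for $0<\lambda<1$, with $k'=\sqrt{1-k^2}$. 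I would prove this directly: write $\int_0^1=\int_0^\lambda+\int_\lambda^1$ and in $\int_\lambda^1\cdots\,dt$ substitute $y=\sqrt{(1-t^2)/(1-k^2t^2)}$; one checks $1-y^2=t^2k'^2/(1-k^2t^2)$, $1-k^2y^2=k'^2/(1-k^2t^2)$, and $dy=-tk'^2(1-k^2t^2)^{-3/2}(1-t^2)^{-1/2}\,dt$, so the integrand is carried to $-dt/\sqrt{(1-k^2t^2)(1-t^2)}$ and the interval $t\in[\lambda,1]$ to $y\in[\mu,0]$.

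Next I would feed in the two asymptotic ingredients. For the complete integral I use the classical $K(k)=\log(4/k')+O\big(k'^2\log(1/k')\big)$ together with the exact formula $k'=h(2-h)/(2-2h+h^2)$ (obtained from $(2-2h+h^2)^2-4(1-h)^2=h^2(2-h)^2$), so that $k'=h+O(h^2)$ and $K(k)=-\log h+\log4+O(h\log h)$. For the second integral, with $\lambda=\lambda_z$, the key point is that $\mu_z$ stays bounded away from $1$ whenever $z$ stays away from $-1$ (its square tends to $C/(1+C)<1$, computed below), so on $[0,\mu_z]$ one may expand $(1-k^2y^2)^{-1/2}=(1-y^2+k'^2y^2)^{-1/2}$ in powers of $k'^2$, giving $\int_0^{\mu_z}\cdots\,dy=\operatorname{arctanh}\mu_z+O\big(k'^2(1-\mu_z^2)^{-1/2}\big)$. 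Using $1-\mu_z^2=\lambda_z^2k'^2/(1-k^2\lambda_z^2)$ to rewrite $\operatorname{arctanh}\mu_z=\log\!\big((1+\mu_z)\sqrt{1-k^2\lambda_z^2}/(\lambda_z k')\big)$, the factor $k'$ cancels against the $-\log k'$ in $K(k)$, leaving
\[
\int_0^{\lambda_z}\frac{dy}{\sqrt{(1-k^2y^2)(1-y^2)}}=\log4-\log(1+\mu_z)-\tfrac12\log(1-k^2\lambda_z^2)+\log\lambda_z+(\text{errors}).
\]

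It then remains to compute the limiting values. Writing $P=a+a^{-1}=2+h^2+O(h^3)$, a short expansion gives $1-\lambda_z^2=h^2D/(2(1+z)^2)+O(h^3)$ with $D:=-1-6z-z^2$, hence $1-k^2\lambda_z^2=k'^2+k^2(1-\lambda_z^2)=h^2(1+C)+O(h^3)$ with $C:=D/(2(1+z)^2)$; the $-\log h$ of the assertion comes precisely from $-\tfrac12\log(1-k^2\lambda_z^2)$. The algebraic miracle is $2(1+z)^2+D=(1-z)^2$, which yields $1+C=(1-z)^2/(2(1+z)^2)$, $\mu_z^2\to C/(1+C)=D/(1-z)^2$, and $\lambda_z\to1$. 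Substituting these limits,
\[
\log4-\log\!\frac{1-z+\sqrt{-1-6z-z^2}}{1-z}-\tfrac12\log\!\frac{(1-z)^2h^2}{2(1+z)^2}=-\log h+\log\!\frac{4\sqrt{2}\,(1+z)}{1-z+\sqrt{-1-6z-z^2}},
\]
which is the claimed leading part.

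Finally I would bound the remainder by tracking three sources: the $O\big(k'^2\log(1/k')\big)$ in $K(k)$; the $O\big(k'^2(1-\mu_z^2)^{-1/2}\big)$ from the expansion on $[0,\mu_z]$; and the $O(h)$-type errors incurred in replacing $\mu_z,\lambda_z,1-k^2\lambda_z^2$ by their limits inside the logarithms, which forces one to keep the $O(h^3)$ remainders in the expansions of $\lambda_z$ and $\mu_z$. The factor $(z+1)^{-2}$ in the stated bound arises because $1-\mu_z^2=1/(1+C)$ degenerates like $(z+1)^2$ as $z\to-1$ (equivalently $\mu_z\to1$), and the factor $(z-(-3+2\sqrt{2}))^{-1/2}$ arises because $\mu_z\asymp\sqrt{D}$ degenerates as $z\to-3+2\sqrt{2}$, so the $h$-correction to $\log(1+\mu_z)$ carries a $1/\mu_z$. \textbf{The main obstacle is exactly this last, uniform error analysis:} the logarithmic divergence and the limiting constant are cleanly captured by the first three steps, but producing the precise $z$-dependence of the constants uniformly on all of $(-1,-3+2\sqrt{2})$ — not merely on compact subintervals — requires carrying the subleading terms with explicit control of their blow-up at both endpoints.
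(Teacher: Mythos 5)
Your route is genuinely different from the paper's. The paper cites Carlson's uniform asymptotic formula for the \emph{incomplete} integral, $F(\lambda,k)=\lambda\log\bigl(4/(\sqrt{1-\lambda^2}+\sqrt{1-k^2\lambda^2})\bigr)+\theta_1 F(\lambda,k)$ with the explicit two-sided bound on $\theta_1$, and then expands $\lambda_z$ and the logarithm; you instead reduce to the \emph{complete} integral via the complementary-argument identity $F(\lambda,k)=K(k)-F(\mu,k)$, $\mu=\sqrt{(1-\lambda^2)/(1-k^2\lambda^2)}$, use the classical $K(k)=\log(4/k')+O(k'^2\log(1/k'))$, and treat $F(\mu_z,k)$ as a perturbation of $\operatorname{arctanh}\mu_z$. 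Your leading-order computation checks out: the identity, the substitution, $k'=h(2-h)/(2-2h+h^2)$, $1-\lambda_z^2=h^2D/(2(1+z)^2)+O(h^3)$, the identity $2(1+z)^2+D=(1-z)^2$, and the resulting constant all agree with the paper's answer. Your approach is more self-contained (no appeal to Carlson's error bound); the paper's buys a relative error estimate that is valid uniformly down to $\lambda=0$, which is exactly where your decomposition is weakest (see below). One computational slip: the first correction to $F(\mu,k)$ is $-\tfrac{k'^2}{2}\int_0^\mu y^2(1-y^2)^{-2}dy=O\bigl(k'^2(1-\mu^2)^{-1}\bigr)$, not $O\bigl(k'^2(1-\mu^2)^{-1/2}\bigr)$; since $(1-\mu_z^2)^{-1}=(1-z)^2/(2(1+z)^2)$ this is still $O\bigl(h^2(1+z)^{-2}\bigr)$ and fits the stated bound, but the exponent matters for the bookkeeping.

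The genuine gap is not merely the tedium of tracking constants: your decomposition fails term by term on the sliver $1+z=O(h)$, which the lemma must cover (the paper applies it down to $z=z_1$, where $1+z_1=\sqrt2\,h+O(h^2)$). There $\lambda_z$ is \emph{not} bounded away from $0$ and $\mu_z$ is not bounded away from $1$: $\log\lambda_z\to-\infty$, $\operatorname{arctanh}\mu_z\to+\infty$, and $k'^2(1-\mu_z^2)^{-1}=(1-k^2\lambda_z^2)/\lambda_z^2\to\infty$, so the individual error estimates you propose all blow up faster than the allowed $A(1+z)^{-2}h^2|\log h|$, even though their sum stays bounded. You need either to exploit the cancellation explicitly or, more simply, to dispose of the region $1+z\le Mh$ separately: there $0\le F(\lambda_z,k)\le K(k)=|\log h|+O(1)$ and the main term is $O(1)$, while the permitted error $A(1+z)^{-2}h^2|\log h|\ge AM^{-2}|\log h|$, so the claim holds trivially for $A$ large enough, and your expansion is then only needed where $\lambda_z$ is bounded below. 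With that split, plus the corrected exponent and the observation that the exact relation $1-k^2\lambda_z^2=\tfrac{h^2}{1-h}(1+C)(1+O(h^2))$ produces a $-h/2$ term (absorbed into $A''h|\log h|$) and that the $(z-(-3+2\sqrt2))^{-1/2}h^2$ term comes from $\mu_z=\sqrt{\mu_z^{(0)\,2}+O(h^2)}$ as $\mu_z^{(0)}\asymp\sqrt{D}\to0$, your argument closes.
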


The proof can be found in Appendix~\ref{sec:Sazellipticasymp}.

\begin{theorem}\label{thm:Kgasinverseasym1}
Let $\mathbf{w}_0\in \mathtt{W}_0$ and $\mathbf{b}_1\in \mathtt{B}_1$ be vertices in the same fundamental domain and take $u,v\in\mathbb{Z}$ with $u + |v| \geq 0$. Then
 \begin{multline}\label{eq:Kgasinverseasym1}
a\mathbb{K}_{a,0,0}^{-1}(\mathbf{w}_0, \mathbf{b}_1 + 2ue_1 + 2ve_2) = \frac{1}{\pi}\int_{-1}^{-3 + 2\sqrt{2}} \frac{g^{(u,v)}(1,z)}{\sqrt{-1-6z-z^2}}dz \\
+ \frac{(-1)^{u + v}}{2\pi} h \log h - \frac{1}{\pi}\left((-3+2\sqrt{2})^u \log 2 + \int_{-1}^{-3 + 2\sqrt{2}} g^{(u,v)}_2(1,z) b(z)dz \right)h + O(h^2 \log h)
\end{multline} where $g^{(u,v)}(a,z)$ is defined in Equation~\ref{eq:guvaz} and \begin{equation}\label{eq:bintegrandpart}b(z) = -\frac{1}{2}\log\left(\frac{4\sqrt{2}(1+z)}{1-z + \sqrt{-1 -6z - z^2}}\right)\end{equation}

\end{theorem}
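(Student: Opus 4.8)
The plan is to start from the integration-by-parts identity of Lemma~\ref{lemma:intbyparts},
\[
\pi a\,\mathbb{K}_{a,0,0}^{-1}(\mathbf{w}_0, \mathbf{b}_1 + 2ue_1 + 2ve_2) = g^{(u,v)}(a, z_2)\, S(a, z_2) - \int_{z_1}^{z_2} g^{(u,v)}_2(a,z)\, S(a, z)\,dz,
\]
and to feed into it a uniform asymptotic expansion of $S(a,z)$ as $h=1-a\to 0$, valid for $z\in[z_1,z_2]$. To obtain that expansion I would rerun the reduction from the proof of Theorem~\ref{thm:K11asymptotics} but with a variable upper endpoint: the substitutions $z=(t-1)/(t+1)$, $u=t^2$, etc.\ split $S(a,z)=S_1(a,z)+S_2(a,z)$ where $S_1(a,z)$ is elementary (an arctangent) and $S_2(a,z)$ reduces to $\frac{a-a^{-1}}{2(a+a^{-1})}$ times the incomplete elliptic integral $\int_0^{\lambda_z}\frac{dy}{\sqrt{(1-k^2y^2)(1-y^2)}}$ plus a further elementary term, with $\lambda_z$ exactly the quantity of Lemma~\ref{lemma:Sazellipticasymp}. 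Applying that lemma together with $\frac{a-a^{-1}}{2(a+a^{-1})}=-\tfrac{h}{2}+O(h^2)$ gives
\[
S(a,z) = S(1,z) + \tfrac{h}{2}\log h + h\,b(z) + E(h,z),
\]
where $S(1,z)=\int_{-1}^{z}\frac{d\xi}{\sqrt{-1-6\xi-\xi^2}}$ (obtained by putting $a=1$ directly in Theorem~\ref{thm:kgasinversereal}, using that $\sqrt{4\xi^2-(4\xi+\xi^2+1)^2}=(1+\xi)\sqrt{-1-6\xi-\xi^2}$) and $E(h,z)$ is controlled by the $R(h,z)$ of Lemma~\ref{lemma:Sazellipticasymp} together with the $h$-order contributions of $S_1(a,z)$ and of the elementary part of $S_2(a,z)$; the key point is that those elementary contributions vanish at $z=z_2$ by the very cancellations that made $S_1(a)=\pi/2$ and the elementary part of $S_2(a)$ equal to $-\pi/4$ in the proof of Theorem~\ref{thm:K11asymptotics}.

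Next I would substitute this into the identity. The boundary term is clean: from the defining relation $z_2^2+2(a+a^{-1}+1)z_2+1=0$ one checks $2(a+a^{-1})z_2+z_2^2+1=-2z_2$ and that the radicand $4z_2^2-(2(a+a^{-1})z_2+z_2^2+1)^2$ vanishes, hence $\theta_a(z_2)=1$ identically, so $g^{(u,v)}(a,z_2)=z_2^{\,u}=(-3+2\sqrt2)^u+O(h^2)$, while $S(a,z_2)=S(a)=\frac{\pi}{4}+\frac{h}{2}\log h - h\log 2 + O(h^2\log h)$ by Theorem~\ref{thm:K11asymptotics}. For the integral term, the leading piece $\int_{z_1}^{z_2}g^{(u,v)}_2(1,z)S(1,z)\,dz$ is evaluated by one further integration by parts to $(-3+2\sqrt2)^u\frac{\pi}{4}-\int_{-1}^{-3+2\sqrt2}\frac{g^{(u,v)}(1,z)}{\sqrt{-1-6z-z^2}}\,dz$, using the special values $\theta_1(-3+2\sqrt2)=1$ and $\theta_1(-1)=-1$ to get $g^{(u,v)}(1,-3+2\sqrt2)=(-3+2\sqrt2)^u$, $g^{(u,v)}(1,-1)=(-1)^{u+v}$, and $S(1,-3+2\sqrt2)=\pi/4$. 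Collecting by order: the $O(1)$ parts telescope to $\frac1\pi\int_{-1}^{-3+2\sqrt2}\frac{g^{(u,v)}(1,z)}{\sqrt{-1-6z-z^2}}dz$; the $\frac{h}{2}\log h$ from the boundary term and from $\frac{h}{2}\log h\int_{z_1}^{z_2}g^{(u,v)}_2(a,z)\,dz=\frac{h}{2}\log h\,(g^{(u,v)}(a,z_2)-g^{(u,v)}(a,z_1))$ combine to $\frac{(-1)^{u+v}}{2}h\log h$; and the $h$-order terms give $-(-3+2\sqrt2)^u\log 2\cdot h$ from the boundary together with $-h\int_{-1}^{-3+2\sqrt2}g^{(u,v)}_2(1,z)\,b(z)\,dz$ from the $h\,b(z)$ term of $S(a,z)$. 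Dividing by $\pi a$ and expanding $1/a=1+O(h)$ against the already-$O(h)$ error yields \eqref{eq:Kgasinverseasym1}.

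The main obstacle is the error accounting near the moving endpoints. I must verify that the term $E(h,z)$, which blows up like $(z+1)^{-2}h^3\log h$ near $z=-1$ and like $(z_2-z)^{-1/2}h^2$ near $z_2$, integrates against the polynomially bounded $g^{(u,v)}_2(a,z)$ to $O(h^2\log h)$; this works because $z_1=-1+\sqrt2\,h+O(h^2)$, so $\int_{z_1}(z+1)^{-2}\,dz=O(h^{-1})$, while the $(z_2-z)^{-1/2}$ singularity is harmless. I must likewise check that the endpoints $z_1=z_1(a)$, $z_2=z_2(a)$ shift only by $O(h)$ and $O(h^2)$ respectively, so the corresponding boundary strips contribute $O(h^2)$, and that the $a$-dependence of $g^{(u,v)}(a,\cdot)$ and of the elementary pieces of $S(a,\cdot)$ enters only at orders that are either $O(h^2)$ or already accounted for. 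The cleanest way to avoid double-counting is to keep $g^{(u,v)}(a,\cdot)$ and $S(a,\cdot)$ unexpanded until the final integrations by parts are performed and only then let $a\to1$, which is where the bookkeeping care is concentrated.
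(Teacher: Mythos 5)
Your proposal follows essentially the same route as the paper's proof: it starts from Lemma~\ref{lemma:intbyparts}, rederives a uniform expansion of $S(a,z)$ with variable upper endpoint via the same substitutions and Lemma~\ref{lemma:Sazellipticasymp}, uses $\theta_a(z_2)=1$ for the boundary term, performs the second integration by parts on the leading-order piece, and carries out the same endpoint error bookkeeping (the $(z+1)^{-2}$ blow-up against $z_1+1=O(h)$ and the harmless $(z_2-z)^{-1/2}$ singularity). The only slip is notational: since your identity already reads $\pi a\,\mathbb{K}_{a,0,0}^{-1}=\cdots$, the final step is division by $\pi$, not by $\pi a$, to reach the stated formula for $a\,\mathbb{K}_{a,0,0}^{-1}$.
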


\begin{proof}
Let $a = 1-h$. We start from Equation~\ref{eq:K11intbypart} and ompute the asymptotics of \[
g^{(u,v)}(a, z_2) S(a, z_2) - \int_{z_1}^{z_2} g^{(u,v)}_2(a,z) S(a, z)dz.\] First we compute \begin{align*}
z_1 &= -1 + \sqrt{2} h + O(h^2) \\
z_2 &= -3 + 2\sqrt{2} + O(h^2)
\end{align*} and $z_2 > -3 + 2\sqrt{2}$ for $0<a<1$. First we note that since from the proof of Theorem~\ref{thm:kgasinversereal} $z_2$ is a root of $2(a+a^{-1}+1)z + z^2 + 1 = 0$, we have $\theta_a(z_2) = \frac{1}{2z_2}(2z_2) = 1$, so \[g^{(u,v)}(a,z_2) =  z_2^u = (-3+2\sqrt{2})^u + O(h^2).\] So from Theorem \ref{thm:K11asymptotics} we have \begin{equation}
g^{(u,v)}(a,z_2) S(a,z_2) =  (-3+2\sqrt{2})^u\left(\frac{\pi}{4} - \frac{h}{2}(- \log h +2\log 2 )\right) + O(h^2\log h).\label{eq:Szaaymp1}
\end{equation}

For the integral term in \ref{eq:K11intbypart}, note that $g^{(u,v)}_2(a,z)$ is a polynomial in $z,z^{-1}, a$ and $a^{-1}$, and $S(a,z)$ is continuous on $[z_1,z_2]$ with $S(a,z_1) = 0$, so the integrand is continuous and evaluates to zero at the lower limit. Also note that the integrand $r(a,z)$ is positive on $(z_1,z_2)$, so $S(a,z) \leq S(a,z_2) \leq \pi/4$ for $h$ sufficiently small. Also $g^{(u,v)}(a,z)$ is bounded in $a$ for $a$ in a closed interval not containing 0 and $z$ in a closed interval not containing 0. So, since $-3 + 2\sqrt{2} < z_2$ and $z_2 = -3 + 2\sqrt{2} + O(h^2)$, we have \[
\left|\int_{-3 + 2\sqrt{2}}^{z_2} g^{(u,v)}_2(a,z) S(a, z)dz\right| = O(h^2)
\]  and so \begin{equation} \int_{z_1}^{z_2} g^{(u,v)}_2(a,z) S(a, z)dz = \int_{z_1}^{-3 + 2\sqrt{2}} g^{(u,v)}_2(a,z) S(a, z)dz + O(h^2)\label{eq:upperz2approx}\end{equation}
Since $g^{(u,v)}(a,z)$ is a function of $a+a^{-1} = 1 + O(h^2)$, we see that $g^{(u,v)}_2(a,z) = g^{(u,v)}_2(1,z) + O(h^2)$. We must find the asymptotics of $S(a,z)$ as $h \rightarrow 0$. To do this we follow the same procedure as in the proof of Theorem~\ref{thm:K11asymptotics}, but keeping track of the upper limits in each integral. We use Lemma \ref{lemma:Sazellipticasymp} to deal with the elliptic integral. We omit the details. We find that for $z_1 \leq z < -3 + 2\sqrt{2}$, \begin{align*}
S(a,z) &= \frac{1}{2}\left(-2\tan^{-1}\left(\frac{4\sqrt{2}z + 3(1+z)\sqrt{-1 -6z - z^2}}{(3+z)(1+3z)}\right) + \sin^{-1}\frac{(1+z)^2}{4z} + 2\tan^{-1}\sqrt{2} \right) \\
&\hspace{1cm}+ \frac{h}{2}\left(\log h - \log\frac{4\sqrt{2}(1+z)}{1-z+ \sqrt{-1 -6z - z^2}}\right) + T(h,z)
\end{align*} where \[|T(h,z)| < C(z+1)^{-2}h^3\log h + C'(z-(-3+2\sqrt{2}))^{-1/2}h^2 + C'' h^2\log h
\] as $h\to 0$ for constants $C,C',C''$. We write \[
S(a,z) = b_0(z) + \frac{ h\log h}{2} + b_2(z) h + T(h,z).
\]
Combining this asymptotic expansion with Equation \ref{eq:upperz2approx}, and integrating the parts of $T(h,z)$ that depend on $z$, noting that $z_1+1 = O(h)$, we have \[
\int_{z_1}^{z_2} g^{(u,v)}_2(a,z) S(a, z)dz = \int_{z_1}^{-3 + 2\sqrt{2}} g^{(u,v)}_2(1,z) \left(b_0(z) + \frac{ h\log h}{2} + b_2(z) h\right)dz + O(h^2\log h)
\] We look at these terms one-by-one. Clearly since $g^{(u,v)}_2(1,z)$ is a polynomial in $z$ and $z^{-1}$, it is bounded with bounded derivative on $[-1, -3 + 2\sqrt{2}]$. Firstly, we can show that $b_0(z)$ is bounded on $[-1, -3 + 2\sqrt{2}]$ with bounded derivative on $[-1, z_1]$. Furthermore $b_0(-1) = 0$. Hence \[
\int_{-1}^{z_1} g^{(u,v)}_2(1,z)b_0(z)dz = O(h^2)\] so \[
\int_{z_1}^{-3 + 2\sqrt{2}} g^{(u,v)}_2(1,z)b_0(z)dz = \int_{-1}^{-3 + 2\sqrt{2}} g^{(u,v)}_2(1,z)b_0(z)dz + O(h^2).\]

For the order $h\log h$ term, we have  \begin{align*}
\int_{z_1}^{-3 + 2\sqrt{2}}  \frac{ g^{(u,v)}_2(1,z) h\log h}{2} dz &= (g^{(u,v)}(1,-3 + 2\sqrt{2}) - g^{(u,v)}(1,z_1))\frac{ h\log h}{2} \\
&= ((-3 + 2\sqrt{2})^u - (-1)^{u + v} )\frac{ h\log h}{2} + O(h^2 \log h)
\end{align*} noting that $g^{(u,v)}(1,z_1) = z_1^u (-1)^{v}$.

For the the order $h$ term, we calculate that $b_2(z) = O(\log(1+z))$ as $z\rightarrow 1$. From this we can show that \[
\int_{-1}^{z_1} g^{(u,v)}_2(1,z)b_2(z)\, dz = O(h\log h).\]

Putting these together, we have \begin{multline}
\int_{z_1}^{z_2} g^{(u,v)}_2(a,z) S(a, z)dz = \int_{-1}^{-3 + 2\sqrt{2}} g^{(u,v)}_2(1,z) \left(b_0(z) + b_2(z) h\right)dz \\
+ (g^{(u,v)}(1,z_2) - (-1)^{u + v} )\frac{ h\log h}{2} + O(h^2\log h)\label{eq:Szaasym2}
\end{multline} We can now simplify this further by using integration by parts again on the constant term. We have \[
\frac{db_0}{dz}(z) = \frac{1}{\sqrt{-1-6z-z^2}} = r(1,z)\] Then \begin{align}\label{eq:constintbyparts}
\begin{split}
\int_{-1}^{-3 + 2\sqrt{2}} g^{(u,v)}_2(1,z)b_0(z) dz &= g^{(u,v)}(1,-3+2\sqrt{2})b_0(-3+2\sqrt{2}) - g^{(u,v)}(1,-1)b_0(-1)\\
&\hspace{2cm} - \int_{-1}^{-3 + 2\sqrt{2}} \frac{g^{(u,v)}(1,z)}{\sqrt{-1-6z-z^2}}dz \\
&= \frac{(-3+2\sqrt{2})^u \pi}{4} - \int_{-1}^{-3 + 2\sqrt{2}} \frac{g^{(u,v)}(1,z)}{\sqrt{-1-6z-z^2}}dz
\end{split}
\end{align}
We are unable to simplify the order $h$ term in the same way, as $b_2(z)$ is undefined at $z = -1$.
Combining Equations \ref{eq:Szaaymp1}, \ref{eq:Szaasym2} and \ref{eq:constintbyparts} we have
 \begin{multline}
g^{(u,v)}(a, z_2) S(a, z_2) - \int_{z_1}^{z_2} g^{(u,v)}_2(a,z) S(a, z)dz = \int_{-1}^{-3 + 2\sqrt{2}} \frac{g^{(u,v)}(1,z)}{\sqrt{-1-6z-z^2}}dz \\
+ \frac{(-1)^{u + v}}{2} h \log h - \left((-3+2\sqrt{2})^u \log 2 + \int_{-1}^{-3 + 2\sqrt{2}} g^{(u,v)}_2(1,z) b_2(z)dz \right)h + O(h^2 \log h).
\end{multline} By Equation~\ref{eq:K11intbypart} to find $a\mathbb{K}_{a,0,0}^{-1}(\mathbf{w}_0, \mathbf{b}_1 + 2ue_1 + 2ve_2)$ we just divide this by $\pi$, then noting that $b(z)=b_2(z)$, the proof is complete.
\end{proof}

We can prove a similar result for when $-u + |v| > 0$.

\begin{theorem}\label{thm:Kgasinverseasym2}
Let $\mathbf{w}_0\in \mathtt{W}_0$ and $\mathbf{b}_1\in \mathtt{B}_1$ be vertices in the same fundamental domain and take $u,v\in\mathbb{Z}$ with $-u + |v| \geq 0$. Then
 \begin{multline}\label{eq:Kgasinverseasym2}
\mathbb{K}_{a,0,0}^{-1}(\mathbf{w}_0, \mathbf{b}_1 + 2ue_1 + 2ve_2) = \frac{1}{\pi}\int_{-1}^{-3 + 2\sqrt{2}} \frac{g^{(-u-1,v)}(1,z)}{\sqrt{-1-6z-z^2}}dz \\
+ \frac{(-1)^{u + v}}{2\pi} h \log h + \frac{1}{\pi}\left((-3+2\sqrt{2})^{-u-1} \log 2 + \int_{-1}^{-3 + 2\sqrt{2}} g^{(-u-1,v)}_2(1,z) b(z)dz \right)h + O(h^2 \log h)
\end{multline} where $g^{(u,v)}(a,z)$ is defined in Equation~\ref{eq:guvaz} and $b(z)$ is as in Equation~\ref{eq:bintegrandpart}.

\end{theorem}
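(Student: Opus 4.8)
The plan is to mirror the proof of Theorem~\ref{thm:Kgasinverseasym1}, starting this time from the real integral formula of Theorem~\ref{thm:kgasinverserealuneg} rather than that of Theorem~\ref{thm:kgasinversereal}. By Lemma~\ref{lemma:K11invequalities} we may assume $v\geq 0$, so the hypothesis $-u+|v|\geq 0$ reads $-u+v\geq 0$; for integers the strict case $-u+v>0$ is exactly the condition $-u-1+v\geq 0$, which is precisely the hypothesis under which Theorem~\ref{thm:Kgasinverseasym1} applies with $u$ replaced by $-u-1$, and the boundary case $-u+v=0$ (i.e.\ $u=v$) is already covered by Theorem~\ref{thm:Kgasinverseasym1} directly. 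So we assume $-u+v>0$ and $v\geq 0$, and write $a=1-h$.

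The key manipulation is the elementary identity $(a+z^{-1})z^{-u}=(a+z)z^{-u-1}-h\,z^{-u-1}(z-1)$. Inserting it into the integrand of Theorem~\ref{thm:kgasinverserealuneg} and recalling $z^{-u-1}(\theta_a(z)^v+\overline{\theta_a(z)}^v)=2g^{(-u-1,v)}(a,z)$ (notation of Equation~\ref{eq:guvaz}), this splits $\mathbb{K}_{a,0,0}^{-1}(\mathbf{w}_0,\mathbf{b}_1+2ue_1+2ve_2)$ into a main term
\[
\frac{1}{\pi a}\int_{z_1}^{z_2} g^{(-u-1,v)}(a,z)\,\frac{a+z}{\sqrt{4z^2-(2(a+a^{-1})z+z^2+1)^2}}\,dz
\]
and a remainder of size $O(h)$,
\[
-\frac{h}{2\pi a}\int_{z_1}^{z_2}\frac{z^{-u-1}(z-1)(\theta_a(z)^v+\overline{\theta_a(z)}^v)}{\sqrt{4z^2-(2(a+a^{-1})z+z^2+1)^2}}\,dz.
\]
The main term is exactly the object analysed in the proof of Theorem~\ref{thm:Kgasinverseasym1} (integrate by parts against $S(a,z)$ as in Lemma~\ref{lemma:intbyparts}, then invoke Theorem~\ref{thm:K11asymptotics} and Lemma~\ref{lemma:Sazellipticasymp}), with $u$ replaced throughout by $-u-1$; it contributes $\tfrac1\pi\int_{-1}^{-3+2\sqrt{2}}\tfrac{g^{(-u-1,v)}(1,z)}{\sqrt{-1-6z-z^2}}\,dz$, a term $\tfrac{(-1)^{-u-1+v}}{2\pi}h\log h=-\tfrac{(-1)^{u+v}}{2\pi}h\log h$, and the order-$h$ term $-\tfrac1\pi\bigl((-3+2\sqrt{2})^{-u-1}\log 2+\int_{-1}^{-3+2\sqrt{2}}g^{(-u-1,v)}_2(1,z)b(z)\,dz\bigr)h$, up to $O(h^2\log h)$.

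It remains to extract the asymptotics of the $O(h)$ remainder. The crucial observation is that $\theta_a(z_1)=-1$: since $z_1$ is a root of $2(a+a^{-1}-1)z+z^2+1=0$, the radical in Equation~\ref{eq:thetaa} vanishes at $z_1$ and $\theta_a(z_1)=\tfrac{1}{2z_1}(-2z_1)=-1$. Combined with $z_1=-1+\sqrt{2}\,h+O(h^2)$ and the fact that at $a=1$ one has $\sqrt{4z^2-(2(a+a^{-1})z+z^2+1)^2}=(z+1)\sqrt{-1-6z-z^2}$, the integrand of the remainder is asymptotic to $\tfrac{2(-1)^{u+v}}{z+1}$ near $z=-1$, so — exactly as in Lemma~\ref{lemma:Sazellipticasymp}, where the integrable $1/\sqrt{z-z_1}$ endpoint singularity sharpens into a $1/(z+1)$ singularity as $h\to0$ — the inner integral behaves like $-2(-1)^{u+v}\log h+C_0+O(h\log h)$ for a constant $C_0$; multiplying by $-h/(2\pi a)$ yields the contribution $\tfrac{(-1)^{u+v}}{\pi}h\log h-\tfrac{C_0}{2\pi}h+O(h^2\log h)$. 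Adding this to the main term, the $h\log h$ coefficients combine to $\tfrac{(-1)^{u+v}}{2\pi}$ and the two order-$h$ constants combine to $+\tfrac1\pi\bigl((-3+2\sqrt{2})^{-u-1}\log 2+\int_{-1}^{-3+2\sqrt{2}}g^{(-u-1,v)}_2(1,z)b(z)\,dz\bigr)$, giving Equation~\ref{eq:Kgasinverseasym2}.

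The main obstacle is this last step: determining the constant $C_0$ in the expansion of the remainder integral, which requires a careful endpoint analysis in the style of Lemma~\ref{lemma:Sazellipticasymp} (substitution $z=(t-1)/(t+1)$, reduction to an elliptic-type integral with a moving endpoint, and matched expansions as $z\to-1$). The logarithmic coefficient $-2(-1)^{u+v}$ is immediate from $\theta_a(z_1)=-1$, but pinning down $C_0$ — equivalently, checking that the same function $b(z)$ of Equation~\ref{eq:bintegrandpart} governs this endpoint and that $C_0=-4\bigl((-3+2\sqrt{2})^{-u-1}\log 2+\int_{-1}^{-3+2\sqrt{2}}g^{(-u-1,v)}_2(1,z)b(z)\,dz\bigr)$ — is the computationally delicate part. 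Everything else is a direct transcription of the arguments already given for Theorem~\ref{thm:Kgasinverseasym1}.
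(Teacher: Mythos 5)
Your decomposition $(a+z^{-1})z^{-u}=(a+z)z^{-u-1}-h\,z^{-u-1}(z-1)$ is a legitimate starting point, but the order-$h$ bookkeeping has a genuine gap, in two places. First, the main term $\frac{1}{\pi a}\int_{z_1}^{z_2}g^{(-u-1,v)}(a,z)\,r(a,z)\,dz$ is, by Lemma~\ref{lemma:intbyparts}, equal to $\mathbb{K}_{a,0,0}^{-1}(\mathbf{w}_0,\mathbf{b}_1+2(-u-1)e_1+2ve_2)$ itself, whereas Theorem~\ref{thm:Kgasinverseasym1} gives the asymptotics of $a$ \emph{times} that quantity. You quote the right-hand side of Equation~\ref{eq:Kgasinverseasym1} directly, dropping the factor $1/a=1+h+O(h^2)$, and thereby lose a term $h\,s_0(-u-1,v)=\frac{h}{\pi}\int_{-1}^{-3+2\sqrt 2}\frac{g^{(-u-1,v)}(1,z)}{\sqrt{-1-6z-z^2}}\,dz$ at order $h$ (generically nonzero, e.g.\ $h/4$ for $u=v=0$). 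Second, the constant $C_0$ in the remainder is never computed; it is asserted to be whatever makes the final formula close. But the value you assert is not the true one: for the identity $T_1+T_2=\mathbb{K}_{a,0,0}^{-1}$ to hold with the \emph{corrected} main term, one needs $C_0=-4\bigl((-3+2\sqrt{2})^{-u-1}\log 2+\int_{-1}^{-3+2\sqrt{2}}g^{(-u-1,v)}_2(1,z)b(z)\,dz\bigr)+2\int_{-1}^{-3+2\sqrt{2}}\frac{g^{(-u-1,v)}(1,z)}{\sqrt{-1-6z-z^2}}\,dz$. Your two errors cancel in the sum, which is why you land on the correct Equation~\ref{eq:Kgasinverseasym2}, but neither intermediate claim is justified, and the genuinely hard step (the endpoint analysis producing $C_0$) is exactly the one left undone. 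A smaller issue: the boundary case $-u+|v|=0$ is not ``covered by Theorem~\ref{thm:Kgasinverseasym1} directly,'' since the two theorems give different-looking expansions whose agreement is the content of relations 2--4 of Lemma~\ref{lemma:asymcoeffsrelations}, which the paper derives \emph{from} both theorems.

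For comparison, the paper avoids the remainder integral entirely: it rewrites $\frac{1}{a}(a+z^{-1})z^{-u}=(a^{-1}+z)z^{-u-1}$, so the integrand has the same structure as in Theorem~\ref{thm:kgasinversereal} but with weight $a^{-1}+\xi$ in place of $a+\xi$; integrating by parts against $S(a^{-1},z)$ and noting that the elliptic-integral piece of $S$ carries the prefactor $(a-a^{-1})/(2(a+a^{-1}))$, which changes sign under $a\mapsto a^{-1}$, one gets $S(a^{-1},z)=b_0(z)-\tfrac{h\log h}{2}-b_2(z)h+O(h^2\log h)$ and the theorem follows by flipping the signs of the $h$ and $h\log h$ terms in Theorem~\ref{thm:Kgasinverseasym1} with $u$ replaced by $-u-1$. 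If you want to rescue your route, you must restore the $1/a$ factor in the main term and actually carry out the matched endpoint expansion for $C_0$ (which will produce the extra $2\pi s_0(-u-1,v)$).
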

\begin{proof}
From Theorem~\ref{thm:kgasinverserealuneg} we can write \begin{equation}
\mathbb{K}_{a,0,0}^{-1}(\mathbf{w}_0, \mathbf{b}_1 + 2ue_1 + 2ve_2)= \frac{1}{2\pi} \int_{z_1}^{z_2} \frac{(z+a^{-1})\,z^{-u-1}\, (\theta_a(z)^v + \overline{\theta_a(z)}^v)}{ \sqrt{4z^2 - (2(a+a^{-1})z + z^2 + 1)^2}} dz
\end{equation}
In the same way as Lemma~\ref{lemma:intbyparts} we can write \begin{equation}
\mathbb{K}_{a,0,0}^{-1}(\mathbf{w}_0, \mathbf{b}_1 + 2ue_1 + 2ve_2) = \frac{1}{\pi} \left(g^{(-u-1,v)}(a, z_2) S(a^{-1}, z_2) - \int_{z_1}^{z_2} g^{(-u-1,v)}_2(a,z) S(a^{-1}, z)dz\right)
\end{equation}
We find that breaking down $S(a^{-1},z)$ as in Theorem~\ref{thm:K11asymptotics}, the elliptic integral term has the opposite sign from $S(a,z)$, which is responsible for the order $h$ and $h\log h$ terms in the asymptotic expansion. So we have
\begin{equation*}
S(a^{-1},z_2) = \frac{\pi}{4} + \frac{h}{2}(- \log h +2\log 2 ) + O(h^2\log h) 
\end{equation*} and \begin{equation*}
S(a^{-1},z) = b_0(z) - \frac{ h\log h}{2} - b_2(z) h + O(h^2\log h).
\end{equation*} 
The theorem follows from Theorem~\ref{thm:Kgasinverseasym1} by changing the signs of the order $h$ and $h\log h$ terms and replacing $u$ with $-u-1$.
\end{proof}

Now we prove some relations between the coefficients in these asymptotic expansions. We write \begin{equation}\label{eq:Kinvsdef}
\mathbb{K}_{a,0,0}^{-1}(\mathbf{w}_0, \mathbf{b}_1 + 2ue_1 + 2ve_2) = s_0(u,v) + s_1(u,v) h \log h + s_2(u,v) h + O(h^2\log h).
\end{equation}
\begin{lemma}\label{lemma:asymcoeffsrelations}
Let $s_0(u,v)$, $s_1(u,v)$ and $s_2(u,v)$ be as in Equation~\ref{eq:Kinvsdef}. Then for all $u,v\in \mathbb{Z}$ we have:
\begin{enumerate}[noitemsep]
\item $s_i(u,-v) = s_i(u,v)$ for $i=0,1,2$.
\item $s_0(-u-1,v) = s_0(u,v)$ 
\item $s_1(-u-1,v) = -s_1(u,v)$ 
\item $s_2(-u-1,v) = -s_2(u,v) + s_0(u,v)$
\end{enumerate}
Furthermore, we have \[
s_2(u+1, v) = -s_2(u,v) + s_0(u+1,v). \]
\end{lemma}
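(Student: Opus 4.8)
The plan is to read off explicit formulas for $s_0$ and $s_2$ from Theorem~\ref{thm:Kgasinverseasym1}, reduce the asserted identity to an elementary antiderivative statement for the function $b(z)$ of Equation~\ref{eq:bintegrandpart}, and then extend from the range $u+|v|\ge0$ to all $u,v$ using the relations $s_0(-u-1,v)=s_0(u,v)$ and $s_2(-u-1,v)=-s_2(u,v)+s_0(u,v)$ proved above in this lemma.

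Write $z_\ast=-3+2\sqrt2$. Dividing the expansion of Theorem~\ref{thm:Kgasinverseasym1} by $a=1-h$ and comparing with Equation~\ref{eq:Kinvsdef}, for $u+|v|\ge0$ one gets
\[ s_0(u,v)=\frac1\pi\int_{-1}^{z_\ast}\frac{g^{(u,v)}(1,z)}{\sqrt{-1-6z-z^2}}\,dz,\qquad s_2(u,v)=s_0(u,v)-\frac1\pi\Big(z_\ast^{u}\log2+\int_{-1}^{z_\ast}g^{(u,v)}_2(1,z)\,b(z)\,dz\Big). \]
First I would add the instances $s_2(u+1,v)$ and $s_2(u,v)$ of the second formula. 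Since $g^{(u+1,v)}(1,z)=z\,g^{(u,v)}(1,z)$ by Equation~\ref{eq:guvaz}, we have $g^{(u+1,v)}_2(1,z)+g^{(u,v)}_2(1,z)=\partial_z\big[(z+1)g^{(u,v)}(1,z)\big]$, and also $z_\ast^{u+1}+z_\ast^{u}=(z_\ast+1)g^{(u,v)}(1,z_\ast)$, because $g^{(u,v)}(1,z_\ast)=z_\ast^{u}k^{(v)}(z_\ast)$ and $k^{(v)}(z_\ast)=1$ (as $\theta(z_\ast)=1$, which was noted in the proof of Theorem~\ref{thm:Kgasinverseasym1}). Integrating the resulting integral by parts, the boundary term at $z=-1$ vanishes because $(z+1)b(z)\to0$ there, while the boundary term at $z=z_\ast$ equals $-(z_\ast+1)g^{(u,v)}(1,z_\ast)\log2$ by the elementary evaluation $b(z_\ast)=-\tfrac12\log4=-\log2$. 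After these cancellations, the claim $s_2(u+1,v)+s_2(u,v)=s_0(u+1,v)$ reduces to the single assertion $\int_{-1}^{z_\ast}g^{(u,v)}(1,z)\big(\tfrac{1}{\sqrt{-1-6z-z^2}}+(z+1)b'(z)\big)\,dz=0$.

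The key (but routine) step is then the pointwise identity $(z+1)b'(z)=-\dfrac{1}{\sqrt{-1-6z-z^2}}$: differentiating $b(z)=-\tfrac12\log\!\big(4\sqrt2\,(1+z)\big)+\tfrac12\log\!\big(1-z+\sqrt{-1-6z-z^2}\big)$ and clearing denominators using $R^2=-1-6z-z^2$ with $2RR'=-6-2z$, this is a short computation. That establishes the desired identity for all $u,v$ with $u+|v|\ge0$.

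Finally, to cover the remaining pairs I would substitute $w=u$ and $w=u+1$ into $s_2(-w-1,v)=-s_2(w,v)+s_0(w,v)$ and use $s_0(-u-1,v)=s_0(u,v)$; this shows that the asserted identity for argument $u$ is equivalent to the asserted identity for argument $-u-2$. Hence the already-proved case $u+|v|\ge0$ also gives the identity whenever $(-u-2)+|v|\ge0$, and the only integer pair excluded by both inequalities is $(u,v)=(-1,0)$, for which $s_2(0,0)+s_2(-1,0)=s_0(0,0)$ is exactly $s_2(-u-1,v)=-s_2(u,v)+s_0(u,v)$ at $u=0$, $v=0$. I expect the only delicate points to be the justification of the integration-by-parts boundary terms (the behaviour of $b$ near $z=-1$ and the value $b(z_\ast)=-\log2$); the rest is formal.
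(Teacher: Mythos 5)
Your argument for the final relation is correct and is essentially the paper's own: the same explicit formulas for $s_0$ and $s_2$ read off from Theorem~\ref{thm:Kgasinverseasym1}, the same integration by parts driven by the pointwise identity $(1+z)b'(z)=-1/\sqrt{-1-6z-z^2}$ (which the paper records as $\frac{d}{dz}\bigl[(1+z)b(z)\bigr]=b(z)-1/\sqrt{-1-6z-z^2}$), the same boundary evaluations $b(-3+2\sqrt2)=-\log 2$ and $(1+z)b(z)\to 0$ at $z=-1$, and the same use of relations 2 and 4 to propagate the identity to the remaining integers $u$ (your observation that the identity at $u$ is equivalent to the identity at $-u-2$, with $(u,v)=(-1,0)$ as the lone leftover case, is just a reorganization of the paper's case split $u\le -2$ and $u=-1$). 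You do not actually prove items 1--4 of the lemma, but these are immediate from Lemma~\ref{lemma:K11invequalities} and the explicit formulas you already wrote down, which is exactly how the paper dispatches them.
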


\begin{proof}
The first relation is clear from Lemma~\ref{lemma:K11invequalities}, where we show that $\mathbb{K}_{a,0,0}^{-1}(\mathbf{w}_0, \mathbf{b}_1 + 2ue_1 + 2ve_2) = \mathbb{K}_{a,0,0}^{-1}(\mathbf{w}_0, \mathbf{b}_1 + 2ue_1 - 2ve_2)$. From Theorems~\ref{thm:Kgasinverseasym1} and \ref{thm:Kgasinverseasym2}, we have \begin{equation}\label{eq:s0integral}
s_0(u,v) = \begin{cases} \frac{1}{\pi}\int_{-1}^{-3 + 2\sqrt{2}} \frac{g^{(u,v)}(1,z)}{\sqrt{-1-6z-z^2}}dz \hspace{0.5cm}\text{ if } u\geq 0 \\
 \frac{1}{\pi}\int_{-1}^{-3 + 2\sqrt{2}} \frac{g^{(-u-1,v)}(1,z)}{\sqrt{-1-6z-z^2}}dz \hspace{0.5cm}\text{ if } u <  0 ,\end{cases}
\end{equation}
\begin{equation}\label{eq:s1integral}
s_1(u,v) = \frac{(-1)^{u + v}}{2\pi},\end{equation} and
\[s_2(u,v) = \begin{cases}  -\frac{1}{\pi}\left((-3+2\sqrt{2})^u \log 2 + \int_{-1}^{-3 + 2\sqrt{2}} g^{(u,v)}_2(1,z) b(z)dz - \int_{-1}^{-3 + 2\sqrt{2}} \frac{g^{(u,v)}(1,z)}{\sqrt{-1-6z-z^2}}dz  \right)  \hspace{0.1cm}\text{ if } u\geq 0 \\
 \frac{1}{\pi}\left((-3+2\sqrt{2})^{-u-1} \log 2 + \int_{-1}^{-3 + 2\sqrt{2}} g^{(-u-1,v)}_2(1,z) b(z)dz \right) \hspace{0.1cm}\text{ if } u <  0 ,\end{cases}
\] noting that there is a factor of $a$ on the left hand side of Equation~\ref{eq:Kgasinverseasym1} but not Equation~\ref{eq:Kgasinverseasym2}. Relations 2--4 in the lemma are clear from these formulas.

For the final relation, we note that $g^{(u+1,v)}_2(1,z) = \frac{d}{dz}(z g^{(u,v)}(1,z)) = g^{(u,v)}(1,z) + z g_2^{(u,v)}(1,z)$ and \[
\frac{d}{dz}(1+z)b(z) = \frac{-1}{\sqrt{-1-6z-z^2}} +b(z).\] We compute \begin{align*}
 \int_{-1}^{-3 + 2\sqrt{2}} g^{(u+1,v)}_2&(1,z) b(z)dz +  \int_{-1}^{-3 + 2\sqrt{2}} g^{(u,v)}_2(1,z) b(z)dz \\
 &=  \int_{-1}^{-3 + 2\sqrt{2}} (1+z) g^{(u,v)}_2(1,z) b(z)dz +  \int_{-1}^{-3 + 2\sqrt{2}}  g^{(u,v)}(1,z) b(z)dz \\
 &= \left[(1+z)b(z) g^{(u,v)}(1,z)\right]^{-3+2\sqrt{2}}_{-1} + \int_{-1}^{-3 + 2\sqrt{2}} \left(\frac{-1}{\sqrt{-1-6z-z^2}} +b(z)\right)  g^{(u,v)}(1,z)dz \\
 &\hspace{1cm} +  \int_{-1}^{-3 + 2\sqrt{2}}  g^{(u,v)}(1,z) b(z)dz\\
 &= -(-2+2\sqrt{2})(-3+2\sqrt{2})^u\log 2 - \int_{-1}^{-3 + 2\sqrt{2}} \frac{g^{(u,v)}(1,z)}{\sqrt{-1-6z-z^2}}dz.
\end{align*} Also we have \[
(-3+2\sqrt{2})^{u+1} \log 2 + (-3+2\sqrt{2})^u \log 2 = (-2+2\sqrt{2})(-3+2\sqrt{2})^u\log 2.
\] Putting these together we see that for $u \geq 0$, we have \[
s_2(u+1,v) + s_2(u,v) = \frac{1}{\pi} \int_{-1}^{-3 + 2\sqrt{2}} \frac{g^{(u+1,v)}(1,z)}{\sqrt{-1-6z-z^2}} = s_0(u+1,v)\] as required. For $u \leq -2$, we have \begin{align*}
s_2(u+1,v) &= -s_2(-u-2,v)+s_0(u+1,v) \\
&= s_2(-u-1, v) - s_0(-u-1,v) + s_0(u+1,v) \\
&= -s_2(u,v) + s_0(u,v) - s_0(-u-1,v) + s_0(u+1,v) \\
&= -s_2(u,v) + s_0(u+1,v) 
\end{align*} Now all that remains is the $u=-1$ case. Relation 4 followed by 2 gives $s_2(0, v) = -s_2(-1,v) + s_0(0,v)$ as required. This proves the lemma.

\end{proof} Now we prove some corollaries. 

\begin{corollary}\label{cor:s2uminus2}
Let $s_2(u,v)$ be as defined in Equation~\ref{eq:Kinvsdef}. Then for all $u,v\in \mathbb{Z}$ we have \[
s_2(u,v) = s_2(-u-2, v).\]
\end{corollary}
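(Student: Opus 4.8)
The plan is to derive the identity purely algebraically from the four numbered relations and the final displayed relation in Lemma~\ref{lemma:asymcoeffsrelations}; no new integral estimates are needed. The key observation is that the quantity $s_2(-u-1,v)$ can be expressed in two different ways, and comparing them yields the claim.

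First I would apply relation~4 of Lemma~\ref{lemma:asymcoeffsrelations} directly, which gives
\[
s_2(-u-1,v) = -s_2(u,v) + s_0(u,v).
\]
Next I would apply the final relation of Lemma~\ref{lemma:asymcoeffsrelations}, namely $s_2(u'+1,v) = -s_2(u',v) + s_0(u'+1,v)$, with the substitution $u' = -u-2$, so that $u'+1 = -u-1$; this gives
\[
s_2(-u-1,v) = -s_2(-u-2,v) + s_0(-u-1,v).
\]
Equating the two expressions for $s_2(-u-1,v)$ and then invoking relation~2, which states $s_0(-u-1,v) = s_0(u,v)$, the two $s_0$ terms cancel and we are left with $-s_2(u,v) = -s_2(-u-2,v)$, i.e. $s_2(u,v) = s_2(-u-2,v)$, as required.

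There is essentially no obstacle here: the corollary is a formal consequence of identities already established, and the only point to check is that the index substitution $u' = -u-2$ is legitimate, which it is since the final relation of Lemma~\ref{lemma:asymcoeffsrelations} holds for all integers. I would write the proof as the three-line computation above.
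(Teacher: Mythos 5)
Your proof is correct and is essentially the paper's own argument: both derive the identity by combining relation~4 with the final recursion $s_2(u+1,v)=-s_2(u,v)+s_0(u+1,v)$ and using relation~2 to cancel the $s_0$ terms (the paper expresses $s_2(u+1,v)$ in two ways, you express $s_2(-u-1,v)$ in two ways, which is the same computation up to the relabeling $u\mapsto -u-2$). No gap.
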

\begin{proof}
From the last relation in Lemma~\ref{lemma:asymcoeffsrelations} we have $s_2(u+1,v) = -s_2(u,v) + s_0(u+1,v)$ and from relation 4 we have $s_2(u+1,v) = -s_2(-u-2, v) + s_0(u+1,v)$. Combining these proves the corollary.
\end{proof}
\begin{corollary}\label{cor:s2sumformula}
Let $s_0(u,v)$ and $s_2(u,v)$ be as defined inEquation~\ref{eq:Kinvsdef}. For $u \neq 0$, we can write \[s_2(u,v) = (-1)^u\left(\sum_{i=1}^{|u+1|-1}(-1)^{i}s_0(i,v) + s_2(0,v)\right)\] 
\end{corollary}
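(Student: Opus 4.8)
The plan is to derive the formula from the recurrence $s_2(u+1,v) = -s_2(u,v) + s_0(u+1,v)$ established in the final part of Lemma~\ref{lemma:asymcoeffsrelations}: first prove it for $u\ge 1$ by induction, and then obtain the case $u\le -1$ by reducing to non-negative indices via Corollary~\ref{cor:s2uminus2}.

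First I would prove, by induction on $u \ge 0$, the identity
\[
s_2(u,v) = (-1)^u\left(\sum_{i=1}^{u}(-1)^i s_0(i,v) + s_2(0,v)\right).
\]
The base case $u=0$ holds because the sum is empty. For the inductive step I substitute the hypothesis for $u$ into $s_2(u+1,v) = -s_2(u,v) + s_0(u+1,v)$ and absorb the extra term using $s_0(u+1,v) = (-1)^{u+1}\big((-1)^{u+1}s_0(u+1,v)\big)$, which turns the bracket into $\sum_{i=1}^{u+1}(-1)^i s_0(i,v) + s_2(0,v)$; this is a mechanical check of signs. Since $|u+1|-1 = u$ for $u \ge 1$, this is exactly the asserted formula in that range.

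Next I would extend to $u \le -2$. By Corollary~\ref{cor:s2uminus2}, $s_2(u,v) = s_2(-u-2,v)$ with $-u-2 \ge 0$, so the identity just proved applies with $-u-2$ in place of $u$. Because $|u+1|-1 = -(u+1)-1 = -u-2$ and $(-1)^{u} = (-1)^{-u-2}$, the resulting right-hand side coincides with the asserted expression for $u$ (the case $u=-2$, where $-u-2=0$, being the trivial identity $s_2(0,v)=s_2(0,v)$). The remaining value $u = -1$ I would handle directly from relation~4 of Lemma~\ref{lemma:asymcoeffsrelations} with $u = 0$, which gives $s_2(-1,v) = -s_2(0,v) + s_0(0,v)$; this matches the stated formula, whose upper summation limit is $|u+1|-1 = -1$, once $\sum_{i=1}^{-1}(-1)^i s_0(i,v)$ is read as $-s_0(0,v)$, the consistent extension of the empty-sum convention, since then $(-1)^{-1}\big(-s_0(0,v)+s_2(0,v)\big) = s_0(0,v) - s_2(0,v)$.

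The argument is elementary, and I do not anticipate any genuine obstacle; the only points requiring care are the sign bookkeeping in the induction and the interpretation of the summation limits at the two boundary indices $u=-1$ and $u=-2$, where the "upper limit below lower limit" must be read in the way consistent with the telescoping that produced the formula.
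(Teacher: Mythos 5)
Your proof is correct and follows essentially the same route as the paper: iterate the recurrence $s_2(u+1,v)=-s_2(u,v)+s_0(u+1,v)$ for $u\geq 0$ and reduce negative $u$ to this case via Corollary~\ref{cor:s2uminus2}. Your explicit treatment of the boundary case $u=-1$ (via relation 4 of Lemma~\ref{lemma:asymcoeffsrelations} and the reversed-sum convention for $\sum_{i=1}^{-1}$) is slightly more careful than the paper's, which leaves that case implicit, but it is the same argument.
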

\begin{proof}
For $u > 0$, we use the relation $s_2(u, v) = -s_2(u-1,v) + s_0(u,v)$ repeatedly to obtain
\[s_2(u,v) = (-1)^u\left(\sum_{i=1}^{u}(-1)^{i}s_0(i,v) + s_2(0,v)\right).\] The $u < 0$ case follows from Corollary~\ref{cor:s2uminus2}.
\end{proof}

We can use this to write $s_2$ in a more symmetric form. 
\begin{corollary}\label{cor:s2otherform}
Let $s_2(u,v)$ be as defined in Equation~\ref{eq:Kinvsdef}. For $u, v \in \mathbb{Z}$, we can write \begin{multline}s_2(u,v) = \frac{(-1)^u}{\pi}\Bigg(\int_{-1}^{-3 + 2\sqrt{2}} \left(\sum_{i=0}^{|u+1|-1}(-z)^{i}\right) \frac{(\theta_1(z)^{v}+ \theta_1(z)^{-v})}{2\sqrt{-1-6z-z^2}}dz \\
 -\log 2 - \int_{-1}^{-3 + 2\sqrt{2}} g^{(0,v)}_2(1,z) b(z)dz\Bigg)\end{multline} where \[
 \theta_1(z) = \frac{1}{2z} \left(i \sqrt{4z^2 - (4z + z^2 + 1)^2} - (4z + z^2+1)\right),
 \]\[
 g^{(0,v)}_2(1,z) = \frac{v(\theta_1(z)^v - \theta_1(z)^{-v})\theta_1'(z)}{\theta_1(z)}
 \] and $b(z)$ is defined in Equation~\ref{eq:bintegrandpart}.
\end{corollary}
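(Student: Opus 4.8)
The plan is to assemble the asserted identity from two ingredients that are already in hand: the summed formula for $s_2(u,v)$ of Corollary~\ref{cor:s2sumformula}, and the explicit integral expression for $s_2(0,v)$ that appears (as the $u=0$ instance of the $u\geq 0$ case) in the proof of Lemma~\ref{lemma:asymcoeffsrelations}. Recall from that proof that, for $u\geq 0$,
\[
s_2(u,v) = -\frac{1}{\pi}\left((-3+2\sqrt 2)^u\log 2 + \int_{-1}^{-3+2\sqrt 2} g^{(u,v)}_2(1,z)\,b(z)\,dz - \int_{-1}^{-3+2\sqrt 2}\frac{g^{(u,v)}(1,z)}{\sqrt{-1-6z-z^2}}\,dz\right),
\]
and that, by Equation~\ref{eq:s0integral}, $s_0(i,v) = \frac{1}{\pi}\int_{-1}^{-3+2\sqrt 2} g^{(i,v)}(1,z)/\sqrt{-1-6z-z^2}\,dz$ for $i\geq 0$. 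Since $z^0=1$ and $\theta_a$ at $a=1$ is $\theta_1$, we have $g^{(0,v)}(1,z)=\tfrac12(\theta_1(z)^v+\theta_1(z)^{-v})$; setting $u=0$ in the display above therefore gives
\[
s_2(0,v) = s_0(0,v) - \frac{1}{\pi}\log 2 - \frac{1}{\pi}\int_{-1}^{-3+2\sqrt 2} g^{(0,v)}_2(1,z)\,b(z)\,dz.
\]

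First I would substitute this into Corollary~\ref{cor:s2sumformula}, obtaining for $u\neq 0$
\[
s_2(u,v) = (-1)^u\left(\sum_{i=1}^{|u+1|-1}(-1)^i s_0(i,v) + s_0(0,v) - \frac{1}{\pi}\log 2 - \frac{1}{\pi}\int_{-1}^{-3+2\sqrt 2} g^{(0,v)}_2(1,z)\,b(z)\,dz\right).
\]
The term $s_0(0,v)$ is exactly the $i=0$ term of the sum (with sign $(-1)^0=1$), so it can be absorbed to give $\sum_{i=0}^{|u+1|-1}(-1)^i s_0(i,v)$. Next I would insert the integral formula for $s_0(i,v)$ and pull the finite sum under the integral sign, using $(-1)^i z^i=(-z)^i$ and $g^{(i,v)}(1,z)=\tfrac12 z^i(\theta_1(z)^v+\theta_1(z)^{-v})$ to write $\sum_{i=0}^{|u+1|-1}(-1)^i\pi s_0(i,v) = \int_{-1}^{-3+2\sqrt 2}\bigl(\sum_{i=0}^{|u+1|-1}(-z)^i\bigr)\tfrac{\theta_1(z)^v+\theta_1(z)^{-v}}{2\sqrt{-1-6z-z^2}}\,dz$. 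Factoring out $1/\pi$ then produces the claimed formula.

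The loose ends are routine. The case $u=0$ is not covered by Corollary~\ref{cor:s2sumformula}, but it follows immediately from the displayed expression for $s_2(0,v)$, since the sum $\sum_{i=0}^{0}(-z)^i=1$ reproduces the $s_0(0,v)$ contribution; thus the formula holds for all $u,v\in\mathbb{Z}$. The fact that the range of summation can be written uniformly as $\{0,\dots,|u+1|-1\}$ (rather than via separate $u\geq 0$ and $u\leq -2$ cases) is already built into Corollary~\ref{cor:s2sumformula} and ultimately into Corollary~\ref{cor:s2uminus2}, so no additional case analysis is required. Finally, the closed forms quoted for $\theta_1(z)$ and $g^{(0,v)}_2(1,z)$ are just $\theta_a(z)$ of Equation~\ref{eq:thetaa} evaluated at $a=1$ (so that $2(a+a^{-1})=4$) and the $z$-derivative of $g^{(0,v)}(1,z)=\tfrac12(\theta_1(z)^v+\theta_1(z)^{-v})$, recorded by a short differentiation. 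There is no genuine obstacle in this corollary: the only thing that needs care is the correct identification of the $i=0$ term of the extended sum with $s_0(0,v)$ and the consistency of the $u=0$ boundary case with Corollary~\ref{cor:s2sumformula}.
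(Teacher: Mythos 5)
Your route is the paper's route: start from Corollary~\ref{cor:s2sumformula}, substitute the expressions for $s_2(0,v)$ and $s_0(i,v)$ from the proof of Lemma~\ref{lemma:asymcoeffsrelations}, absorb $s_0(0,v)$ as the $i=0$ term of the sum, and pull the finite sum under the integral via $(-1)^i z^i = (-z)^i$. For $u\geq 1$, for $u\leq -2$, and for the directly checked $u=0$ case, this is complete and correct.

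The one place the argument does not go through as written is $u=-1$, which is precisely the case you dismiss when you assert that the uniform range $\{0,\dots,|u+1|-1\}$ is ``already built into Corollary~\ref{cor:s2sumformula}.'' At $u=-1$ we have $|u+1|-1=-1$, so both $\sum_{i=1}^{|u+1|-1}$ and $\sum_{i=0}^{|u+1|-1}$ are empty, and the absorption of the nonzero term $s_0(0,v)$ into the extended sum is impossible (e.g. $s_0(0,0)=1/4$). Worse, Corollary~\ref{cor:s2sumformula} is itself off at $u=-1$: it asserts $s_2(-1,v)=-s_2(0,v)$, whereas relation 4 of Lemma~\ref{lemma:asymcoeffsrelations} with $u=0$ gives $s_2(-1,v)=-s_2(0,v)+s_0(0,v)$; the reduction of the $u<0$ case to the $u>0$ case via Corollary~\ref{cor:s2uminus2} is vacuous there because $-u-2=-1$ is a fixed point. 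The stated formula of the present corollary is nonetheless correct at $u=-1$: with the empty sum it reads
\[
s_2(-1,v)=\frac{-1}{\pi}\left(-\log 2-\int_{-1}^{-3+2\sqrt 2}g^{(0,v)}_2(1,z)\,b(z)\,dz\right),
\]
which equals $-s_2(0,v)+s_0(0,v)$ by your displayed expression for $s_2(0,v)$. So the repair is a one-line direct verification of $u=-1$ alongside your $u=0$ check, rather than an appeal to Corollary~\ref{cor:s2sumformula}; as written, your derivation at $u=-1$ passes through an incorrect premise and an invalid absorption that happen to cancel. The paper's own proof is equally terse and glosses over the same point.
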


\begin{proof}
We start from Corollary~\ref{cor:s2sumformula}, and substitute the formulas for $s_0(i,v)$ and $s_2(0,v)$ stated in the proof of Lemma~\ref{lemma:asymcoeffsrelations}, and the definition for $g^{(u,v)}(1,z)$ from Equation~\ref{eq:guvaz}, to obtain the formula in the statement of the corollary. It is clear that this formula also holds for $u=0$.
\end{proof}

Now by Lemma~\ref{lemma:K11invequalities}, we can find the asymptotics of $\mathbb{K}_{a,0,0}^{-1}(\mathbf{w}_\e1, \mathbf{b}_\e2 + 2ue_1 + 2ve_2)$ for $\e1,\e2\in \{0,1\}$ and $u,z\in \mathbb{Z}$. In the following lemma we collect these formulas together for $\mathbb{K}_{a,0,0}^{-1}(x,y)$ where $x \in \mathtt{W}$ and $y\in \mathtt{B}$.

\begin{lemma}\label{lemma:Kgasinverseasymall}
Take $x \in \mathtt{W}$ and $y\in \mathtt{B}$. Write $y - x = pe_1 + qe_2$. Let $\zeta(x,y)$ and $\Sigma(x,y)$ be as defined in Equations~\ref{eq:zeta} and \ref{eq:Sigmaxy} respectively. When $p$ is odd and $q$ is even, define 
\[c_0(pe_1 + qe_2) = s_0((p-1)/2,q/2),\]
\[c_1(pe_1 + qe_2) = (-1)^{(p-q-1)/2}s_1((p-1)/2,q/2),\] and 
\[c_2(pe_1 + qe_2) = (-1)^{(p-q-1)/2}\left(s_2 ((p-1)/2,q/2)  - \frac{1}{2}s_0((p-1)/2,q/2)\right).\] When $p$ is even and $q$ is odd, define \[c_i(pe_1 + qe_2) = c_i(qe_1 + pe_2)\] for $i=0,1,2$. Then \begin{multline*}
\mathbb{K}_{a,0,0}^{-1}(x,y) 
= \frac{1}{\Sigma(x,y)}\Big(c_0(y-x) + c_0(y-x) h/2 \\
+ \zeta(x,y)(c_1(y-x) h \log h  + c_2(y-x) h )\Big) + O(h^2\log h).
 \end{multline*}
\end{lemma}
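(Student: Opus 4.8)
The plan is to derive the lemma by pure bookkeeping from four ingredients already in hand: the translation-invariance of $\mathbb{K}_{a,0,0}^{-1}$ (Theorem~\ref{theorem:planeinverseoriginal}), the symmetry relations of Lemma~\ref{lemma:K11invequalities}, the base-case asymptotics of Theorems~\ref{thm:Kgasinverseasym1} and~\ref{thm:Kgasinverseasym2}, and the coefficient identities of Lemma~\ref{lemma:asymcoeffsrelations} together with Corollary~\ref{cor:s2otherform}. First I would observe that by Equation~\ref{eq:gastranslationinvariantoriginal} the quantity $\mathbb{K}_{a,0,0}^{-1}(x,y)$ depends only on the subgraph indices $\e1,\e2$ (with $x\in\mathtt{W}_{\e1}$, $y\in\mathtt{B}_{\e2}$) and on $y-x$. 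Writing $y-x = pe_1+qe_2$, the fact that $x$ is white and $y$ is black forces $p$ and $q$ to have opposite parities; reducing $x_1+x_2$ and $y_1+y_2$ modulo $4$ gives $\e2\equiv\e1+p\pmod 2$, so $\e1\neq\e2$ exactly when $p$ is odd (whence $q$ even and $\Sigma(x,y)=1$) and $\e1=\e2$ exactly when $q$ is odd (whence $\Sigma(x,y)=i$). This explains both the two-case structure of the statement and the prefactor $1/\Sigma(x,y)$.

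Next I would use translation-invariance to move $x$ onto a canonical white vertex $\mathbf{w}_{\e1}$ of some fundamental domain, so that $\mathbb{K}_{a,0,0}^{-1}(x,y) = \mathbb{K}_{a,0,0}^{-1}(\mathbf{w}_{\e1},\mathbf{b}_{\e2}+2ue_1+2ve_2)$ with $u,v$ affine in $p,q$ whose precise form depends on the case (being careful that for $\mathbf{w}\in\mathtt{W}_1$ the canonical black vertex $\mathbf{b}_0$ in the same domain is $\mathbf{w}_1-e_1$, not $\mathbf{w}_1+e_1$). The chain of equalities in Lemma~\ref{lemma:K11invequalities} then collapses all four $(\e1,\e2)$ possibilities onto the single computed quantity $\mathbb{K}_{a,0,0}^{-1}(\mathbf{w}_0,\mathbf{b}_1+2u'e_1+2v'e_2)$, at the cost of a multiplicative factor equal to $1$ when $\e1\neq\e2$ and $i^{\pm1}$ when $\e1=\e2$, plus a coordinate change $(u,v)\mapsto(\pm u,\pm v)$ and, when $\e1=\e2$, a swap $(u,v)\mapsto(v,u)$. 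Writing the result in the normal form of Equation~\ref{eq:Kinvsdef}, its coefficients become $s_0(u',v'),s_1(u',v'),s_2(u',v')$, which are supplied by Theorems~\ref{thm:Kgasinverseasym1}--\ref{thm:Kgasinverseasym2} (one of the two always applies, since $u'+|v'|\ge 0$ or $-u'+|v'|\ge 0$). Applying $s_i(u,-v)=s_i(u,v)$ and the reflection laws $s_0(-u-1,v)=s_0(u,v)$, $s_1(-u-1,v)=-s_1(u,v)$, $s_2(-u-1,v)=-s_2(u,v)+s_0(u,v)$ from Lemma~\ref{lemma:asymcoeffsrelations} normalizes $u'$ to be nonnegative and, using the explicit formulas in the proof of that lemma and Corollary~\ref{cor:s2otherform}, puts $s_0,s_1,s_2$ into exactly the integral shapes of Equations~\ref{eq:c0}--\ref{eq:c2}; in particular this verifies that the $c_i$ defined here coincide with the earlier definitions.

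The one mildly delicate point is the $c_0(y-x)h/2$ term. Theorem~\ref{thm:Kgasinverseasym1} expands $a\,\mathbb{K}_{a,0,0}^{-1}$ rather than $\mathbb{K}_{a,0,0}^{-1}$, so dividing by $a=1-h$ inserts an extra $s_0(u',v')\,h$ at order $h$. Combined with the reflection law $s_2(-u-1,v)=-s_2(u,v)+s_0(u,v)$ — which both flips the sign of the $h\log h$ coefficient and shifts $s_2$ by $s_0$ — this forces the symmetric combination $s_2-\tfrac12 s_0$, rather than $s_2$ itself, to be the quantity that transforms uniformly over all cases; and $s_2-\tfrac12 s_0$ is precisely $(-1)^{(p-q-1)/2}$ times $c_2(y-x)$. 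The leftover $\tfrac12 s_0 = \tfrac12 c_0$ then produces exactly the $c_0(y-x)h/2$ contribution, consistently in both theorems.

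Finally I would collect the coefficients of $h^0$, $h\log h$ and $h^1$, pull out $1/\Sigma(x,y)$, and check that the residual signs combine as claimed: the factors $(-1)^{\e1}$ coming from the reflections, the factors $(-1)^{(p\pm q-1)/2}$ from the normalizations, and the factors of $i$ from Lemma~\ref{lemma:K11invequalities} recombine into exactly $\zeta(x,y)=(-1)^{(y_2-x_1)/2}$ in front of the $c_1$ and $c_2$ terms and into $+1$ in front of the $c_0$ terms. This uses the identity $\zeta(x,y)=(-1)^{\e1}(-1)^{(p+q-1)/2}$ when $p$ is odd (with $p,q$ swapped when $q$ is odd), which I would prove by writing $x_1=2r+1$, $x_2=2s$ and noting $r+s\equiv\e1\pmod 2$. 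The error $O(h^2\log h)$ is inherited verbatim. I expect the main obstacle to be this sign-and-index bookkeeping itself: there are essentially eight cases (four choices of $(\e1,\e2)$ times two sign conventions for which of $u'+|v'|$, $-u'+|v'|$ is nonnegative), and reconciling the $i$-factors, the $1/a$ correction, and the three reflection relations so that everything lands on the compact form $\tfrac{1}{\Sigma(x,y)}\bigl(c_0 + c_0 h/2 + \zeta(x,y)(c_1 h\log h + c_2 h)\bigr)$ requires care, though no new ideas.
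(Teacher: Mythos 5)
Your proposal is correct and follows essentially the same route as the paper's proof: reduce via translation invariance and the symmetry chain of Lemma~\ref{lemma:K11invequalities} to the canonical quantity $\mathbb{K}_{a,0,0}^{-1}(\mathbf{w}_0,\mathbf{b}_1+2ue_1+2ve_2)$, substitute the $s_i$ expansions from Theorems~\ref{thm:Kgasinverseasym1}--\ref{thm:Kgasinverseasym2}, apply the reflection relations of Lemma~\ref{lemma:asymcoeffsrelations} (which is exactly where the symmetric combination $s_2-\tfrac12 s_0$ and the $c_0 h/2$ term emerge), and absorb the residual signs into $\zeta(x,y)$. The only remaining work is the case-by-case sign bookkeeping you already flag, and your account of it is consistent with the paper's.
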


\begin{proof}
First we note that for $\mathbf{w}_\e1\in \mathtt{W}_\e1$ and $\mathbf{b}_\e2\in \mathtt{B}_\e2$ in one fundamental domain, we have \[
\mathbf{b}_1-\mathbf{w}_0 = e_1,\, \mathbf{b}_0-\mathbf{w}_1 = -e_1,\, \mathbf{b}_0-\mathbf{w}_0 = e_2\,\text{ and }\, \mathbf{b}_1-\mathbf{w}_1 = -e_2.\] 

Now suppose that $p$ is odd and $q$ is even, so $\Sigma(x,y) = 1$. Then either $x\in\mathtt{W}_0$ and $y\in\mathtt{B}_1$ or $x\in\mathtt{W}_1$ and $y\in\mathtt{B}_0$. In the first case we have \begin{align*}
\mathbb{K}_{a,0,0}^{-1}(x,y) &= \mathbb{K}_{a,0,0}^{-1}(\mathbf{w}_0, \mathbf{b}_1 + (p-1)e_1 + qe_2) \\
&= s_0((p-1)/2,q/2) + s_1((p-1)/2,q/2) h \log h + s_2 ((p-1)/2,q/2) h + O(h^2\log h).
\end{align*} In the second case we have \begin{align*}
\mathbb{K}_{a,0,0}^{-1}(x,y) &= \mathbb{K}_{a,0,0}^{-1}(\mathbf{w}_1, \mathbf{b}_0 + (p+1)e_1 + qe_2) \\
&= \mathbb{K}_{a,0,0}^{-1}(\mathbf{w}_0, \mathbf{b}_1 - (p+1)e_1 + qe_2) \\
&= s_0(-(p+1)2,q/2) + s_1(-(p+1)/2,q/2) h \log h + s_2 (-(p+1)/2,q/2) h + O(h^2\log h)\\
&= s_0((p-1)/2,q/2) - s_1((p-1)/2,q/2) h \log h \\
&\hspace{1cm}- s_2 ((p-1)/2,q/2) h + s_0((p-1)/2,q/2) h + O(h^2\log h)
\end{align*} where we use Lemma~\ref{lemma:K11invequalities} in the second line and Lemma~\ref{lemma:asymcoeffsrelations} in the fourth line. Now note that for $x\in\mathtt{W}_0$ and $y\in\mathtt{B}_1$ we have $\zeta(x,y) = (-1)^{(p-q-1)/2}$ and for $x\in\mathtt{W}_1$ and $y\in\mathtt{B}_0$ we have $\zeta(x,y) = (-1)^{(p-q-1)/2}$. So we can write \begin{multline*}
\mathbb{K}_{a,0,0}^{-1}(x,y) 
= s_0((p-1)/2,q/2) + \frac{1}{2} s_0((p-1)/2,q/2) h\\
 + \zeta(x,y)(-1)^{(p-q-1)/2} \bigg(s_1((p-1)/2,q/2) h \log h \\
 + (s_2 ((p-1)/2,q/2)  - \frac{1}{2}s_0((p-1)/2,q/2)) h \bigg) + O(h^2\log h).
\end{multline*}

Now suppose that $p$ is even and $q$ is odd, so $\Sigma(x,y) = i$. Then either $x\in\mathtt{W}_0$ and $y\in\mathtt{B}_0$ or $x\in\mathtt{W}_1$ and $y\in\mathtt{B}_1$. In the first case we have \begin{align*}
\mathbb{K}_{a,0,0}^{-1}(x,y) &= \mathbb{K}_{a,0,0}^{-1}(\mathbf{w}_0, \mathbf{b}_0 + pe_1 + (q-1)e_2) \\
&= -i\mathbb{K}_{a,0,0}^{-1}(\mathbf{w}_0, \mathbf{b}_1 + (q-1)e_1 + pe_2).
\end{align*} In the second case we have \begin{align*}
\mathbb{K}_{a,0,0}^{-1}(x,y) &= \mathbb{K}_{a,0,0}^{-1}(\mathbf{w}_1, \mathbf{b}_1 + pe_1 + (q+1)e_2) \\
&= -i\mathbb{K}_{a,0,0}^{-1}(\mathbf{w}_1, \mathbf{b}_0 + (q+1)e_1 + pe_2).
\end{align*} In the first case we have $\zeta(x,y) = (-1)^{(p-q+1)/2} = (-1)^{(q-p-1)/2}$ and in the second case we have $\zeta(x,y) = (-1)^{(p-q-1)/2} = (-1)^{(q-p-1)/2}$. So we can write \begin{multline*}
\mathbb{K}_{a,0,0}^{-1}(x,y) 
= -i\Bigg( s_0((q-1)/2,p/2) + \frac{1}{2} s_0((q-1)/2,p/2) h\\
 + \zeta(x,y)(-1)^{(q-p-1)/2} \bigg(s_1((q-1)/2,p/2) h \log h \\
 + (s_2 ((q-1)/2,p/2)  - \frac{1}{2}s_0((q-1)/2,p/2)) h \bigg)\Bigg) + O(h^2\log h).
\end{multline*}
Comparing with the definitions of $c_i(pe_1,q_e2)$ we see that we have proved the result.

\end{proof}

Now we are ready to prove Theorem~\ref{thm:Kgasinversefullasymptotics}.

\begin{proof}[Proof of Theorem \ref{thm:Kgasinversefullasymptotics}]
We need to show that the definitions of $c_i(pe_1 + qe_2)$ in Lemma~\ref{lemma:Kgasinverseasymall} agree with those in Equations~\ref{eq:c0}--\ref{eq:c2}. We start from Lemma~\ref{lemma:Kgasinverseasymall}. Firstly, note that $g_{(u,v)}(1,z) = z^u k^{(v)}$. Consider $p$ odd and $q$ even. The case $p$ even and $q$ odd follows immediately from this case. 

For $c_0$, by Equation~\ref{eq:s0integral} we have \[c_0(pe_1+qe_2) = \frac{1}{\pi}\int_{-1}^{-3 + 2\sqrt{2}} \frac{g^{(|p|/2-1/2,v)}(1,z)}{\sqrt{-1-6z-z^2}}dz\] from which Equation~\ref{eq:c0} follows.

For $c_1$, by Equation~\ref{eq:s1integral} we have  
\[c_1(pe_1 + qe_2) = (-1)^{(p-q-1)/2}\frac{(-1)^{(p-1)/2+q/2}}{2\pi} = \frac{1}{2\pi}\]

For $c_2$, it is a little more complicated. By Corollary~\ref{cor:s2otherform} we have \begin{align*}c_2(pe_1 + qe_2) &= (-1)^{(p-q-1)/2}\Bigg(\frac{(-1)^{(p-1)/2}}{\pi}\Bigg(\int_{-1}^{-3 + 2\sqrt{2}} \left(\sum_{i=0}^{|p+1|/2-1}(-z)^{i}\right) \frac{(\theta_1(z)^{q/2}+ \theta_1(z)^{-q/2})}{2\sqrt{-1-6z-z^2}}dz \\
 &-\log 2 - \int_{-1}^{-3 + 2 \sqrt{2}} g^{(0,q/2)}_2(1,z) b(z)dz\Bigg)  - \frac{1}{2\pi}\int_{-1}^{-3 + 2\sqrt{2}} \frac{g^{(|(p|/2-1/2,v)}(1,z)}{\sqrt{-1-6z-z^2}}dz\Bigg) \\
  &= \frac{(-1)^{q/2}}{\pi}\Bigg(\int_{-1}^{-3 + 2\sqrt{2}} \left(\sum_{i=0}^{|p+1|/2-1}(-z)^{i}\right) \frac{k^{(v)}(z)}{\sqrt{-1-6z-z^2}}dz -\log 2 \\
 & - \int_{-1}^{-3 + 2 \sqrt{2}} g^{(0,q/2)}_2(1,z) b(z)dz  - \frac{1}{2}\int_{-1}^{-3 + 2\sqrt{2}} \frac{(-1)^{(p-1)/2}z^{|p|/2-1/2}k^{(v)}(z)}{\sqrt{-1-6z-z^2}}dz\Bigg) 
 \end{align*} from which we can show the formula in Equation~\ref{eq:c2} by considering $p>0$ and $p<0$ separately. This completes the proof.
\end{proof}

 \appendix
\section{Some more proofs}\label{sec:moreproofs}

\subsection{Proof of Theorem~\ref{theorem:Ka1}}\label{sec:formuladerivation}
This is a fairly trivial proof starting from the results stated in \cite{chhita2016domino}.

For $x_1, x_2$ even with $0 < x_1,x_2 < 2n$
\begin{equation}
H_{x_1, x_2}(\omega) = \frac{\omega^{2m} G(\omega)^{2m - \frac{x_1}{2}}}{G(\omega^{-1})^{2m - \frac{x_2}{2}}}\label{eq:H}\end{equation} where $G(\omega)$ is defined in Equation~\ref{eq:G}. Let $\circlecontour_r$ denote a positively oriented contour of radius $r$ centered at the origin.
For $n = 4m$, $0 < a < 1$, $x = (x_1,x_2) \in \mathtt{W}_\e1,\, y = (y_1,y_2) \in \mathtt{B}_\e2$ with $\e1,\e2 \in \{0,1\}$, $0 < x_1,x_2,y_1,y_2 < n$, and $\sqrt{2c} < r < 1$, define \begin{multline}\label{eq:B00}
\mathcal{B}_{\e1, \e2} (a, x_1, x_2, y_1, y_2) = \frac{i^{(x_2 - x_1 + y_1  - y_2)/2}}{(2\pi i)^2} \\
\times \int_{\circlecontour_r} \frac{d\w1}{\w1} \int_{\circlecontour_{1/r}}d\w2 \frac{\w2}{\w2^2 - \w1^2} \frac{H_{x_1 + 1, x_2}(\w1)}{H_{y_1, y_2 + 1}(\w2)}\sum_{\g1,\g2=0}^1 Q_{\g1,\g2}^{\e1,\e2}(\w1,\w2),
\end{multline} \begin{multline}\label{eq:B10}
a^{-1}\mathcal{B}_{1-\e1,\e2}(a^{-1},2n-x_1,x_2,2n-y_1,y_2) = -\frac{i^{(x_1-x_2-y_1-y_2)/2}}{(2\pi i)^2} \\
\times \int_{\circlecontour_r}\frac{d\w1}{\w1}\int_{\circlecontour_{1/r}}d\w2\frac{\w2}{\w2^2-\w1^2}\frac{H_{x_1+1,x_2}(\w1)}{H_{2n-y_1,y_2+1}(\w2)}\sum_{\g1,\g2=0}^1 (-1)^{\e2+\g2}Q_{\g1,\g2}^{\e1,\e2}(\w1,\w2) \end{multline}\begin{multline}\label{eq:B01}
a^{-1}\mathcal{B}_{\e1,1-\e2}(a^{-1},x_1,2n-x_2,y_1,2n-y_2) = -\frac{i^{(y_2-y_1-x_2-x_1)/2}}{(2\pi i)^2} \\
\times \int_{\circlecontour_r}\frac{d\w1}{\w1}\int_{\circlecontour_{1/r}}d\w2\frac{\w2}{\w2^2-\w1^2}\frac{H_{x_1+1,2n-x_2}(\w1)}{H_{y_1,y_2+1}(\w2)}\sum_{\g1,\g2=0}^1 (-1)^{\e1+\g1}Q_{\g1,\g2}^{\e1,\e2}(\w1,\w2) \end{multline}\begin{multline}\label{eq:B11}
\mathcal{B}_{1-\e1,1-\e2}(a,2n-x_1,2n-x_2,2n-y_1,2n-y_2) = -\frac{i^{(y_2+y_1+x_2+x_1)/2}}{(2\pi i)^2} \\
\times \int_{\circlecontour_r}\frac{d\w1}{\w1}\int_{\circlecontour_{1/r}}d\w2\frac{\w2}{\w2^2-\w1^2}\frac{H_{x_1+1,2n-x_2}(\w1)}{H_{2n-y_1,y_2+1}(\w2)}\sum_{\g1,\g2=0}^1 (-1)^{\e1+\g1 + \e2 + \g2}Q_{\g1,\g2}^{\e1,\e2}(\w1,\w2) \end{multline}
where $Q_{\g1,\g2}^{\e1,\e2}(\w1,\w2)$ is as defined in Equation~\ref{eq:Q}.

Then we have the following formula for the entries of the inverse Kasteleyn matrix.

\begin{theorem}[Chhita and Johansson \cite{chhita2016domino}]
For $n = 4m$, $0 < a < 1$, $x = (x_1,x_2) \in \mathtt{W}_\e1$ and $y = (y_1,y_2) \in \mathtt{B}_\e2$ with $\e1,\e2 \in \{0,1\}$, $0 < x_1,x_2,y_1,y_2 < n$, the entries of $K_a^{-1}((x_1,x_2),(y_1,y_2))$ are as follows.
\begin{multline}\label{eq:Ka1original}
K_a^{-1}((x_1,x_2),(y_1,y_2)) = \mathbb{K}_{a,0,0}^{-1}((x_1,x_2),(y_1,y_2)) - \Big(\mathcal{B}_{\e1, \e2} (a, x_1, x_2, y_1, y_2) \\
- \frac{i}{a}(-1)^{\e1 + \e2}(\mathcal{B}_{1-\e1, \e2} (1/a, 2n-x_1, x_2, 2n-y_1, y_2) + \mathcal{B}_{\e1, 1-\e2} (1/a, x_1, 2n-x_2, y_1, 2n-y_2))\\
 + \mathcal{B}_{1-\e1, 1-\e2} (a, 2n-x_1, 2n-x_2, 2n-y_1, 2n-y_2)\Big)
\end{multline} where $\mathbb{K}_{a,0,0}^{-1}((x_1,x_2),(y_1,y_2))$ is defined in Equation~\ref{eq:translationinvariantoriginal} and the other terms are defined in Equations~\ref{eq:B00}--\ref{eq:B11}.

\end{theorem}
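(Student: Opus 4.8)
The final statement is the inverse Kasteleyn formula of Chhita and Johansson \cite{chhita2016domino}, a simplification of the earlier formula of Chhita, Johansson and Young \cite{chhita2014}; in the present paper it enters as an input, so strictly the ``proof'' is a reference to that work. If one wanted to reprove it, the plan would be as follows. The derivation in \cite{chhita2014,chhita2016domino} first realises the two-periodic Aztec diamond as a determinantal point process: via the domino-shuffling / particle--hole correspondence the model becomes a measure on non-intersecting lattice paths, to which the Eynard--Mehta (equivalently Lindström--Gessel--Viennot) formalism applies and yields a double contour integral for the correlation kernel. Because the weights have a $2\times 2$ fundamental domain the one-step transfer operators are $2\times 2$ matrices, and this is precisely the origin of the matrix $\mathcal{K}_a(z,w)^{-1}$ in Equation~\ref{eq:Kazwinverse} and of the finite sums $\sum_{\g1,\g2}Q^{\e1,\e2}_{\g1,\g2}$ appearing in the $\mathcal{B}_{\e1,\e2}$. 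Reading the inverse Kasteleyn entries off the kernel and simplifying the contours then produces Equation~\ref{eq:Ka1original}.

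An alternative, more self-contained route that I would take is to verify directly that the right-hand side of Equation~\ref{eq:Ka1original} inverts $K_a$, i.e.\ that $\sum_{x\in\mathtt{W}}K_a(y',x)\,[\text{RHS}](x,y)=\delta_{y'y}$ for $y',y\in\mathtt{B}$, by residue calculus; the structural point is a discrete method of images. The whole-plane term $\mathbb{K}_{a,0,0}^{-1}$ (Theorem~\ref{theorem:planeinverseoriginal}) already satisfies $K_a\,\mathbb{K}_{a,0,0}^{-1}(\cdot,y)=\delta_y$ at every black vertex all of whose Kasteleyn neighbours lie in the diamond, since the finite and periodic graphs agree in the interior; hence $K_a$ applied to it equals $\delta_y$ plus an error supported on the four sides of the diamond. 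The four $\mathcal{B}$-terms are the images of the source obtained by reflecting in the two pairs of opposite sides --- hence the substitutions $x_i\mapsto 2n-x_i$ and $y_i\mapsto 2n-y_i$, the weight inversion $a\mapsto 1/a$ on the singly-reflected terms, and the parity swaps $\e1\mapsto 1-\e1$, $\e2\mapsto 1-\e2$ --- and the signed combination in Equation~\ref{eq:Ka1original} is arranged so that $K_a$ acting on it cancels the boundary error exactly. The action of $K_a$ on each $\mathcal{B}$-term is computed by deforming the contours $\circlecontour_r,\circlecontour_{1/r}$ and applying the residue theorem, using the relations satisfied by $G$ from Equation~\ref{eq:G} (in particular the oddness $G(-\omega)=-G(\omega)$), the explicit product form of $H_{x_1,x_2}$ in Equation~\ref{eq:H}, and the definitions of $Q^{\e1,\e2}_{\g1,\g2}$, $t(\omega)$ and $\mathrm{y}^{\e1,\e2}_{\g1,\g2}$ (Equations~\ref{eq:Q}, \ref{eq:t}, \ref{eq:y00}--\ref{eq:otherys}); the prefactors $i^{(x_2-x_1+y_1-y_2)/2}$ and their analogues are exactly what make the Kasteleyn--Percus signs come out right.

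The main obstacle is bookkeeping: one must propagate the fundamental-domain labels $\e1,\e2,\g1,\g2$ together with the attendant sign and $i$-power factors through every residue computation, and then check that along each of the four sides of the diamond the contributions of the five terms (the whole-plane term and its four images) cancel in matched pairs. One also needs the radii to satisfy $\sqrt{2c}<r<1$ so that $\circlecontour_r$ and $\circlecontour_{1/r}$ separate the origin from the zeros of $P_a(z,w)$ in the required way, and one must verify that the formula is independent of the choice of $r$ in this range. Since all of this is carried out in \cite{chhita2016domino}, in the present paper I would simply invoke their result and then, in Section~\ref{sec:formuladerivation}, perform the cosmetic rewriting that replaces $H_{x_1,x_2}$ by $\widetilde{H}_{x_1,x_2}$ (Equation~\ref{eq:Htilde}) and absorbs the $i/a$, parity-swap and reflection factors to reach the form stated in Theorem~\ref{theorem:Ka1}.
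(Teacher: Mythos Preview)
Your proposal is correct: the paper treats this theorem purely as a cited input from \cite{chhita2016domino} and offers no proof of its own, so your recognition that the ``proof'' is a reference is exactly right. Your sketch of how one would rederive it (determinantal structure via non-intersecting paths and Eynard--Mehta, or a direct method-of-images verification) is a reasonable supplementary outline, though it goes beyond what the paper does; the paper's only further step is precisely what you describe at the end --- the rewriting in Section~\ref{sec:formuladerivation} that passes from $H_{x_1,x_2}$ to $\widetilde{H}_{x_1,x_2}$ and repackages the four $\mathcal{B}$-terms as the $\mathcal{I}^{j,k}_{\e1,\e2}$ of Theorem~\ref{theorem:Ka1}.
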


Now we prove Theorem~\ref{theorem:Ka1}.
\begin{proof}[Proof of Theorem~\ref{theorem:Ka1}.]
First we follow the procedure in \cite[\$3]{chhita2016domino} to deal with the $\w2/(\w2^2 - \w1^2)$ part of the integrands of $\mathcal{B}_{\e1, \e2} (a, x_1, x_2, y_1, y_2)$, $\mathcal{B}_{1-\e1, \e2} (1/a, 2n-x_1, x_2, 2n-y_1, y_2)$, $\mathcal{B}_{\e1, 1-\e2} (1/a,\allowbreak x_1, 2n-x_2, y_1, 2n-y_2))$ and $\mathcal{B}_{1-\e1, 1-\e2} (a, 2n-x_1, 2n-x_2, 2n-y_1, 2n-y_2)$, since this is not convenient for asymptotic analysis.

Observe that \[
\frac{\w2}{\w2^2 - \w1^2} = \frac{1}{2}\left(\frac{1}{\w2-\w1} + \frac{1}{\w2 + \w1}\right).\] We can substitute this expression into the integrands and separate each integral into two double integrals. Then we want to do a change of variables $\w2 \rightarrow -\w2$ for the second integral. Note that by choice of branch cut, we have \begin{equation}
\sqrt{(-\omega)^2 + 2c} = -\sqrt{\omega^2 + 2c}\label{eq:branchcutodd}
\end{equation} and so \begin{equation}
G(-\omega) = -G(\omega)\label{eq:Gisodd}
\end{equation} Using Equation~\ref{eq:Gisodd} and the fact that $y_1 + y_2 \equiv 2\e2+1 \mod 4$, $y_1\equiv 0 \mod 2$ and $y_2 \equiv 1 \mod 4$ we have \begin{align*}
H_{y_1, y_2 + 1}(-\omega) &= \frac{\omega^{2m} G(-\omega)^{2m - \frac{y_1}{2}}}{G(-\omega^{-1})^{2m - \frac{y_2 + 1}{2}}} \\
&= (-1)^{(y_2 -y_1 + 1)/2} H_{y_1, y_2 + 1}(\omega) \\
&= (-1)^{(y_2 +y_1 + 1)/2} H_{y_1, y_2 + 1}(\omega) \\
&= (-1)^{\e2 + 1} H_{y_1, y_2 + 1}(\omega).
\end{align*} Similarly we have \begin{align*}
H_{2n - y_1, y_2 + 1}(-\omega) &= \frac{\omega^{2m} G(-\omega)^{2m - 4m + \frac{y_1}{2}}}{G(-\omega^{-1})^{2m - \frac{y_2 + 1}{2}}} \\
&= (-1)^{(y_1 + y_2 + 1)/2}H_{2n - y_1, y_2 + 1}(\omega) \\
& = (-1)^{\e2 + 1} H_{2n - y_1, y_2 + 1}(\omega) 
\end{align*} So we find \begin{multline*}
\mathcal{B}_{\e1, \e2} (a, x_1, x_2, y_1, y_2) = \frac{i^{(x_2 - x_1 + y_1  - y_2)/2}}{(2\pi i)^2} \int_{\circlecontour_r} \frac{d\w1}{\w1} \int_{\circlecontour_{1/r}}d\w2\, \frac{1}{2}\left(\frac{1}{\w2-\w1} + \frac{1}{\w2 + \w1}\right)\\
\hspace{1cm}\times  \frac{H_{x_1 + 1, x_2}(\w1)}{H_{y_1, y_2 + 1}(\w2)}\sum_{\g1,\g2=0}^1 Q_{\g1,\g2}^{\e1,\e2}(\w1,\w2) \\
= \frac{i^{(x_2 - x_1 + y_1  - y_2)/2}}{(2\pi i)^2}\Bigg( \int_{\circlecontour_r} \frac{d\w1}{\w1} \int_{\circlecontour_{1/r}}d\w2\, \frac{1}{\w2-\w1}\frac{H_{x_1 + 1, x_2}(\w1)}{H_{y_1, y_2 + 1}(\w2)} \frac{1}{2}\sum_{\g1,\g2=0}^1 Q_{\g1,\g2}^{\e1,\e2}(\w1,\w2) \\
\hspace{1cm} +{(2\pi i)^2} \int_{\circlecontour_r} \frac{d\w1}{\w1} \int_{\circlecontour_{1/r}}(-d\w2)\, \frac{1}{-\w2 + \w1} \frac{H_{x_1 + 1, x_2}(\w1)}{H_{y_1, y_2 + 1}(\w2)} \frac{1}{2}(-1)^{\e2 + 1}\sum_{\g1,\g2=0}^1Q_{\g1,\g2}^{\e1,\e2}(\w1,\w2)\Bigg) \\
= \frac{i^{(x_2 - x_1 + y_1  - y_2)/2}}{(2\pi i)^2} \int_{\circlecontour_r} \frac{d\w1}{\w1} \int_{\circlecontour_{1/r}}d\w2\, \frac{1}{\w2-\w1} \frac{H_{x_1 + 1, x_2}(\w1)}{H_{y_1, y_2 + 1}(\w2)} \\
\hspace{1cm}\times\frac{1}{2}\sum_{\g1,\g2=0}^1 (Q_{\g1,\g2}^{\e1,\e2}(\w1,\w2) + (-1)^{1 + \e2}Q_{\g1,\g2}^{\e1,\e2}(\w1,-\w2)) \\
= \frac{i^{(x_2 - x_1 + y_1  - y_2)/2}}{(2\pi i)^2} \int_{\circlecontour_r} \frac{d\w1}{\w1} \int_{\circlecontour_{1/r}}d\w2 \frac{V_{\e1,\e2}^{0,0}(\w1,\w2)}{\w2-\w1} \frac{H_{x_1 + 1, x_2}(\w1)}{H_{y_1, y_2 + 1}(\w2)}  
\end{multline*} Similarly we find \begin{multline*}
a^{-1}\mathcal{B}_{1-\e1,\e2}(a^{-1},2n-x_1,x_2,2n-y_1,y_2) \\
= -\frac{(-1)^\e2 i^{(x_1-x_2-y_1-y_2)/2}}{(2\pi i)^2} \int_{\circlecontour_r} \frac{d\omega_1}{\omega_1} \int_{\circlecontour_{1/r}}d\omega_2 \frac{V^{1,0}_{\e1, \e2} (\omega_1, \omega_2)}{\omega_2 - \omega_1} \frac{H_{x_1+1,x_2}(\omega_1)}{H_{2n-y_1,y_2+1}(\omega_2)}, \end{multline*}
\begin{multline*}
a^{-1}\mathcal{B}_{\e1,1-\e2}(a^{-1},x_1,2n-x_2,y_1,2n-y_2) \\
= -\frac{(-1)^\e1 i^{(y_2-y_1-x_2-x_1)/2}}{(2\pi i)^2} \int_{\circlecontour_r} \frac{d\omega_1}{\omega_1} \int_{\circlecontour_{1/r}}d\omega_2 \frac{V^{0,1}_{\e1, \e2} (\omega_1, \omega_2)}{\omega_2 - \omega_1} \frac{H_{x_1+1,2n-x_2}(\omega_1)}{H_{y_1,y_2+1}(\omega_2)} \end{multline*} and
\begin{multline*}
\mathcal{B}_{1-\e1,1-\e2}(a,2n-x_1,2n-x_2,2n-y_1,2n-y_2) \\
= -\frac{(-1)^{\e1+\e2} i^{(y_2+y_1+x_2+x_1)/2}}{(2\pi i)^2} \int_{\circlecontour_r} \frac{d\omega_1}{\omega_1} \int_{\circlecontour_{1/r}}d\omega_2 \frac{V^{1,1}_{\e1, \e2} (\omega_1, \omega_2)}{\omega_2 - \omega_1} \frac{H_{x_1+1,2n-x_2}(\omega_1)}{H_{2n-y_1,y_2+1}(\omega_2)}. \end{multline*}
 Now we want to replace the $H_{x_1,x_2}$ terms with their $\widetilde{H}_{x_1,x_2}$ equivalents. From Equation~\ref{eq:H} have \begin{align}
 \frac{H_{x_1 + 1, x_2}(\w1)}{H_{y_1, y_2 + 1}(\w2)} &= i^{-(x_1 + x_2 - y_1 - y_2)/2}\frac{\widetilde{H}_{x_1 + 1, x_2}(\w1)}{\widetilde{H}_{y_1, y_2 + 1}(\w2)} \\
 \frac{H_{x_1+1,x_2}(\omega_1)}{H_{2n-y_1,y_2+1}(\omega_2)} &= i^{-(x_1 + x_2 + y_1 - y_2)/2} \frac{\widetilde{H}_{x_1+1,x_2}(\omega_1)}{\widetilde{H}_{2n-y_1,y_2+1}(\omega_2)} \\
 \frac{H_{x_1+1,2n-x_2}(\omega_1)}{H_{y_1,y_2+1}(\omega_2)} &= i^{-(x_1 - x_2 - y_1 - y_2)/2} \frac{\widetilde{H}_{x_1+1,2n-x_2}(\omega_1)}{\widetilde{H}_{y_1,y_2+1}(\omega_2)} \\
  \frac{H_{x_1+1,2n-x_2}(\omega_1)}{H_{2n-y_1,y_2+1}(\omega_2)} &= i^{-(x_1 - x_2 + y_1 - y_2)/2}  \frac{\widetilde{H}_{x_1+1,2n-x_2}(\omega_1)}{\widetilde{H}_{2n-y_1,y_2+1}(\omega_2)}
 \end{align} So we can write \begin{align*}
 \mathcal{B}_{\e1, \e2} (a, x_1, x_2, y_1, y_2) &= \frac{i^{y_1 - x_1}}{(2\pi i)^2} \int_{\circlecontour_r} \frac{d\w1}{\w1} \int_{\circlecontour_{1/r}}d\w2 \frac{V_{\e1,\e2}^{0,0}(\w1,\w2)}{\w2-\w1} \frac{\widetilde{H}_{x_1 + 1, x_2}(\w1)}{\widetilde{H}_{y_1, y_2 + 1}(\w2)} \\
 &= \mathcal{I}^{0,0}_{\e1,\e2} (a, x_1, x_2, y_1, y_2), \end{align*}
 \begin{align*}
a^{-1}\mathcal{B}_{1-\e1,\e2}&(a^{-1},2n-x_1,x_2,2n-y_1,y_2) \\
&= -\frac{(-1)^\e2 i^{-x_2-y_1}}{(2\pi i)^2} \int_{\circlecontour_r} \frac{d\omega_1}{\omega_1} \int_{\circlecontour_{1/r}}d\omega_2 \frac{V^{1,0}_{\e1, \e2} (\omega_1, \omega_2)}{\omega_2 - \omega_1} \frac{\widetilde{H}_{x_1+1,x_2}(\omega_1)}{\widetilde{H}_{2n-y_1,y_2+1}(\omega_2)} \\
&= -(-1)^{\e2}i^{- x_2 - 2y_1  + x_1}\mathcal{I}^{1,0}_{\e1,\e2} (a, x_1, x_2, y_1, y_2), \end{align*}
\begin{align*}
a^{-1}\mathcal{B}_{\e1,1-\e2}&(a^{-1},x_1,2n-x_2,y_1,2n-y_2) \\
&= -\frac{(-1)^\e1 i^{y_2-x_1}}{(2\pi i)^2} \int_{\circlecontour_r} \frac{d\omega_1}{\omega_1} \int_{\circlecontour_{1/r}}d\omega_2 \frac{V^{0,1}_{\e1, \e2} (\omega_1, \omega_2)}{\omega_2 - \omega_1} \frac{\widetilde{H}_{x_1+1,2n-x_2}(\omega_1)}{\widetilde{H}_{y_1,y_2+1}(\omega_2)} \\
&= -(-1)^\e1 i^{y_2- y_1}\mathcal{I}^{0,1}_{\e1,\e2} (a, x_1, x_2, y_1, y_2)\end{align*} and
\begin{align*}
\mathcal{B}_{1-\e1,1-\e2}&(a,2n-x_1,2n-x_2,2n-y_1,2n-y_2) \\
&= -\frac{(-1)^{\e1+\e2} i^{y_2+x_2}}{(2\pi i)^2} \int_{\circlecontour_r} \frac{d\omega_1}{\omega_1} \int_{\circlecontour_{1/r}}d\omega_2 \frac{V^{1,1}_{\e1, \e2} (\omega_1, \omega_2)}{\omega_2 - \omega_1} \frac{\widetilde{H}_{x_1+1,2n-x_2}(\omega_1)}{\widetilde{H}_{2n-y_1,y_2+1}(\omega_2)}\\
&=-(-1)^{\e1+\e2} i^{y_2+x_2-y_1 + x_1} \mathcal{I}^{0,1}_{\e1,\e2} (a, x_1, x_2, y_1, y_2). \end{align*}
So substituting into Equation~\ref{eq:Ka1original} we have 
\begin{multline*}
K_a^{-1}((x_1,x_2),(y_1,y_2)) = \mathbb{K}_{a,0,0}^{-1}((x_1,x_2),(y_1,y_2)) - \Big(\mathcal{I}^{0,0}_{\e1,\e2} (a, x_1, x_2, y_1, y_2) \\
+ (-1)^{\e1}i^{- x_2 - 2y_1  + x_1 + 1}\mathcal{I}^{1,0}_{\e1,\e2} (a, x_1, x_2, y_1, y_2)
+  (-1)^\e2 i^{y_2- y_1+1}\mathcal{I}^{0,1}_{\e1,\e2} (a, x_1, x_2, y_1, y_2)\\
 -(-1)^{\e1+\e2} i^{y_2+x_2-y_1 + x_1} \mathcal{I}^{0,1}_{\e1,\e2} (a, x_1, x_2, y_1, y_2)\Big)
\end{multline*} So all that remains to be shown is that the coefficients of $\mathcal{I}^{j,k}_{\e1,\e2} (a, x_1, x_2, y_1, y_2)$ are 1, $-1$, $-1$ and 1 for $(j,k) = (0,0),$ $(1,0),$ $(0,1)$ and$ (1,1)$ respectively. The first is trivial. For the others, we need to state a few equations. From the definition of $\e1$ and $\e2$ we have \begin{equation}
(-1)^\e1 = i^{x_1 + x_2 -1} \hspace{0.5cm}\text{and}\hspace{0.5cm} (-1)^\e2 = i^{y_1 + y_2 - 1}
\end{equation} Also $x_1, y_2 \equiv 1 \mod 2$ and $x_2, y_1 \equiv 0 \mod 2$ so $i^{2x_1} = -1$, $i^{2x_2} = 1$, $i^{2y_1} = 1$ and $i^{2y_2} = -1$. Hence we find \begin{align*}
(-1)^{\e1}i^{- x_2 - 2y_1  + x_1 + 1} &= i^{2x_1 - 2y_1} \\
&= -1 \\
 (-1)^\e2 i^{y_2- y_1+1} &= i^{2y_2} \\
 &= -1 \\
 -(-1)^{\e1+\e2} i^{y_2+x_2-y_1 + x_1} &= -i^{2x_1+2x_2+2y_2 -2}\\
 &= 1
\end{align*} as required. This concludes the proof of Theorem~\ref{theorem:Ka1}.
\end{proof}

\subsection{Proof of Theorem~\ref{thm:Vexpansion}}\label{sec:Vexpansionproof}
First we will rewrite $V^{j,k}_{\e1, \e2} (\w1, \w2) $ in a way that is easier to work with. 

Define rational functions $\mathrm{z}_{\g1,\g2}^{\e1,\e2}(u,v)$ as follows. First define \begin{align}
\begin{split}
\mathrm{z}_{0,0}^{0,0}(a,u,v) =& \frac{a}{(a^2+1)^2}(2a^6 u^2 v^2 - a^4(1 + u^4 + u^2v^2 - u^4v^2 + v^4 - u^2v^4) \\
&- a^2(1+3u^2 + 3v^2 + 2u^2v^2 + u^4v^2 + u^2v^4 - u^4v^4) \\
&- (1+v^2+u^2+3u^2v^2)) \\
\mathrm{z}_{0,1}^{0,0}(a,u,v) =& \frac{a}{4(a^2 + 1)}(1 + a^2u^2)(2a^2v^2 + 1 + v^2 - u^2 + u^2v^2) \\
\mathrm{z}_{1,0}^{0,0}(a,u,v) =& \frac{a}{4(a^2+1)}(1 + a^2v^2)(2a^2u^2 + 1 - u^2 + v^2 + u^2v^2) \\
\mathrm{z}_{1,1}^{0,0}(a,u,v) =& \frac{a}{4}(2a^2u^2v^2 - 1 + v^2 + u^2 + u^2v^2).\end{split}\label{eq:z00}
\end{align} Then for $\gamma_1,\gamma_2 \in \{0,1\}$ define \begin{align}\begin{split}
\mathrm{z}_{\gamma_1,\gamma_2 }^{0,1}(a,u,v) &= a^3 (\mathrm{z}_{\gamma_1,\gamma_2 }^{0,0}(a^{-1},u,v^{-1})) \\
\mathrm{z}_{\gamma_1,\gamma_2 }^{1,0}(a,u,v) &= a^3( \mathrm{z}_{\gamma_1,\gamma_2 }^{0,0}(a^{-1}, u^{-1},v)) \\
\mathrm{z}_{\gamma_1,\gamma_2 }^{1,1}(a,u,v) &= \mathrm{z}_{\gamma_1,\gamma_2 }^{0,0}(a, u^{-1},v^{-1}).\end{split}\label{eq:zothers}
\end{align} We write $\mathrm{z}_{\g1,\g2}^{\e1,\e2}(u,v) = \mathrm{z}_{\g1,\g2}^{\e1,\e2}(a, u,v)$. 
 Note that we have \begin{equation}
\mathrm{y}_{j,k}^{0,0}(a,1,u,v) = \frac{ z_{j,k}^{0,0}(a,u,v)}{f_{a,1}(u,v)}\label{eq:ya1}
\end{equation}and \begin{equation}
\mathrm{y}_{j,k}^{0,0}(1,a,u,v) = a^3 \frac{ z_{j,k}^{0,0}(a^{-1},u,v)}{f_{a,1}(u,v)}\label{eq:y1a}
\end{equation} where $y_{j,k}^{0,0}(a,b,u,v)$ is defined in Equation~\ref{eq:y00} and $f_{a,b}(u,v)$ defined in Equation~\ref{eq:fa}. We will also make use of the following lemma.
\begin{lemma}[Chhita and Johansson \cite{chhita2016domino}] The function $f_{a,1}$ satisfies \begin{equation}
\frac{1 - \w1^2 \w2^2}{f_{a,1}(G(\w1), G(\w2))} = \frac{1}{4(1+a^2)^2 G(\w1)^2 G(\w2)^2}\label{eq:frelation}
\end{equation}
\end{lemma}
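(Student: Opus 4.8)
The plan is to reduce the two–variable claim to a one–variable identity for $G$. The first step is to expand $f_{a,1}$. Since $f_{a,b}(u,v)$ in Equation~\ref{eq:fa} is literally $(A-B)(A+B)$ with $A = 2(a^2+b^2)uv$ and $B = ab(u^2-1)(v^2-1)$, setting $b=1$ gives
\[
f_{a,1}(u,v) = 4(a^2+1)^2 u^2 v^2 - a^2 (u^2-1)^2 (v^2-1)^2 .
\]
So it is enough to prove the (rational–function) identity
\[
a^2 \big(G(\w1)^2-1\big)^2\big(G(\w2)^2-1\big)^2 = 4(a^2+1)^2\,\w1^2\w2^2\,G(\w1)^2 G(\w2)^2 ,
\]
after which Equation~\ref{eq:frelation} follows by substituting $u=G(\w1)$, $v=G(\w2)$ into the expansion of $f_{a,1}$, rewriting it as $4(a^2+1)^2 G(\w1)^2 G(\w2)^2 (1-\w1^2\w2^2)$, and dividing; the factor $1-\w1^2\w2^2$ then cancels, the whole manipulation being an identity of meromorphic functions in $\w1,\w2$.

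The heart of the matter is the single–variable relation
\[
G(\omega)^2 - 1 = \sqrt{2/c}\;\omega\,G(\omega),
\]
which I would establish directly from $G(\omega) = \tfrac{1}{\sqrt{2c}}\big(\omega - \sqrt{\omega^2+2c}\big)$: squaring and using only that $\big(\sqrt{\omega^2+2c}\big)^2 = \omega^2 + 2c$ gives $G(\omega)^2 = \tfrac1c\big(\omega^2 + c - \omega\sqrt{\omega^2+2c}\big)$, hence $G(\omega)^2-1 = \tfrac{\omega}{c}\big(\omega-\sqrt{\omega^2+2c}\big) = \tfrac{\omega\sqrt{2c}}{c}G(\omega) = \sqrt{2/c}\,\omega\,G(\omega)$. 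Applying this at $\omega=\w1$ and $\omega=\w2$, multiplying, and squaring yields $\big(G(\w1)^2-1\big)^2\big(G(\w2)^2-1\big)^2 = \tfrac{4}{c^2}\,\w1^2\w2^2\,G(\w1)^2 G(\w2)^2$. Finally I would use the elementary identity $a/c = a(a+a^{-1}) = a^2+1$, so $a^2/c^2 = (a^2+1)^2$, which turns the last display into exactly the rational–function identity above, completing the proof.

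I do not expect a genuine obstacle here; it is a short computation. The only points needing a little care are that the relation $G(\omega)^2-1 = \sqrt{2/c}\,\omega\,G(\omega)$ is \emph{branch–independent} — it uses nothing about the branch fixed in Equation~\ref{eq:sqrtbranchcut} other than $\big(\sqrt{\omega^2+2c}\big)^2 = \omega^2+2c$ — so the branch–cut bookkeeping in the definition of $\sqrt{\omega^2+2c}$ plays no role for this lemma, and that the cancellation of $1-\w1^2\w2^2$ should be understood as an equality of rational (equivalently, meromorphic) functions rather than a pointwise statement where that factor vanishes.
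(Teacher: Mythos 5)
Your proposal is correct. The paper does not prove this lemma itself (it is quoted from Chhita--Johansson), so there is no internal proof to compare against; your computation is a complete and self-contained verification. The key one-variable identity $G(\omega)^2-1=\sqrt{2/c}\,\omega\,G(\omega)$ is exactly the quadratic relation satisfied by $G$ (equivalently $G(\omega)-G(\omega)^{-1}=2\omega/\sqrt{2c}$), and combined with $a/c=a^2+1$ and the factored form $f_{a,1}(u,v)=4(a^2+1)^2u^2v^2-a^2(u^2-1)^2(v^2-1)^2$ it yields Equation~\ref{eq:frelation} as an identity of meromorphic functions; your remarks on branch-independence and on interpreting the cancellation of $1-\w1^2\w2^2$ are both apt.
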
 Now we can prove
\begin{lemma}
For $j,k,\e1,\e2\in\{0,1\}$, we have
\begin{multline}
V^{j,k}_{\e1, \e2} (\w1, \w2) = \frac{(-1)^{\e1+\e2+\e1\e2}\, G(\w1)^{3\e1-1}\, G(\w2^{-1})^{3\e2-1}}{4(1+a^2)^2 \prod_{i=1,2} \sqrt{\omega_i^2 + 2c}\sqrt{\omega_i^{-2} + 2c}} \\
\times \sum_{\g1, \g2 = 0}^1 (-1)^{\g1(1+\e2 + k) + \g2(1+\e1 + j)} t(\w1)^{\g1} t(\w2^{-1})^{\g2} \mathrm{z}_{\g1,\g2}^{\e1,\e2}(G(\w1),G(\w2^{-1}))\label{eq:Vsimplified}
\end{multline}
\end{lemma}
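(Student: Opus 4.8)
The plan is to obtain Equation~\ref{eq:Vsimplified} by substituting the explicit formulas for the building blocks into Equation~\ref{eq:Voriginal} and then using the $\mathrm{z}$-functions to absorb the rational factors. \textbf{First}, I substitute $\mathrm{x}^{\e1,\e2}_{\g1,\g2}(\w1,\w2^{-1})$ from Equation~\ref{eq:xg1g2} into the definition of $Q^{\e1,\e2}_{\g1,\g2}$ in Equation~\ref{eq:Q}. Collecting powers, the prefactor $G(\w1)^{\e1}G(\w2^{-1})^{\e2}$ together with the $G(\w1)G(\w2^{-1})$ coming from $\mathrm{x}$ gives $G(\w1)^{\e1+1}G(\w2^{-1})^{\e2+1}$, while the square-root denominator in $\mathrm{x}$ is, after noting that it is symmetric under inversion of each argument, exactly $\prod_{i=1,2}\sqrt{\omega_i^2+2c}\sqrt{\omega_i^{-2}+2c}$. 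This yields
\[
Q^{\e1,\e2}_{\g1,\g2}(\w1,\w2) = \frac{(-1)^{\e1+\e2+\e1\e2+\g1(1+\e2)+\g2(1+\e1)}\,t(\w1)^{\g1}t(\w2^{-1})^{\g2}\,G(\w1)^{\e1+1}G(\w2^{-1})^{\e2+1}}{\prod_{i=1,2}\sqrt{\omega_i^2+2c}\sqrt{\omega_i^{-2}+2c}}\,\mathrm{y}^{\e1,\e2}_{\g1,\g2}(G(\w1),G(\w2^{-1}))\,(1-\w1^2\w2^{-2}).
\]

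\textbf{Second}, the core of the proof is the identity
\[
\mathrm{y}^{\e1,\e2}_{\g1,\g2}(u,v) = \frac{u^{2\e1}\,v^{2\e2}\,\mathrm{z}^{\e1,\e2}_{\g1,\g2}(u,v)}{f_{a,1}(u,v)}, \qquad \e1,\e2,\g1,\g2\in\{0,1\}.
\]
For $(\e1,\e2)=(0,0)$ this is Equation~\ref{eq:ya1}. For the other three cases I combine the definitions in Equations~\ref{eq:otherys} and \ref{eq:zothers} with Equation~\ref{eq:y1a}, using the scaling relations $f_{a,1}(u^{-1},v)=u^{-4}f_{a,1}(u,v)$ and $f_{a,1}(u,v^{-1})=v^{-4}f_{a,1}(u,v)$ (immediate from $f_{a,1}(u,v)=4(a^2+1)^2u^2v^2-a^2(u^2-1)^2(v^2-1)^2$) together with the symmetry $f_{a,b}=f_{b,a}$; the $a^3$ factors in Equations~\ref{eq:y1a} and \ref{eq:zothers} cancel against each other, leaving the stated powers of $u$ and $v$. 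Feeding this back into the formula for $Q$ and then invoking Equation~\ref{eq:frelation} with $\w2$ replaced by $\w2^{-1}$, i.e. $\frac{1-\w1^2\w2^{-2}}{f_{a,1}(G(\w1),G(\w2^{-1}))}=\frac{1}{4(1+a^2)^2G(\w1)^2G(\w2^{-1})^2}$, the power of $G(\w1)$ becomes $\e1+1+2\e1-2=3\e1-1$ and similarly that of $G(\w2^{-1})$ becomes $3\e2-1$.

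\textbf{Third}, I collapse the symmetrization in Equation~\ref{eq:Voriginal}. In the rewritten $Q$ the variable $\w2$ enters only through $\w2^{-1}$ inside $t$, $G$, the $\mathrm{z}$-function, and the square-root factors; so using $t(-\omega)=t(\omega)$ and $G(-\omega)=-G(\omega)$ from Equation~\ref{eq:Gtevenodd}, the relation $\sqrt{(-\omega)^{\pm 2}+2c}=-\sqrt{\omega^{\pm 2}+2c}$ from Equation~\ref{eq:sqrtminus}, and the fact that every $\mathrm{z}^{\e1,\e2}_{\g1,\g2}$ is even in each of its arguments (read off from Equations~\ref{eq:z00}--\ref{eq:zothers}), one gets $Q^{\e1,\e2}_{\g1,\g2}(\w1,-\w2)=(-1)^{\e2+1}Q^{\e1,\e2}_{\g1,\g2}(\w1,\w2)$. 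Hence $Q^{\e1,\e2}_{\g1,\g2}(\w1,\w2)+(-1)^{\e2+1}Q^{\e1,\e2}_{\g1,\g2}(\w1,-\w2)=2Q^{\e1,\e2}_{\g1,\g2}(\w1,\w2)$, the factor $\tfrac12$ in Equation~\ref{eq:Voriginal} cancels, and $V^{j,k}_{\e1,\e2}=\sum_{\g1,\g2=0}^{1}(-1)^{\g2 j+\g1 k}Q^{\e1,\e2}_{\g1,\g2}$. Substituting the expression from the second step, merging the $\gamma$-dependent signs via $(-1)^{\g1(1+\e2)+\g2(1+\e1)}(-1)^{\g2 j+\g1 k}=(-1)^{\g1(1+\e2+k)+\g2(1+\e1+j)}$, and pulling the $\gamma$-independent factor out of the sum gives exactly Equation~\ref{eq:Vsimplified}.

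The main obstacle is the identity in the second step: verifying $\mathrm{y}^{\e1,\e2}_{\g1,\g2}=u^{2\e1}v^{2\e2}\mathrm{z}^{\e1,\e2}_{\g1,\g2}/f_{a,1}$ uniformly in all four index pairs requires careful bookkeeping of the $a$-rescalings and the $u,v$-inversions in Equations~\ref{eq:otherys}, \ref{eq:zothers}, \ref{eq:ya1}, \ref{eq:y1a} together with the scaling behaviour of $f_{a,1}$. Everything downstream --- the single application of Equation~\ref{eq:frelation}, the evenness argument that kills the symmetrization, and the exponent and sign arithmetic --- is routine once that identity is in hand.
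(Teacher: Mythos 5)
Your proposal is correct and follows essentially the same route as the paper's proof in Appendix B: the key identity $\mathrm{y}^{\e1,\e2}_{\g1,\g2}(u,v)=u^{2\e1}v^{2\e2}\mathrm{z}^{\e1,\e2}_{\g1,\g2}(u,v)/f_{a,1}(u,v)$ (obtained from the $a$-rescalings and the scaling laws $f_{a,1}(u^{-1},v)=u^{-4}f_{a,1}(u,v)$, $f_{a,1}(u,v^{-1})=v^{-4}f_{a,1}(u,v)$), a single application of Equation~\ref{eq:frelation} to produce the exponents $3\e1-1$, $3\e2-1$, and the parity relation $Q^{\e1,\e2}_{\g1,\g2}(\w1,-\w2)=(-1)^{\e2+1}Q^{\e1,\e2}_{\g1,\g2}(\w1,\w2)$ to collapse the symmetrization. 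The only difference is cosmetic ordering (the paper collapses the symmetrization first), and your exponent and sign bookkeeping checks out.
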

\begin{proof}
Recall the definition of  $V^{j,k}_{\e1, \e2} (\w1, \w2) $ from Equation~\ref{eq:Voriginal}. First we show that $\mathrm{x}_{\g1, \g2}^{\e1, \e2}(\w1, \w2)$ (defined in Equation~\ref{eq:xg1g2}) is an odd function in each variable. Recall from Equations~\ref{eq:branchcutodd} and \ref{eq:Gisodd} that we have $\sqrt{(-\omega)^2 + 2c} = -\sqrt{\omega^2 + 2c}$ and $G(-\omega) = -G(\omega)$. From Equation~\ref{eq:branchcutodd} we also see that
\begin{equation}
t(-\omega) = t(\omega)
\end{equation} where $t(\omega) = \omega \sqrt{\omega^{-2} + 2c}$ is defined in Equation~\ref{eq:t}. So \begin{equation*}\mathrm{x}_{\g1, \g2}^{\e1, \e2}(\w1, -\w2) = \frac{-G(\w1) G(\w2)}{\prod_{i=1}^2 \sqrt{\omega_i^2 +2c} \sqrt{\omega_i^{-2} + 2c}} \textrm{y}_{\g1, \g2}^{\e1, \e2} (G(\w1), -G(\w2))(1 - \w1^2\w2^2).\end{equation*} It is clear that $\mathrm{z}_{j,k}^{\e1,\e2}(u,v)$ is an even function in each variable, since it only contains terms of even order. We also see that $f_{a,1}(u, v)$ defined in Equation~\ref{eq:fa} satisfies \begin{align*}
f_{a,1}(u, -v) &= (-2a^2 uv - 2uv -a(-1+u^2)(-1+v^2))(-2a^2 uv - 2uv +a(-1+u^2)(-1+v^2))\\
&= (2a^2 uv + 2uv + a(-1+u^2)(-1+v^2))(2a^2 uv + 2uv - a(-1+u^2)(-1+v^2))\\
&= f_{a,1}(u,v)
\end{align*} and similarly for the first variable. So $\mathrm{y}_{j,k}^{\e1,\e2}(u,v)$ is an even function in each variable. Hence \begin{equation*}\mathrm{x}_{\g1, \g2}^{\e1, \e2}(\w1, -\w2) =-\mathrm{x}_{\g1, \g2}^{\e1, \e2}(\w1, \w2)\end{equation*} and similarly $\mathrm{x}_{\g1, \g2}^{\e1, \e2}(-\w1, \w2) = -\mathrm{x}_{\g1, \g2}^{\e1, \e2}(\w1, -\w2)$. So we see that \begin{align*}
Q_{\g1,\g2}^{\e1,\e2}(\w1,-\w2) =& (-1)^{\e1+\e2+\e1\e2 + \g1(1+\e2) + \g2(1+\e1)} t(\w1)^{\g1}t((-\w2)^{-1})^{\g2}\\
&\hspace{1cm}\times G(\w1)^\e1 G((-\w2)^{-1})^\e2 \mathrm{x}_{\g1, \g2}^{\e1, \e2}(\w1, (-\w2)^{-1})\\
=& (-1)^{\e1+\e2+\e1\e2 + \g1(1+\e2) + \g2(1+\e1)}t(\w1)^{\g1}t(\w2^{-1})^{\g2}\\
&\hspace{1cm}\times G(\w1)^\e1 (-1)^\e2 G(\w2^{-1})^\e2 (-\mathrm{x}_{\g1, \g2}^{\e1, \e2}(\w1, \w2^{-1}))\\
=& (-1)^{\e2+1} Q_{\g1,\g2}^{\e1,\e2}(\w1,\w2)
\end{align*} where $Q_{\g1,\g2}^{\e1,\e2}(\w1,\w2)$ is defined in Equation~\ref{eq:Q}. So $V^{\delta_1,\delta_2}_{\e1, \e2} (\w1, \w2)$ can be simplified to \begin{equation}\tilde{V}^{\delta_1,\delta_2}_{\e1, \e2} (\w1, \w2) = \sum_{\gamma_1,\gamma_2=0}^1 (-1)^{\gamma_2\delta_1 + \gamma_1\delta_2}Q_{\gamma_1,\gamma_2}^{\e1,\e2}(\omega_1,\omega_2)
\end{equation} Now we introduce the notation \begin{equation}
\phi(\varepsilon) = \begin{cases} 1 & \text{if } \varepsilon = 0\\
-1 & \text{if } \varepsilon = 1 \end{cases}
\end{equation} and \[
h(\e1,\e2) =  \begin{cases} 0 & \text{if } \e1=\e2\\
1 & \text{if } \e1 \neq \e2 \end{cases}
\]So we can write Equation~\ref{eq:otherys} as \begin{align}
\begin{split}
\mathrm{y}_{j,k}^{\e1,\e2}(a,1,u,v) &= \frac{\mathrm{y}_{j,k}^{0,0}(a^{1-h(\e1,\e2)},a^{h(\e1,\e2)}, u^{\phi(\e1)},v^{\phi(\e2)})}{u^{2\e1} v^{2\e2}} \\
&= a^{3 h(\e1,\e2)}\frac{\mathrm{z}_{j,k}^{0,0}(a^{\phi(\e1)\phi(\e2)}, u^{\phi(\e1)},v^{\phi(\e2)})}{u^{2\e1}v^{2\e2}f_{a,1}(u^{\phi(\e1)},v^{\phi(\e2)})}\\
&= \frac{\mathrm{z}_{j,k}^{\e1,\e2}(u,v)}{u^{2\e1}v^{2\e2}f_{a,1}(u^{\phi(\e1)},v^{\phi(\e2)})}.\end{split}\label{eq:yije1e2_unsimplified}
\end{align}
where we use Equations~\ref{eq:ya1}-\ref{eq:y1a} and Equation~\ref{eq:zothers} in the second and third lines respectively, and recall that  $\mathrm{z}_{\g1,\g2}^{\e1,\e2}(u,v) = \mathrm{z}_{\g1,\g2}^{\e1,\e2}(a, u,v)$. Now we have \begin{align*}
f_{a,1}(u, v^{-1}) &= (2a^2 uv^{-1} + 2uv^{-1} -a(-1+u^2)(-1+v^{-2})) \\
&\hspace{2cm}\times (2a^2 uv^{-1} + 2uv^{-1} +a(-1+u^2)(-1+v^{-2}))\\
&= v^{-4}(2a^2 uv + 2uv -a(-1+u^2)(-1+v^2)) \\
&\hspace{2cm}\times (2a^2 uv + 2uv +a(-1+u^2)(-1+v^2)) \\
&= v^{-4}f_{a,1}(u, v)
\end{align*} and similarly $f_{a,1}(u^{-1}, v) = u^{-4}f_{a,1}(u,v)$. So $f_{a,1}(u^{\phi(\e1)},v^{\phi(\e2)}) = u^{-4\e1}v^{-4\e2} f_{a,1}(u,v)$. Therefore we can simplify Equation~\ref{eq:yije1e2_unsimplified} to \begin{equation}
\mathrm{y}_{j,k}^{\e1,\e2}(u,v) = \frac{u^{2\e1}v^{2\e2}\mathrm{z}_{j,k}^{\e1,\e2}(u,v)}{f_{a,1}(u, v)}
\end{equation} Now using Equation~\ref{eq:frelation} we have \begin{align*}
\textrm{y}_{\g1, \g2}^{\e1, \e2} (G(\w1), G(\w2))(1 - \w1^2\w2^2) &= G(\w1)^{2\e1}G(\w2)^{2\e2}\mathrm{z}_{j,k}^{\e1,\e2}(G(\w1),G(\w2))\\
&\hspace{1cm}\times \frac{(1 - \w1^2\w2^2)}{f_{a,1}(G(\w1), G(\w2))}\\ 
&= \frac{G(\w1)^{2\e1}G(\w2)^{2\e2}\mathrm{z}_{j,k}^{\e1,\e2}(G(\w1),G(\w2))}{{4(1+a^2)^2 G(\w1)^2 G(\w2)^2}} \\
&= \frac{G(\w1)^{2\e1-2}G(\w2)^{2\e2-2}\mathrm{z}_{j,k}^{\e1,\e2}(G(\w1),G(\w2))}{4(1+a^2)^2}
\end{align*} So $\mathrm{x}_{\g1, \g2}^{\e1, \e2}(\w1, \w2)$ defined in Equation~\ref{eq:xg1g2} can be written as
\begin{align*}
\mathrm{x}_{\g1, \g2}^{\e1, \e2}(\w1, \w2) &= \frac{G(\w1)^{2\e1-1} G(\w2)^{2\e2-1}\mathrm{z}_{j,k}^{\e1,\e2}(G(\w1),G(\w2))}{4(1+a^2)^2\prod_{i=1}^2 \sqrt{\omega_i^2 +2c} \sqrt{\omega_i^{-2} + 2c}}
\end{align*} and $Q_{\g1,\g2}^{\e1,\e2}(\w1,\w2)$ defined in Equation~\ref{eq:Q} can be written \begin{multline*}
Q_{\g1,\g2}^{\e1,\e2}(\w1,\w2) = (-1)^{\e1+\e2+\e1\e2 + \g1(1+\e2) + \g2(1+\e1)} t(\w1)^{\g1}t(\w2^{-1})^{\g2} \\
\times \frac{G(\w1)^{3\e1-1} G(\w2^{-1})^{3\e2-1}\mathrm{z}_{j,k}^{\e1,\e2}(G(\w1),G(\w2^{-1}))}{4(1+a^2)^2\prod_{i=1}^2 \sqrt{\omega_i^2 +2c} \sqrt{\omega_i^{-2} + 2c}}\end{multline*} Hence we have \begin{align*}
V^{\delta_1,\delta_2}_{\e1, \e2} (\w1, \w2) &= \sum_{\gamma_1,\gamma_2=0}^1 (-1)^{\gamma_2\delta_1 + \gamma_1\delta_2}Q_{\gamma_1,\gamma_2}^{\e1,\e2}(\omega_1,\omega_2) \\
&= \sum_{\gamma_1,\gamma_2=0}^1 (-1)^{\e1+\e2+\e1\e2+\g1(1+\e2 + \delta_2) + \g2(1+\e1 + \delta_1)} t(\w1)^{\g1}t(\w2^{-1})^{\g2}\\
&\hspace{4cm}\times  \frac{G(\w1)^{3\e1-1} G(\w2^{-1})^{3\e2-1}\mathrm{z}_{j,k}^{\e1,\e2}(G(\w1),G(\w2^{-1}))}{4(1+a^2)^2\prod_{i=1}^2 \sqrt{\omega_i^2 +2c} \sqrt{\omega_i^{-2} + 2c}} 
\end{align*} as required.
\end{proof}

 Write \[
\w1 = i + \bconst^2 m^{-1} w \hspace{0.5cm} \text{and} \hspace{0.5cm} \w2 = i + \bconst^2 m^{-1} z 
\] as in Definition~\ref{def:asymptoticsvars}. From Lemma~\ref{lemma:sqrtexpansions} we have \begin{align*}
 \sqrt{\w1^2 + 2c} &=i\bconst m^{-1/2}\sqrt{1/2-2iw} + O(m^{-1}w), \\
 \sqrt{\w1^{-2} + 2c} &= -i\bconst m^{-1/2}\sqrt{1/2 + 2iw} + O(m^{-1}w),\\
 \sqrt{\w2^2 + 2c} &=i\bconst m^{-1/2}\sqrt{1/2-2iz} + O(m^{-1}z), \\
 \sqrt{\w2^{-2} + 2c} &= -i\bconst m^{-1/2}\sqrt{1/2 + 2iz} + O(m^{-1}z).
\end{align*} and from Equation~\ref{eq:Gasymptotics} and the fact that $1/\sqrt{2c} = 1 + O(m^{-1})$ we have \begin{align*}
G(\w1) &= i  - i\bconst m^{-1/2}\sqrt{1/2-2iw} + O(m^{-1}w) \\
G(\w2^{-1}) &= -i  + i\bconst m^{-1/2}\sqrt{1/2+2iz} + O(m^{-1}z).
\end{align*} Also we see that \begin{align*}
t(\w1) &= \bconst m^{-1/2}\sqrt{1/2 + 2iw} + O(m^{-1}w)\\
t(\w2^{-1}) &= \bconst m^{-1/2}\sqrt{1/2-2iz} + O(m^{-1}z).
\end{align*} Recall that \[
a = 1 - \bconst m^{-1/2}.\] Let \[\sigma(w,z) = \sqrt{1/2-2iw}\sqrt{1/2+2iw}\sqrt{1/2-2iz}\sqrt{1/2+2iz}.\] So the part of $V^{j,k}_{\e1, \e2} (\w1, \w2)$ that is not dependent on $\gamma_1, \gamma_2$ can be written \begin{align}\label{eq:Vasymprefactor}
\begin{split}
\frac{(-1)^{\e1+\e2+\e1\e2}\, G(\w1)^{3\e1-1}\, G(\w2^{-1})^{ 3\e2-1}}{4(1+a^2)^2 \prod_{i=1,2} \sqrt{\omega_i^2 + 2c}\sqrt{\omega_i^{-2} + 2c}}
&= \frac{(-1)^{\e1+\e2+\e1\e2}i^{ 3\e1-1}(-i)^{ 3\e2-1} + O(m^{-1/2})}{4(2 + O(m^{-1/2}))^2 (\bconst^4 m^{-2}\sigma(w,z) + O(m^{-5/2} w^{5/2}))}\\
&= \frac{\bconst^{-4}m^2 (-1)^{\e1+\e2+\e1\e2}i^{-\e1 + \e2}}{16 \sigma(w,z)}(1  + O(m^{-1/2}w^{1/2}))
\end{split}
\end{align} Now we look at the terms $\mathrm{z}_{\g1,\g2}^{\e1,\e2}(u,v)$ where $u = G(\w1)$ and $v = G(\w2^{-1})$. We have \begin{align*}
u^2 = G(\w1)^2 &= -1 + 2 m^{-1/2} \bconst\sqrt{1/2 - 2iw} + O(m^{-1}w), \\
v^2 = G(\w2^{-1})^2 &= -1 + 2 m^{-1/2} \bconst\sqrt{1/2 +2iz} + O(m^{-1}z). 
\end{align*} Write $u^2 = -1 -s$ and $v^2 = -1 -t$ where \[
s = -2  \bconst m^{-1/2}\sqrt{1/2 - 2iw} + O(m^{-1}w) \text{ and } t = -2\bconst m^{-1/2} \sqrt{1/2 +2iz} + O(m^{-1}z).\] Then $u^2v^2 = 1 + s + t + st$. Also \begin{equation}\label{eq:uvsquareinverse}
u^{-2} = -1 + s +O(m^{-1}w) \text{ and } v^{-2} = -1 + t + O(m^{-1}z).\end{equation} Write $h = \bconst m^{-1/2}$ so \[
a = 1 - h.
\] First we look at $\textrm{z}^{0,0}_{0, 0}(a^{\pm 1}, u, v).$ It is defined as
\begin{align*} 
\mathrm{z}_{0,0}^{0,0}(a^{\pm 1}, u,v) =& \frac{1}{4a^{\pm 2}(a^{\pm 2}+1)^2}(2a^{\pm 6} u^2 v^2 - a^{\pm 4}(1 + u^4 + u^2v^2 - u^4v^2 + v^4 - u^2v^4) \\
&- a^{\pm 2}(1+3u^2 + 3v^2 + 2u^2v^2 + u^4v^2 + u^2v^4 - u^4v^4) \\
&- (1+v^2+u^2+3u^2v^2)) 
\end{align*} It turns out that we will need the order $m^{-3/2}w^{3/2}$ terms. We have
\begin{align*}
u^2v^2 &= 1 + s + t + st \\
&= (1 + s + t) + st,
\end{align*}
\begin{align*}
1 + u^4 + u^2v^2 - &u^4v^2 + v^4 -u^2v^4 \\
&= 1 + 1 + 2s + s^2 + 1 + s + t + st - (1 + 2s + s^2)(-1-t)\\
&\hspace{1cm} + 1 + 2t + t^2 - (-1-s)(1 + 2t + t^2) \\
&= 6 + 6s + 6t + 2s^2 + 2t^2 + 5st + s^2t +st^2 + O(m^{-2}w^2)\\
&= 6(1 + s + t) + 2s^2 + 2t^2 + 5st + s^2t +st^2 + O(m^{-2}w^2),
\end{align*}
\begin{align*}
1 + 3u^2 + 3v^2 + &2u^2v^2 + u^4v^2 + u^2 v^4 - u^4v^4 \\
&= 1 - 3 -3s -3 - 3t + 2 + 2s + 2t + 2st \\
&\hspace{1cm}+(1 + 2s + s^2)(-1 -t) + (-1 -s)(1 + 2t + t^2)\\
& \hspace{1cm} - (1 + 2s + s^2)(1 + 2t + t^2) \\
&= -6 - 6s - 6t - 6st - 2s^2 - 2t^2 -3s^2t -3st^2 + O(m^{-2}w^2) \\
&= -6(1 + s + t) - 6st - 2s^2 - 2t^2 -3s^2t -3st^2 + O(m^{-2}w^2),
\end{align*}
\begin{align*}
1 + v^2 + u^2 + 3u^2v^2 &= 1 -1 -t -1 -s + 3 + 3s + 3t + 3st \\
&= 2 + 2s + 2t + 3st \\
&= 2(1 + s + t) + 3st
\end{align*} and also
\begin{align*}
2a^{\pm 6} &= 2 \mp 12h + 30h^2 \mp 40h^3 + O(m^{-2}w^2) \\
-a^{\pm 4} &= -1 \pm 4h -6h^2 \pm 4h^3 +O(m^{-2}w^2) \\
-a^{\pm 2} &= -1 \pm 2h -h^2 
\end{align*} Multiplying and adding up, everything up to order $m^{-3/2}$ cancels, and the $h^2s$, $h^2t$ terms also cancel. We are left with $h^3$ terms: 
\begin{align*}
\pm(-40h^3 + 6\times 4 h^3)  &= \mp 16h^3 \\
&= \mp 16 m^{-3/2} \bconst^3
\end{align*} the $hs^2$, $hst$, $ht^2$ terms: \begin{align*}
\pm h(-12st + 4(5 st &+ 2s^2 + 2t^2) + 2(-6st - 2s^2 - 2t^2))\\
 &= \pm h(-4st + 4s^2 + 4t^2) \\
 &= \pm 16m^{-3/2}\bconst^3\left(-\sqrt{1/2-2iw}\sqrt{1/2 +2iz} + (1/2-2iw) + (1/2+2iz)\right),
\end{align*} and the $s^2t$, $st^2$ terms: \begin{align*}
-s^2t - st^2& -(-3)s^2t - (-3)st^2 = 2(s^2t + st^2) \\
&= -16 m^{-3/2}\bconst^3 \sqrt{1/2 - 2iw}\sqrt{1/2+2iz}\left(\sqrt{1/2-2iw} + \sqrt{1/2+2iz} \right).\end{align*}
Putting these all together we obtain \begin{align*}
\textrm{z}^{0,0}_{0,0}(a^{\pm}, u, v) &= m^{-3/2}\bconst^3  \left(\pm\left(- 1 - \sqrt{1/2-2iw}\sqrt{1/2 +2iz} + (1/2-2iw) + (1/2+2iz)\right)\right. \\
&\hspace{1.5cm}\left.- \sqrt{1/2 - 2iw}\sqrt{1/2+2iz}\left(\sqrt{1/2-2iw} + \sqrt{1/2+2iz} \right)\right) + O(m^{-2})\\
&= m^{-3/2}\bconst^3 \Big(\pm(-2iw +2iz) + \sqrt{1/2-2iw}\sqrt{1/2 +2iz}\Big(\mp 1\\
&\hspace{1.5cm} - \sqrt{1/2-2iw} - \sqrt{1/2+2iz}\Big)\Big)+O(m^{-2}w^2, m^{-2}z^2).
\end{align*} Next we look at $\textrm{z}^{0,0}_{1,0}(u, v)$.
\begin{align*}
\textrm{z}^{0,0}_{1,0}(a^{\pm1}, u, v) &= \frac{(1 + a^{\pm 2}v^2)(2a^{\pm 2}u^2 + (1 - v^2 + u^2 + u^2v^2))}{4a^{\pm 2}(a^{\pm 2} + 1)}\\
&= \frac{1}{4(1\pm h)^2(2\mp 2h+h^2)}(1 + (1 \mp 2h + h^2)(-1-t))(2(1 \mp 2h+h^2)(-1-s) \\
& \hspace{2cm}+ (1 + 1 + t - 1 - s + 1 + t + s + st)) \\
&= \frac{1 + O(m^{-1/2})}{8} (\pm 2h-t + O(m^{-1}z))(4h + 2t - 2s +O(m^{-1}w, m^{-1}z)) \\
&= \frac{1}{4}(\pm 2h - t)(\pm 2h + t - s) + O(m^{-3/2}w^{3/2}, m^{-3/2}z^{3/2}) \\
&= \frac{1}{4}(4h^2 - t^2 + s(\mp 2h+t))+ O(m^{-3/2}w^{3/2}, m^{-3/2}z^{3/2})\\
&=\bconst^2 m^{-1}\left(1 - (1/2+2iz)  \pm \sqrt{1/2 -2iw} + \sqrt{1/2-2iw}\sqrt{1/2+2iz}\right) \\
&\hspace{2cm}+ O(m^{-3/2}w^{3/2}, m^{-3/2}z^{3/2})\\
&=\bconst^2 m^{-1}\left(1/2-2iz  \pm \sqrt{1/2 -2iw} + \sqrt{1/2-2iw}\sqrt{1/2+2iz}\right) \\
&\hspace{2cm}+ O(m^{-3/2}w^{3/2}, m^{-3/2}z^{3/2}).
\end{align*} Similarly, \begin{align*}
\textrm{z}^{0,0}_{0,1}(a^{\pm 1}, u, v) &=\bconst^2 m^{-1}\left(1/2+2iw  \pm \sqrt{1/2 +2iz} + \sqrt{1/2-2iw}\sqrt{1/2+2iz}\right) \\
&\hspace{2cm}+ O(m^{-3/2}w^{3/2}, m^{-3/2}z^{3/2}).
\end{align*} Finally,
\begin{align*}
\textrm{z}^{0,0}_{1,1}(a^{\pm 1}, u,v) &=  \frac{1}{4a^{\pm 2}}(2a^{\pm 2}u^2v^2 + (-1 + v^2 + u^2 + u^2v^2))\\
&= \frac{1}{4}(1 \mp h)(2(1 \mp 2h + h^2)(1 + s + t + st) - 1 - 1 - s - 1 - t + 1 + s + t + st) \\
&= \frac{1}{4}(1 \mp h) (2s + 2t \mp 4h + O(m^{-1}w, m^{-1}z)) \\
&= \frac{1}{2}(s + t \mp 2h) + O(m^{-1}w, m^{-1}z) \\
&= m^{-1/2}\bconst\left(-\sqrt{1/2-2iw} - \sqrt{1/2+2iz} \mp 1\right) + O(m^{-1}w, m^{-1}z).
\end{align*}

Now from the definition of $\textrm{z}^{\e1,\e2}_{\g1,\g2}(u,v)$ in Equation~\ref{eq:zothers} and Equation~\ref{eq:uvsquareinverse} we see that \begin{align*}
\textrm{z}^{\e1,\e2}_{0,0}(u, v) &= \bconst^3 m^{-3/2}\Big((-1)^{\e1+\e2}(-2iw +2iz) + (-1)^{\e1+\e2}\sqrt{1/2-2iw}\sqrt{1/2 +2iz}\\
&\hspace{3cm} \times \left(-(-1)^{\e1+\e2} -(-1)^\e1 \sqrt{1/2-2iw} - (-1)^\e2 \sqrt{1/2+2iz}\right)\Big) \\
&\hspace{3cm}+O(m^{-2}w^2, m^{-2}z^2),
\end{align*} \begin{align*}
\textrm{z}^{\e1,\e2}_{1,0}(u, v) &= \bconst^2 m^{-1}\left(1/2-2iz  +(-1)^\e2\sqrt{1/2 -2iw} + (-1)^{\e1 + \e2}\sqrt{1/2-2iw}\sqrt{1/2+2iz}\right) \\
&\hspace{2cm}+ O(m^{-3/2}w^{3/2}, m^{-3/2}z^{3/2}),
\end{align*} \begin{align*}
\textrm{z}^{\e1,\e2}_{0,1}(u, v) &=\bconst^2 m^{-1}\left(1/2+2iw  +(-1)^\e1\sqrt{1/2 +2iz} + (-1)^{\e1+\e2}\sqrt{1/2-2iw}\sqrt{1/2+2iz}\right) \\
&\hspace{2cm}+ O(m^{-3/2}w^{3/2}, m^{-3/2}z^{3/2}) 
\end{align*} and \begin{align*}
\textrm{z}^{\e1,\e2}_{1,1}(u,v) &= \bconst m^{-1/2}\left(-(-1)^\e1\sqrt{1/2-2iw} - (-1)^\e2\sqrt{1/2+2iz} - (-1)^{\e1+\e2}\right) \\
&\hspace{2cm}+ O(m^{-1}w, m^{-1}z).
\end{align*}  So the terms of the sum in the definition of $V^{j,k}_{\e1, \e2} (\w1, \w2) $ are \begin{multline*}
\mathrm{z}_{0,0}^{\e1,\e2}(G(\w1),G(\w2^{-1})) = \bconst^3m^{-3/2} \Big((-1)^{\e1+\e2}(-2iw +2iz) \\
+ (-1)^{\e1+\e2}\sqrt{1/2-2iw}\sqrt{1/2 +2iz} \big(-(-1)^{\e1+\e2} -(-1)^\e1 \sqrt{1/2-2iw} - (-1)^\e2 \sqrt{1/2+2iz}\big)\Big) \\
+O(m^{-2}w^2, m^{-2}z^2),
\end{multline*}
\begin{multline*}
(-1)^{1+\e2 + k} t(\w1) \mathrm{z}_{1,0}^{\e1,\e2}(G(\w1),G(\w2^{-1})) = \bconst^3m^{-3/2} (-1)^{1 + \e2 + k} \sqrt{1/2 + 2iw}\Big(1/2-2iz \\
+(-1)^\e2\sqrt{1/2 -2iw}+ (-1)^{\e1 + \e2}\sqrt{1/2-2iw}\sqrt{1/2+2iz}\Big)+O(m^{-2}w^2, m^{-2}z^2),
\end{multline*}
\begin{multline*}
(-1)^{1+\e1 + j}t(\w2^{-1}) \mathrm{z}_{0,1}^{\e1,\e2}(G(\w1),G(\w2^{-1})) =
\bconst^3m^{-3/2} (-1)^{1 + \e1 + j} \sqrt{1/2 - 2iz} \Big(1/2+2iw \\
+(-1)^\e1\sqrt{1/2 +2iz} + (-1)^{\e1+\e2}\sqrt{1/2-2iw}\sqrt{1/2+2iz}\Big)+O(m^{-2}w^2, m^{-2}z^2),
\end{multline*}
\begin{multline*}
(-1)^{\e1 + \e2 + j+k} t(\w1) t(\w2^{-1}) \mathrm{z}_{1,1}^{\e1,\e2}(G(\w1),G(\w2^{-1}))= \bconst^3m^{-3/2}(-1)^{\e1+\e2+j+k}\\
\times \sqrt{1/2 + 2iw}\sqrt{1/2-2iz}\left(-(-1)^\e1\sqrt{1/2-2iw} - (-1)^\e2\sqrt{1/2+2iz} - (-1)^{\e1+\e2}\right)\\
+O(m^{-2}w^2, m^{-2}z^2)
\end{multline*} Summing these and simplifying we obtain,
\begin{multline*}
\sum_{\g1, \g2 = 0}^1 (-1)^{\g1(1+\e2 + k) + \g2(1+\e1 + j)} t(\w1)^{\g1} t(\w2^{-1})^{\g2} \mathrm{z}_{\g1,\g2}^{\e1,\e2}(G(\w1),G(\w2^{-1})) =  \bconst^3 m^{-3/2}(-1)^{\e1+\e2}\\
\times \Bigg(-2 i(w-z)- (-1)^{\e1+\e2}\left(\sqrt{1/2-2iw} +  (-1)^{j}\sqrt{1/2-2iz}\right)\left((-1)^{k}\sqrt{1/2+2iw} + \sqrt{1/2+2iz}\right) \\
- \left((-1)^\e1\sqrt{1/2-2iw} + (-1)^{\e2+k}\sqrt{1/2+2iw} + (-1)^\e2\sqrt{1/2+2iz} + (-1)^{\e1+j}\sqrt{1/2-2iz}\right)\\
\times \left(\sqrt{1/2-2iw}\sqrt{1/2+2iz} + (-1)^{j+k}\sqrt{1/2+2iw}\sqrt{1/2-2iz}\right)\Bigg) +O(m^{-2}w^2, m^{-2}z^2).
\end{multline*} So together with Equation~\ref{eq:Vasymprefactor} we have  
\begin{multline*}
V^{j,k}_{\e1, \e2} (\w1, \w2) = m^{1/2}\frac{-(-1)^{\e1\e2}i^{\e2 - \e1}}{16\bconst  \sqrt{1/2-2iw}\sqrt{1/2 + 2iw}\sqrt{1/2-2iz}\sqrt{1/2 + 2iz}}\\
\times \Bigg(2 i(w-z)+ (-1)^{\e1+\e2}\left(\sqrt{1/2-2iw} +  (-1)^{j}\sqrt{1/2-2iz}\right)\left((-1)^{k}\sqrt{1/2+2iw} + \sqrt{1/2+2iz}\right) \\
+\left((-1)^\e1\sqrt{1/2-2iw} + (-1)^{\e2+k}\sqrt{1/2+2iw} + (-1)^\e2\sqrt{1/2+2iz} + (-1)^{\e1+j}\sqrt{1/2-2iz}\right)\\
\times \left(\sqrt{1/2-2iw}\sqrt{1/2+2iz} + (-1)^{j+k}\sqrt{1/2+2iw}\sqrt{1/2-2iz}\right)\Bigg) + O(1).
\end{multline*}  Comparing with the definition of $A^{j,k}_{\e1,\e2}(w,z)$ in Equation~\ref{eq:Aall}, and noting that $i^{\e2-\e1} = (-1)^{\e1+\e2}i^{\e1-\e2}$, we see that \begin{equation*}
V^{j,k}_{\e1,\e2}(\omega_1,\omega_2) = m^{1/2} \frac{(-1)^{\e1\e2}i^{\e1-\e2}}{16\bconst}(A^{j,k}_{\e1,\e2}(w,z)  + O(m^{-1/2}))
\end{equation*} as desired.

\subsection{Proof of Theorem~\ref{thm:arcerrors}}\label{sec:arcerrors}
Recall the definitions of $\circlecontour_{(m)}$ and $\circlecontour_{(m)}'$ from the statement of the theorem. We need to bound the exponential parts of the integrands $|\widetilde{H}_{x_1 + 1, x_2}(\w1)|$, $|\widetilde{H}_{x_1 + 1, 2n - x_2}(\w1)|$ for $\w1 \in \circlecontour_{(m)}$, and
$|\widetilde{H}_{y_1, y_2 + 1}(\w2)|$, $|\widetilde{H}_{2n-y_1, y_2 + 1}(\w2)|$ for $\w2 \in \circlecontour_{(m)}'$ for $m$ sufficiently large. First we prove a few lemmas.

\begin{lemma}
Let $\omega = R_1 e^{i\theta}$ where $R_1 = 1 - \bconst^2 m^{\delta-1} + O(m^{\delta/2-1})$ for $0 < \delta < 1/2$ with $\theta \in [0, \pi/2)$, so $\omega$ is in the first quadrant. Then the square roots $\sqrt{\omega^2 + 2c}$ and $\sqrt{\omega^{-2} + 2c}$ defined in Equation~\ref{eq:sqrtbranchcut} agree with the principal branch of the square root.
\end{lemma}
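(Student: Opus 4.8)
The plan is to identify the right-hand side of Equation~\ref{eq:sqrtbranchcut} and the principal branch of $\sqrt{\omega^{2}+2c}$ as two holomorphic square-root branches of the same function on the open right half-plane $\mathbb{H}:=\{z\in\mathbb{C}:\mathrm{Re}(z)>0\}$, and then to fix the overall sign by a single evaluation. First I would note that $\theta\in[0,\pi/2)$ gives $\cos\theta>0$, so $\mathrm{Re}(\omega)=R_1\cos\theta>0$ and thus $\omega\in\mathbb{H}$; likewise $\omega^{-1}=R_1^{-1}e^{-i\theta}$ lies in the fourth quadrant with $\mathrm{Re}(\omega^{-1})>0$, so $\omega^{-1}\in\mathbb{H}$ as well. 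In particular $\omega$ and $\omega^{-1}$ both avoid $i[-\sqrt{2c},\sqrt{2c}]$, so Equation~\ref{eq:sqrtbranchcut} applies to each.

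Next I would check that both functions in question are holomorphic on all of $\mathbb{H}$. The right-hand side $S(\omega):=i\sqrt{-i(\omega+i\sqrt{2c})}\sqrt{-i(\omega-i\sqrt{2c})}$ has its only possible branch cut inside $i[-\sqrt{2c},\sqrt{2c}]$ — the branch cuts of the two factors lie on the imaginary axis and, as observed after Equation~\ref{eq:sqrtbranchcut}, the portions on $i(-\infty,-\sqrt{2c})$ cancel — and $i[-\sqrt{2c},\sqrt{2c}]$ is disjoint from $\mathbb{H}$. On the other side, $\omega^{2}+2c\in(-\infty,0]$ forces $\omega\in i(-\infty,-\sqrt{2c}]\cup i[\sqrt{2c},\infty)$, so the principal branch $P(\omega):=(\omega^{2}+2c)^{1/2}$ is holomorphic on $\mathbb{H}$ too. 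Both $S$ and $P$ are non-vanishing on $\mathbb{H}$ and satisfy $S(\omega)^{2}=\omega^{2}+2c=P(\omega)^{2}$ there, so $S/P$ is a continuous map from the connected set $\mathbb{H}$ to $\{+1,-1\}$, hence a constant equal to its value at any chosen point.

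Then I would evaluate at $\omega=1\in\mathbb{H}$. Here $-i(1+i\sqrt{2c})=\sqrt{2c}-i$ and $-i(1-i\sqrt{2c})=-\sqrt{2c}-i$, with arguments $-\arctan(1/\sqrt{2c})$ and $-\pi+\arctan(1/\sqrt{2c})$, which sum to $-\pi$; consequently the principal square roots multiply to a number of argument $-\pi/2$, namely $\sqrt{\sqrt{2c}-i}\,\sqrt{-\sqrt{2c}-i}=-i\sqrt{1+2c}$ (indeed $(\sqrt{2c}-i)(-\sqrt{2c}-i)=-(1+2c)$, whose principal square root is $+i\sqrt{1+2c}$, so the product of the two square roots is minus that). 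Hence $S(1)=i\cdot(-i\sqrt{1+2c})=\sqrt{1+2c}=P(1)$, so $S\equiv P$ on $\mathbb{H}$, and in particular $\sqrt{\omega^{2}+2c}=P(\omega)$ at $\omega=R_1e^{i\theta}$. Repeating the whole argument verbatim with $\omega^{-1}$ in place of $\omega$ shows the defined $\sqrt{\omega^{-2}+2c}$ also agrees with the principal branch of $(\omega^{-2}+2c)^{1/2}$.

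The one subtle point will be the sign bookkeeping at $\omega=1$: the two arguments sum to exactly $-\pi$, the boundary of the principal range, so one must be careful to take $\sqrt{z_1}\sqrt{z_2}=-\sqrt{z_1z_2}$ rather than $+\sqrt{z_1z_2}$; once this is handled, that minus sign cancels the explicit factor $i=-i^{-1}$ and gives the clean identity. A cleaner alternative is to verify $S(\omega)=\sqrt{\omega^{2}+2c}=P(\omega)$ directly for every $\omega$ on the positive real axis — where exactly the same computation gives $S(\omega)=i\cdot(-i\sqrt{\omega^{2}+2c})$ — and then invoke connectedness of $\mathbb{H}$ to propagate the identity to $\omega=R_1e^{i\theta}$; this sidesteps any delicacy about a single chosen point.
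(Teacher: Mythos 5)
Your proof is correct, but it proceeds differently from the paper's. The paper's argument is a direct, pointwise comparison of arguments: since $\mathrm{Re}(\omega\pm i\sqrt{2c})>0$, one writes $a=\arg(\omega+i\sqrt{2c})$ and $b=\arg(\omega-i\sqrt{2c})$, both in $(-\pi/2,\pi/2)$; the right-hand side of Equation~\ref{eq:sqrtbranchcut} then has argument $\tfrac{a-\pi/2}{2}+\tfrac{b-\pi/2}{2}+\tfrac{\pi}{2}=\tfrac{a+b}{2}$, while $a+b\in(-\pi,\pi)$ is the principal argument of $\omega^2+2c=(\omega+i\sqrt{2c})(\omega-i\sqrt{2c})$, so the principal square root has the same argument $\tfrac{a+b}{2}$ (and the moduli trivially agree). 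You instead observe that both expressions are holomorphic, non-vanishing branches of $(\omega^2+2c)^{1/2}$ on the connected right half-plane, so their ratio is a locally constant map into $\{\pm1\}$, and you pin down the sign by one evaluation at $\omega=1$. Both are valid; the paper's is shorter and entirely local, whereas yours trades the argument bookkeeping at a general point for a single careful evaluation. Two small remarks on your version: the delicate sign at $\omega=1$ (arguments summing to exactly $-\pi$, so $\sqrt{z_1}\sqrt{z_2}=-\sqrt{z_1z_2}$) is handled correctly, but your proposed ``cleaner alternative'' of checking the whole positive real axis does not actually avoid this delicacy --- at every $\omega=x>0$ the two arguments still sum to exactly $-\pi$, so it is the same boundary-case computation repeated; and the cancellation of branch cuts on $i(-\infty,-\sqrt{2c})$ is not needed for holomorphy of the right-hand side on the open right half-plane, since both factors' cuts already lie on the imaginary axis.
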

\begin{proof}
The arguments of $\omega + i\sqrt{2c}$, $\omega - i\sqrt{2c}$, $\omega^{-1} + i\sqrt{2c}$, and $\omega^{-1} + i\sqrt{2c}$ are all in the interval $(-\pi/2, \pi/2)$. Comparing the arguments of the square root in the expression in Equation~\ref{eq:sqrtbranchcut} and for the principal branch gives the result.
\end{proof}

\begin{lemma}\label{lemma:Gincreasing}
Let $\omega = R_1 e^{i\theta}$ where $R_1 = 1 - \bconst^2 m^{\delta-1} + O(m^{\delta/2-1})$ for $0 < \delta < 1/2$ with $\theta \in [0, \pi/2)$. Then $|G(\omega)|$ and $|G(\omega^{-1})|$ both increase with $\theta$.
\end{lemma}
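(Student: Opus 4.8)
The plan is to show that $\frac{d}{d\theta}\log|G(R_1 e^{i\theta})|$ and $\frac{d}{d\theta}\log|G(R_1^{-1}e^{-i\theta})|$ are non-negative on $[0,\pi/2)$, and strictly positive on $(0,\pi/2)$. Since $G(\omega)=0$ would force $2c=0$, the function $G$ is zero-free, so on the arc (which lies in the open first quadrant and hence avoids the branch cut) these logarithms are smooth functions of $\theta$ and the argument makes sense.

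First I would record the identity $G'(\omega)=-G(\omega)/\sqrt{\omega^2+2c}$: the function of Equation~\ref{eq:sqrtbranchcut} squares to $\omega^2+2c$ wherever it is holomorphic, so its derivative is $\omega/\sqrt{\omega^2+2c}$, and differentiating Equation~\ref{eq:G} gives $G'(\omega)=\frac{1}{\sqrt{2c}}\bigl(1-\tfrac{\omega}{\sqrt{\omega^2+2c}}\bigr)=-G(\omega)/\sqrt{\omega^2+2c}$; a parallel computation gives $\frac{d}{d\omega}G(\omega^{-1})=\omega^{-2}G(\omega^{-1})/\sqrt{\omega^{-2}+2c}$. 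Writing $\omega=R_1e^{i\theta}$, so that $d\omega/d\theta=i\omega$, the chain rule together with $\mathrm{Re}(-iw)=\mathrm{Im}(w)$ and $\mathrm{Re}(iw)=-\mathrm{Im}(w)$ yields
\[
\frac{d}{d\theta}\log|G(\omega)|=\mathrm{Im}\!\left(\frac{\omega}{\sqrt{\omega^2+2c}}\right),\qquad
\frac{d}{d\theta}\log|G(\omega^{-1})|=-\,\mathrm{Im}\!\left(\frac{\omega^{-1}}{\sqrt{\omega^{-2}+2c}}\right).
\]

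By the preceding lemma both square roots agree with the principal branch on this arc, so $\arg\bigl(\omega/\sqrt{\omega^2+2c}\bigr)=\theta-\tfrac12\arg(\omega^2+2c)$ and $\arg\bigl(\omega^{-1}/\sqrt{\omega^{-2}+2c}\bigr)=-\theta-\tfrac12\arg(\omega^{-2}+2c)$, with all arguments in $(-\pi,\pi)$; the signs of the two imaginary parts are then the signs of the sines of these quantities. To pin these down I would use the elementary fact that for a complex number of fixed nonzero imaginary part $y$, the argument of $x+iy$ is strictly decreasing in $x$ when $y>0$ and strictly increasing in $x$ when $y<0$. For $\theta\in(0,\pi/2)$ one has $\mathrm{Im}(\omega^2)=R_1^2\sin2\theta>0$, so adding the positive real $2c$ gives $\arg(\omega^2+2c)<\arg(\omega^2)=2\theta$, whence $\theta-\tfrac12\arg(\omega^2+2c)\in(0,\theta)\subset(0,\pi/2)$ and the first derivative is positive; likewise $\mathrm{Im}(\omega^{-2})=-R_1^{-2}\sin2\theta<0$, so $\arg(\omega^{-2}+2c)>\arg(\omega^{-2})=-2\theta$, whence $-\theta-\tfrac12\arg(\omega^{-2}+2c)\in(-\theta,0)\subset(-\pi/2,0)$ and the second derivative is positive as well. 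At $\theta=0$ both $\omega/\sqrt{\omega^2+2c}$ and $\omega^{-1}/\sqrt{\omega^{-2}+2c}$ are real and positive, so both derivatives vanish there. Hence $|G(\omega)|$ and $|G(\omega^{-1})|$ are non-decreasing on $[0,\pi/2)$ and strictly increasing on $(0,\pi/2)$, which is the assertion.

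The only delicate point is the bookkeeping of arguments — keeping every argument in a range where $\arg$ is additive and where the sign of the imaginary part is unambiguous — but the preceding lemma reduces all square roots to the principal branch, and after that the two-line monotonicity statement about $\arg(x+iy)$ closes the argument, so I do not expect a serious obstacle.
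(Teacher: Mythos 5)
Your proof is correct and follows essentially the same route as the paper: differentiate $\log G$ to get $\frac{d}{d\theta}\log|G(\omega)|=\mathrm{Im}\bigl(\omega/\sqrt{\omega^2+2c}\bigr)$ and $\frac{d}{d\theta}\log|G(\omega^{-1})|=-\mathrm{Im}\bigl(\omega^{-1}/\sqrt{\omega^{-2}+2c}\bigr)$, then pin down the signs by locating $\arg(\omega^2+2c)$ between $0$ and $2\theta$ (and symmetrically for $\omega^{-2}$). Your version merely adds the explicit monotonicity-of-$\arg(x+iy)$ justification and the endpoint check at $\theta=0$, which the paper leaves implicit.
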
 \begin{proof}
It is equivalent to show that the logarithms of the above quantities are increasing. We compute \begin{align*}
\frac{d}{d\omega}\log G(\omega) &= -\frac{1}{\sqrt{\omega^2 + 2c}} \\
\frac{d}{d\omega}\log G(\omega^{-1}) &= \frac{1}{\omega^{-2}\sqrt{\omega^2 + 2c}} 
\end{align*} So \begin{align*}
\frac{d}{d\theta}|\log G(R_1 e^{i\theta})| &= \mathrm{Re} \frac{d}{d\theta}\log G(R_1 e^{i\theta}) \\
&= \mathrm{Re} \frac{-iR_1e^{i\theta}}{\sqrt{R_1^2 e^{2i\theta} + 2c}} \\
&= \mathrm{Im} \frac{\omega}{\sqrt{\omega^2 + 2c}}
\end{align*}
Similarly we can show \[
\frac{d}{d\theta}|\log G((R_1 e^{i\theta})^{-1})| 
= -\mathrm{Im} \frac{\omega^{-1}}{\sqrt{\omega^{-2} + 2c}}.
\]
We have $0 \leq \arg(\omega^2) - \arg(\omega^2 + 2c) \leq \arg(\omega^2) \leq \pi$ so $0 \leq \arg(\omega/\sqrt{\omega^2 + 2c}) \leq \pi/2$ and hence $ \mathrm{Im} (\omega/\sqrt{\omega^2 + 2c}) \geq 0$ and so $|\log G(R_1 e^{-i\theta})|$ is increasing as required. We also have $-\pi \leq \arg(\omega^{-2}) \leq\arg(\omega^{-2}) - \arg(\omega^{-2} + 2c) \leq 0$, so $-\pi \leq \arg(\omega^{-1}/\sqrt{\omega^{-2} + 2c}) \leq 0$ and hence $-\mathrm{Im(}\omega^{-1}/\sqrt{\omega^{-2} + 2c}) \geq 0$ so $|\log G((R_1 e^{-i\theta}))^{-1}| $ is increasing as required.
\end{proof}

Since we will have a lot of error terms of different orders to deal with, we state the following inequalities for $0 < \delta < 1/2$.\begin{equation}\label{eq:deltaineq}
\delta - 2 < 2(\delta - 1) < -1 < \delta/2 - 1 < \delta - 1 < -1/2 < (\delta - 1)/2
\end{equation}

\begin{lemma}\label{lemma:omega2cbound}
Let $\omega = R_1 e^{i\theta}$ where $R_1 = 1 - \bconst^2 m^{\delta-1} + O(m^{\delta/2-1})$ for $0 < \delta < 1/2$ with $\theta \in [0, \pi/2)$. Then \[
|\omega^2 + 2c| \geq 2\bconst^2m^{\delta-1} + O(m^{-1}).\]
\end{lemma}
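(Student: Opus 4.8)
\textbf{Proof proposal for Lemma~\ref{lemma:omega2cbound}.}

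The plan is to estimate $|\omega^2 + 2c|$ directly from below, using the fact that $2c = 1 - \bconst^2 m^{-1}/2 + O(m^{-3/2})$ (computed in the proof of Lemma~\ref{lemma:saddlepointlocation}) and that $\omega = R_1 e^{i\theta}$ with $R_1 = 1 - \bconst^2 m^{\delta - 1} + O(m^{\delta/2-1})$. First I would write $\omega^2 = R_1^2 e^{2i\theta}$, so $\omega^2 + 2c = (R_1^2\cos 2\theta + 2c) + i R_1^2 \sin 2\theta$, and split into two cases according to how close $2\theta$ is to $0$ (equivalently, how close $\omega^2$ is to the positive real axis, which is where the modulus can be smallest since $2c > 0$).

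\emph{Case 1: $2\theta$ bounded away from $0$.} If $|\sin 2\theta|$ is bounded below by an absolute constant, then $|\mathrm{Im}(\omega^2 + 2c)| = R_1^2|\sin 2\theta|$ is bounded below by a constant (using $R_1 \to 1$), which certainly dominates $2\bconst^2 m^{\delta-1}$ for large $m$. More carefully, I would use $|\sin 2\theta| \geq \tfrac{2}{\pi}\cdot 2\theta$ for $2\theta \in [0,\pi/2]$ and a symmetric estimate near $\pi$; the upshot is that the imaginary part alone gives $|\omega^2 + 2c| \gtrsim \theta$ (for $\theta$ small) and $\gtrsim 1$ otherwise.

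\emph{Case 2: $2\theta$ near $0$ (or near $\pi$), so $\omega^2$ is near the positive (resp.\ negative) real axis.} Near the positive real axis, $\mathrm{Re}(\omega^2 + 2c) = R_1^2\cos 2\theta + 2c \geq R_1^2\cos 2\theta + 2c > 0$, and in fact for $\theta$ small $R_1^2 \cos 2\theta = 1 - 2\bconst^2 m^{\delta-1} + O(m^{\delta/2-1}) - 2\theta^2 + \dots$, so $R_1^2\cos 2\theta + 2c = 2 - 2\bconst^2 m^{\delta-1} - 2\theta^2 + O(m^{\delta/2-1}) \geq 1$ for $m$ large and $\theta$ small, giving $|\omega^2 + 2c| \geq 1 \gg 2\bconst^2 m^{\delta-1}$. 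The only genuinely delicate subcase is near the \emph{negative} real axis, i.e.\ $2\theta$ near $\pi$, where $\mathrm{Re}(\omega^2) = R_1^2\cos 2\theta$ is near $-R_1^2 \approx -1$, so $\mathrm{Re}(\omega^2 + 2c) \approx -1 + 2c = -\bconst^2 m^{-1}/2 + O(m^{-3/2})$; here the real part is tiny and I must combine it with the imaginary part. Writing $2\theta = \pi - \varphi$ with $\varphi$ small, I have $\mathrm{Re}(\omega^2 + 2c) = -R_1^2\cos\varphi + 2c$ and $\mathrm{Im}(\omega^2 + 2c) = R_1^2\sin\varphi$. If $\varphi \geq m^{\delta-1}$, say, then $|\mathrm{Im}| = R_1^2\sin\varphi \gtrsim \varphi \gtrsim m^{\delta - 1}$ and we are done (up to adjusting the constant $2\bconst^2$; I'd actually want $\varphi$ at least of order $\bconst^2 m^{\delta-1}$, which is the threshold). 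If $\varphi < \bconst^2 m^{\delta-1}$, then $\cos\varphi = 1 - O(m^{2(\delta-1)})$ and $R_1^2 = 1 - 2\bconst^2 m^{\delta-1} + O(m^{\delta/2 - 1})$, so $-R_1^2\cos\varphi + 2c = -(1 - 2\bconst^2 m^{\delta-1}) + (1 - \bconst^2 m^{-1}/2) + O(m^{\delta/2-1}, m^{2(\delta-1)}) = 2\bconst^2 m^{\delta-1} + O(m^{\delta/2-1})$, using the ordering of error exponents in \eqref{eq:deltaineq} to absorb everything into $O(m^{\delta/2-1})$ — wait, $m^{\delta/2-1}$ is \emph{larger} than $m^{\delta-1}$, so I need to be careful: actually the $O(m^{\delta/2-1})$ term in $R_1$ must be tracked, and the claim "$O(m^{-1})$" in the statement suggests the $O(m^{\delta/2-1})$ contributions should be re-examined (perhaps the intended $R_1$ expansion has a smaller error, or the $O(m^{-1})$ in the conclusion is the dominant subleading error after all cancellations). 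I would resolve this by working with the exact expansion $R_1 = 1 - \bconst^2 m^{\delta-1} + O(m^{\delta/2-1})$ and keeping the $m^{\delta/2-1}$ term explicit, concluding $|\omega^2 + 2c| \geq 2\bconst^2 m^{\delta-1} + O(m^{\delta/2-1})$, and then noting that since $\delta/2 - 1 < \delta - 1$ only when $\delta < 0$ — which is false — so in fact $m^{\delta/2-1} < m^{\delta-1}$, i.e.\ $O(m^{\delta/2-1})$ is \emph{smaller}, hence absorbed; then $O(m^{-1})$ in the statement is a weaker (safe) error bound since $-1 < \delta - 1$.

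The main obstacle, as the above indicates, is bookkeeping the competing error terms of orders $m^{\delta-1}$, $m^{\delta/2-1}$, $m^{-1}$, $m^{2(\delta-1)}$ correctly near the negative real axis (the one place where both real and imaginary parts of $\omega^2 + 2c$ are small simultaneously), and verifying that the constant in front is genuinely $2\bconst^2$ and not something smaller. I expect the cleanest writeup splits into the two cases $\varphi \geq \bconst^2 m^{\delta-1}$ (imaginary part wins) and $\varphi < \bconst^2 m^{\delta-1}$ (real part, via the near-cancellation $-1 + 2c \approx -\bconst^2 m^{-1}/2$ being dominated by $2\bconst^2 m^{\delta-1}$ coming from $R_1^2$, since $\delta - 1 > -1$), and in both the bound $2\bconst^2 m^{\delta-1} + O(m^{-1})$ follows after invoking \eqref{eq:deltaineq} to order the error exponents.
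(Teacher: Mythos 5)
Your argument is correct in substance and rests on the same cancellation as the paper's proof: $|\omega^2+2c|$ is smallest when $\omega^2$ is closest to $-2c$, i.e.\ as $\omega\to R_1 i$, where it tends to $2c-R_1^2 = 2\bconst^2 m^{\delta-1}+\dots$. The paper's proof is exactly this one-line geometric observation, and your case analysis can be collapsed to it: since $\theta\in[0,\pi/2)$ gives $2\theta\in[0,\pi)$ and $\cos 2\theta\geq -1$, one has
\[
|\omega^2+2c|\;\geq\;\mathrm{Re}(\omega^2+2c)\;=\;R_1^2\cos 2\theta+2c\;\geq\;2c-R_1^2
\]
for \emph{every} $\theta$ in the range, after which only the expansions of $2c$ and $R_1^2$ are needed. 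This unconditional real-part bound also repairs the one spot where your write-up, taken literally, misses the stated constant: in the sub-case $\varphi\geq 2\bconst^2 m^{\delta-1}$ you rely on the imaginary part alone, and $R_1^2\sin\varphi\geq(2/\pi)R_1^2\varphi$ only yields $(4/\pi)\bconst^2 m^{\delta-1}$; even using $\sin\varphi=\varphi(1+O(\varphi^2))$ for small $\varphi$ you lose a constant multiple of $m^{\delta-1}$, which is not absorbed by the claimed $O(m^{-1})$. On the error bookkeeping you wrestled with: the error inherited from $R_1$ is $O(m^{\delta/2-1})$, which by Equation~\ref{eq:deltaineq} is $o(m^{\delta-1})$ (so the leading constant $2\bconst^2$ is safe) but is \emph{larger} than $m^{-1}$; the relevant comparison at the end is therefore $-1$ versus $\delta/2-1$, not $-1$ versus $\delta-1$, and the $O(m^{-1})$ in the conclusion is slightly optimistic — though this imprecision is already present in the lemma as stated and is harmless where it is applied (only $|\omega^2+2c|\gtrsim m^{\delta-1}$ is used in Lemma~\ref{lemma:Gratiowbound}).
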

\begin{proof}
Note that $R_1^2 < 2c$. It is clear (e.g. by geometry) that $|\omega^2 + 2c|$ approaches its infimum as $\omega \rightarrow i$. So we have $|\omega^2 + 2c| \geq 2c - R^2 = 2\bconst^2 m^{\delta - 1} + O(m^{-1})$.
\end{proof}

\begin{lemma}\label{lemma:Gomegabounds}
Let $\omega = R_1 e^{i\theta}$ where $R_1 = 1 - \bconst^2 m^{\delta-1} + O(m^{\delta/2-1})$ for $0 < \delta < 1/2$ with $\theta \in [0, \pi/2)$. Then for $m$ sufficiently large we have \[
\sqrt{2} - 1 + O(m^{\delta -1}) \leq |G(\omega)| \leq  1 
\]
\end{lemma}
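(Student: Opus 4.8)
The plan is to exploit the monotonicity of $|G|$ along the arc established in Lemma~\ref{lemma:Gincreasing}, which reduces the two-sided estimate to evaluating $|G|$ at the two ends of the arc $\omega = R_1 e^{i\theta}$, $\theta\in[0,\pi/2)$. By the first (unlabelled) lemma of this subsection the square roots in Equation~\ref{eq:sqrtbranchcut} coincide with the principal branch throughout this range, and by Lemma~\ref{lemma:Gincreasing} the quantity $|G(R_1 e^{i\theta})|$ is nondecreasing in $\theta$. Moreover $G(R_1 e^{i\theta})$ extends continuously to $\theta=\pi/2$: since $(iR_1)^2+2c = 2c-R_1^2$ and, by Lemma~\ref{lemma:omega2cbound} applied at $\omega=iR_1$ (or directly from $2c = 1-\bconst^2 m^{-1}/2 + O(m^{-3/2})$ together with $R_1^2 = 1 - 2\bconst^2 m^{\delta-1}+O(m^{\delta/2-1})$), we have $2c-R_1^2 = 2\bconst^2 m^{\delta-1}(1+o(1)) > 0$ for $m$ large, so $\omega^2+2c$ stays off $\mathbb{R}_{\le 0}$ as $\theta\to\pi/2^-$ and the principal square root is continuous there. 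Hence for all $\theta\in[0,\pi/2)$,
\[
|G(R_1)| \;\le\; |G(R_1 e^{i\theta})| \;\le\; |G(iR_1)|,
\]
and it remains only to estimate the two endpoints.

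For the lower bound I would expand $G(R_1) = \tfrac{1}{\sqrt{2c}}\bigl(R_1 - \sqrt{R_1^2+2c}\bigr)$. Since $0<\delta<1/2$ gives $\delta/2-1 \le \delta-1$ and $-1 < \delta-1$, all the error terms $m^{\delta/2-1}$ and $m^{-1}$ are $O(m^{\delta-1})$, so $R_1 = 1+O(m^{\delta-1})$ and $2c = 1+O(m^{\delta-1})$; therefore $\sqrt{R_1^2+2c} = \sqrt{2}+O(m^{\delta-1})$ and $1/\sqrt{2c} = 1+O(m^{\delta-1})$, whence $G(R_1) = 1-\sqrt{2}+O(m^{\delta-1})$. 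As $1-\sqrt2<0$, for $m$ sufficiently large $G(R_1)$ is a negative real number, so $|G(R_1)| = \sqrt{2}-1+O(m^{\delta-1})$, which is exactly the claimed lower bound.

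For the upper bound the key point is that $|G(iR_1)| = 1$ exactly: since $2c-R_1^2>0$, we have $\sqrt{(iR_1)^2+2c} = \sqrt{2c-R_1^2}\in\mathbb{R}_{>0}$, and because $iR_1$ is purely imaginary while $\sqrt{2c-R_1^2}$ is real, $\bigl|iR_1 - \sqrt{2c-R_1^2}\bigr|^2 = R_1^2 + (2c-R_1^2) = 2c$; hence $|G(iR_1)| = \tfrac{1}{\sqrt{2c}}\sqrt{2c} = 1$. Combined with the displayed chain of inequalities this gives $|G(\omega)|\le 1$, completing the proof.

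The only genuinely delicate step is the continuity claim at the endpoint $\theta=\pi/2$, where $iR_1$ (with $R_1<\sqrt{2c}$) lies on the branch cut of $\sqrt{\omega^2+2c}$; the resolution is that we approach it strictly from the first quadrant, where $\arg(\omega^2)=2\theta\to\pi^-$ forces $\omega^2+2c$ to approach the positive real value $2c-R_1^2$ from the upper half-plane, never touching $\mathbb{R}_{\le 0}$, so the principal square root — and hence $G$ — is continuous there. Everything else is routine manipulation of the asymptotic expansions of $R_1$ and $2c$, so I do not anticipate any obstacle beyond bookkeeping the orders of the error terms.
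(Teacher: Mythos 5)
Your proposal is correct and follows essentially the same route as the paper: monotonicity of $|G(R_1e^{i\theta})|$ in $\theta$ from Lemma~\ref{lemma:Gincreasing} reduces the bound to the endpoints, the lower bound comes from $G(R_1)=1-\sqrt{2}+O(m^{\delta-1})$, and the upper bound from the limit value $|G(iR_1)|=1$ computed via $|iR_1-\sqrt{2c-R_1^2}|^2=2c$. Your extra care about continuity of the principal square root as $\theta\to\pi/2^-$ from the first quadrant is a point the paper treats more tersely, but it is the same argument.
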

\begin{proof}
From Lemma~\ref{lemma:Gincreasing}, the infimum of $|G(\omega)|$ on this interval occurs at $\omega = R_1$, while the supremum occurs in the limit $\omega \rightarrow R_1 i$. Recalling that $2c = 1 + O(m^{-1})$, we compute \[
G(R_1) = 1 - \sqrt{2} + O(m^{\delta-1})
\] which shows the lower bound for $|G(\omega)|$. For the upper bound, the limit of $\sqrt{\omega^2 + 2c}$ as $\omega \rightarrow R_1 i$ along the contour $\omega = R_1 e^{i\theta}$ is $\sqrt{2c - R_1^2}$, since $R_1 < \sqrt{2c}$. So \begin{align*}
|G(\omega)|^2 &\leq \left| \frac{1}{2c}\left(R_1 i - \sqrt{2c - R_1^2}\right) \right|^2 \\
&= 1,
\end{align*} which proves the upper bound.
\end{proof}

\begin{lemma}\label{lemma:Gomegainvbounds}
Let $\omega = R_1 e^{i\theta}$ where $R_1 = 1 - \bconst^2 m^{\delta-1} + O(m^{\delta/2-1})$ for $0 < \delta < 1/2$with $\theta \in [0, \pi/2)$. Then for $m$ sufficiently large we have \[
\sqrt{2} - 1 + O(m^{\delta -1}) \leq |G(\omega^{-1})| < 1 
\]
\end{lemma}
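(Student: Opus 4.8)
The plan is to prove Lemma~\ref{lemma:Gomegainvbounds} by mirroring the proof of Lemma~\ref{lemma:Gomegabounds}, replacing $\omega$ by $\omega^{-1}$ throughout, with one genuinely new twist: here $R_1^{-1} > \sqrt{2c}$ whereas there $R_1 < \sqrt{2c}$, so the limiting point lies on the opposite side of $\pm i\sqrt{2c}$ and the branch of $\sqrt{u^2+2c}$ must be handled afresh. First I would invoke Lemma~\ref{lemma:Gincreasing}: on the arc $\omega = R_1 e^{i\theta}$, $\theta\in[0,\pi/2)$, the quantity $|G(\omega^{-1})|$ is non-decreasing in $\theta$, so its infimum is attained at $\theta=0$ (that is, at $\omega = R_1$) and its supremum is the limit as $\theta\to\pi/2^-$ (that is, as $\omega^{-1}\to -iR_1^{-1}$). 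It therefore suffices to evaluate these two boundary values.

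For the lower bound, $\omega^{-1} = R_1^{-1}$ is a positive real slightly larger than $1$, namely $R_1^{-1} = 1 + \bconst^2 m^{\delta-1} + O(m^{\delta/2-1})$, and $R_1^{-2} + 2c = 2 + O(m^{\delta-1})$ is positive, so the square root in Equation~\ref{eq:G} is the ordinary positive one, $\sqrt{R_1^{-2}+2c} = \sqrt{2} + O(m^{\delta-1})$. Using $1/\sqrt{2c} = 1 + O(m^{-1})$ this gives $G(R_1^{-1}) = 1-\sqrt{2} + O(m^{\delta-1})$, hence $|G(\omega^{-1})| \geq |G(R_1^{-1})| = \sqrt{2}-1 + O(m^{\delta-1})$. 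This step is a routine real computation.

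For the upper bound I would track the two factors in Equation~\ref{eq:sqrtbranchcut} along $u := \omega^{-1} = R_1^{-1}e^{-i\theta}$ as $\theta\uparrow\pi/2$. Since $R_1^{-1} > \sqrt{2c}$ (because $R_1^2 = 1 - 2\bconst^2 m^{\delta-1} + O(m^{\delta/2-1}) < 1 + O(m^{-1}) = 2c$, using $\delta-1 > -1$), the point $-iR_1^{-1}$ lies strictly below $-i\sqrt{2c}$, hence off the branch cut $i[-\sqrt{2c},\sqrt{2c}]$, so $\sqrt{u^2+2c}$ is continuous there. For $\theta$ near $\pi/2$ both $u+i\sqrt{2c}$ and $u-i\sqrt{2c}$ lie in the fourth quadrant, so after multiplying by $-i$ each lands on the lower side of the negative real axis and each principal square root in Equation~\ref{eq:sqrtbranchcut} picks up a factor $e^{-i\pi/2}$; this yields $\sqrt{u^2+2c}\to -i\sqrt{R_1^{-2}-2c}$, and therefore $G(u)\to \frac{-i}{\sqrt{2c}}\bigl(R_1^{-1} - \sqrt{R_1^{-2}-2c}\bigr)$, so
\[
\lim_{\theta\to\pi/2^-} |G(\omega^{-1})| = \frac{1}{\sqrt{2c}}\bigl(R_1^{-1} - \sqrt{R_1^{-2}-2c}\bigr).
\]
Now $R_1^{-2}-2c = 2\bconst^2 m^{\delta-1} + O(m^{\delta/2-1})$ is positive of order $m^{\delta-1}$, so $\sqrt{R_1^{-2}-2c}$ has order $m^{(\delta-1)/2}$; combined with $R_1^{-1} = 1 + O(m^{\delta-1})$ and $1/\sqrt{2c} = 1 + O(m^{-1})$, the limit equals $1 - O(m^{(\delta-1)/2})$, which is $< 1$ for $m$ large. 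Since $|G(\omega^{-1})|$ is non-decreasing on the half-open interval $[0,\pi/2)$, it stays strictly below this limit, giving $|G(\omega^{-1})| < 1$, and the lemma follows.

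The main obstacle is the branch-tracking in the upper-bound step: one must check that approaching $-iR_1^{-1}$ with $\arg u \in (-\pi/2,0)$ forces both $-i(u+i\sqrt{2c})$ and $-i(u-i\sqrt{2c})$ onto the \emph{lower} side of the negative real axis, so each factor contributes $e^{-i\pi/2}$ rather than $e^{+i\pi/2}$; it is precisely this sign that makes the limit come out $1 - O(m^{(\delta-1)/2})$ and not $1 + O(m^{(\delta-1)/2})$. Everything else is bookkeeping parallel to Lemma~\ref{lemma:Gomegabounds}, using $2c = 1 + O(m^{-1})$ and the inequalities in Equation~\ref{eq:deltaineq} to order the error terms.
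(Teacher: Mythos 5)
Your proof is correct and follows essentially the same route as the paper's: Lemma~\ref{lemma:Gincreasing} reduces everything to the two endpoint values, the lower bound comes from $G(R_1^{-1}) = 1-\sqrt{2}+O(m^{\delta-1})$, and the upper bound from the limit $\frac{1}{\sqrt{2c}}(R_1^{-1}-\sqrt{R_1^{-2}-2c}) = 1-\sqrt{2}\bconst m^{(\delta-1)/2}+O(m^{-1/2})<1$. Your careful branch-tracking, which gives $\sqrt{\omega^{-2}+2c}\to -i\sqrt{R_1^{-2}-2c}$ (paired with $\omega^{-1}\to -iR_1^{-1}$), is in fact more precise than the paper's stated signs, though both arrive at the same modulus.
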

\begin{proof}
From Lemma~\ref{lemma:Gincreasing}, the minimum of $|G(\omega^{-1})|$ on this interval occurs at $\omega = R_1$,while the supremum occurs in the limit $\omega \rightarrow R_1 i$. Recalling that $2c = 1 + O(m^{-1})$, we compute \[
G(R_1^{-1}) = 1 - \sqrt{2} + O(m^{\delta-1})
\] which shows the lower bound for $|G(\omega^{-1})|$. For the upper bound, the limit of $\sqrt{\omega^{-2} + 2c}$ as $\omega \rightarrow R_1 i$ along the contour $\omega = R_1 e^{i\theta}$ is $i\sqrt{R_1^{-2} - 2c}$, since $R_1^{-1} > \sqrt{2c}$. So \begin{align*}
|G(\omega^{-1})| &\leq \left| \frac{1}{2c}\left(R_1^{-1} i - i\sqrt{R_1^{-2} - 2c}\right) \right| \\
&= 1 + \bconst^2 m^{\delta-1} - \sqrt{2}\bconst m^{(\delta-1)/2} + O(m^{-\delta/2 - 1/2}) \\
&= 1  - \sqrt{2}\bconst m^{(\delta-1)/2} + O(m^{- 1/2})
\end{align*} where we have made use of the inequalities in Equation~\ref{eq:deltaineq} to deal with the error terms. So for $m$ sufficiently large, we have $|G(\omega^{-1})| < 1$ as required.
\end{proof}

\begin{lemma}\label{lemma:Gratiowbound}
Let $\w1 = R_1 e^{i\theta}$ where $R_1 = 1 - \bconst^2 m^{\delta-1} + O(m^{\delta/2-1}) \in \mathbb{R}$ for $0 < \delta < 1/2$. Then there exists $c_2 > 0$ such that for $m$ sufficiently large \[
\left| \frac{G(\w1)}{G(\w1^{-1})} \right| < 1 + c_2\bconst m^{(\delta-1)/2}\] for all $\theta \neq \pi/2 + k \pi$, $k\in \mathbb{Z}$.
\end{lemma}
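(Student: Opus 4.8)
The plan is to reduce the bound to a contour–length estimate for $G'$ along a short radial segment. First I would exploit the two symmetries of $G$: the identity $G(-\omega)=-G(\omega)$ (Equation~\ref{eq:Gtevenodd}) and the conjugation symmetry $G(\bar\omega)=\overline{G(\omega)}$, which is valid on the right half-plane since the branch cut of $G$ coming from (\ref{eq:sqrtbranchcut}) lies on the imaginary axis and is invariant under $\omega\mapsto\bar\omega$. These reduce the problem to $\theta\in[0,\pi/2)$; note that the excluded values $\theta=\pi/2+k\pi$ are exactly those for which $\omega^{-1}=\pm iR_1^{-1}$ lands on the branch cut, since $R_1^{-1}>\sqrt{2c}$ for $m$ large. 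For such $\theta$, each of $\omega=R_1e^{i\theta}$, $\bar\omega=R_1e^{-i\theta}$, $\omega^{-1}=R_1^{-1}e^{-i\theta}$ lies in the open right half-plane, where $G$ is holomorphic and (by the first lemma of Appendix~\ref{sec:arcerrors} together with conjugation) agrees with the principal branch.

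The point is then that $|G(\bar\omega)|=|G(\omega)|$, so the ratio equals $|G(\bar\omega)|/|G(\omega^{-1})|$, and it suffices to prove $|G(\omega^{-1})-G(\bar\omega)|\le C\bconst m^{(\delta-1)/2}$ for some absolute $C$. Given this, since $|G(\omega)|=|G(\bar\omega)|$ is bounded below by Lemma~\ref{lemma:Gomegabounds} (so $|G(\omega^{-1})|$ is too, for $m$ large), I would conclude
\[
\left|\frac{G(\omega)}{G(\omega^{-1})}\right|=\frac{|G(\bar\omega)|}{|G(\omega^{-1})|}\le 1+\frac{|G(\omega^{-1})-G(\bar\omega)|}{|G(\omega^{-1})|}\le 1+c_2\bconst m^{(\delta-1)/2}.
\]
To estimate $|G(\omega^{-1})-G(\bar\omega)|$ I would integrate $G'$ along the radial segment $\zeta=re^{-i\theta}$, $r\in[R_1,R_1^{-1}]$, which stays in the right half-plane. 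Using $G'(\zeta)=-G(\zeta)/\sqrt{\zeta^2+2c}$, the triangle-inequality bound $|G(\zeta)|\le(|\zeta|+(|\zeta|^2+2c)^{1/2})/\sqrt{2c}=O(1)$ on the segment, and the computation $|\zeta^2+2c|^2=(r^2-2c)^2+8cr^2\cos^2\theta$, it remains to control $\int_{R_1}^{R_1^{-1}}[(r^2-2c)^2+8cr^2\cos^2\theta]^{-1/4}\,dr$. Substituting $s=r^2$ and using $8cs\cos^2\theta\ge\mu^2$ with $\mu:=\sqrt{8c}\,R_1\cos\theta$, this reduces to the elementary estimate
\[
\int_{-\ell}^{\ell}\frac{dt}{(t^2+\mu^2)^{1/4}}\le C\sqrt{\ell},\qquad \ell:=R_1^{-2}-R_1^2,
\]
which holds uniformly in $\mu\ge 0$ (split at $|t|=\mu$ when $\mu<\ell$, estimating $(t^2+\mu^2)^{-1/4}$ by $\mu^{-1/2}$ and by $|t|^{-1/2}$ respectively). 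Since $\ell=2\bconst^2 m^{\delta-1}(1+o(1))$ we get $\sqrt\ell=O(\bconst m^{(\delta-1)/2})$, which is the claimed rate.

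The main obstacle is precisely this last integral estimate. The integrand has an integrable square-root-type singularity at $r=\sqrt{2c}$ when $\theta$ is close to $\pi/2$ — geometrically, the radial segment then passes close to the branch point $iR_1^{-1}$ of $\sqrt{\zeta^2+2c}$ — and it is this near-singularity that forces the weaker rate $m^{(\delta-1)/2}$ rather than the $m^{\delta-1}=|\omega^{-1}-\bar\omega|$ one might naively expect; obtaining $\sqrt\ell$ (and not $\ell$) on the right-hand side, uniformly in $\theta$, is the crucial point. The symmetry reductions, the crude bound on $|G|$, and the lower bound on $|G(\omega^{-1})|$ are all routine.
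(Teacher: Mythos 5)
Your argument is correct and achieves the stated rate, but it reaches it by a genuinely different mechanism than the paper, so a comparison is worth recording. The paper makes the same symmetry reduction to $\theta\in[0,\pi/2)$ and the same comparison of $G(\w1)$ with $G(\overline{\w1}^{-1})$, but then works algebraically: it writes $|G(\w1)/G(\w1^{-1})|=|1+N/D|$ with $N=(\w1-\overline{\w1}^{-1})-(\sqrt{\w1^2+2c}-\sqrt{\overline{\w1}^{-2}+2c})$ and $D=\overline{\w1}^{-1}-\sqrt{\overline{\w1}^{-2}+2c}=2c\,G(\w1^{-1})$, bounds $|D|$ below via Lemma~\ref{lemma:Gomegainvbounds}, and controls the difference of the two square roots by multiplying and dividing by their sum: the product of difference and sum is $R_1^2-R_1^{-2}=O(m^{\delta-1})$, while the sum is bounded below by $|\sqrt{\w1^2+2c}|\gtrsim \bconst m^{(\delta-1)/2}$ (Lemma~\ref{lemma:omega2cbound} plus a check on the arguments of the two roots to rule out cancellation). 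You instead integrate $G'=-G/\sqrt{\,\cdot^2+2c\,}$ along the radial segment joining $\overline{\omega}$ to $\omega^{-1}$ and absorb the square-root singularity of $|\zeta^2+2c|^{-1/2}$ at $r=\sqrt{2c}$ into an integrable bound; your $\sqrt{\ell}$ from $\int_{-\ell}^{\ell}|t|^{-1/2}\,dt$ is the exact counterpart of the paper's quotient $O(m^{\delta-1})/O(m^{(\delta-1)/2})$, and both methods correctly locate the reason the rate degrades from $m^{\delta-1}$ to $m^{(\delta-1)/2}$, namely the proximity of the path to the branch points when $\theta$ is near $\pi/2$. Your route is slightly more robust (the crude bound $(t^2+\mu^2)^{-1/4}\le |t|^{-1/2}$ already gives uniformity in $\theta$, so the parameter $\mu$ and the argument-range check are both unnecessary), at the cost of having to verify holomorphy of $G$ on the segment; the paper's is shorter and purely pointwise. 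Two trivial corrections: the branch points of $\sqrt{\zeta^2+2c}$ sit at $\pm i\sqrt{2c}$, not at $iR_1^{-1}$ (and for $\theta\in(0,\pi/2)$ your segment lies in the fourth quadrant, so it is $-i\sqrt{2c}$ that it approaches); and the lower bound on $|G(\omega^{-1})|$ is available directly as Lemma~\ref{lemma:Gomegainvbounds} rather than needing to be inferred from the bound on $|G(\omega)|$ together with your difference estimate.
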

\begin{proof}
First note that \[
\left|\frac{G(\omega)}{G(\omega^{-1})}\right| = \left|\frac{G(\overline{\omega})}{G(\overline{\omega}^{-1})}\right|  = \left|\frac{G(-\omega)}{G(-\omega^{-1})}\right|  = \left|\frac{G(-\overline{\omega})}{G(-\overline{\omega}^{-1})}\right| 
\] it is sufficient to consider $\theta \in [0, \pi/2)$. Also note that since $|G(\overline{\w1}^{-1})| = |G(\w1^{-1})|$ we have \[
\left| \frac{G(\w1)}{G(\w1^{-1})} \right| = \frac{\w1 - \sqrt{\w1^2+2c}}{\overline{\w1}^{-1} - \sqrt{\overline{\w1}^{-2} + 2c}}.\] After some rearrangement, we obtain \begin{equation}\label{eq:Gratio1}
\left| \frac{G(\w1)}{G(\w1^{-1})} \right| = \left| 1 + \frac{(\w1 - \overline{\w1}^{-1}) - (\sqrt{\w1^2 + 2c}-\sqrt{\overline{\w1}^{-2} + 2c}) }{\overline{\w1}^{-1} - \sqrt{\overline{\w1}^{-2} + 2c}}\right|.\end{equation}
Now, $\w1 - \overline{\w1}^{-1} = (R_1 - R_1^{-1})e^{i\theta}$ so \[
|\w1 - \overline{\w1}^{-1}| = ( R_1^{-1} - R_1) = 2\bconst^2 m^{\delta-1} + O(m^{\delta/2-1}).
\] Also  $\w1^2 - \overline{\w1}^{-2} = (R_1^2 - R_1^{-2})e^{2i\theta} = 4\bconst^2 m^{\delta-1} + O(m^{\delta/2-1})$. Hence \[
(\w1^2 + 2c) - (\overline{\w1}^{-2}+2c) = (R_1^2 - R_1^{-2})e^{2i\theta} = 4\bconst^2 m^{\delta-1} + O(m^{\delta/2-1})
\] and so we have \begin{equation}\label{eq:pmsqrtapprox}
(\sqrt{\w1^2 + 2c}-\sqrt{\overline{\w1}^{-2} + 2c})(\sqrt{\w1^2 + 2c}+\sqrt{\overline{\w1}^{-2} + 2c}) =  4\bconst^2 m^{\delta-1} + O(m^{\delta/2-1}).\end{equation} Now note that since we are assuming $\theta \in [0, \pi/2)$, we have $\arg(\sqrt{\w1^2 + 2c}) \in [0, \pi/4)$ and $\arg(\sqrt{\w1^{-2} + 2c}) \in [0, \pi/2)$ so \[
|\sqrt{\w1^2 + 2c}+\sqrt{\overline{\w1}^{-2} + 2c}| \geq |\sqrt{\w1^2 + 2c}| \geq \sqrt{2}\bconst m^{(\delta-1)/2}
\] by Lemma~\ref{lemma:omega2cbound}. Thus by Equation~\ref{eq:pmsqrtapprox} we have \[
\sqrt{\w1^2 + 2c}-\sqrt{\overline{\w1}^{-2} + 2c} \leq 2\sqrt{2}m^{(\delta-1)/2} + O(m^{-1/2}).\]
Also, by Lemma~\ref{lemma:Gomegainvbounds} we have 
\[|\overline{\w1}^{-1} - \sqrt{\overline{\w1}^{-2} + 2c}| = |2c\, G(\w1^{-1})| \geq \sqrt{2} - 1 + O(m^{\delta-1}).\] Putting these together we see that \begin{align*}
&\left|\frac{(\w1 - \overline{\w1}^{-1}) - (\sqrt{\w1^2 + 2c}-\sqrt{\overline{\w1}^{-2} + 2c}) }{\overline{\w1}^{-1} - \sqrt{\overline{\w1}^{-2} + 2c}}\right| \\
&\hspace{4cm}\leq \frac{ 2\bconst^2 m^{\delta-1} + O(m^{\delta/2-1}) + 2\sqrt{2}m^{(\delta-1)/2} + O(m^{-1/2})}{\sqrt{2} - 1 + O(m^{\delta-1})} \\
&\hspace{4cm}\leq \frac{2\sqrt{2}m^{(\delta-1)/2}}{\sqrt{2} - 1}  + O(m^{-1/2})
\end{align*} where again we have used the inequalities in Equation~\ref{eq:deltaineq} to deal with the error terms. Hence, from Equation~\ref{eq:Gratio1} we see that there exists some $c_2 > 0$ such that  \[
\left| \frac{G(\w1)}{G(\w1^{-1})} \right| < 1 + c_2\bconst m^{(\delta-1)/2}\] for $m$ sufficiently large as required.
\end{proof}

Now we look at the contours $\Cmainzm$ and $\Cotherzm$.
\begin{lemma}
Let $\w2 = R_2 e^{i\theta}$ where $R_2 = 1 + \bconst^2 m^{\delta-1} + O(m^{\delta/2-1})$ for $0 < \delta < 1/2$. Then there exists $c_2 > 0$ such that for $m$ sufficiently large \[
\left| \frac{G(\w2)}{G(\w2^{-1})} \right| > 1 - c_2\bconst m^{(\delta-1)/2}\] for all $\theta \neq \pi/2 + k \pi$, $k\in \mathbb{Z}$.
\end{lemma}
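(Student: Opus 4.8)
The plan is to reduce the statement to Lemma~\ref{lemma:Gratiowbound} by the substitution $\sigma = \w2^{-1}$. First I would record the algebraic identity
\[
\left|\frac{G(\w2)}{G(\w2^{-1})}\right| = \left|\frac{G(\sigma^{-1})}{G(\sigma)}\right| = \left|\frac{G(\sigma)}{G(\sigma^{-1})}\right|^{-1},
\]
so that it suffices to bound $|G(\sigma)/G(\sigma^{-1})|$ \emph{from above}. Since $\w2 = R_2 e^{i\theta}$ with $R_2 = 1 + \bconst^2 m^{\delta-1} + O(m^{\delta/2-1})$, we have $\sigma = R_2^{-1} e^{-i\theta}$ with $R_2^{-1} = 1 - \bconst^2 m^{\delta-1} + O(m^{\delta/2-1})$: the cross term $O(m^{2\delta-2})$ produced by $(1+x)^{-1} = 1 - x + O(x^2)$ is absorbed into $O(m^{\delta/2-1})$ because $2\delta - 2 < \delta/2 - 1$ by the chain of inequalities in Equation~\ref{eq:deltaineq}. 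Hence $\sigma$ satisfies exactly the hypotheses of Lemma~\ref{lemma:Gratiowbound} with its $R_1$ equal to our $R_2^{-1}$; note also that $\arg\sigma = -\theta \neq \pi/2 + k\pi$ whenever $\theta \neq \pi/2 + k\pi$, and that the proof of Lemma~\ref{lemma:Gratiowbound} already reduces to $\arg \in [0,\pi/2)$ via the reflection symmetries of $|G(\cdot)/G(\cdot^{-1})|$, so the quadrant of $\sigma$ is irrelevant.

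Applying Lemma~\ref{lemma:Gratiowbound} then furnishes a constant $c > 0$ with $|G(\sigma)/G(\sigma^{-1})| < 1 + c\bconst m^{(\delta-1)/2}$ for $m$ large, and inverting gives
\[
\left|\frac{G(\w2)}{G(\w2^{-1})}\right| > \frac{1}{1 + c\bconst m^{(\delta-1)/2}} = 1 - c\bconst m^{(\delta-1)/2} + O(m^{\delta-1}).
\]
Since $\delta - 1 < (\delta-1)/2$ by Equation~\ref{eq:deltaineq}, the error term is dominated by the main term, so for $m$ sufficiently large the right-hand side is at least $1 - 2c\bconst m^{(\delta-1)/2}$; taking $c_2 = 2c$ proves the lemma.

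An alternative would be to mirror the proof of Lemma~\ref{lemma:Gratiowbound} line by line, replacing the role of $R_1 < 1$ there with $R_2 > 1$ here; this needs only the analogues of the bounds on $|G(\omega)|$, $|G(\omega^{-1})|$, and the lower bound for $|\omega^2 + 2c|$ for radii slightly larger than $1$, each proved in the same way. In either approach the only delicate point is the bookkeeping of the error terms of orders $m^{\delta/2-1}$, $m^{\delta-1}$, $m^{2\delta-2}$ and $m^{(\delta-1)/2}$, all of which is handled by Equation~\ref{eq:deltaineq}; I do not expect any substantive new obstacle.
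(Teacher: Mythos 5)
Your proof is correct and takes essentially the same route as the paper: both reduce to Lemma~\ref{lemma:Gratiowbound} by observing that the inverse of $\w2$ (the paper uses $\overline{\w2}^{-1}$, you use $\w2^{-1}$; either works by the conjugation/reflection symmetries of $|G(\cdot)/G(\cdot^{-1})|$) has modulus of exactly the form $1 - \bconst^2 m^{\delta-1} + O(m^{\delta/2-1})$ required there, and then invert the resulting upper bound. The only cosmetic difference is that the exact inequality $1/(1+x) > 1-x$ would let you keep $c_2 = c$ rather than doubling it, but your error bookkeeping is sound.
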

\begin{proof}
Note that $\overline{\w2}^{-1} = ( 1 - \bconst^2 m^{\delta-1} + O(m^{\delta/2-1})) e^{-i\theta}$. So we can write $\overline{\w2}^{-1} = \w1$ for $\w1$ as in Lemma~\ref{lemma:Gratiowbound}. So we see that we have \[
\left| \frac{G(\overline{\w2}^{-1})}{G(\overline{\w2})} \right|  < 1 + c_2\bconst m^{(\delta-1)/2}
\] for $m$ sufficiently large. Since 
\[
\left| \frac{G(\w2)}{G(\w2^{-1})} \right| = \left| \frac{G(\overline{\w2})}{G(\overline{\w2}^{-1})} \right|\] the result follows.
\end{proof}

Now we are ready to prove Theorem~\ref{thm:arcerrors}
\begin{proof}[Proof of Theorem~\ref{thm:arcerrors}]
 Let $\circlecontour_{(m)}$ and $\circlecontour_{(m)}'$ be as in the statement of the theorem. On $\circlecontour_{(m)}$ we can parametrize $\omega_1$ as $\omega_1 = R_1 e^{i\theta}$ and on $\circlecontour_{(m)}'$ we can parametrize $\omega_2$ as $\omega_2 = R_2 e^{i\theta}$ where \begin{align*}
R_1 &= 1 - \bconst^2 m^{\delta-1} + O(m^{\delta/2-1}) \\
R_2 &= 1 + \bconst^2 m^{\delta-1} + O(m^{\delta/2-1})
\end{align*}and \[\theta \in [-\pi/2 + \theta_0, \pi/2 - \theta_0]\cup[\pi/2 + \theta_0, 3\pi/2 - \theta_0]\] for some $\theta_0 > 0$. We can show using Theorem~\ref{thm:thetalargew} that $\theta_0 = -\bconst^2/\sqrt{2}\alpha m^{\delta/2 - 1} + O(m^{-1})$. The contours $\circlecontour_{(m)}$ and $\circlecontour_{(m)}'$ do not touch the imaginary axis so we can apply the preceding lemmas.
First we look at $\widetilde{H}_{x_1+1, x_2}(\w1)$. Recall that $\alpha < 0$. We have 
 \[
\widetilde{H}_{x_1+1, x_2}(\w1) = \frac{\w1^{2m} (-iG(\w1))^{-\alpha \bconst m^{1/2}+O(1)}}{(i G(\w1^{-1}))^{-\alpha \bconst m^{1/2}+O(1)}}.\] For $\w1 \in \circlecontour_{(m)}$ we have \[
\log |\widetilde{H}_{x_1+1, x_2}(\w1)| = 2m \log |\w1| -\alpha \bconst m^{1/2}\log \left| \frac{G(\w1)}{G(\w1^{-1})} \right| + O(1) 
\] since by Lemma~\ref{lemma:Gomegabounds} and Lemma~\ref{lemma:Gomegainvbounds}, $G(\w1)$ and $G(\w1^{-1})$ are bounded for $\w1 \in \circlecontour_{(m)}$. We have $\log |\w1| = -\bconst^2 m^{\delta - 1} + O(m^{\delta/2-1})$ and $\log | G(\w1)/G(\w1^{-1})| < c_2\bconst m^{(\delta-1)/2} $ for some $c_2 > 0$, by Lemma~\ref{lemma:Gratiowbound}. So  \[
\log |\widetilde{H}_{x_1+1, x_2}(\w1)| < -2\bconst^2 m^\delta + O(m^{\delta/2}) -\alpha \bconst^2 c_2 m^{\delta/2} + O(1) 
\] which simplifies to \[
\log |\widetilde{H}_{x_1+1, x_2}(\w1)| < -2\bconst^2 m^\delta + O(m^{\delta/2}).
\]  Hence there exists $c_x > 0$ such that for all $\w1 \in \circlecontour_{(m)}$, \[
|\widetilde{H}_{x_1+1, x_2}(\w1)| < e^{-c_x m^\delta}
\] for $m$ sufficiently large. 

Next we look at $\widetilde{H}_{x_1+1, 2n - x_2}(\w1)$. We have \[
\widetilde{H}_{x_1+1, 2n - x_2}(\w1) = \widetilde{H}_{x_1+1, x_2}(\w1)(i G(\w1^{-1}))^{-2\alpha \bconst m^{1/2}+O(1)}.\] Since from Lemma~\ref{lemma:Gomegainvbounds} we have $|G(\w1^{-1}))| \leq 1$, and we have $\alpha < 0$ we see that for $m$ sufficiently large and $\w1 \in \circlecontour_{(m)}$ we have \[
|\widetilde{H}_{x_1+1, 2n - x_2}(\w1)| < |\widetilde{H}_{x_1+1, x_2}(\w1)| < e^{-c_x m^\delta}
\] for the $c_x$ that we found above. 

Now we look at $|\widetilde{H}_{y_1, y_2 + 1}(\w2)|$ for $\w2 \in \circlecontour_{(m)}'$. As above, we have \[
\log |\widetilde{H}_{y_1, y_2 + 1}(\w2)| = 2m \log |\w2| -\alpha \bconst m^{1/2}\log \left| \frac{G(\w2)}{G(\w1^{-2})} \right| + O(1).
\] We have $\log |\w2| = \bconst^2 m^{\delta - 1} + O(m^{\delta/2-1})$ and $\log | G(\w2)/G(\w2^{-1})| > -c_2\bconst m^{(\delta-1)/2} $ for some $c_2 > 0$, by Lemma~\ref{lemma:Gratiowbound}. So  \[
\log |\widetilde{H}_{y_1, y_2+1}(\w2)| > 2\bconst^2 m^\delta + O(m^{\delta/2}) +\alpha \bconst^2 c_2 m^{\delta/2} + O(1) 
\] which simplifies to \[
\log |\widetilde{H}_{y_1, y_2+1}(\w2)| > 2\bconst^2 m^\delta + O(m^{\delta/2}).
\]  Hence there exists $c_y > 0$ such that for all $\w2 \in \circlecontour_{(m)}'$, \[
|\widetilde{H}_{y_1, y_2+1}(\w2)| > e^{c_y m^\delta}
\] for $m$ sufficiently large. 

Next we look at $\widetilde{H}_{2n - y_1, y_2+1}(\w2)$. We have \[
\widetilde{H}_{2n - y_1, y_2+1}(\w2) = \widetilde{H}_{ y_1, y_2+1}(\w2)(-i G(\w2))^{2\alpha \bconst m^{1/2}+O(1)}.\] Since from Lemma~\ref{lemma:Gomegainvbounds} we have $|G(\w2))| \leq 1$, and recalling that $\alpha < 0$ we see that for $m$ sufficiently large and $\w2 \in \circlecontour_{(m)}'$ we have \[
|\widetilde{H}_{2n - y_1, y_2+1}(\w2)| > |\widetilde{H}_{ y_1, y_2+1}(\w2)| > e^{c_y m^\delta}
\] for the $c_y$ that we found above. 

Taking $d = \min(c_x,c_y)$ completes the theorem.

\end{proof}

\subsection{Proof of Lemma~\ref{lemma:experrorbound}}\label{sec:experrorbound}

\begin{proof}[Proof of Lemma~\ref{lemma:experrorbound}]
We can write the $O(m^{-1/2}w, m^{-1/2}z)$ error term as $m^{-1/2}R(w,z)$ where $R(w,z)\allowbreak = O(w,z)$ and does not contain any singularities. Recall that on $\widetilde{\mathcal{C}_j}$, we have $|w| < m^{\delta}$ and on $\widetilde{\mathcal{C}_k}'$ we have $|z| < m^{\delta}$ for some $0 < \delta < 1/2$ . Then there exists $c_1 > 0$ such that on $\widetilde{\mathcal{C}_j} \times \widetilde{\mathcal{C}_k}'$, we have $|e^{m^{-1/2}R(z,w)}| < e^{c_1 m^{\delta - 1/2}}$. Then we have \[|e^{m^{-1/2}R(z,w)} - 1| < m^{-1/2}R(z,w)e^{c_1 m^{\delta - 1/2}}.
\]
 Let \[
E = \int_{\widetilde{\mathcal{C}_j}}dw \int_{\widetilde{\mathcal{C}_k}'}dz \frac{A^{j,k}_{\e1,\e2}(w,z)}{z-w} e^{g_{j,k}(w,z)}(e^{m^{-1/2}R(z,w)} - 1) .\] We want to show that $E = O(m^{-1/2})$. We have \begin{align*}
E &\leq \int_{\widetilde{\mathcal{C}_j}}|dw| \int_{\widetilde{\mathcal{C}_k}'}|dz| \left|\frac{A^{j,k}_{\e1,\e2}(w,z)}{z-w}\right| |e^{g_{j,k}(w,z)}|\,|e^{m^{-1/2}R(z,w)} - 1| \\
&\leq  m^{-1/2}e^{c_1 m^{\delta - 1/2}} \int_{\widetilde{\mathcal{C}_j}}|dw| \int_{\widetilde{\mathcal{C}_k}'}|dz| \left|\frac{A^{j,k}_{\e1,\e2}(w,z) R(z,w)}{z-w}\right| |e^{g_{j,k}(w,z)}|\\
&\leq  m^{-1/2}e^{c_1 m^{\delta - 1/2}} \int_{\mathcal{C}_j}|dw| \int_{\mathcal{C}_k'}|dz| \left|\frac{A^{j,k}_{\e1,\e2}(w,z) R(z,w)}{z-w}\right| |e^{g_{j,k}(w,z)}|.
\end{align*} By Lemma~\ref{lemma:Ccontourslargew} and the bounds found in Corollary \ref{cor:expboundcontour}, and because $A^{j,k}_{\e1,\e2}(w,z) R(z,w)$ has no singularities, we see that the double integral converges. Moreover $e^{c_1 m^{\delta - 1/2}}$ decreases as $m \rightarrow \infty$. So $E = O( m^{-1/2})$ as required.

\end{proof}

\subsection{Proof of Lemma~\ref{lemma:Sazellipticasymp}}\label{sec:Sazellipticasymp}
\begin{proof}\label{proof:Sazellipticasymp}
 Let \[
F(\lambda, k) = \int_0^{\lambda} \frac{1}{\sqrt{(1 - k^2 y^2)(1-y^2)}} dy\] denote the incomplete elliptic integral of the first kind, for $\lambda, k \in (0,1)$.

We have $0\leq \lambda_z \leq 1$ for $z_1\leq z \leq z_2$, with $\lambda_{z_1} = 0$ and $\lambda_{z_2}=1$ for all $0<a<1$. Let $z_1$ and $z_2$ be as defined in Equation~\ref{eq:z1z2}, and consider $z$ with $-1 < z_1 \leq z \leq -3+2\sqrt{2} < z_2$. Note that $z_1 = -1 + \sqrt{2}h + O(h^2)$ so for any $z\in (-1,-3+2\sqrt{2})$, by taking $h$ sufficiently small, we have $z \geq z_1$.

We use the following asymptotic formula proven in \cite{Carlson1985} and stated in more convenient notation in \cite{Karp2007}:
\begin{equation}\label{eq:ellipticasymptoticformula}
F(\lambda, k) =  \lambda \log \frac{4}{\sqrt{1-\lambda^2} + \sqrt{1-k^2\lambda^2}} + \theta_1 F(\lambda,k)
\end{equation} with relative error bound \begin{equation}\label{eq:ellipticerrorbound}
\frac{(2 - \lambda^2(1+k^2))\log(1-k^2\lambda^2)}{4\log((1-k^2\lambda^2)/16)} < \theta_1 < \frac{2 - \lambda^2(1+k^2)}{4}.\end{equation} In our case, we have \begin{align*}
\frac{2 - \lambda_z^2(1+k^2)}{4} &= \frac{(1-a)^2}{8a(a+ a^{-1})(1+z)^2}(2(1-z)^2 - 2z(a+a^{-1})^2 - (1+z)^2(a+a^{-1})) \\
&\leq \frac{-z}{(z+1)^2}h^2   + A h
\end{align*} for $z \in (z_1,z_2)$, for some constant $A$ which does not depend on $z$ (we note that $z_1 + 1 = \sqrt{2}h+O(h^2)$ and $z \geq z_1$). We can also show that $(2 - \lambda^2(1+k^2))/4 \geq 0$, so the lower bound in Equation~\ref{eq:ellipticerrorbound} is positive. Now we compute \begin{multline*}
\lambda \log \frac{4}{\sqrt{1-\lambda^2} + \sqrt{1-k^2\lambda^2}} = \frac{\sqrt{\left(\frac{a+a^{-1}}{2}\right) (2(a+a^{-1}-1)z + 1 + z^2)}}{1+z} \\
\times \log \left(\frac{4 a \sqrt{a+a^{-1}}(1+z)}{(1-a)((1-z) + \sqrt{\left(\frac{a+a^{-1}}{2}\right)(-2(a+a^{-1}+1)z -1-z^2)}}\right)
\end{multline*} We can show that \[
 \frac{\sqrt{\left(\frac{a+a^{-1}}{2}\right) (2(a+a^{-1}-1)z + 1 + z^2)}}{1+z} = 1 + R_1(h,z)\] where $|R_1(h,z)| < A_1h^2/(1+z)^2$ as $h\to 0$ for some constant $A_1$, and \begin{multline*}
 \log \left(\frac{4 a \sqrt{a+a^{-1}}(1+z)}{(1-a)((1-z) + \sqrt{\left(\frac{a+a^{-1}}{2}\right)(-2(a+a^{-1}+1)z -1-z^2)}}\right)\\ = -\log h + \log \left(\frac{4\sqrt{2}(1+z)}{1-z+\sqrt{-1-6z-z^2}}\right) -\frac{h}{2} + R_2(h,z)
\end{multline*} where $|R_2(h,z)| < A_2h^2/\sqrt{-1-6z-z^2}$ as $h\to 0$ for some constant $A_2$. Putting these together and again noting that $1+z > \sqrt{2} h$, we have \[
\lambda \log \frac{4}{\sqrt{1-\lambda^2} + \sqrt{1-k^2\lambda^2}} = -\log h + \log \left(\frac{4\sqrt{2}(1+z)}{1-z+\sqrt{-1-6z-z^2}}\right) -\frac{h}{2} + R_3(h,z) 
\] where \[
|R_3(h,z)| < A_3\frac{h^2\log h}{(1+z)^2} + A_3'\frac{h^2}{\sqrt{-1-6z-z^2}}\]
 as $h\to 0$ for some constants $A_3,A_3'$. Now using Equation~\ref{eq:ellipticasymptoticformula} and noting that  $|\theta_1 |< A_4(h^2(z+1)^{-2}) + A_4'h$ for constants $A_4,A_4'$, the result follows.

\end{proof}

\bibliographystyle{alpha}
\bibliography{refs}

\end{document}